\documentclass[11pt]{article}

\usepackage[T1]{fontenc}
\usepackage[utf8]{inputenc}
\usepackage{lmodern}
\usepackage{amsmath,amssymb,amsthm,mathtools}
\usepackage{graphicx}
\usepackage{float}
\usepackage{booktabs}
\usepackage{caption}
\usepackage{microtype}
\usepackage[a4paper,left=28mm,right=28mm,top=26mm,bottom=26mm]{geometry}
\usepackage{enumitem}
\usepackage[hidelinks]{hyperref}
\newcommand{\R}{\mathbb R}
\newcommand{\B}{\mathcal B}
\newcommand{\E}{\mathbb E}
\newcommand{\norm}[1]{\left\lVert#1\right\rVert}

\newtheorem{theorem}{Theorem}[section]
\newtheorem{proposition}[theorem]{Proposition}
\newtheorem{lemma}[theorem]{Lemma}
\newtheorem{corollary}[theorem]{Corollary}
\newtheorem{remark}[theorem]{Remark}

\title{Optimal exponential memory for sequential Euclidean connections:\\
edge-power costs and phase transitions}
\author{Pedro M. M. de Castro}
\date{September 3, 2026}

\hypersetup{
  pdftitle={Optimal exponential memory for sequential Euclidean connections: edge-power costs and phase transitions},
  pdfauthor={Pedro M. M. de Castro}
}

\begin{document}
\maketitle

\begin{abstract}
We study the edge-power cost of the labelled tree generated by the
\(\gamma\)-strategy, a constant-gain rule for sequential Euclidean
connections.  Starting with \(x_0=p_0\), each input point \(p_i\) is attached
to \(x_{i-1}\), after which the state is updated by
\(x_i=\gamma x_{i-1}+(1-\gamma)p_i\).  Retaining \(x_i\) as a labelled
auxiliary vertex subdivides the insertion segment into the spine edge
\([x_{i-1},x_i]\) and the leaf edge \([x_i,p_i]\).  The memory parameter
\(\gamma\) controls how long earlier input points influence subsequent
attachment points.  We minimize the sum of the \(\alpha\)-powers of these edge
lengths under independent uniform input and under arbitrary input sequences.

For uniform points in the unit ball, the stationary problem has a transition
at \(\alpha=1\).  Its continuous extension is minimized at the boundary for
\(0<\alpha\leq1\), and every global minimizer is interior for \(\alpha>1\).
The main result determines the finite optimizer in the joint window
\(\alpha_N=1+\varepsilon_N\), \(\varepsilon_N\log N\to\lambda\).  Below an
explicit threshold it lies on the \(N^{-1/2}\) scale.  At the threshold its
scale is \(\sqrt{\log N/(N\log\log N)}\), and above the threshold it approaches
an explicit stationary root with two computable corrections.  A second
threshold identifies which correction governs the location, and differentiated
estimates prove eventual uniqueness.  At \(\alpha=3d+8\), the linear
coefficient at the stationary endpoint changes sign and a branch of strict
local maxima enters the parameter interval.  Under arbitrary input sequences,
the optimal parameter and asymptotic worst-case edge-power cost per processed
point are explicit for \(0<\alpha\leq3\).  At high powers, \(m\)-block
inputs, which periodically repeat \(m\) copies of a point followed by \(m\)
copies of its antipode, provide explicit lower-bound witnesses.  Combined
with a separation argument, they show that the optimized cost is asymptotic
to \(2\log2/\log\alpha\).
Exact results for powers two and four, a rational recursion for all even
powers, and a high-dimensional expansion provide additional descriptions of
the optimizer.
\end{abstract}

\noindent\textbf{AI-use disclosure.}
OpenAI's ChatGPT and Codex were used as research tools to assist the author with
literature retrieval, mathematical exploration, symbolic and numerical checks,
and draft preparation and revision.  The author formulated the research
questions and mathematical framework, developed and refined the results and
arguments with this assistance, and critically reviewed the manuscript
throughout.  Responsibility for the mathematical statements and the final
manuscript rests entirely with the author.

\section{Introduction}

This section introduces the sequential connection rule and its network-design
interpretation.  It then states the questions addressed in the paper and
summarizes the resulting phase diagram and parameter choices.

Let \(p_0,p_1,\ldots,p_N\) be input points in the unit ball \(\B\) of
\(\R^d\), processed in a prescribed order.  An online rule must choose the
attachment point used at step \(i\) from the points already processed.  We
study the \(\gamma\)-strategy, the time-homogeneous constant-gain rule
\begin{equation}
 x_i=\gamma x_{i-1}+(1-\gamma)p_i,
 \qquad 0\leq\gamma\leq1.
 \label{eq:recursion}
\end{equation}
At step \(i\), the input point \(p_i\) is attached to \(x_{i-1}\), and the
updated point \(x_i\) subdivides this insertion segment.  Retaining the updated
points produces a caterpillar graph labelled by processing time.  The memory
parameter \(\gamma\) is selected before the sequence is processed and controls
how long earlier input points influence later attachments.  The rule stores one
\(d\)-dimensional point and applies the same update at every step.

This construction defines a sequential geometric network-design problem.  A
state may represent a relay node, an aggregation point, or a stored attachment
point from which the next connection is initiated.  At step \(i\), the
edge \([x_{i-1},x_i]\) records the displacement needed to update the state,
and \([x_i,p_i]\) records the remaining connection to the current input point.
The model isolates the edge-power cost of the labelled tree.  A
complete communication or facility
model may additionally charge for activating a state, transmitting through
several hops, delay, capacity, or movement.  The present objective determines
the optimal memory parameter for the isolated geometric component.

Retaining every updated point prevents the subdivision degeneracy that occurs
for edge-power costs above power one.  The precise labelled-tree convention is
given in Section~\ref{sec:model}.  The transition at power one consequently
records a genuine design tradeoff within this fixed construction.

Sequential Euclidean connection was studied by Steele
\cite{Steele1989}, and power-weighted insertion trees and stars were compared
in \cite{CastroDevillers2011}.  Random online nearest-neighbor graphs provide
a related probabilistic model in which every point is connected to a previous
point selected from the observed set \cite{PenroseWade2008,Wade2009}.
Online Steiner tree algorithms instead compare the maintained network with an
offline optimum, often through competitive ratio or bounded recourse
\cite{ImaseWaxman1991,AlonAzar1993,GuGuptaKumar2016}.  Our state is prescribed
by an affine summary of the complete observed sequence, and the decision
variable is its memory parameter.

Distance-power objectives also arise in wireless range assignment.  Recent
online and dynamic formulations minimize sums of powered transmission ranges
while maintaining a broadcast arborescence
\cite{deBergMarkovicUmboh2023,deBergSadhukhanSpieksma2024}.  Their node-range
cost is different from a sum over all realized edges.  The shared operational
question is how much geometric cost is required when a network must be updated
after each new point.  Our rule offers one stable update with a single stored
point and permits exact distributional and adversarial analysis.

Eq.\eqref{eq:recursion} is the geometric moving average of Roberts
\cite{Roberts1959}, usually called an exponentially weighted moving average.
Exponential smoothing has a long history in forecasting and stochastic
approximation \cite{Cogger1974}, while iterated random functions provide the
stationary framework for the affine recursion \cite{DiaconisFreedman1999}.
The corresponding insertion cost was analyzed in
\cite{CastroGammaInsertion}.  The present paper determines the memory
parameter that minimizes the edge-power cost of the resulting labelled
caterpillar graph.

Our first contribution is a complete finite-size phase diagram near power one.
When \(0<\alpha<1\), every finite optimizer approaches the stationary endpoint
at scale \(N^{-1/(\alpha+1)}\).  In the joint window
\(\alpha_N=1+\varepsilon_N\), \(\varepsilon_N\log N\to\lambda\), an explicit
threshold \(\lambda^\star(d)\) separates a subcritical \(N^{-1/2}\)
optimizer from
a supercritical stationary scale.  At the threshold, a Lambert profile yields
the scale \(\sqrt{\log N/(N\log\log N)}\).  Above it, the first displacement
correction has stationary and transient terms, and their dominance changes
when \(\lambda=3\lambda^\star(d)/2\).  Uniform estimates for
two derivatives prove eventual uniqueness in the critical and supercritical
windows.

Our second contribution describes the global geometry of the stationary
problem.  Its continuous extension is minimized at \(\gamma=1\) for
\(0<\alpha\leq1\), while every global minimizer lies in \((1/2,1)\) for
\(\alpha>1\).  At \(\alpha=3d+8\), the linear endpoint coefficient changes
sign.  For larger \(\alpha\), a strict local maximum enters the physical
interval from \(\gamma=1\), while the endpoint is a one-sided local minimum and
the global minimum remains interior.  This
bifurcation explains the obstruction to a global convexity proof of uniqueness.

Our third contribution concerns the adversarial edge-power objective.  We first
establish the exact adversarial insertion cost for \(0<\alpha\leq3\), a result
also developed in \cite{CastroGammaInsertion}, and then determine the optimal
edge-power parameter and value in this range.  At high
powers, periodic block inputs provide explicit lower-bound witnesses.  A new
separation argument supplies the matching upper bound and proves that the
optimized adversarial value is asymptotic to \(2\log2/\log\alpha\).  Exact quadratic and
quartic solutions, a rational recursion for all even powers, and an expansion
of the optimizer in high dimension complement these three main results.

Section~\ref{sec:model} defines the model and its probabilistic framework.
Section~\ref{sec:stationary-optimization} treats stationary optimization under
uniform random input, and Section~\ref{sec:worst-case} treats the worst-case
problem.  Their numerical comparison appears in
Section~\ref{sec:numerical-stationary-worst-case}.  Sections~\ref{sec:finite-regimes}
and~\ref{sec:high-dimensional} analyze finite-size and high-dimensional limits,
whose numerical checks are collected in Section~\ref{sec:numerical-scaling}.

\section{Model and probabilistic framework}
\label{sec:model}

This section defines the labelled caterpillar generated by the
\(\gamma\)-strategy and reduces its edge-power cost to a scalar factor times
the insertion cost.  It then introduces the stationary state and the geometric
moments used throughout the optimization results.

\subsection{Labelled caterpillar and cost reduction}

At step \(i\), retaining \(x_i\) creates the spine edge
\([x_{i-1},x_i]\) and the leaf edge \([x_i,p_i]\).  Each occurrence of
\(p_i\) and \(x_i\) carries its processing index.  Vertices created at
different steps remain distinct when their Euclidean positions coincide.  The
vertices \(x_0,\ldots,x_N\) form the spine, and each \(p_i\) is a leaf adjacent
to \(x_i\).  These are the only retained edges and vertices; additional
subdivision vertices are excluded.

Let
\begin{equation}
 L_{\alpha,N}^{(\gamma)}
 =\sum_{i=1}^{N}\norm{p_i-x_{i-1}}^\alpha
 \label{eq:insertion-cost}
\end{equation}
be the insertion cost.  The two edges generated at step \(i\) have lengths
\[
 \norm{x_i-x_{i-1}}
 =(1-\gamma)\norm{p_i-x_{i-1}}
\]
and
\[
 \norm{p_i-x_i}
 =\gamma\norm{p_i-x_{i-1}}.
\]
Consequently, the edge-power cost of the labelled tree is
\begin{equation}
 T_{\alpha,N}^{(\gamma)}
 =h_\alpha(\gamma)L_{\alpha,N}^{(\gamma)},
 \qquad
 h_\alpha(\gamma)
 =\gamma^\alpha+(1-\gamma)^\alpha.
 \label{eq:reduction}
\end{equation}

For independent uniform input, write
\[
 F_{d,\alpha,N}(\gamma)=\E T_{\alpha,N}^{(\gamma)}.
\]

Figure~\ref{fig:canonical-construction} compares the input-order path, the
labelled caterpillar, and the star centered at \(p_0\), obtained from the same
input sequence as the memory parameter varies.

\begin{figure}[t]
 \centering
 \includegraphics[width=\textwidth]{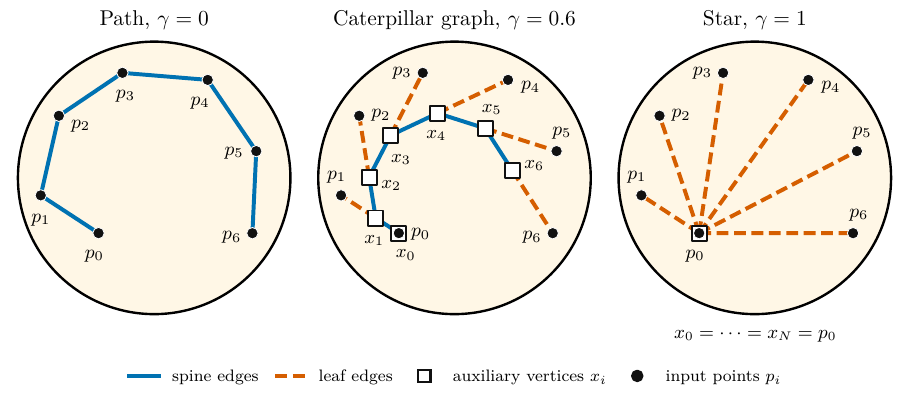}
 \caption{Effect of the memory parameter on the construction generated from
 the same ordered input.  For \(\gamma=0\), contracting the zero-length leaf
 edges gives the input-order path.  For \(0<\gamma<1\), the labelled auxiliary
 vertices \(x_i\) form the spine, and each input point \(p_i\) is joined to
 \(x_i\), giving the labelled caterpillar.  For \(\gamma=1\), contracting the
 zero-length spine edges gives the star centered at \(p_0\).  Color
 distinguishes spine edges from leaf edges, while marker shape distinguishes
 auxiliary vertices from input points.}
 \label{fig:canonical-construction}
\end{figure}

\begin{remark}
For \(\alpha=1\), Eq.\eqref{eq:reduction} is the Euclidean length of the
labelled caterpillar.  For \(\alpha\ne1\), it is the edge-power cost of
the same graph.  Subdividing an edge of length \(r\) into \(m\) equal pieces
changes its contribution to
\(m^{1-\alpha}r^\alpha\).
\end{remark}

\begin{remark}
The normalized quantity
\[
 \frac{T_{\alpha,N}^{(\gamma)}}{h_\alpha(\gamma)}
 =L_{\alpha,N}^{(\gamma)}
\]
is the insertion cost and measures attachment quality before multiplication by
the two-edge factor \(h_\alpha(\gamma)\).  We use both quantities to keep these
two effects explicit.
\end{remark}

The parameter has a direct design interpretation.  Small \(\gamma\) moves the
state strongly toward the current input point and leaves a short terminal segment.
Large \(\gamma\) moves the state slowly and leaves a longer terminal segment.
At \(\alpha=1\), the objective is total installed length.  For \(\alpha>1\),
long individual connections receive an increasing penalty, as in
distance-power network costs.  For \(0<\alpha<1\), the concave power rewards
concentration of length in fewer long connections.  The stationary, finite,
and adversarial optimizers below quantify the memory parameter selected by each of
these regimes.

\begin{remark}[The star parameter and the stationary limit]
For finite \(N\), choosing \(\gamma=1\) keeps \(x_i=p_0\) and produces the
star centered at \(p_0\).  For \(0\leq\gamma<1\) under independent uniform
input, the influence of \(p_0\) vanishes and the recursion has a unique
stationary law.  Thus the finite construction at \(\gamma=1\) and the
continuous stationary extension at the same parameter correspond,
respectively, to a star centered at \(p_0\) and a star centered at the center
of the ball.
\end{remark}

\subsection{Stationary state and geometric moment formulas}
\label{sec:inputs}

This subsection collects the probabilistic quantities used in the optimization
results.  It introduces the stationary insertion moment, records its
monotonicity and endpoint behavior, and derives the low-order and geometric
formulas needed later.  The general formulas support the tradeoff comparison
in Figure~\ref{fig:uniform-adversarial-tradeoff}, and the exact second and fourth
moments are revisited in Figure~\ref{fig:high-dimensional-first-order}.

Throughout the stationary analysis, let \(p\) be uniform in \(\B\), independent
of the stationary state
\begin{equation}
 x_\gamma=(1-\gamma)\sum_{j=0}^{\infty}\gamma^j p_j,
 \qquad 0\leq\gamma<1.
 \label{eq:stationary-state}
\end{equation}
Let
\begin{equation}
 M_{d,\alpha}(\gamma)
 =\E\norm{p-x_\gamma}^{\alpha},
 \qquad
 c_{d,\alpha}
 =\E\norm{p}^{\alpha}
 =\frac{d}{d+\alpha}.
 \label{eq:stationary-moment}
\end{equation}
The stationary insertion moment is ordered by the memory parameter.

\begin{lemma}
\label{lem:stationary-moment-order}
For every \(\alpha>0\), the function \(M_{d,\alpha}\) is nonincreasing on
\([0,1)\), satisfies \(M_{d,\alpha}(\gamma)>c_{d,\alpha}\) for
\(0\leq\gamma<1\), and
\[
 \lim_{\gamma\to1^-}M_{d,\alpha}(\gamma)=c_{d,\alpha}.
\]
For \(\alpha\geq1\), it is differentiable on \((0,1)\) and satisfies
\begin{equation}
 M_{d,\alpha}'(\gamma)<0.
 \label{eq:strict-moment-derivative}
\end{equation}
\end{lemma}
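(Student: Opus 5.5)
Fix \(\gamma\in[0,1)\) and write \(x_\gamma=(1-\gamma)\sum_j\gamma^jp_j\) as in Eq.\eqref{eq:stationary-state}. The key structural fact I would use is a decomposition: for \(\gamma<\gamma'<1\), the stationary law at \(\gamma'\) can be written as the law at \(\gamma\) composed with an additional independent symmetric mixing. Concretely, I would use the stationary equation \(x_\gamma\overset{d}{=}\gamma x_\gamma+(1-\gamma)p\) and compare characteristic functions: \(\hat\mu_\gamma(\xi)=\prod_{j\ge0}\phi((1-\gamma)\gamma^j\xi)\), where \(\phi\) is the (real, radial) characteristic function of the uniform law on \(\B\). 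Since \(p-x_\gamma\) is a sum of independent symmetric radial vectors, it is cleaner to argue directly. The variable \(p-x_\gamma\) equals \(p-(1-\gamma)\sum_j\gamma^jp_j'\), a sum of independent symmetric unimodal (uniform-ball) vectors with radii \(1,(1-\gamma),(1-\gamma)\gamma,\dots\). For monotonicity I would prove that \(\norm{p-x_\gamma}\) is stochastically decreasing in \(\gamma\) in the sense of Anderson's inequality. Each summand is a uniform ball, hence has a symmetric log-concave density. Then \(\E f(\norm{p-x_\gamma})\) for nondecreasing \(f(r)=r^\alpha\) is monotone under "peakedness" ordering. Sherman's theorem and Kanter's result say peakedness is preserved under convolution of symmetric unimodal laws, so I need that \(x_{\gamma'}\) is more peaked than \(x_\gamma\) for \(\gamma'>\gamma\).

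To get this I would use convex ordering/majorization of the coefficient sequence: \(a(\gamma)=((1-\gamma)\gamma^j)_j\) has sum \(1\), and larger \(\gamma\) spreads the mass more. For uniform-ball summands, \(\sum a_jp_j\) is more peaked when \(a\) is more spread. This is the Proschan-type theorem, extended to symmetric log-concave vectors in \(\R^d\). I expect this to be the main obstacle, because the geometric sequences for different \(\gamma\) are not directly comparable by majorization of finite vectors. I would try first to verify majorization directly: the partial sums of the decreasing rearrangement, \(1-\gamma^{k}\), are decreasing in \(\gamma\). Hence \(a(\gamma)\) majorizes \(a(\gamma')\), and a limiting version of Proschan's theorem (truncate, then pass to the limit by dominated convergence since \(\norm{\cdot}\le2\)) gives \(M_{d,\alpha}(\gamma')\le M_{d,\alpha}(\gamma)\).

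For the strict bound \(M_{d,\alpha}>c_{d,\alpha}\), condition on \(x_\gamma\). The map \(y\mapsto\E\norm{p-y}^\alpha\) is minimized uniquely at \(y=0\) for uniform \(p\) by Anderson's inequality with strictness, because \(\{\norm{p-y}\le r\}\) has strictly smaller ball measure for \(y\ne0\) at some \(r\). Since \(x_\gamma\ne0\) with positive probability for \(\gamma<1\), the inequality is strict. For the limit, \(\E\norm{x_\gamma}^2=(1-\gamma)^2\frac{d}{d+2}\sum\gamma^{2j}=\frac{1-\gamma}{1+\gamma}\frac{d}{d+2}\to0\). So \(x_\gamma\to0\) in probability, and bounded convergence gives \(M\to c_{d,\alpha}\).

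For \(\alpha\ge1\), differentiability follows by differentiating under the expectation. Here \(\partial_\gamma(p-x_\gamma)\) is bounded by \(\sum(1+j)\gamma^{j}\), which is uniformly bounded on compacts of \((0,1)\), and \(r\mapsto r^\alpha\) is \(C^1\) for \(\alpha\ge1\) (Lipschitz at \(0\) when \(\alpha=1\)). Because \(p-x_\gamma\) has a density, the nondifferentiable point has measure zero. Strictness of \(M'<0\) is the delicate part. I would write \(M'(\gamma)=\E[\nabla g(x_\gamma)\cdot\partial_\gamma x_\gamma]\) with \(g(y)=\E_p\norm{p-y}^\alpha\), a strictly radially increasing function of \(\norm y\) for \(\alpha\ge1\). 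Then I would use the small-increment decomposition \(x_{\gamma+h}\overset{d}{=}x_\gamma+\) an independent symmetric perturbation minus a contraction. The first-order term reduces to \(-c\,\E[\nabla g(x_\gamma)\cdot x_\gamma]\) plus a nonnegative term that does not vanish. Since \(\nabla g(y)\cdot y>0\) for \(y\ne0\), this gives \(M'<0\).
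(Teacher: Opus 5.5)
\paragraph{Monotonicity, strict comparison and endpoint limit.} These three parts follow the paper's route: majorization of the geometric weights, the Olkin--Tong peakedness theorem for symmetric log-concave summands, a strict Anderson-type comparison for a nonzero centre, and $L^2$ convergence $x_\gamma\to0$.

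One detail is glossed over. The truncated vectors $((1-\gamma)\gamma^j)_{j\leq n}$ have different sums $1-\gamma^{n+1}$ for different $\gamma$, so the finite majorization theorem does not apply to them directly. The paper normalizes the truncations and checks that $(1-\gamma^{m+1})/(1-\gamma^{n+1})$ decreases in $\gamma$. It then removes the normalization using $K/\lambda_{1,n}\subset K/\lambda_{2,n}$ for the smaller scale $\lambda_{2,n}=1-\gamma_2^{n+1}$. This step is easy to supply.

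\paragraph{The gap: strict sign of the derivative.} The genuine gap is the strict inequality $M_{d,\alpha}'(\gamma)<0$ for $\alpha\geq1$. At first order, your contraction-plus-perturbation picture is the pathwise identity $\partial_\gamma x_\gamma=-\frac{1}{1-\gamma}x_\gamma+\frac1\gamma\sum_j ja_jp_j$ with $a_j=(1-\gamma)\gamma^j$. Hence
\[
 M_{d,\alpha}'(\gamma)
 =-\frac{1}{1-\gamma}\E\bigl[\nabla G_{d,\alpha}(x_\gamma)\cdot x_\gamma\bigr]
 +\frac1\gamma\sum_{j\geq0}ja_j\,\E\bigl[\nabla G_{d,\alpha}(x_\gamma)\cdot p_j\bigr].
\]
The second (spreading) term is nonnegative, as you say, by convexity of $G_{d,\alpha}$ and symmetry of $p_j$. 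It therefore works \emph{against} your conclusion. Positivity of $\nabla G_{d,\alpha}(y)\cdot y$ cannot decide the sign of a negative term plus a positive one.

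The competition is real: for $\alpha=2$ the two terms are $-2s_d/(1+\gamma)$ and $+2s_d\gamma/(1+\gamma)^2$. Your peakedness argument only gives $M_{d,\alpha}'\leq0$, which does not exclude zeros of the derivative.

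\paragraph{The missing idea.} The paper uses a pairwise symmetrization. Put $z=p_{-1}+\sum_j a_jp_j$ and $b_j=\alpha\E[\norm z^{\alpha-2}z\cdot p_j]$. The relations $\sum_ja_j'=0$ and $a_j'=\frac{a_j}{\gamma}\bigl(j-\frac{\gamma}{1-\gamma}\bigr)$ turn $M_{d,\alpha}'=\sum_ja_j'b_j$ into
\[
 -M_{d,\alpha}'=\frac1\gamma\sum_{i<j}a_ia_j(j-i)(b_i-b_j).
\]
Let $z_1=z$, and let $z_2$ be $z$ with $p_i$ and $p_j$ interchanged. Exchangeability gives $b_i-b_j=\frac{\alpha}{2(a_i-a_j)}\E[(\psi_\alpha(z_1)-\psi_\alpha(z_2))\cdot(z_1-z_2)]$, where $\psi_\alpha(z)=\norm z^{\alpha-2}z$. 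This is positive by strict monotonicity of the gradient of $\norm{\cdot}^\alpha/\alpha$; at $\alpha=1$ use subgradient monotonicity off a null set.

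This is a Schur--Ostrowski argument along the majorization-decreasing curve $\gamma\mapsto a(\gamma)$. It is exactly what resolves the contraction--spreading competition that your decomposition leaves open.
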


\begin{proof}
The order comparison for every positive power, its strict comparison with the
endpoint, and the endpoint limit are proved in
Appendix~\ref{app:foundational-insertion-inputs}.  That proof uses majorization
and the theorem of Olkin and Tong~\cite{OlkinTong1988}.  We give here a direct
derivative argument for \(\alpha\geq1\), which is the range used in the
interior stationary analysis below.

Let \(p_{-1},p_0,p_1,\ldots\) be independent uniform points of \(\B\), and put
\[
 a_j=(1-\gamma)\gamma^j,
 \qquad
 z=p_{-1}+\sum_{j\geq0}a_jp_j.
\]
Central symmetry gives \(M_{d,\alpha}(\gamma)=\E\norm z^\alpha\).  Define
\[
 \psi_\alpha(z)=\norm z^{\alpha-2}z,
 \qquad
 b_j=\alpha\E[\psi_\alpha(z)\mathbin{\cdot}p_j],
\]
where \(\psi_1(z)=z/\norm z\) away from the null event \(z=0\).  Since
\[
 a_j'=\frac{a_j}{\gamma}
 \left(j-\frac{\gamma}{1-\gamma}\right),
\]
differentiation and pairwise symmetrization yield
\begin{equation}
 -M_{d,\alpha}'(\gamma)
 =\frac1\gamma\sum_{0\leq i<j}
 a_ia_j(j-i)(b_i-b_j).
 \label{eq:paired-moment-derivative}
\end{equation}
To determine the sign of each difference, fix \(i<j\), let
\(u=a_i>a_j=v\), and collect all terms independent of \(p_i,p_j\) in
\(r\).  Set
\[
 z_1=r+up_i+vp_j,
 \qquad
 z_2=r+up_j+vp_i.
\]
Exchangeability gives
\begin{equation}
 b_i-b_j
 =\frac{\alpha}{2(u-v)}
 \E\bigl[(\psi_\alpha(z_1)-\psi_\alpha(z_2))
 \mathbin{\cdot}(z_1-z_2)\bigr].
 \label{eq:paired-gradient-difference}
\end{equation}
For \(\alpha>1\), \(\psi_\alpha\) is the gradient of the strictly convex
function \(z\mapsto\norm z^\alpha/\alpha\), so the integrand is nonnegative
and is positive with positive probability.  At \(\alpha=1\), monotonicity of
the subgradient of the Euclidean norm gives the same conclusion after
discarding the null set where a subgradient is not uniquely determined.
Thus \(b_i>b_j\), and Eq.\eqref{eq:paired-moment-derivative} proves
Eq.\eqref{eq:strict-moment-derivative}.  Finally,
\(x_\gamma\to0\) in \(L^2\) as \(\gamma\to1^-\), and bounded convergence
gives the endpoint value.
\end{proof}

The next two stationary moment formulas were established in
\cite{CastroGammaInsertion}.  They are recalled with a short derivation because
they are used repeatedly in the exact low-power calculations.

\begin{lemma}
Let \(s_d=d/(d+2)\).  Then
\begin{equation}
 M_{d,2}(\gamma)=\frac{2s_d}{1+\gamma}.
 \label{eq:second-moment}
\end{equation}
The fourth moment is
\begin{equation}
 M_{d,4}(\gamma)
 =
 \frac{4d}{(d+2)(d+4)}
 \frac{(d+3)(1+\gamma^2)-2\gamma^3}
 {(1+\gamma)^2(1+\gamma^2)}.
 \label{eq:fourth-moment}
\end{equation}
\end{lemma}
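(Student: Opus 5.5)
The plan is to write the insertion vector as a weighted sum of independent, rotation-invariant vectors and then compute the second and fourth moments of that sum directly. As in the proof of Lemma~\ref{lem:stationary-moment-order}, central symmetry and Eq.\eqref{eq:stationary-state} give \(p-x_\gamma\overset{d}{=}z=\sum_k w_kY_k\). Here the \(Y_k\) are independent and uniform in \(\B\), and the weights are \(w_{-1}=1\) and \(w_j=a_j=(1-\gamma)\gamma^j\) for \(j\geq0\). Everything then depends on the two power sums
\[
 S_2=\sum_k w_k^2=1+\frac{(1-\gamma)^2}{1-\gamma^2}=\frac{2}{1+\gamma},
 \qquad
 S_4=\sum_k w_k^4=1+\frac{(1-\gamma)^3}{(1+\gamma)(1+\gamma^2)},
\]
together with the single-point moments \(m_2=\E\norm{Y}^2=d/(d+2)\) and \(m_4=\E\norm{Y}^4=d/(d+4)\), both read off from \(c_{d,\alpha}\) in Eq.\eqref{eq:stationary-moment}.

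For the second moment, expand \(\norm z^2\). The cross terms \(\E[Y_k\cdot Y_l]\) vanish for \(k\neq l\) by independence and centering. This leaves \(M_{d,2}(\gamma)=m_2S_2=2s_d/(1+\gamma)\), which is Eq.\eqref{eq:second-moment}.

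For the fourth moment, expand
\[
 \norm z^4=\Bigl(\sum_k w_k^2\norm{Y_k}^2+\sum_{k\neq l}w_kw_l\,Y_k\cdot Y_l\Bigr)^2
\]
and take expectations.
\begin{itemize}
 \item Odd terms, those containing some \(Y_k\) to an odd power, vanish by central symmetry.
 \item The diagonal squares contribute \(m_4S_4\).
 \item The ordered pairs \(k\neq l\) of the first sum contribute \(m_2^2\sum_{k\neq l}w_k^2w_l^2\).
 \item Squaring the second sum, only pairings of the same unordered pair survive. Rotation invariance gives \(\E(Y_k\cdot Y_l)^2=m_2^2/d\), so these contribute \((2/d)m_2^2\sum_{k\neq l}w_k^2w_l^2\) over ordered pairs.
\end{itemize}
Since \((1+2/d)m_2^2=d/(d+2)\) and \(\sum_{k\neq l}w_k^2w_l^2=S_2^2-S_4\), this yields
\[
 M_{d,4}(\gamma)=\frac{d}{d+2}S_2^2-\frac{2d}{(d+2)(d+4)}S_4 .
\]
At \(\gamma=0\) we have \(S_2=S_4=2\), and the formula gives \(4d(d+3)/((d+2)(d+4))\). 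This agrees with Eq.\eqref{eq:fourth-moment} at \(\gamma=0\), which confirms the pair-counting constants.

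The last step is to substitute \(S_2\) and \(S_4\) and place everything over the common denominator \((1+\gamma)^2(1+\gamma^2)\). This should reduce to the numerator \((d+3)(1+\gamma^2)-2\gamma^3\) after factoring out \(4d/((d+2)(d+4))\). I expect the main obstacle to be this rational simplification together with getting the combinatorial factors right (ordered versus unordered pairs); the \(\gamma=0\) check guards the factors. Interchanging expectation with the infinite sum is harmless, since the series converges absolutely and \(\norm z\leq2\).
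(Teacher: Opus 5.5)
Your proposal is correct and follows the paper's proof essentially line by line. Both arguments use the same representation of \(p-x_\gamma\) as a weighted sum of independent uniform points, the same power sums \(\sigma_2=2/(1+\gamma)\) and \(\sigma_4=1+(1-\gamma)^4/(1-\gamma^4)\), and the same identity \(M_{d,4}=s_d\sigma_2^2-\frac{2d}{(d+2)(d+4)}\sigma_4\) derived from \(\E\norm{p}^4=d/(d+4)\) and \(\E(p\cdot q)^2=s_d^2/d\); your ordered-pair counting and the final simplification to the numerator \((d+3)(1+\gamma^2)-2\gamma^3\) both check out.
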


\begin{proof}
The state and the independent point are centered and isotropic.  For the
coefficient sequence
\[
 a_0=1,
 \qquad
 a_{j+1}=(1-\gamma)\gamma^j,
 \]
the required difference has the same distribution as
\(\sum_{j\geq0}a_jp_j\).  Therefore
\[
 \sigma_2=\sum_{j\geq0}a_j^2=\frac{2}{1+\gamma},
 \qquad
 \sigma_4=\sum_{j\geq0}a_j^4
 =1+\frac{(1-\gamma)^4}{1-\gamma^4}.
\]
Covariance summation gives \(M_{d,2}=s_d\sigma_2\).  For independent uniform
points \(p,q\),
\[
 \E\norm p^4=\frac{d}{d+4},
 \qquad
 \E(p\mathbin{\cdot}q)^2=\frac{s_d^2}{d}.
\]
Expanding the square of \(\norm{\sum_j a_jp_j}^2\) and cancelling all mixed
terms with an odd factor yields
\[
 M_{d,4}
 =s_d\sigma_2^2
 -\frac{2d}{(d+2)(d+4)}\sigma_4.
\]
Substitution and simplification give Eq.\eqref{eq:fourth-moment}.
\end{proof}

For \(x\in\B\), define the ball potential
\[
 G_{d,\alpha}(x)=\E\norm{p-x}^{\alpha}.
\]
Rotational invariance shows that it depends on \(x\) only through \(\norm x\).
For \(0\leq r\leq1\) and any unit vector \(e\), let
\[
 g_{d,\alpha}(r)=G_{d,\alpha}(re),
 \qquad \norm e=1.
\]
The value of \(g_{d,\alpha}(r)\) does not depend on the choice of \(e\).  In particular,
\begin{equation}
 M_{d,\alpha}(\gamma)
 =\E G_{d,\alpha}(x_\gamma)
 =\E g_{d,\alpha}(\norm{x_\gamma}).
 \label{eq:moment-through-potential}
\end{equation}
For \(\alpha>2\), the following bounds control the radial potential.
\begin{lemma}
\label{lem:radial-upper-envelope}
\begin{equation}
 g_{d,\alpha}(r)
 \leq c_{d,\alpha}
 +(g_{d,\alpha}(1)-c_{d,\alpha})r^2,
 \qquad
 g_{d,\alpha}(1)\leq2^\alpha c_{d,\alpha}.
 \label{eq:radial-upper-envelope}
\end{equation}
\end{lemma}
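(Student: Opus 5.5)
The plan is to prove the two inequalities separately. The first follows from a convexity property of the radial potential in the variable \(s=r^2\). The second follows from a scaling and containment argument.

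\emph{Upper bound on \(g_{d,\alpha}(1)\).} Write
\[
 g_{d,\alpha}(1)=2^\alpha\,\E\norm{(p-e)/2}^\alpha .
\]
The point \(y=(p-e)/2\) is uniform on the ball \(B(-e/2,1/2)\), which lies inside \(\B\). It suffices to show that \(\norm y\) is stochastically dominated by \(\norm p\), i.e.
\[
 P(\norm y\leq t)\geq t^d,\qquad 0\leq t\leq1,
\]
because \(\norm{\cdot}^\alpha\) is increasing and this gives \(\E\norm y^\alpha\leq c_{d,\alpha}\). Rescaling by \(2\), the left-hand side is
\[
 \frac{\operatorname{vol}\bigl(B(0,2t)\cap B(-e,1)\bigr)}{\operatorname{vol}\B}.
\]
The ball \(B(-te,t)\) is contained in both \(B(0,2t)\) and \(B(-e,1)\): its farthest point from \(0\) is at distance \(2t\), and its farthest point from \(-e\) is at distance \((1-t)+t=1\). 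Hence the intersection has volume at least \(t^d\operatorname{vol}\B\), which proves the domination.

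\emph{Quadratic envelope.} Set \(\phi(s)=g_{d,\alpha}(\sqrt s)\) for \(s\in[0,1]\). We have \(\phi(0)=G_{d,\alpha}(0)=c_{d,\alpha}\) and \(\phi(1)=g_{d,\alpha}(1)\). The plan is to show that \(\phi\) is convex; the chord inequality \(\phi(s)\leq(1-s)\phi(0)+s\phi(1)\) is then exactly the first bound in Eq.~\eqref{eq:radial-upper-envelope}.

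For a radial function \(f\), we have \(\phi''(s)=(f''-f'/r)/(4r^2)\), so convexity of \(\phi\) is equivalent to \(f''(r)\geq f'(r)/r\). For the radial function \(G_{d,\alpha}\), at the point \(re\) the Hessian has radial eigenvalue \(f''\) and tangential eigenvalues \(f'/r\). Thus the condition is \(H_{ee}\geq H_{ww}\) for every unit \(w\perp e\), where \(H\) is the Hessian of \(G_{d,\alpha}\) at \(re\).

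For \(\alpha>2\) we may differentiate under the expectation and obtain
\[
 H=\alpha\,\E\bigl[\norm y^{\alpha-2}\bigl(I+(\alpha-2)\hat y\hat y^{\mathsf T}\bigr)\bigr],
 \qquad y=p-re,\quad \hat y=y/\norm y .
\]
The identity terms cancel in \(H_{ee}-H_{ww}\), so the required inequality becomes
\[
 D(r)=\E\bigl[\norm y^{\alpha-4}\bigl((y\cdot e)^2-(y\cdot w)^2\bigr)\bigr]\geq0 .
\]
At \(r=0\) we have \(D(0)=0\), by the symmetry of the uniform law exchanging \(e\) and \(w\).

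I expect proving \(D(r)\geq0\) for \(0<r\leq1\) to be the main obstacle. The first attempt will be the following. Restrict to the plane spanned by \(e\) and \(w\), conditioning on the orthogonal components of \(p\). Then pair the points \(p\) and \(p'\) obtained by reflecting across the diagonal \(y\cdot e=\pm y\cdot w\) of the shifted coordinates. The aim is to show that the shift by \(-re\) moves more mass of the ball into the region \(|y\cdot e|>|y\cdot w|\), and with larger values of the increasing weight \(\norm y^{\alpha-4}\norm y^2\).

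If this monotone-rearrangement argument resists, the fallback is to compute \(\phi\) directly through the Funk--Hecke or Gegenbauer expansion of \(\norm{p-x}^\alpha\) in the variable \(\norm x^2\). Here the case split \(\norm p\gtrless r\) is handled by integrating the radial profile of \(p\). The goal of this route is to verify that the coefficients of \(s^k\) are nonnegative for \(k\geq2\), which gives convexity of \(\phi\) directly; this is transparent for even \(\alpha\), where \(\phi\) is a polynomial.
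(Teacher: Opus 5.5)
Your endpoint bound is correct. The inscribed ball \(B(-te,t)\) shows that \(\norm{p-e}/2\) is stochastically dominated by \(\norm p\). This is a valid alternative to the paper's exact polar-coordinate computation centered at the boundary point \(0\) of \(B(-e/2,1/2)\). Your reduction of the first inequality to convexity of \(\phi(s)=g_{d,\alpha}(\sqrt s)\), i.e.\ to \(g_{d,\alpha}'(r)/r\) being nondecreasing, is also correct, and it is exactly what the paper proves.

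The gap is that this monotonicity, which carries the whole content of the first inequality, is never established. The reflection route gives no pointwise comparison. Put \(\kappa(y)=\norm y^{\alpha-4}((y\cdot e)^2-(y\cdot w)^2)\). Swapping the \(e\)- and \(w\)-components exchanges \(B(-re,1)\) and \(B(-rw,1)\) and reverses the sign of \(\kappa\), so the common part cancels and \(\omega_dD(r)=\int_{B(-re,1)\setminus B(-rw,1)}\kappa\,dy\). For \(0<r<1\) this crescent contains the points \(-te\) with \(\sqrt{1-r^2}<t\leq1+r\), where \(\kappa>0\). It also contains the points \(sw\) with \(1-r<s\leq\sqrt{1-r^2}\), where \(\kappa<0\). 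The sign of \(D\) is therefore a genuine cancellation, and your heuristic does not control it.

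The fallback targets a false statement. For \(d=1\), \(\phi(s)=(\alpha+1)^{-1}\sum_{k\geq0}\binom{\alpha+1}{2k}s^k\). At \(\alpha=7/2\) the coefficient of \(s^3\), namely \(\binom{9/2}{6}/(9/2)\), is negative, and the same happens for \(d=2\). Yet \(\phi\) is convex in both cases. Finally, for \(d=1\) there is no unit \(w\perp e\), so your Hessian criterion is vacuous there and cannot be equivalent to convexity of \(\phi\).

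The missing idea is the lift to \(\R^{d+2}\) used in the paper. The divergence theorem, together with the fact that the uniform measure on the unit sphere of \(\R^{d+2}\) projects onto the uniform measure on \(\B\), gives \(g_{d,\alpha}'(r)=\alpha r\,\E\norm{u-re}^{\alpha-2}\) with \(u\) uniform on that sphere. For \(d=1\) this reads \(g_{1,\alpha}'(r)=((1+r)^\alpha-(1-r)^\alpha)/2\). Hence \(\phi'(s)=\tfrac\alpha2 m(\sqrt s)\), where \(m(r)=\E\norm{u-re}^{\alpha-2}\) is the radial profile of a spherical average of \(x\mapsto\norm x^{\alpha-2}\) in \(\R^{d+2}\). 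Because \(\Delta\norm x^{\alpha-2}=(\alpha-2)(\alpha+d-2)\norm x^{\alpha-4}\geq0\) for \(\alpha>2\), one has \((r^{d+1}m'(r))'\geq0\). Together with \(m'(0)=0\), this makes \(m\) nondecreasing. That is precisely convexity of \(\phi\) (equivalently \(D(r)\geq0\) when \(d\geq2\)), and it holds in every dimension, including \(d=1\).
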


\begin{proof}
Appendix~\ref{app:foundational-insertion-inputs} proves that
\((g_{d,\alpha}(r)-c_{d,\alpha})/r^2\) is increasing and derives the endpoint
estimate by polar integration over a tangent sub-ball.
\end{proof}

Independence and
the geometric series also give
\begin{equation}
 \E\norm{x_\gamma}^2
 =s_d\frac{1-\gamma}{1+\gamma}.
 \label{eq:state-second-moment}
\end{equation}

\section{Stationary optimization under uniform random input}
\label{sec:stationary-optimization}

This section determines the optimal fixed memory parameter under stationary
uniform input.  It establishes the transition at power one, solves the
quadratic and quartic cases, extends the moment calculation to every even
power, and analyzes the local change at the star parameter.  The corresponding
numerical comparisons appear in Section~\ref{sec:numerical-stationary-worst-case}.

\subsection{Transition at power one}
\label{sec:stationary-phase}

This subsection determines whether the stationary objective is minimized at the
star parameter \(\gamma=1\) or at an interior memory parameter.  The answer
changes at \(\alpha=1\), and the proof supplies the endpoint estimates used in
the later finite-size analysis.  Figure~\ref{fig:uniform-adversarial-tradeoff}
and Table~\ref{tab:robustness-cost} later compare this stationary choice with
the worst-case choice.

Define the stationary objective on the closed parameter interval by
\begin{equation}
 \Phi_{d,\alpha}(\gamma)
 =h_\alpha(\gamma)M_{d,\alpha}(\gamma),
 \qquad 0\leq\gamma<1,
 \qquad
 \Phi_{d,\alpha}(1)=c_{d,\alpha}.
 \label{eq:stationary-objective}
\end{equation}

\begin{theorem}
\label{thm:stationary-transition}
If \(0<\alpha\leq1\), the continuous extension
\(\Phi_{d,\alpha}\) has its unique minimum at \(\gamma=1\).
Consequently, the stationary objective on \([0,1)\) has an unattained
infimum.  If \(\alpha>1\), the global minimum is attained and every minimizer
belongs to
\[
 \frac12<\gamma<1.
\]
\end{theorem}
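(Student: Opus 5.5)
For \(0<\alpha\leq1\) I would compare \(\Phi_{d,\alpha}(\gamma)\) directly with the endpoint value. Since \(t\mapsto t^\alpha\) is concave and vanishes at \(0\), it is subadditive, so \(h_\alpha(\gamma)=\gamma^\alpha+(1-\gamma)^\alpha\geq1\) for every \(\gamma\in[0,1]\). Lemma~\ref{lem:stationary-moment-order} gives \(M_{d,\alpha}(\gamma)>c_{d,\alpha}\) for \(0\leq\gamma<1\). Hence \(\Phi_{d,\alpha}(\gamma)>c_{d,\alpha}=\Phi_{d,\alpha}(1)\) on \([0,1)\). Continuity of the extension at \(\gamma=1\) follows from the endpoint limit in the same lemma together with \(h_\alpha(1)=1\). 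So \(\gamma=1\) is the unique minimizer. On \([0,1)\) the infimum equals \(c_{d,\alpha}\), and the strict inequality shows that it is not attained.

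For \(\alpha>1\), the continuous extension on the compact interval \([0,1]\) attains its minimum, and I will show that no minimizer lies in \([0,1/2]\cup\{1\}\). The interval \([0,1/2]\) is handled by a reflection: \(h_\alpha\) is symmetric about \(1/2\), and \(M_{d,\alpha}\) is strictly decreasing on \([0,1)\) by \eqref{eq:strict-moment-derivative} (at \(\gamma=0\) one combines the nonincrease with strictness on \((0,1)\)). For \(\gamma\in[0,1/2)\) this gives \(\Phi_{d,\alpha}(1-\gamma)=h_\alpha(\gamma)M_{d,\alpha}(1-\gamma)<\Phi_{d,\alpha}(\gamma)\). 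At \(\gamma=1/2\) the derivative is \(\Phi'=h_\alpha'(1/2)M+h_\alpha M'=h_\alpha(1/2)M'(1/2)<0\), so \(1/2\) is not a minimizer either.

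The main obstacle is excluding \(\gamma=1\), that is, showing that some \(\gamma<1\) satisfies \(\Phi_{d,\alpha}(\gamma)<c_{d,\alpha}\). Near \(\gamma=1\), write \(\delta=1-\gamma\). Then \(h_\alpha(\gamma)=1-\alpha\delta+O(\delta^{\min(\alpha,2)})\), so the multiplicative gain is linear in \(\delta\). I need \(M_{d,\alpha}(\gamma)-c_{d,\alpha}=o(\delta)\). By \eqref{eq:moment-through-potential}, \(M-c=\E[g(\norm{x_\gamma})-c]\).

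\emph{Case \(\alpha>2\).} Lemma~\ref{lem:radial-upper-envelope} together with \eqref{eq:state-second-moment} gives
\[
 M-c\leq(2^\alpha-1)c\,s_d\,\frac{\delta}{2-\delta},
\]
which is only \(O(\delta)\), not \(o(\delta)\). The constant is too crude, so in this case I would instead use \(g(r)-c\leq K r^2\) with the sharp Taylor constant \(K=g''(0)/2\), justified by the monotonicity of \((g-c)/r^2\), on the event that \(\norm{x_\gamma}\) is small. The estimate needed is
\[
 \alpha\,c>K s_d/2 .
\]
This may hold only for \(d\) moderately sized, so I would add a fallback. Note that \(x_\gamma\) has radius of order \(\sqrt\delta\), so \(\E\norm{x_\gamma}^2\sim s_d\delta/2\) is genuinely linear in \(\delta\). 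Rather than comparing constants, the fallback is to evaluate \(\Phi\) at an explicit interior point such as \(\gamma\) close to \(1\) and bound it directly.

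\emph{Case \(1<\alpha\leq2\).} Here \(g\) is smooth at \(0\), since \(\norm{p-x}^\alpha\) is integrable with second derivatives of order \(\norm{p-x}^{\alpha-2}\), which is integrable for \(d\geq1\). Hence \(g(r)-c=O(r^2)\), giving \(M-c=O(\delta)\), and the same constant comparison \(\alpha c>\tfrac12 s_d K\) is required.

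The key inequality, which I expect to be the true crux, is therefore this: the one-sided derivative of \(\Phi\) at \(1\), namely \(\alpha c_{d,\alpha}-\tfrac{s_d}{2}\Delta g(0)/d\), is positive. The Laplacian of \(\E\norm{p-x}^\alpha\) at \(0\) equals \(\alpha(\alpha+d-2)\E\norm p^{\alpha-2}=\alpha(\alpha+d-2)\,d/(d+\alpha-2)=\alpha d\). This gives a derivative of \(\alpha c-\alpha s_d/2>0\), since \(c_{d,\alpha}=d/(d+\alpha)\) and \(\alpha\) is finite. I would verify this computation for every \(\alpha>1\), using that the positivity needs only \(2d+4>d+\alpha\). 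However, that inequality fails for \(\alpha\) large, consistent with the sign change at \(\alpha=3d+8\) mentioned in the abstract after corrections. Consequently, for large \(\alpha\) I would exclude \(\gamma=1\) by exhibiting an explicit \(\gamma\) with \(\Phi(\gamma)<c\), using the envelope of Lemma~\ref{lem:radial-upper-envelope} at moderate \(\gamma\) where \(h_\alpha\) is exponentially small relative to \(2^\alpha\).
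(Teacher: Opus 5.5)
Your treatment of \(0<\alpha\leq1\) matches the paper's. Your exclusion of \([0,1/2]\) is also correct. The reflection \(\Phi_{d,\alpha}(1-\gamma)=h_\alpha(\gamma)M_{d,\alpha}(1-\gamma)<\Phi_{d,\alpha}(\gamma)\), together with \(\Phi_{d,\alpha}'(1/2)=h_\alpha(1/2)M_{d,\alpha}'(1/2)<0\), disposes of \(\gamma=1/2\) more explicitly than the paper's appeal to strict decrease on \([0,1/2]\).

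The gap is in excluding \(\gamma=1\). Two smaller problems come first.
\begin{itemize}
\item Your final slope formula drops the Taylor factor \(1/2\), and so contradicts your own earlier condition \(\alpha c_{d,\alpha}>Ks_d/2\) with \(K=g''(0)/2\). Since \(g_{d,\alpha}(r)=c_{d,\alpha}+\frac\alpha2r^2+o(r^2)\) and \(\E\norm{x_{1-\delta}}^2\sim s_d\delta/2\), the linear coefficient of \(\Phi_{d,\alpha}(1-\delta)-c_{d,\alpha}\) is \(\alpha(s_d/4-c_{d,\alpha})\), as in \eqref{eq:endpoint-linear}. It is negative exactly for \(\alpha<3d+8\), not \(\alpha<d+4\).
\item The monotone increase of \((g_{d,\alpha}(r)-c_{d,\alpha})/r^2\) gives the lower bound \(g_{d,\alpha}(r)-c_{d,\alpha}\geq\frac\alpha2r^2\), not the upper bound you attribute to it. The upper bound near the origin has to come from a Taylor remainder plus the concentration of \(x_{1-\delta}\).
\end{itemize}

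The serious problem is the large-power range. For \(\alpha\geq3d+8\) the linear coefficient is zero or positive, so no first-order analysis at \(\gamma=1\) can exclude the endpoint. For \(\alpha>3d+8\) the endpoint is in fact a strict one-sided local minimum, so no local analysis there can exclude it. The whole large-power case therefore rests on your fallback, which you announce but do not carry out. Choosing some ``moderate \(\gamma\)'' is not enough. For \(\gamma\) at a fixed distance from \(1/2\), \(2^\alpha h_\alpha(\gamma)=(2\gamma)^\alpha+(2-2\gamma)^\alpha\) grows exponentially in \(\alpha\), and the envelope bound becomes useless there.

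The missing step is the paper's argument for every \(\alpha>2\): evaluate at \(\gamma=1/2\). There \(h_\alpha(1/2)=2^{1-\alpha}\) exactly compensates the factor \(2^\alpha\) in \(g_{d,\alpha}(1)\leq2^\alpha c_{d,\alpha}\). Lemma~\ref{lem:radial-upper-envelope}, \eqref{eq:moment-through-potential}, and \(\E\norm{x_{1/2}}^2=s_d/3\) give
\[
 \frac{\Phi_{d,\alpha}(1/2)}{c_{d,\alpha}}
 \leq2^{1-\alpha}\Bigl(1+(2^\alpha-1)\frac{s_d}{3}\Bigr)
 =2^{1-\alpha}\Bigl(1-\frac{s_d}{3}\Bigr)+\frac{2s_d}{3}
 <\frac12+\frac{s_d}{2}<1.
\]
This single bound covers all \(\alpha>2\), including the range where the endpoint slope has the wrong sign. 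The endpoint expansion is then needed only for \(1<\alpha<2\), where \(c_{d,\alpha}>s_d>s_d/4\). The case \(\alpha=2\) follows from \(\Phi_{d,2}(1/2)=2s_d/3<s_d\), or from your corrected slope.

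Without this computation, your proposal proves the \(\alpha>1\) assertion only for \(1<\alpha<3d+8\). With your constant as written, it proves it only for \(\alpha<d+4\).
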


\begin{proof}
\noindent\emph{Boundary regime \(0<\alpha\leq1\).}
For \(0<\alpha<1\), one has
\(h_\alpha(\gamma)>\gamma+(1-\gamma)=1\) whenever
\(0<\gamma<1\).  Moreover,
\(M_{d,\alpha}(\gamma)>c_{d,\alpha}\) for every \(\gamma<1\).
For \(\alpha=1\), the first factor is equal to one and the strict inequality
for the second factor remains.  This proves the first assertion.

\smallskip
\noindent\emph{Interior regime \(\alpha>1\).}
Let \(\alpha>1\).  Both \(h_\alpha\) and \(M_{d,\alpha}\) decrease on
 \([0,1/2]\), and the first decrease is strict.  Hence \(\Phi_{d,\alpha}\)
 decreases strictly on that interval.  It remains to exhibit a point of
 \((1/2,1)\) whose value is below \(c_{d,\alpha}\).

\smallskip
\noindent\emph{The case \(1<\alpha<2\).}
Suppose first that \(1<\alpha<2\), and let \(\delta=1-\gamma\).  Direct
differentiation of the potential at the origin gives
\[
 D^2G_{d,\alpha}(0)=\alpha I_d.
\]
Moreover, Eq.\eqref{eq:state-second-moment} gives
\[
 \E\norm{x_{1-\delta}}^2
 =s_d\frac{\delta}{2-\delta}
 =\frac{s_d}{2}\delta+O(\delta^2).
\]
The fourth moment of this weighted sum is \(O(\delta^2)\).  A second-order
expansion of \(G_{d,\alpha}\) at the origin can therefore be averaged to obtain
\[
 M_{d,\alpha}(1-\delta)
 =c_{d,\alpha}+\frac{\alpha s_d}{4}\delta+o(\delta).
\]
Since
\[
 h_\alpha(1-\delta)
 =1-\alpha\delta+\delta^\alpha+O(\delta^2),
\]
and \(\delta^\alpha=o(\delta)\), it follows that
 \[
 \Phi_{d,\alpha}(1-\delta)-c_{d,\alpha}
 =\alpha\left(\frac{s_d}{4}-c_{d,\alpha}\right)\delta+o(\delta).
 \]
Here \(c_{d,\alpha}>s_d\) because \(\alpha<2\), so this difference is
negative for all sufficiently small positive \(\delta\).

\smallskip
\noindent\emph{The case \(\alpha=2\).}
For \(\alpha=2\), Eq.\eqref{eq:second-moment} gives
\[
 \Phi_{d,2}(1/2)
 =\frac12M_{d,2}(1/2)
 =\frac{2s_d}{3}<s_d=c_{d,2}.
\]

\smallskip
\noindent\emph{The case \(\alpha>2\).}
Finally, let \(\alpha>2\).  The stationary state at \(\gamma=1/2\) is
\[
 x_{1/2}=\sum_{j=0}^{\infty}2^{-j-1}p_j.
\]
By Eq.\eqref{eq:moment-through-potential}, the insertion moment at this
parameter is the expected radial potential evaluated at \(\norm{x_{1/2}}\).
Eqs.\eqref{eq:radial-upper-envelope} and
\eqref{eq:state-second-moment} therefore give
\begin{align*}
 M_{d,\alpha}(1/2)
 &=\E g_{d,\alpha}(\norm{x_{1/2}})\\
 &\leq c_{d,\alpha}
 +(g_{d,\alpha}(1)-c_{d,\alpha})\E\norm{x_{1/2}}^2\\
 &\leq c_{d,\alpha}\left(1+(2^\alpha-1)\frac{s_d}{3}\right).
\end{align*}
Multiplication by \(h_\alpha(1/2)=2^{1-\alpha}\) yields
\[
 \frac{\Phi_{d,\alpha}(1/2)}{c_{d,\alpha}}
 \leq
 2^{1-\alpha}\left(1-\frac{s_d}{3}\right)
 +\frac{2s_d}{3}<1.
 \]
The last inequality uses \(2^{1-\alpha}<1/2\) and \(s_d<1\).
The continuous extension attains its minimum on \([0,1]\).  Strict decrease excludes
\(\gamma=1/2\), and the strict comparison with the endpoint excludes
\(\gamma=1\).
\end{proof}

\begin{remark}
The theorem asserts existence and localization for \(\alpha>1\).  Uniqueness
for a general power remains open.
\end{remark}

\subsection{Exact optimization for squared edge length}
\label{sec:quadratic}

The quadratic case admits a complete stationary and finite-size analysis.
This subsection gives the exact objective, its unique minimizer, and the first
finite-size correction.  The dimension-independent stationary formula is the
common curve in Figure~\ref{fig:uniform-adversarial-tradeoff}(b)
and supplies the exact quadratic check in
Figure~\ref{fig:high-dimensional-first-order}.

By Eq.\eqref{eq:second-moment},
\begin{equation}
 \Phi_{d,2}(\gamma)
 =2s_d\frac{2\gamma^2-2\gamma+1}{1+\gamma}.
 \label{eq:quadratic-stationary}
\end{equation}

\begin{theorem}
\label{thm:quadratic-stationary-optimum}
The stationary quadratic objective has its unique minimum at
\begin{equation}
 \gamma
 =\frac{\sqrt{10}-2}{2},
 \label{eq:quadratic-minimizer}
\end{equation}
and its minimum is
\[
 4s_d(\sqrt{10}-3).
\]
\end{theorem}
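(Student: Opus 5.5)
We minimize the explicit rational function in Eq.\eqref{eq:quadratic-stationary} by one-variable calculus on the closed interval. Since \(s_d>0\) is a constant factor, it suffices to minimize
\[
 f(\gamma)=\frac{2\gamma^2-2\gamma+1}{1+\gamma},\qquad 0\leq\gamma\leq1,
\]
where the value at \(\gamma=1\) is the continuous extension. This is consistent with \(\Phi_{d,2}(1)=c_{d,2}=s_d=2s_d\cdot\tfrac12\).

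The quotient rule gives
\[
 f'(\gamma)=\frac{(4\gamma-2)(1+\gamma)-(2\gamma^2-2\gamma+1)}{(1+\gamma)^2}
 =\frac{2\gamma^2+4\gamma-3}{(1+\gamma)^2}.
\]
The numerator is a convex quadratic with roots \(\gamma=-1\pm\sqrt{10}/2\). The only root in \([0,1]\) is \(\gamma_\star=(\sqrt{10}-2)/2\approx0.581\). The numerator is negative on \([0,\gamma_\star)\) and positive on \((\gamma_\star,1]\). Hence \(f\) strictly decreases and then strictly increases, so \(\gamma_\star\) is the unique minimizer on \([0,1]\). This agrees with the localization \(\gamma\in(1/2,1)\) of Theorem~\ref{thm:stationary-transition}.

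For the minimum value, use the critical-point relation \(2\gamma_\star^2=3-4\gamma_\star\). Then
\[
 2\gamma_\star^2-2\gamma_\star+1=4-6\gamma_\star .
\]
Since \(1+\gamma_\star=\sqrt{10}/2\) and \(\gamma_\star=(\sqrt{10}-2)/2\), we get
\[
 4-6\gamma_\star=10-3\sqrt{10}=\sqrt{10}\,(\sqrt{10}-3).
\]
Dividing by \(1+\gamma_\star\) gives \(f(\gamma_\star)=2(\sqrt{10}-3)\). Therefore \(\Phi_{d,2}(\gamma_\star)=4s_d(\sqrt{10}-3)\).

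No step is a real obstacle. The only care needed is the algebraic simplification of the value: substituting the critical-point relation avoids expanding \(\gamma_\star^2\) directly. One should also note that the endpoint \(\gamma=1\) is covered by the strict monotonicity of \(f\) on \((\gamma_\star,1]\).
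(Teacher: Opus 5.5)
Your proposal is correct and takes essentially the paper's approach. Both proofs reduce the derivative's sign to \(2\gamma^2+4\gamma-3\), identify \((\sqrt{10}-2)/2\) as its unique root in \([0,1]\), and obtain the minimum value by substitution. Your use of the critical-point relation to simplify that substitution, and your check that the endpoint \(\gamma=1\) matches \(c_{d,2}=s_d\), are valid refinements.
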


\begin{proof}
After multiplication by its positive denominator, the derivative of
Eq.\eqref{eq:quadratic-stationary} has the sign of
\(2\gamma^2+4\gamma-3\).  This polynomial is strictly increasing on
\([0,1]\), is negative at zero, and is positive at one.  Its unique root is
Eq.\eqref{eq:quadratic-minimizer}; substitution gives the stated minimum.
\end{proof}

\begin{proposition}
\label{prop:quadratic-finite}
For the quadratic finite-size problem, let \(\gamma_N\) be the unique
minimizer.  Then
\begin{equation}
 \gamma_N
 =
 \frac{\sqrt{10}-2}{2}
 +\frac{11\sqrt{10}-70}{90N}
 +O(N^{-2}),
 \label{eq:quadratic-finite-minimizer}
\end{equation}
and
\begin{equation}
 \min_\gamma F_{d,2,N}(\gamma)
 =
 4s_d(\sqrt{10}-3)N
 +\frac{4s_d}{15}(7\sqrt{10}-20)
 +O(N^{-1}).
 \label{eq:quadratic-finite-value}
\end{equation}
\end{proposition}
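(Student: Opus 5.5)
The plan is to compute \(F_{d,2,N}\) exactly, to show that it is \(N\Phi_{d,2}\) plus a bounded correction plus exponentially small terms, and then to perturb the stationary minimizer of Theorem~\ref{thm:quadratic-stationary-optimum}.

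\emph{Exact finite-size formula.} For \(i\geq1\), the recursion gives
\[
 x_{i-1}=\gamma^{i-1}p_0+(1-\gamma)\sum_{k=1}^{i-1}\gamma^{i-1-k}p_k.
\]
Hence \(p_i-x_{i-1}\) is a linear combination of independent, centered, isotropic points. As in the proof of Eq.\eqref{eq:second-moment}, its squared norm has expectation \(s_d\) times the sum of the squared coefficients, which is
\[
 1+\gamma^{2(i-1)}+\frac{1-\gamma}{1+\gamma}\bigl(1-\gamma^{2(i-1)}\bigr)
 =\frac{2}{1+\gamma}+\frac{2\gamma}{1+\gamma}\gamma^{2(i-1)}.
\]
Summing over \(i\) and multiplying by \(h_2\) through Eq.\eqref{eq:reduction} gives
\[
 F_{d,2,N}(\gamma)=N\Phi_{d,2}(\gamma)+\psi(\gamma)\bigl(1-\gamma^{2N}\bigr),
 \qquad
 \psi(\gamma)=\frac{2s_d\gamma\,h_2(\gamma)}{(1-\gamma)(1+\gamma)^2}\geq0 .
\]
At \(\gamma=1\) the same computation gives the star value \(2s_dN\).

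\emph{Localization and uniqueness.} Because \(\psi(1-\gamma^{2N})\geq0\), we have \(F_{d,2,N}/N\geq\Phi_{d,2}\) on \([0,1)\), and the value at \(\gamma=1\) is \(2s_d>\Phi_{d,2}(\gamma^\ast)\), where \(\gamma^\ast=(\sqrt{10}-2)/2\).

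Fix a small \(\eta>0\). On \([1-\eta,1]\), Theorem~\ref{thm:quadratic-stationary-optimum} gives \(\Phi_{d,2}\geq\Phi_{d,2}(\gamma^\ast)+c\) for some \(c>0\). The value \(F_{d,2,N}(\gamma^\ast)/N\) equals \(\Phi_{d,2}(\gamma^\ast)+O(1/N)\). So for large \(N\) every minimizer lies in \([0,1-\eta]\).

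On that interval, \(\psi\gamma^{2N}\) and its first two derivatives are \(O(N^2(1-\eta)^{2N})\). Therefore \(F_{d,2,N}/N\to\Phi_{d,2}\) in \(C^2([0,1-\eta])\), with error \(O(1/N)\).

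The derivative of \(\Phi_{d,2}\) has the sign of the increasing polynomial \(2\gamma^2+4\gamma-3\). Hence \(\Phi_{d,2}'\) is bounded away from zero outside any neighborhood of \(\gamma^\ast\), and \(\Phi_{d,2}''(\gamma^\ast)>0\). Uniform convergence of the derivatives then shows that, for large \(N\), \(F_{d,2,N}'\) has exactly one zero in \([0,1-\eta]\), located near \(\gamma^\ast\). This gives the unique minimizer \(\gamma_N\).

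\emph{Expansion.} Write \(\gamma_N=\gamma^\ast+\delta\). Ignoring the exponentially small terms, the critical-point equation is
\[
 N\Phi_{d,2}'(\gamma^\ast+\delta)+\psi'(\gamma^\ast+\delta)=0 .
\]
Taylor expansion gives \(\delta=-\psi'(\gamma^\ast)/\bigl(N\Phi_{d,2}''(\gamma^\ast)\bigr)+O(N^{-2})\). Since \(\Phi_{d,2}'(\gamma^\ast)=0\), the minimum value is \(N\Phi_{d,2}(\gamma^\ast)+\psi(\gamma^\ast)+O(N^{-1})\).

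The remaining work is algebraic evaluation at \(\gamma^\ast\). The relevant relations are \(2\gamma^{\ast2}=3-4\gamma^\ast\) and \(1+\gamma^\ast=\sqrt{10}/2\).

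For the value, we expect \(\psi(\gamma^\ast)\) to reduce to \(\tfrac{4s_d}{15}(7\sqrt{10}-20)\). Indeed, \(\gamma^\ast h_2(\gamma^\ast)/\bigl((1-\gamma^\ast)(1+\gamma^\ast)^2\bigr)\) should rationalize to \((14\sqrt{10}-40)/30\).

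For the location, compute \(\Phi_{d,2}''(\gamma^\ast)\) from \(\Phi_{d,2}'=2s_d(2\gamma^2+4\gamma-3)/(1+\gamma)^2\). This gives \(\Phi_{d,2}''(\gamma^\ast)=2s_d(4\gamma^\ast+4)/(1+\gamma^\ast)^2\). Next compute \(\psi'(\gamma^\ast)\) by logarithmic differentiation. The quotient \(-\psi'/\Phi_{d,2}''\) is independent of \(s_d\) and should simplify to \((11\sqrt{10}-70)/90\).

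\emph{Main obstacle.} The conceptual steps are routine. The main risk is this surd simplification, together with a careful check of the sign of the \(O(1/N)\) shift. I would verify both symbolically.
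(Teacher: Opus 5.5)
Your proposal is correct and follows the paper's own route: the same exact covariance formula (your \(\psi\) is \(2s_d b\) in the paper's notation), discarding the exponentially small \(\gamma^{2N}\) term, and implicit expansion of \(N\Phi_{d,2}'+\psi'=0\) at \(\gamma^\ast\). Your localization and uniqueness step also supplies something the paper leaves implicit, namely that the global minimizer actually lies in the neighborhood where that expansion is carried out.

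There is one arithmetic slip to fix when you verify the value constant. The quotient \(\gamma^\ast h_2(\gamma^\ast)/\bigl((1-\gamma^\ast)(1+\gamma^\ast)^2\bigr)\) equals \((14\sqrt{10}-40)/15\), not \((14\sqrt{10}-40)/30\). With your value, \(\psi(\gamma^\ast)\) would be half the target constant \(\tfrac{4s_d}{15}(7\sqrt{10}-20)\).
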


\begin{proof}
Direct covariance summation gives the exact formula
\begin{equation}
 F_{d,2,N}(\gamma)
 =2s_dQ_2(\gamma)
 \left[N+\gamma\frac{1-\gamma^{2N}}{1-\gamma^2}\right],
 \qquad
 Q_2(\gamma)=\frac{\gamma^2+(1-\gamma)^2}{1+\gamma}.
 \label{eq:quadratic-exact-finite}
\end{equation}
In a fixed neighborhood of
\(\gamma_0=(\sqrt{10}-2)/2\), the term containing \(\gamma^{2N}\) is
exponentially small.  The remaining expression is
\(2s_d[NQ_2(\gamma)+b(\gamma)]\), where
\[
 b(\gamma)=
 \frac{\gamma(\gamma^2+(1-\gamma)^2)}
 {(1-\gamma)(1+\gamma)^2}.
\]
At \(\gamma_0\),
\[
 Q_2''(\gamma_0)=\frac{4\sqrt{10}}5,
 \qquad
 b'(\gamma_0)=-\frac{44}{45}+\frac{28\sqrt{10}}{45}.
\]
The implicit expansion of \(NQ_2'(\gamma)+b'(\gamma)=0\) gives
Eq.\eqref{eq:quadratic-finite-minimizer}.  Taylor expansion of the value gives
Eq.\eqref{eq:quadratic-finite-value}.
\end{proof}

\subsection{Fourth and higher even powers}
\label{sec:even-powers}

This subsection extends the exact-moment analysis beyond squared edge length.  At
power four the optimizer is characterized by an explicit polynomial, while a
cumulant recursion produces rational objectives for every higher even power.
The fourth-moment formula is used directly in
Figure~\ref{fig:high-dimensional-first-order}.

Combining Eq.\eqref{eq:reduction} and Eq.\eqref{eq:fourth-moment} gives
\begin{equation}
 \Phi_{d,4}(\gamma)
 =
 \frac{4d}{(d+2)(d+4)}
 \frac{
 \bigl(\gamma^4+(1-\gamma)^4\bigr)
 \bigl((d+3)(1+\gamma^2)-2\gamma^3\bigr)}
 {(1+\gamma)^2(1+\gamma^2)}.
 \label{eq:fourth-objective}
\end{equation}

\begin{proposition}
\label{prop:fourth-power}
The function in Eq.\eqref{eq:fourth-objective} is strictly convex on
\([0,1]\).  Its unique minimizer is the unique root in \((0,1)\) of
\begin{align}
 P_d(\gamma)={}&
 -3d-9
 +(8d+24)\gamma
 -(12d+39)\gamma^2 \notag\\
 &+(18d+69)\gamma^3
 -(13d+62)\gamma^4
 +(12d+51)\gamma^5 \notag\\
 &-(2d+22)\gamma^6
 +(2d+6)\gamma^7
 +(2d+4)\gamma^8
 -6\gamma^9.
 \label{eq:fourth-polynomial}
\end{align}
For \(d=1\), this root satisfies
\[
 \frac{8}{15}<\gamma<\frac{15}{28},
\]
and, for \(d\geq2\), it satisfies
\[
 \frac{10}{19}<\gamma<\frac{8}{15}.
\]
\end{proposition}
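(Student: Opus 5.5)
Write \(\Phi_{d,4}=K_d\,h_4(\gamma)R_d(\gamma)\), where \(K_d=4d/((d+2)(d+4))>0\), \(h_4(\gamma)=\gamma^4+(1-\gamma)^4\), and
\[
R_d(\gamma)=\frac{(d+3)(1+\gamma^2)-2\gamma^3}{(1+\gamma)^2(1+\gamma^2)}.
\]
The plan has three parts: derive the critical-point polynomial, prove strict convexity, and then localize the root by rational sign evaluations.

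\emph{The polynomial.} Differentiate the quotient and clear the positive denominator \((1+\gamma)^3(1+\gamma^2)^2\). The numerator of \(\Phi_{d,4}'\) is then a polynomial of degree at most nine that is affine in \(d\). I will expand it symbolically and check that it equals a positive constant times \(P_d\), possibly up to a sign convention. Checking the endpoints: at \(\gamma=0\) we get \(P_d(0)=-3d-9<0\), and at \(\gamma=1\) summing the coefficients gives
\[
(-3+8-12+18-13+12-2+2+2)d+(-9+24-39+69-62+51-22+6+4-6)=12d+16>0.
\]
This matches \(\Phi'(0)<0\) and \(\Phi'(1)>0\). It is consistent with Theorem~\ref{thm:stationary-transition}, whose proof shows that \(\Phi_{d,4}\) decreases strictly on \([0,1/2]\).

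\emph{Strict convexity.} This is the main obstacle. My first attempt is a structural argument. The factor \(h_4\) is convex and positive. The factor \(R_d\) is positive, decreasing, and, I expect, convex on \([0,1]\). That last claim should follow from \(R_d=M_{d,4}/K_d\) being a combination of \(\sigma_2^2=4/(1+\gamma)^2\) and \(\sigma_4\), both rational functions that are easy to handle. However, a product of convex functions is not convex in general when one factor decreases and the other increases. So I will fall back on a direct computation. The numerator of \(\Phi''\) over \((1+\gamma)^4(1+\gamma^2)^3\) is a polynomial \(A(\gamma)+d\,B(\gamma)\). Since it is affine in \(d\), it suffices to show \(A>0\) and \(A+B>0\) on \([0,1]\) for \(d\ge1\) (when \(B\ge0\)), or else to treat \(d\to\infty\) via \(B>0\). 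Positivity of each fixed polynomial on \([0,1]\) can be certified in three steps:
\begin{enumerate}[label=(\roman*)]
\item substitute \(\gamma=t/(1+t)\);
\item multiply by \((1+t)^{\deg}\);
\item check that all coefficients in \(t\) are positive (the Bernstein/Pólya test), subdividing \([0,1]\) if some coefficient is negative.
\end{enumerate}
Convexity then implies that \(P_d\), which has the sign of \(\Phi'\), is strictly increasing in the relevant sense. Hence the root in \((0,1)\) is unique and is the unique minimizer.

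\emph{Localization.} Because \(\Phi'\) is increasing with a unique zero, it suffices to evaluate the signs of \(P_d\) at rational endpoints. For \(d=1\), I will show \(P_1(8/15)<0<P_1(15/28)\) by exact rational arithmetic. For \(d\ge2\), write \(P_d(q)=\alpha_q+\beta_q d\), affine in \(d\), for \(q\in\{10/19,\,8/15\}\). I need \(P_d(10/19)<0\) and \(P_d(8/15)>0\) for all \(d\ge2\). This reduces to checking the value at \(d=2\) together with the sign of the slope \(\beta_q\), namely \(\beta_{10/19}\le0\) and \(\beta_{8/15}\ge0\). If a slope has the wrong sign, I will instead check the value at \(d=2\) together with the limit sign as \(d\to\infty\), since an affine function with the correct sign at both ends keeps that sign throughout. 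These are finite exact computations. Their only subtlety is confirming that the slope signs cooperate; if they do not, the stated interval would fail for large \(d\), so I expect them to cooperate.
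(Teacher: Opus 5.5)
Your plan is correct and is essentially the paper's proof. The derivative satisfies \(\Phi_{d,4}'=2K_dP_d/\bigl((1+\gamma)^3(1+\gamma^2)^2\bigr)\). Strict convexity is certified by positivity of the Bernstein coefficients of the degree-\(11\) numerator of \(\Phi_{d,4}''\) over \((1+\gamma)^4(1+\gamma^2)^3\); your substitution \(\gamma=t/(1+t)\) is the same test. The brackets come from exact rational evaluation, and your affine-in-\(d\) reduction does go through: the \(d\)-coefficient of \(P_d\) is \((1+\gamma^2)^2(2\gamma^4+2\gamma^3-6\gamma^2+8\gamma-3)\), whose quartic factor increases on \([0,1]\) and vanishes near \(0.5277\in(10/19,8/15)\), so \(\beta_{10/19}<0<\beta_{8/15}\).

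One thing to correct is your convexity criterion. The condition \(A>0\) is false here, since \(\Phi_{d,4}''(1)=K_d(11d-4)/8\) makes the \(d\)-free part negative at \(\gamma=1\). It is also not needed: \(B\geq0\) together with \(A+B>0\) suffices. That is exactly what the paper's coefficients \(b_k(d)\) certify, since each is affine in \(d\), has positive slope, and is positive at \(d=1\).
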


\begin{proof}
Differentiate Eq.\eqref{eq:fourth-objective}.  Its first derivative is a
positive factor times \(P_d\).  After multiplication by the positive
denominator of the second derivative, the numerator has the Bernstein
representation
\[
 \sum_{k=0}^{11}b_k(d)
 \binom{11}{k}\gamma^k(1-\gamma)^{11-k},
\]
whose complete coefficient vector is recorded in
Appendix~\ref{app:fourth-convexity-certificate}.  Every coefficient is positive
for \(d\geq1\), so
\(\Phi_{d,4}''(\gamma)>0\) on \([0,1]\).  The endpoint signs
\(P_d(0)<0<P_d(1)\) yield the unique root.  Direct exact evaluation of
\(P_d\) at the rational endpoints in the theorem gives the stated brackets.
\end{proof}

\paragraph{A recursion for all even moments.}

Let \(y_\gamma=p-x_\gamma\).  If \(\xi\) is one coordinate of a uniform point
in \(\B\), denote its scalar cumulants by \(\kappa_{2r}\).  For \(r\geq1\),
\[
 \E \xi^{2r}=\prod_{j=0}^{r-1}\frac{2j+1}{d+2j+2},
\]
and the coordinate moments of \(y_\gamma\), starting from
\(\E(y_\gamma)_1^0=1\), satisfy
\begin{equation}
 \E(y_\gamma)_1^{2m}
 =
 \sum_{r=1}^{m}
 \binom{2m-1}{2r-1}
 \kappa_{2r}
 \left(1+\frac{(1-\gamma)^{2r}}{1-\gamma^{2r}}\right)
 \E(y_\gamma)_1^{2m-2r}.
 \label{eq:cumulant-recursion}
\end{equation}

\begin{proposition}
For every \(m\geq1\),
\begin{equation}
 M_{d,2m}(\gamma)
 =
 \left(\prod_{j=0}^{m-1}\frac{d+2j}{2j+1}\right)
 \E(y_\gamma)_1^{2m}.
 \label{eq:even-moment}
\end{equation}
Consequently, \(M_{d,2m}\) and \(\Phi_{d,2m}\) are rational functions of
\(\gamma\).
\end{proposition}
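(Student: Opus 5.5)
The identity rests on two facts: the radial moments of an isotropic vector are proportional to its one-coordinate moments, and $y_\gamma$ is isotropic. I would also justify the recursion \eqref{eq:cumulant-recursion} along the way, since rationality depends on it.

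\emph{Step 1 (isotropy reduction).} As in the proof of Eq.\eqref{eq:fourth-moment}, $y_\gamma$ has the law of $\sum_{j\ge0}a_jp_j$ with $a_0=1$, $a_{j+1}=(1-\gamma)\gamma^j$. Each $p_j$ is rotation invariant, so $y_\gamma$ is too. Write $y_\gamma=R\,U$ with $U$ uniform on the sphere and independent of $R=\norm{y_\gamma}$. Then
\[
 \E(y_\gamma)_1^{2m}=\E R^{2m}\,\E U_1^{2m},
 \qquad
 M_{d,2m}(\gamma)=\E R^{2m}.
\]
The spherical moment is $\E U_1^{2m}=\prod_{j=0}^{m-1}\frac{2j+1}{d+2j}$. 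One way to see this is to take $U=g/\norm g$ for a standard Gaussian $g$. Independence of $\norm g$ and $U$ then gives $(2m-1)!!=\E\norm g^{2m}\,\E U_1^{2m}$, and $\E\norm g^{2m}=\prod_{j<m}(d+2j)$. Dividing yields Eq.\eqref{eq:even-moment}. As a check, the same argument applied to a uniform point of $\B$ reproduces the stated formula for $\E\xi^{2r}$, since $\E\norm p^{2r}=d/(d+2r)$.

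\emph{Step 2 (recursion and rationality).} The first coordinate $Y=(y_\gamma)_1=\sum_j a_j\xi_j$ is a sum of independent scalars whose moments are all finite, because each $|\xi_j|\le1$ and $\sum_j a_j$ converges. Cumulants are additive and homogeneous, so
\[
 \kappa_{2r}(Y)=\kappa_{2r}\sum_j a_j^{2r}
 =\kappa_{2r}\Bigl(1+\frac{(1-\gamma)^{2r}}{1-\gamma^{2r}}\Bigr),
\]
and the odd cumulants vanish. The standard moment–cumulant recursion is
\[
 \E Y^{n}=\sum_{k=1}^{n}\binom{n-1}{k-1}\kappa_k(Y)\,\E Y^{n-k}.
\]
With $n=2m$ and only even $k=2r$ surviving, this gives \eqref{eq:cumulant-recursion}.

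The scalar cumulants $\kappa_{2r}$ are polynomial combinations of the rational moments $\E\xi^{2s}$. The factors $1+(1-\gamma)^{2r}/(1-\gamma^{2r})$ are rational in $\gamma$. By induction on $m$, each $\E(y_\gamma)_1^{2m}$ is therefore rational in $\gamma$, and hence so is $M_{d,2m}$. Since $h_{2m}$ is a polynomial, $\Phi_{d,2m}=h_{2m}M_{d,2m}$ is rational as well.

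\emph{Main obstacle.} The only delicate point is the analytic justification for the infinite sum, namely additivity of cumulants for a countable independent sum. I would argue through characteristic functions. The cumulant generating function of $Y$ is $\sum_j\log\E e^{ita_j\xi_j}$, and this sum converges uniformly near $t=0$ because $a_j$ decays geometrically and $\log\E e^{is\xi}=O(s^2)$. Termwise differentiation at $t=0$ is then legitimate. Alternatively, one can truncate to finitely many $j$ and pass to the limit using $L^{2m}$ convergence of the partial sums.
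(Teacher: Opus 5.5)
Your proposal is correct and follows essentially the same route as the paper. Rotational invariance of $y_\gamma$ converts $\E\norm{y_\gamma}^{2m}$ into $\E(y_\gamma)_1^{2m}/\E U_1^{2m}$, and additivity and homogeneity of cumulants over the weighted independent sum give the recursion and hence rationality in $\gamma$. Your Gaussian evaluation of $\E U_1^{2m}$ and your justification of cumulant additivity for the countable sum fill in details the paper leaves implicit. On the characteristic-function route, termwise differentiation needs the differentiated series $\sum_j a_j^k\psi^{(k)}(a_jt)$ to converge uniformly, not just the original series; this does hold because $\sum_j a_j^k<\infty$, and the truncation argument avoids the issue altogether.
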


\begin{proof}
The vector \(y_\gamma\) is an isotropic weighted sum of independent uniform
ball points.  Its coefficient power sum at order \(2r\) is
\(1+(1-\gamma)^{2r}/(1-\gamma^{2r})\), as used in
Eq.\eqref{eq:cumulant-recursion}.  Cumulants add under independence
and scale homogeneously, which proves that recursion.
Rotational invariance gives
\[
 \E\norm{y_\gamma}^{2m}
 =\left(\prod_{j=0}^{m-1}\frac{d+2j}{2j+1}\right)
 \E(y_\gamma)_1^{2m},
\]
which proves Eq.\eqref{eq:even-moment}.  Each cumulant is rational in \(d\),
and the coefficient power sums are rational in \(\gamma\), proving the last
assertion.
\end{proof}

\begin{remark}
Eq.\eqref{eq:cumulant-recursion} gives a rational objective for each even
power.  Uniqueness of its minimizer remains open in general.
\end{remark}

\subsection{Local behavior near the star parameter}
\label{sec:endpoint}

This subsection examines the stationary objective as \(\gamma\) approaches the
star value \(1\).  It identifies the threshold \(\alpha=3d+8\) at which the
local slope changes sign and describes the nearby critical branch and barrier.
Figure~\ref{fig:endpoint-bifurcation} later shows the branch, its curvature,
and the barrier-height asymptotics for \(d=2\).

Let \(\delta=1-\gamma\).  The endpoint expansion has the form
\begin{equation}
 \Phi_{d,\alpha}(1-\delta)-c_{d,\alpha}
 =
 \alpha\left(\frac{s_d}{4}-c_{d,\alpha}\right)\delta
 +o(\delta).
 \label{eq:endpoint-linear}
\end{equation}

\begin{theorem}
Let \(\alpha^\star=3d+8\).
The linear coefficient in Eq.\eqref{eq:endpoint-linear} changes sign at
\(\alpha=\alpha^\star\).  At the threshold,
\[
 \Phi_{d,\alpha^\star}(1-\delta)-c_{d,\alpha^\star}
 =
 -\frac{7d\alpha^\star}{16(d+2)}\delta^2
 +o(\delta^2).
\]
There is a stationary branch issuing from the endpoint such that
\begin{equation}
 \delta(\alpha)
 =\frac{\alpha-\alpha^\star}{14(d+2)}
 +O\bigl((\alpha-\alpha^\star)^2\bigr).
 \label{eq:endpoint-bifurcating-maximum}
\end{equation}
Along this branch,
\begin{equation}
 \Phi_{d,\alpha}(1-\delta(\alpha))-c_{d,\alpha}
 =\frac{d\alpha^\star}{448(d+2)^3}(\alpha-\alpha^\star)^2
 +O\bigl((\alpha-\alpha^\star)^3\bigr).
 \label{eq:endpoint-barrier-height}
\end{equation}
Its curvature satisfies
\begin{equation}
 \Phi_{d,\alpha}''(1-\delta(\alpha))
 =-\frac{7d\alpha^\star}{8(d+2)}+O(\alpha-\alpha^\star).
 \label{eq:endpoint-branch-curvature}
\end{equation}
For \(\alpha>\alpha^\star\) sufficiently close to \(\alpha^\star\), the point
\(1-\delta(\alpha)\) is a strict
local maximum.  For every \(\alpha>\alpha^\star\), the endpoint of the continuous
extension is a strict
one-sided local minimum, while the global minimum is attained in
\((1/2,1)\).
\end{theorem}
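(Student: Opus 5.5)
The plan is to obtain a second-order Taylor expansion of \(\Phi_{d,\alpha}(1-\delta)\) in \(\delta\) that is jointly smooth in \((\alpha,\delta)\) near \((\alpha^\star,0)\), and then read off every claim from it.

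\emph{Expansion of \(M_{d,\alpha}\).} I would follow the \(1<\alpha<2\) argument of Theorem~\ref{thm:stationary-transition}, but carry it one order further. Near the origin the potential \(G_{d,\alpha}\) is radial and smooth; since \(\alpha\) is near \(\alpha^\star\geq 11\), it is at least \(C^6\) there. So
\[
 g_{d,\alpha}(r)=c_{d,\alpha}+a_2r^2+a_4r^4+O(r^6).
\]
The coefficients come from \(\Delta\norm{x}^\alpha=\alpha(\alpha+d-2)\norm{x}^{\alpha-2}\) applied under the integral. This gives \(a_2=\alpha/2\), consistent with \(D^2G_{d,\alpha}(0)=\alpha I_d\). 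Iterating \(\Delta\) once more gives \(a_4\) as an explicit multiple of \(c_{d,\alpha-4}\). Averaging via Eq.~\eqref{eq:moment-through-potential} then gives
\[
 M_{d,\alpha}(1-\delta)=c_{d,\alpha}+\tfrac{\alpha}{2}\E\norm{x_\gamma}^2+a_4\E\norm{x_\gamma}^4+O(\delta^3).
\]
The second moment is exact from Eq.~\eqref{eq:state-second-moment}: \(\E\norm{x_\gamma}^2=s_d\delta/(2-\delta)\). The fourth moment comes from the same covariance computation used for Eq.~\eqref{eq:fourth-moment}. To leading order it is \(\frac{d+2}{d}\cdot\frac{s_d^2}{4}\delta^2\), because the weights \((1-\gamma)\gamma^j\) make the fourth-cumulant contribution \(O(\delta^3)\).

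\emph{Combination with \(h_\alpha\).} Next I multiply by
\[
 h_\alpha(1-\delta)=1-\alpha\delta+\tfrac{\alpha(\alpha-1)}{2}\delta^2+O(\delta^3)+\delta^\alpha,
\]
where \(\delta^\alpha\) is negligible since \(\alpha>3\). This yields
\[
 \Phi_{d,\alpha}(1-\delta)-c_{d,\alpha}=A(\alpha)\delta+B(\alpha)\delta^2+R(\alpha,\delta),
 \qquad A(\alpha)=\alpha\Bigl(\tfrac{s_d}{4}-c_{d,\alpha}\Bigr).
\]
Solving \(A=0\) means \(\tfrac{d}{4(d+2)}=\tfrac{d}{d+\alpha}\), that is \(\alpha=3d+8\). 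Since \(c_{d,\alpha}\) is strictly decreasing in \(\alpha\), \(A\) changes sign there. Differentiating gives \(A'(\alpha^\star)=\alpha^\star d/(d+\alpha^\star)^2=\alpha^\star d/(16(d+2)^2)\).

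I then evaluate \(B(\alpha^\star)\) by collecting the three second-order contributions: \(c\alpha(\alpha-1)/2\), the cross term \(-\alpha\cdot\tfrac{\alpha s_d}{4}\), and \(\tfrac{\alpha}{2}\cdot\tfrac{s_d}{4}\delta^2\) from expanding \(\delta/(2-\delta)\), together with the \(a_4\) term. I expect this to give \(B(\alpha^\star)=-7d\alpha^\star/(16(d+2))\), which is the stated threshold expansion.

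\emph{Branch, barrier and curvature.} Differentiating, \(\partial_\delta\Phi=A(\alpha)+2B(\alpha)\delta+\partial_\delta R\). The implicit function theorem at \((\alpha^\star,0)\) applies because \(2B(\alpha^\star)\ne0\). It gives
\[
 \delta(\alpha)=-\frac{A'(\alpha^\star)}{2B(\alpha^\star)}(\alpha-\alpha^\star)+O\bigl((\alpha-\alpha^\star)^2\bigr)=\frac{\alpha-\alpha^\star}{14(d+2)}+O\bigl((\alpha-\alpha^\star)^2\bigr).
\]
The value along the branch is \(-A^2/(4B)\) to leading order, which produces Eq.~\eqref{eq:endpoint-barrier-height}. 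The curvature is \(2B(\alpha^\star)\), which is Eq.~\eqref{eq:endpoint-branch-curvature}. Negative curvature at a critical point gives the strict local maximum.

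For every \(\alpha>\alpha^\star\) we have \(A(\alpha)>0\), so \(\Phi_{d,\alpha}(1-\delta)>c_{d,\alpha}\) for small \(\delta\). This makes the endpoint a strict one-sided local minimum. Interiority of the global minimum is exactly Theorem~\ref{thm:stationary-transition}.

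\emph{Main obstacle.} The delicate step is justifying that the remainder \(R\) is \(O(\delta^3)\) together with its first \(\delta\)-derivative and its mixed \(\alpha\)-derivative, uniformly for \(\alpha\) near \(\alpha^\star\). The implicit function theorem and the curvature claim need \(\Phi\) to be jointly \(C^2\) in \((\alpha,\delta)\) near \((\alpha^\star,0)\), not just a pointwise expansion. I would handle this in two steps.
\begin{enumerate}[label=(\roman*)]
\item Differentiate \(M_{d,\alpha}(1-\delta)=\E g_{d,\alpha}(\norm{x_{1-\delta}})\) under the expectation, using the representation \(x_{1-\delta}=\delta\sum_j(1-\delta)^jp_j\); this reduces everything to moments of \(\norm{x}\) and of its \(\delta\)-derivative, all of which are \(O(\delta^{k/2})\) at order \(k\).
\item Use Taylor's theorem for \(g_{d,\alpha}\) with a remainder that is uniform in \(\alpha\), valid because \(g_{d,\alpha}\) is smooth in both \(r\) and \(\alpha\) on \([0,1]\) when \(\alpha>6\).
\end{enumerate}
A secondary risk is arithmetic: the exact value of \(a_4\) and of the fourth moment must combine to give the constant \(7/16\). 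I would check this against the exact case \(\alpha=4\) given by Eq.~\eqref{eq:fourth-objective}.
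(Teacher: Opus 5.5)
Your proposal is correct and follows essentially the paper's own proof. The common steps are:
\begin{enumerate}
\item Average the radial expansion \(g_{d,\alpha}(r)=c_{d,\alpha}+\frac\alpha2r^2+\frac{\alpha(\alpha-2)(\alpha+d-2)}{8(d+2)}r^4+O(r^6)\) against the exact stationary moments.
\item Multiply by \(h_\alpha\).
\item Apply the implicit function theorem with \(a'(\alpha^\star)=d\alpha^\star/(16(d+2)^2)\) and \(b(\alpha^\star)=-7d\alpha^\star/(16(d+2))\). The branch value is then \(-a'(\alpha^\star)^2(\alpha-\alpha^\star)^2/(4b(\alpha^\star))\) and the curvature is \(2b(\alpha^\star)\).
\item Use the sign of \(a(\alpha)\) for the endpoint, and Theorem~\ref{thm:stationary-transition} for the interior global minimum.
\end{enumerate}

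One small correction to your remainder plan. The \(\delta\)-derivatives of \(x_{1-\delta}\) are not small: \(\norm{\partial_\delta x_{1-\delta}}_{L^q}=O(\delta^{-1/2})\) and \(\norm{\partial_\delta^2x_{1-\delta}}_{L^q}=O(\delta^{-3/2})\). The bounds \(\lvert\partial_\delta^k\E Q(x_{1-\delta})\rvert\leq C\delta^{3-k}\) still hold for the remainder \(Q=O(\norm x^6)\) after the quartic Taylor polynomial of \(G_{d,\alpha}\). This is because \(\nabla Q=O(\norm x^5)\) and \(D^2Q=O(\norm x^4)\) absorb these factors.
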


\begin{proof}
\noindent\emph{Local expansion.}
Let
\[
 D(\alpha,\delta)
 =\Phi_{d,\alpha}(1-\delta)-c_{d,\alpha}.
\]
The radial potential has the local expansion
\[
 g_{d,\alpha}(r)=c_{d,\alpha}+\frac\alpha2r^2
 +\frac{\alpha(\alpha-2)(d+\alpha-2)}{8(d+2)}r^4+O(r^6).
\]
Together with the exact even moments of \(x_{1-\delta}\) through order six
obtained from Eq.\eqref{eq:cumulant-recursion}, this gives, uniformly for
\(\alpha\) near \(\alpha^\star\),
\[
 D(\alpha,\delta)
 =a(\alpha)\delta+b(\alpha)\delta^2+O(\delta^3),
 \qquad
 a(\alpha)=\alpha\left(\frac{s_d}{4}-c_{d,\alpha}\right),
\]
where \(b\) is continuously differentiable and
\[
 b(\alpha^\star)=-\frac{7d\alpha^\star}{16(d+2)}.
\]
The same moment representation permits two differentiations with respect to
\(\delta\), and
\[
 \partial_\delta D(\alpha^\star,0)=0,
 \qquad
 \partial_\delta^2 D(\alpha^\star,0)=2b(\alpha^\star)<0.
\]
\smallskip
\noindent\emph{The implicit branch.}
The implicit-function theorem gives a unique local solution
\(\delta=\delta(\alpha)\) of \(\partial_\delta D(\alpha,\delta)=0\).  Since
\[
 a'(\alpha^\star)=\frac{d\alpha^\star}{16(d+2)^2},
\]
implicit differentiation yields
\[
 \delta'(\alpha^\star)
 =-\frac{a'(\alpha^\star)}{2b(\alpha^\star)}
 =\frac1{14(d+2)},
\]
which proves Eq.\eqref{eq:endpoint-bifurcating-maximum}.

\smallskip
\noindent\emph{Branch value and curvature.}
Substitution of this expansion into \(D\) gives
\[
 D(\alpha,\delta(\alpha))
 =-\frac{a'(\alpha^\star)^2}{4b(\alpha^\star)}
 (\alpha-\alpha^\star)^2+O((\alpha-\alpha^\star)^3),
\]
which is Eq.\eqref{eq:endpoint-barrier-height}.  Moreover,
\(\partial_{\delta\delta}D(\alpha,\delta(\alpha))
=2b(\alpha^\star)+O(\alpha-\alpha^\star)\), which proves
Eq.\eqref{eq:endpoint-branch-curvature}.  Thus this derivative is negative near
\(\alpha^\star\), so \(1-\delta(\alpha)\) is a strict local maximum for
\(\alpha>\alpha^\star\) sufficiently close to \(\alpha^\star\).

\smallskip
\noindent\emph{Local behavior at \(\gamma=1\).}
The sign of \(a(\alpha)\) gives the one-sided endpoint
minimum for \(\alpha>\alpha^\star\), and
Theorem~\ref{thm:stationary-transition} gives an interior value below the
endpoint value.
\end{proof}

\begin{remark}
For \(\alpha>3d+8\), the stationary objective is neither convex nor
single-well unimodal on the complete interval.  A general uniqueness result
requires a different argument.
\end{remark}

\section{Worst-case optimization}
\label{sec:worst-case}

This section determines the optimal fixed memory parameter under arbitrary
input sequences.  It first gives the exact solution in the range
\(0<\alpha\leq3\), then establishes the asymptotic optimized value as the
power grows.  The corresponding computations are collected in
Section~\ref{sec:numerical-stationary-worst-case}.

For a fixed initial state \(x_0\in\B\), let
\begin{equation}
 \mathcal A_\alpha(\gamma)
 =\limsup_{n\to\infty}\frac1n
 \sup_{p_1,\ldots,p_n\in\B}
 \sum_{i=1}^n\norm{p_i-x_{i-1}}^\alpha.
 \label{eq:article-two-adversarial-value}
\end{equation}
The supremum is taken before the limit superior.

\begin{lemma}
\label{lem:initial-state-independence}
For \(0\leq\gamma<1\), the value in
Eq.\eqref{eq:article-two-adversarial-value} is independent of the initial
state in \(\B\).
\end{lemma}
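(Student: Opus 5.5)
The plan is a coupling argument that runs the same input sequence from two different initial states. Fix \(0\leq\gamma<1\) and two initial states \(x_0,x_0'\in\B\). For any input \(p_1,\ldots,p_n\in\B\), let \(x_i\) and \(x_i'\) be the states generated from \(x_0\) and \(x_0'\) by Eq.\eqref{eq:recursion}. Because the update is affine with the same input term, the difference satisfies
\[
 x_i-x_i'=\gamma^i(x_0-x_0'),
 \qquad
 \norm{x_i-x_i'}\leq2\gamma^i.
\]
The two trajectories therefore couple exponentially fast, uniformly over all input sequences.

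Next I compare the per-step costs. All points \(p_i,x_{i-1},x_{i-1}'\) lie in \(\B\), since \(\B\) is convex and each state is a convex combination of points of \(\B\). Hence every distance is at most \(2\). The function \(t\mapsto t^\alpha\) on \([0,2]\) is then controlled as follows.

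\emph{Case \(\alpha\geq1\).} It is Lipschitz with constant \(\alpha2^{\alpha-1}\).

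\emph{Case \(0<\alpha<1\).} It is Hölder with \(|s^\alpha-t^\alpha|\leq|s-t|^\alpha\).

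In both cases the triangle inequality gives
\[
 \bigl|\norm{p_i-x_{i-1}}^\alpha-\norm{p_i-x_{i-1}'}^\alpha\bigr|
 \leq C_\alpha\,\omega\bigl(2\gamma^{i-1}\bigr),
\]
where \(\omega(t)=t\) or \(t^{\min(\alpha,1)}\). Summing over \(i\) gives a geometric series, which is bounded by a constant \(K(\alpha,\gamma)<\infty\) independent of \(n\) and of the inputs. When \(\gamma=0\), only the first term differs, and the bound is immediate.

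Finally, I pass to the value in Eq.\eqref{eq:article-two-adversarial-value}. For every \(n\) and every input sequence, the total cost from \(x_0\) is at most the cost from \(x_0'\) plus \(K\). Taking the supremum over inputs gives
\[
 \sup(\text{cost from }x_0)\leq\sup(\text{cost from }x_0')+K,
\]
and the symmetric inequality also holds. Dividing by \(n\) and taking the limit superior, the additive constant \(K/n\to0\), so the two values coincide.

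The only delicate point is making \(K\) uniform over all inputs and all \(n\) before the supremum is taken. The bound \(\norm{x_i-x_i'}\leq2\gamma^i\) is input-free, which is exactly what this needs. The one other care required is the Hölder rather than Lipschitz bound when \(\alpha<1\), where the sum \(\sum\gamma^{\alpha(i-1)}\) still converges because \(\gamma<1\).
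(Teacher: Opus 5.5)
Your proposal is correct and follows the paper's proof essentially verbatim: the synchronous coupling \(x_i-x_i'=\gamma^i(x_0-x_0')\), the reverse triangle inequality with the H\"older bound for \(0<\alpha\leq1\) and the Lipschitz bound \(\alpha2^{\alpha-1}\) for \(\alpha\geq1\) on \([0,2]\), and a geometric sum giving an additive constant uniform in \(n\) and in the inputs. You are also right that this uniformity is what lets the supremum be taken before dividing by \(n\) and passing to the limit superior.
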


\begin{proof}
Drive states \(x_i\) and \(y_i\) by the same input sequence.  Subtracting the
recursions gives
\[
 x_i-y_i=\gamma^i(x_0-y_0).
\]
The reverse triangle inequality therefore bounds the difference between the
two insertion lengths at step \(i\) by
\(\gamma^{i-1}\norm{x_0-y_0}\).  For \(a,b\in[0,2]\),
\[
 |a^\alpha-b^\alpha|
 \leq
 \begin{cases}
  |a-b|^\alpha, &0<\alpha\leq1,\\
  \alpha2^{\alpha-1}|a-b|, &\alpha\geq1.
 \end{cases}
\]
The first inequality follows from subadditivity, and the second follows from
the mean value theorem.  Summing the resulting geometric bound shows that the
two accumulated costs differ by a constant independent of \(n\) and of the
input sequence.  Division by \(n\), followed by the supremum and limit
superior, proves the assertion.
\end{proof}

\subsection{Exact optimization through power three}
\label{sec:adversarial}

This subsection optimizes the fixed-parameter worst-case cost in the range
where the adversarial insertion cost is known exactly.  It gives the boundary
behavior for \(0<\alpha\leq1\) and a closed optimal parameter for
\(1<\alpha\leq3\).  Its comparison with the stationary optimizer appears in
Figure~\ref{fig:uniform-adversarial-tradeoff} and
Table~\ref{tab:robustness-cost}.

The insertion-cost identity in the next theorem was first obtained in
\cite{CastroGammaInsertion}.  We include its complete proof in
Appendix~\ref{app:foundational-insertion-inputs}, since it is the decisive
input to the edge-power optimization.

\begin{theorem}
\label{thm:exact-adversarial-insertion}
For \(0\leq\gamma<1\) and \(0<\alpha\leq3\),
\begin{equation}
 \mathcal A_\alpha(\gamma)
 =\left(\frac2{1+\gamma}\right)^\alpha.
 \label{eq:exact-adversarial-insertion}
\end{equation}
Antipodal alternation attains this asymptotic value.
\end{theorem}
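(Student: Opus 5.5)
The proof has two halves: a lower bound from the explicit antipodal witness, and a matching upper bound from a one-step potential inequality. By Lemma~\ref{lem:initial-state-independence} we may choose the initial state freely in both halves.

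\emph{Lower bound.} Fix a unit vector \(e\) and feed \(p_i=(-1)^ie\). With \(c=(1-\gamma)/(1+\gamma)\), choose \(x_0=ce\). Then Eq.\eqref{eq:recursion} gives exactly the two-periodic orbit \(x_i=(-1)^ic\,e\). At every step the insertion length is \(\norm{p_i-x_{i-1}}=1+c=2/(1+\gamma)\). Hence the normalized sum in Eq.\eqref{eq:article-two-adversarial-value} equals \((2/(1+\gamma))^\alpha\) for every \(n\), which also proves the attainment claim.

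\emph{Upper bound.} Write \(r_i=\norm{p_i-x_{i-1}}\) and \(s_i=\norm{x_i}^2\). Since \(x_i\) lies on the segment \([x_{i-1},p_i]\) at parameter \(1-\gamma\), the parallelogram identity gives
\[
 s_i=(1-\gamma)\norm{p_i}^2+\gamma s_{i-1}-\gamma(1-\gamma)r_i^2
 \leq(1-\gamma)+\gamma s_{i-1}-\gamma(1-\gamma)r_i^2 .
\]
The triangle inequality gives \(r_i\leq1+\sqrt{s_{i-1}}\). This reduces the problem to a scalar control problem in the pair \((s,r)\). I would look for a potential \(\Psi\) on \([0,1]\) satisfying the one-step inequality
\[
 r_i^\alpha+\Psi(s_i)-\Psi(s_{i-1})\leq\Bigl(\tfrac{2}{1+\gamma}\Bigr)^\alpha .
\]
Telescoping this inequality and using boundedness of \(\Psi\) gives the upper bound. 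My first choice is the linear potential \(\Psi(s)=\kappa s\) with \(\kappa\geq0\). Its increment is bounded using the displayed inequality for \(s_i\). Substituting \(t=\sqrt{s_{i-1}}\in[0,1]\) and \(r\in[0,1+t]\), it then suffices to show
\[
 \max_{0\leq r\leq1+t}\Bigl[r^\alpha-\kappa\gamma(1-\gamma)r^2\Bigr]+\kappa(1-\gamma)(1-t^2)\leq(1+c)^\alpha
 \quad\text{for all }t\in[0,1].
\]
I would fix \(\kappa\) by requiring that the function obtained at \(r=1+t\), namely
\[
 \phi(t)=(1+t)^\alpha+\kappa(1-\gamma)(1+t)\bigl[(1-t)-\gamma(1+t)\bigr],
\]
be stationary at \(t=c\). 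The point \(t=c\) is where the periodic orbit lives, and there \(\phi(c)=(1+c)^\alpha\) exactly, because \((1-c)-\gamma(1+c)=0\).

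\emph{Main obstacle.} The hard step is verifying that \(t=c\) is the global maximum of the resulting two-variable problem on its whole domain; this is where \(\alpha\leq3\) must enter. In the variable \(u=1+t\), \(\phi(u)=u^\alpha-\kappa(1-\gamma^2)u^2/\ldots\) has the shape \(u^\alpha\) minus a quadratic. For \(2\leq\alpha\leq3\), I expect \(u^{\alpha-2}\) to be concave, so the stationarity condition singles out a unique maximum. For \(1<\alpha<2\), the inner maximum over \(r\) may be interior rather than at \(r=1+t\), and a separate estimate is needed. For that range I would first try a different scheme: applying the \(\alpha=2\) bound in Jensen or Hölder form, since \(r\mapsto r^{\alpha}\) is dominated by a power of the quadratic average. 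Failing that, I would use a nonlinear potential \(\Psi(s)=\kappa s^{\alpha/2}\). For \(0<\alpha\leq1\) the concavity of \(r^\alpha\) together with the \(\alpha=2\) case should suffice via \(\frac1n\sum r_i^\alpha\leq(\frac1n\sum r_i^2)^{\alpha/2}\). That still gives only \((4/(1+\gamma)^2)^{\alpha/2}\) if the quadratic bound itself is sharp, and the sharpness of the quadratic case must come from the linear-potential argument above.
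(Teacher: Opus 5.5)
Your lower bound is correct, and so is the range \(0<\alpha\leq2\). With \(\kappa=2/(1-\gamma^2)\), the coefficient of \(r^2\) is \(c>0\), so the inner maximum sits at \(r=1+t\). Then \(\phi(t)=c+\frac2{1+\gamma}+2ct-t^2\leq(1+c)^2\). This is the paper's quadratic potential, and your power-mean step covers all \(\alpha\leq2\). The resulting bound \((4/(1+\gamma)^2)^{\alpha/2}\) is exactly the target, not a weaker one.

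The gap is \(2<\alpha\leq3\), where the linear potential provably fails. Put \(a=2/(1+\gamma)=1+c\) and \(u=1+t\). Since \(\phi(a)=a^\alpha\) for every \(\kappa\), and \(a\) is interior to \([1,2]\) when \(0<\gamma<1\), stationarity is forced: \(\kappa(1-\gamma)=\frac\alpha2a^{\alpha-1}\). Then
\[
 \phi(u)=u^\alpha+\alpha a^{\alpha-1}u-\alpha a^{\alpha-2}u^2,
 \qquad
 \phi''(a)=\alpha(\alpha-3)a^{\alpha-2}.
\]
At \(\alpha=3\) this reads \(\phi(u)=a^3+(u-a)^3\), so the critical point is an inflection, not a maximum. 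Take \(x_{i-1}=-e\), \(p_i=e\), where your relaxation \(\norm{p_i}\leq1\) is exact. There the one-step defect is \((2-a)^3=(2\gamma/(1+\gamma))^3>0\). So no potential linear in \(\norm x^2\) can certify power three for any \(\gamma\in(0,1)\). Your expectation that concavity of \(u^{\alpha-2}\) singles out a maximum is the step that fails. For \(\gamma\) near \(1\), the same configuration breaks the bound already for some \(\alpha<3\), namely whenever \((1+\gamma)^\alpha-\alpha\gamma(1+\gamma)>1\).

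Your decoupled relaxation is a second problem once \(\alpha>2\). The inner maximum of \(r^\alpha-\kappa\gamma(1-\gamma)r^2\) then moves to the endpoint \(r=0\). At \(r=t=0\) the relaxed left side equals \(\frac\alpha2a^{\alpha-1}\), which exceeds \(a^\alpha\) whenever \(\alpha>4/(1+\gamma)\). Yet the actual configuration \(p_i=x_{i-1}=0\) has \(\norm{p_i}=0\).

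The missing ingredients are a quartic term in the potential and an exact reduction to a diameter. The paper uses
\[
 V_\gamma(x)=\frac{11+2\gamma-\gamma^2}{2(1-\gamma)(1+\gamma)^2}\norm x^2+\frac1{4(1-\gamma)}\norm x^4 .
\]
For fixed \(x\), the one-step expression is convex in \(p\), so it is maximized on the sphere. It is then convex in the cosine \((x/\norm x)\mathbin{\cdot}p\), so it is maximized on a diameter. On that diameter the defect factors as a square vanishing at the orbit point \(x=-c\), times a quadratic with negative discriminant. The range \(2<\alpha<3\) then follows from H\"older interpolation between the power-two and power-three bounds, not from a separate potential for each power.
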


\begin{proof}[Proof outline]
The alternating diameter input gives the lower bound.  A quadratic potential
proves the matching upper bound through power two, while a quartic potential
proves it at power three.  Interpolation between the last two estimates covers
the remaining powers.  Appendix~\ref{app:foundational-insertion-inputs} gives
the potentials, their factorizations, and the telescoping argument.
\end{proof}

It follows from Eq.\eqref{eq:reduction} that the edge-power objective is
\begin{equation}
 f_\alpha(\gamma)
 =h_\alpha(\gamma)
 \left(\frac{2}{1+\gamma}\right)^\alpha.
 \label{eq:adversarial-objective}
\end{equation}
Minimizing this edge-power objective gives the following parameter and value.

\begin{theorem}
Let \(0<\alpha\leq3\).  If \(0<\alpha\leq1\), then
\[
 \inf_{0\leq\gamma<1}f_\alpha(\gamma)=1,
\]
and the infimum is not attained.  If \(1<\alpha\leq3\), the unique global
minimum is attained at
\begin{equation}
 \gamma
 =\frac{2^{1/(\alpha-1)}}
 {1+2^{1/(\alpha-1)}},
 \label{eq:adversarial-minimizer}
\end{equation}
and the minimum is
\begin{equation}
 \left(1+2^{-\alpha/(\alpha-1)}\right)^{1-\alpha}.
 \label{eq:adversarial-minimum}
\end{equation}
\end{theorem}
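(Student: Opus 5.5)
The plan is to work directly with the explicit objective in Eq.~\eqref{eq:adversarial-objective}, which Theorem~\ref{thm:exact-adversarial-insertion} provides for every \(0<\alpha\leq3\). I will use no other input: both parts reduce to elementary analysis of \(f_\alpha(\gamma)=h_\alpha(\gamma)\,(2/(1+\gamma))^\alpha\) on \([0,1)\).

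\emph{Boundary regime \(0<\alpha\leq1\).}
For \(0<\alpha\leq1\), concavity (or subadditivity) of \(s\mapsto s^\alpha\) gives \(h_\alpha(\gamma)\geq\gamma+(1-\gamma)=1\) on \([0,1]\). The second factor satisfies \((2/(1+\gamma))^\alpha>1\) for every \(\gamma<1\). Hence \(f_\alpha(\gamma)>1\) on \([0,1)\). Since \(h_\alpha(1)=1\) and \(2/(1+\gamma)\to1\), we also have \(f_\alpha(\gamma)\to1\) as \(\gamma\to1^-\). So the infimum equals \(1\) and is not attained.

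\emph{Interior regime \(1<\alpha\leq3\).}
Here I will study the logarithmic derivative. On \((0,1)\),
\[
 \frac{f_\alpha'(\gamma)}{\alpha f_\alpha(\gamma)}
 =\frac{\gamma^{\alpha-1}-(1-\gamma)^{\alpha-1}}{h_\alpha(\gamma)}-\frac1{1+\gamma}.
\]
Multiplying by the positive quantity \((1+\gamma)h_\alpha(\gamma)\) and expanding, the terms \(\gamma^\alpha\) cancel. The terms involving \((1-\gamma)^{\alpha-1}\) then combine as \((1-\gamma)^{\alpha-1}\bigl[(1+\gamma)+(1-\gamma)\bigr]\). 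So \(f_\alpha'\) has the sign of
\[
 N(\gamma)=\gamma^{\alpha-1}-2(1-\gamma)^{\alpha-1}.
\]
Because \(\alpha>1\), \(N\) is strictly increasing on \([0,1]\), with \(N(0)=-2<0<1=N(1)\). Therefore \(f_\alpha\) strictly decreases and then strictly increases, and the unique zero of \(N\) is the unique global minimizer on \([0,1)\). It also beats the limiting value at \(\gamma\to1^-\), since \(f_\alpha\) is increasing just before \(1\). The zero solves \((\gamma/(1-\gamma))^{\alpha-1}=2\). Writing \(t=2^{1/(\alpha-1)}\), this gives \(\gamma=t/(1+t)\), which is Eq.~\eqref{eq:adversarial-minimizer}.

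\emph{Minimum value.}
Substitute \(\gamma=t/(1+t)\), so that \(1-\gamma=1/(1+t)\) and \(1+\gamma=(1+2t)/(1+t)\). The factors \((1+t)^\alpha\) cancel, leaving
\[
 f_\alpha=2^\alpha\,\frac{t^\alpha+1}{(1+2t)^\alpha}.
\]
The key identity is \(t^\alpha=t\cdot t^{\alpha-1}=2t\), which turns this into \(2^\alpha(1+2t)^{1-\alpha}\). Factor \(1+2t=2t\bigl(1+(2t)^{-1}\bigr)\) and use \(2t=2^{\alpha/(\alpha-1)}\), so that \((2t)^{1-\alpha}=2^{-\alpha}\). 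This yields Eq.~\eqref{eq:adversarial-minimum}.

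No step is a real obstacle. The only point needing care is the algebraic simplification of the first-order condition to the monotone function \(N\), since that is what delivers uniqueness and the global (not merely local) nature of the minimum.
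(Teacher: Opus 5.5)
Your proof is correct and is essentially the paper's argument. Your sign function \(N(\gamma)=\gamma^{\alpha-1}-2(1-\gamma)^{\alpha-1}\) equals \((1-\gamma)^{\alpha-1}(t^{\alpha-1}-2)\) with \(t=\gamma/(1-\gamma)\), which is exactly the sign condition the paper gets after substituting \(t\); the only difference is that for \(0<\alpha\leq1\) you use the direct bound \(h_\alpha\geq1\) instead of reading that regime off the sign of \(t^{\alpha-1}-2\).
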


\begin{proof}
Let \(t=\gamma/(1-\gamma)\).  Eq.\eqref{eq:adversarial-objective} becomes
\[
 2^\alpha\frac{1+t^\alpha}{(1+2t)^\alpha}.
\]
Its logarithmic derivative has the sign of
\[
 t^{\alpha-1}-2.
\]
The monotonicity of this expression gives the two regimes and
Eq.\eqref{eq:adversarial-minimum}.
\end{proof}

\subsection{Large-power asymptotics}
\label{sec:high-power}

This subsection moves beyond the exact range \(\alpha\leq3\) and determines the
leading behavior of the optimized worst-case value as \(\alpha\) grows.
The maximizing input is unknown in this range.  Periodic block inputs serve as
explicit adversarial witnesses and therefore provide lower bounds.  A
separation argument gives the matching upper bound.  Together they determine
the high-power asymptotic constant.  Figure~\ref{fig:high-power-adversarial}
later shows the finite-power gap between these bounds and the block size that
gives the strongest periodic certificate.

These inputs have a direct geometric role in the exponential recursion.  A
run of copies of \(u\) moves the recursive state toward \(u\).  The subsequent
switch to \(-u\) then produces long attachment edges, whose contributions are
emphasized by a large power.  The block length controls the tradeoff between
approaching each endpoint and switching frequently.  Choosing \(m\) of order
\(\log\alpha\) supplies the lower-bound scale needed below.

Let
\[
 v_\alpha
 =
 \inf_{0\leq\gamma<1}
 h_\alpha(\gamma)\mathcal A_\alpha(\gamma),
\]
with \(\mathcal A_\alpha\) defined in
Eq.\eqref{eq:article-two-adversarial-value}.

For \(0<\alpha\leq3\), antipodal alternation attains the supremum in
Eq.\eqref{eq:article-two-adversarial-value} for every fixed \(\gamma\).  This
is the \(m=1\) case below.  For high powers, the cases \(m\geq2\) are
computable lower-bound witnesses.  Their optimality among all input sequences
is not established.  Their insertion-cost formula also appears in
\cite{CastroGammaInsertion} and is derived here because
multiplication by the two-edge factor \(h_\alpha(\gamma)\) supplies the lower
bounds used in the high-power analysis.

\begin{proposition}
Fix a unit vector \(u\) and \(m\geq1\).  The \(m\)-block input is the
periodic sequence formed by \(m\) copies of \(u\), followed by \(m\) copies
of \(-u\).  Its mean insertion cost is
\begin{equation}
 \frac{2^\alpha}{m}
 \frac{1-\gamma^{\alpha m}}
 {(1-\gamma^\alpha)(1+\gamma^m)^\alpha}.
 \label{eq:block-insertion-cost}
\end{equation}
Consequently, for every \(m\geq1\),
\begin{equation}
 h_\alpha(\gamma)\mathcal A_\alpha(\gamma)
 \geq\frac{s_m(\gamma)}m,
 \qquad
 s_m(\gamma)
 =\frac{(2\gamma)^\alpha+[2(1-\gamma)]^\alpha}
 {(1+\gamma^m)^\alpha}.
 \label{eq:block-energy-lower}
\end{equation}
For \(m=1\), this is the antipodal-alternation bound
\begin{equation}
 h_\alpha(\gamma)\mathcal A_\alpha(\gamma)
 \geq f_\alpha(\gamma),
 \label{eq:alternation-energy-lower}
\end{equation}
where \(f_\alpha\) is defined in Eq.\eqref{eq:adversarial-objective}.
\end{proposition}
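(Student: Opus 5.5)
The plan is to compute the periodic orbit of the state under the \(m\)-block input exactly, and then sum the resulting insertion lengths over one period.

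\emph{Reduction to one dimension.} Every input lies on the line through \(u\). Lemma~\ref{lem:initial-state-independence} lets us choose the initial state freely, so we place it on that line, and then every state is \(x_i=\xi_iu\) with scalar \(\xi_i\). The map over one full period (\(m\) copies of \(u\), then \(m\) copies of \(-u\)) is an affine contraction with factor \(\gamma^{2m}\). It therefore has a unique fixed point, and the state converges to the corresponding periodic orbit geometrically fast. Since the lemma also guarantees the limit is unaffected by transients, we may start exactly on the periodic orbit.

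\emph{The symmetric orbit.} Let \(-\eta\) be the scalar state just before a run of \(+u\) begins. Within the run, \(1-\xi\) is multiplied by \(\gamma\) at each step. Hence \(1-\xi_k=\gamma^k(1+\eta)\), and after \(m\) steps the state is \(1-\gamma^m(1+\eta)\). By the symmetry \(u\mapsto -u\) of the block sequence, this value must equal \(\eta\). Solving gives \(1+\eta=2/(1+\gamma^m)\), and \(\eta\in(-1,1)\), so all states stay in \(\B\).

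\emph{Summing the insertion costs.} At the \(k\)-th step of the run, \(k=1,\dots,m\), the insertion length is \(1-\xi_{k-1}=\gamma^{k-1}\cdot 2/(1+\gamma^m)\). The cost of one run is therefore
\[
\frac{2^\alpha}{(1+\gamma^m)^\alpha}\sum_{k=0}^{m-1}\gamma^{\alpha k}
=\frac{2^\alpha(1-\gamma^{\alpha m})}{(1-\gamma^\alpha)(1+\gamma^m)^\alpha}.
\]
The opposite run contributes the same amount by symmetry. Dividing the period cost by \(2m\) gives Eq.\eqref{eq:block-insertion-cost}.

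\emph{Lower bound.} This particular input sequence is one admissible choice in the supremum of Eq.\eqref{eq:article-two-adversarial-value}. Its averaged cost converges to the value just computed, with the transient contributing \(O(1/n)\) by the geometric convergence. Hence \(\mathcal A_\alpha(\gamma)\) is at least Eq.\eqref{eq:block-insertion-cost}. Multiplying by \(h_\alpha(\gamma)\) gives
\[
h_\alpha(\gamma)\mathcal A_\alpha(\gamma)\ge \frac{2^\alpha\bigl(\gamma^\alpha+(1-\gamma)^\alpha\bigr)}{m(1+\gamma^m)^\alpha}\cdot\frac{1-\gamma^{\alpha m}}{1-\gamma^\alpha}.
\]
The last factor is at least \(1\), since it equals \(\sum_{k=0}^{m-1}\gamma^{\alpha k}\ge1\). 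Dropping it gives \(s_m(\gamma)/m\), which is Eq.\eqref{eq:block-energy-lower}.

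\emph{The case \(m=1\).} For \(m=1\) the last factor is exactly \(1\). The bound then reads \(h_\alpha(\gamma)\,2^\alpha/(1+\gamma)^\alpha=f_\alpha(\gamma)\), which is Eq.\eqref{eq:alternation-energy-lower}.

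The only delicate point is justifying that the limsup of the supremum dominates the periodic average. This follows because the supremum is taken before the limsup, and the transient from any initial state is \(O(1)\) by the geometric-series estimate in the proof of Lemma~\ref{lem:initial-state-independence}. Everything else is an explicit computation.
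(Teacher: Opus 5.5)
Your proposal is correct and follows essentially the same route as the paper. The paper also fixes the scalar state \(-(1-\gamma^m)/(1+\gamma^m)\) before each \(u\)-run by periodicity, lists the insertion lengths \(2\gamma^{k-1}/(1+\gamma^m)\), sums the geometric series, and drops the factor \((1-\gamma^{\alpha m})/(1-\gamma^\alpha)\ge1\); your contraction/uniqueness argument and your use of Lemma~\ref{lem:initial-state-independence} for the transient just make explicit what the paper leaves implicit, and the only slip, \(\eta=1\) rather than \(\eta\in(-1,1)\) at \(\gamma=0\), is harmless because \(\B\) is closed.
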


\begin{proof}
Represent the state as a scalar multiple of \(u\), and let \(z_0\) be its
value immediately before a block of \(m\) copies of \(u\).  Periodicity gives
\[
 z_0=-\frac{1-\gamma^m}{1+\gamma^m}.
\]
During the first half of the period, the successive insertion lengths are
\[
 \frac{2}{1+\gamma^m},
 \frac{2\gamma}{1+\gamma^m},
 \ldots,
 \frac{2\gamma^{m-1}}{1+\gamma^m}.
\]
The second half has the same list.  Summation proves
Eq.\eqref{eq:block-insertion-cost}.  Since
\[
 \frac{1-\gamma^{\alpha m}}{1-\gamma^\alpha}\geq1,
\]
multiplication by \(h_\alpha(\gamma)\) gives
Eq.\eqref{eq:block-energy-lower}.  The case \(m=1\) gives
Eq.\eqref{eq:alternation-energy-lower}.
\end{proof}

The matching upper bound uses the geometry of the recursion at
\(\gamma=1/2\).  Let \(\ell_i=\norm{p_i-x_{i-1}}\).

\begin{lemma}
Fix \(0<\varepsilon\leq1/8\), and call \(i\) a large index when
\(\ell_i/2>1-\varepsilon\).  Set
\begin{equation}
 g_\varepsilon
 =1+\log_2\frac{1-\sqrt{2\varepsilon-\varepsilon^2}}
 {2\varepsilon}.
 \label{eq:large-index-separation}
\end{equation}
Any two large indices are separated by at least \(g_\varepsilon\).
Their upper asymptotic density is therefore at most
\(g_\varepsilon^{-1}\).
\end{lemma}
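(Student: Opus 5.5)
The plan is a direct geometric argument for the recursion \(x_i=(x_{i-1}+p_i)/2\) at \(\gamma=1/2\). Two facts drive it. First, a large index forces the state right after it to lie close to the origin. Second, the state then needs many steps to return close enough to the unit sphere for another large index to occur. I would carry it out in three steps and then read off the density bound.

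\emph{Step 1: a necessary condition on the preceding state.} Since \(p_j\in\B\), we have \(\ell_j\leq 1+\norm{x_{j-1}}\). So if \(j\) is a large index, then \(2-2\varepsilon<\ell_j\leq1+\norm{x_{j-1}}\), which gives
\[
 1-\norm{x_{j-1}}<2\varepsilon .
\]

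\emph{Step 2: the state right after a large index is small.} Let \(i\) be a large index. Apply the parallelogram identity to \(x_{i-1},p_i\in\B\):
\[
 \norm{x_{i-1}+p_i}^2=2\norm{x_{i-1}}^2+2\norm{p_i}^2-\ell_i^2
 \leq 4-(2-2\varepsilon)^2=8\varepsilon-4\varepsilon^2 .
\]
Since \(x_i=(x_{i-1}+p_i)/2\), this yields \(\norm{x_i}\leq\sqrt{2\varepsilon-\varepsilon^2}\).

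\emph{Step 3: slow return to the boundary.} Iterating the recursion gives
\[
 x_{i+k}=2^{-k}x_i+\sum_{r=1}^{k}2^{-(k-r+1)}p_{i+r}.
\]
The weights on the \(p\)'s sum to \(1-2^{-k}\), so the triangle inequality gives
\[
 1-\norm{x_{i+k}}\geq 2^{-k}\bigl(1-\norm{x_i}\bigr)
 \geq 2^{-k}\bigl(1-\sqrt{2\varepsilon-\varepsilon^2}\bigr).
\]
Now let \(j=i+k+1\) be the next large index. Step 1 applied to \(x_{j-1}=x_{i+k}\) requires \(2^{-k}(1-\sqrt{2\varepsilon-\varepsilon^2})<2\varepsilon\). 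Equivalently,
\[
 k>\log_2\frac{1-\sqrt{2\varepsilon-\varepsilon^2}}{2\varepsilon},
\]
and hence \(j-i=k+1>g_\varepsilon\).

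The hypothesis \(\varepsilon\leq1/8\) is needed here. It gives \(\sqrt{2\varepsilon-\varepsilon^2}\leq\sqrt{15}/8<1/2\), so the argument of the logarithm exceeds \(1\) and \(g_\varepsilon>1\). This makes the separation statement meaningful. The same monotone bound covers the case \(k=0\), i.e. consecutive indices, without a separate argument.

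\emph{Density.} In any window of \(n\) consecutive indices, successive large indices are more than \(g_\varepsilon\) apart. So the window contains at most \(n/g_\varepsilon+1\) large indices. Dividing by \(n\) and taking the limit superior gives upper density at most \(g_\varepsilon^{-1}\).

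I do not expect a real obstacle. The only delicate point is Step 2, where the correct tool is the parallelogram identity rather than a cruder triangle-inequality bound; the latter would not give the \(\sqrt{\varepsilon}\) smallness of \(x_i\), and the exact form of \(g_\varepsilon\) depends on it.
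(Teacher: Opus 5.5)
Your proof is correct and follows essentially the same route as the paper. The triangle inequality forces \(\norm{x_{i-1}}>1-2\varepsilon\), the parallelogram identity gives \(\norm{x_i}<\sqrt{2\varepsilon-\varepsilon^2}\), the distance \(1-\norm{x}\) to the unit sphere shrinks by at most a factor \(2\) per step, and gap counting gives the density bound. Because your chain of inequalities is strict (through Step 1), you get \(j-i>g_\varepsilon\) outright, so the paper's caveat about a possible equality case at the threshold is not needed.
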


\begin{proof}
If \(i\) is large, the reverse triangle inequality gives
\[
 \norm{x_{i-1}}>1-2\varepsilon.
\]
The parallelogram identity and \(x_i=(x_{i-1}+p_i)/2\) give
\[
 \norm{x_i}^2+\frac{\ell_i^2}{4}
 =\frac{\norm{x_{i-1}}^2+\norm{p_i}^2}{2}
 \leq1.
\]
Hence \(\norm{x_i}<\sqrt{2\varepsilon-\varepsilon^2}\).  The recursion also implies
\[
 1-\norm{x_{j+1}}
 \geq\frac{1-\norm{x_j}}2.
\]
If \(i+s\) is the next large index, recovery from
\(\norm{x_i}<\sqrt{2\varepsilon-\varepsilon^2}\) to
\(\norm{x_{i+s-1}}>1-2\varepsilon\) requires
\[
 2^{-(s-1)}(1-\sqrt{2\varepsilon-\varepsilon^2})<2\varepsilon.
\]
This gives the separation in Eq.\eqref{eq:large-index-separation}, except
possibly when equality holds at the threshold.  This boundary case does not
affect the density bound obtained by counting gaps.
\end{proof}

At \(\gamma=1/2\), the contribution to the edge-power cost at step \(i\) is
\(2(\ell_i/2)^\alpha\).
The preceding lemma gives, for every \(0<\varepsilon\leq1/8\),
\begin{equation}
 v_\alpha
 \leq
 2\left((1-\varepsilon)^\alpha+\frac1{g_\varepsilon}\right).
 \label{eq:high-power-upper}
\end{equation}

\begin{theorem}
\label{thm:high-power-adversarial}
The optimized high-power adversarial value satisfies
\begin{equation}
 \lim_{\alpha\to\infty}
 (\log\alpha)v_\alpha
 =2\log2.
 \label{eq:exact-high-power-constant}
\end{equation}
\end{theorem}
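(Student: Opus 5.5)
The plan is to prove matching upper and lower bounds, each of the form \((2\log2+o(1))/\log\alpha\). The upper bound comes from Eq.\eqref{eq:high-power-upper} with a well-chosen \(\varepsilon\). The lower bound comes from Eq.\eqref{eq:block-energy-lower}, minimized over \(\gamma\) with \(m\) tuned to \(\alpha\).

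\emph{Upper bound.} Fix \(\eta\in(0,1)\) and take \(\varepsilon=\alpha^{-(1-\eta)}\), so \(\varepsilon\leq1/8\) for large \(\alpha\). Then
\[
 (1-\varepsilon)^\alpha\leq e^{-\alpha^\eta}=o(1/\log\alpha),
\]
while
\[
 g_\varepsilon=1+\log_2\frac{1-O(\sqrt\varepsilon)}{2\varepsilon}=(1-\eta)\log_2\alpha+O(1).
\]
Eq.\eqref{eq:high-power-upper} therefore gives
\[
 (\log\alpha)v_\alpha\leq\frac{2\log2}{1-\eta}+o(1).
\]
Letting \(\eta\to0\) yields \(\limsup(\log\alpha)v_\alpha\leq2\log2\). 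This step is routine.

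\emph{Lower bound.} We must show that for every \(\gamma\in[0,1)\) the bound \(\max_m s_m(\gamma)/m\) is at least \((2\log2-o(1))/\log\alpha\), uniformly in \(\gamma\). I would use
\[
 s_m(\gamma)\geq\max\bigl((2\gamma)^\alpha,(2(1-\gamma))^\alpha\bigr)(1+\gamma^m)^{-\alpha}.
\]
The range \(\gamma\leq1/2\) is easy. There \(2(1-\gamma)\geq1\), and with \(m=1\),
\[
 f_\alpha(\gamma)\geq(2(1-\gamma))^\alpha(1+\gamma)^{-\alpha}=\Bigl(\frac{2-2\gamma}{1+\gamma}\Bigr)^\alpha\geq\Bigl(\frac23\Bigr)^\alpha\cdot\ldots
\]
That bound is too weak, so a cleaner route is needed. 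For \(\gamma\leq1/2\), \(\frac{2(1-\gamma)}{1+\gamma^m}\) with \(m\) large is about \(2(1-\gamma)\geq1\), so \(s_m\gtrsim1\). Taking \(m\sim\log_2\alpha\) makes \(\gamma^m\leq1/\alpha\), and then \((1+\gamma^m)^{-\alpha}\geq e^{-1}\), giving roughly \(1/(e\log_2\alpha)\). That too is not sharp, so the block length must be tuned more carefully.

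The sharp choice is as follows. Write \(\gamma=\tfrac12(1+\theta)\) near \(1/2\). The base is \(\max(2\gamma,2-2\gamma)=1+|\theta|\), and we need \((1+|\theta|)^\alpha(1+\gamma^m)^{-\alpha}\approx1\) with \(m\approx(1+o(1))\log_2\alpha\). Choose \(m=\lceil\log_2\alpha+\log_2\log\alpha\rceil\) or similar, so that \(\gamma^m\leq c/(\alpha\log\alpha)\) whenever \(\gamma\leq\tfrac12(1+1/\log^2\alpha)\). Then \((1+\gamma^m)^{-\alpha}=1-o(1)\), and \(s_m/m\geq(1-o(1))/\log_2\alpha\).

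This already gives \(\log2/\log\alpha\), not the target \(2\log2/\log\alpha\), so the bound must keep both terms of \(h_\alpha\) rather than only the maximum. Near \(\gamma=1/2\) both \((2\gamma)^\alpha\) and \((2-2\gamma)^\alpha\) are about \(1\), so \(s_m\approx2\). The factor \(2\) is recovered exactly for \(|\theta|\lesssim1/\alpha\). For \(|\theta|\gg1/\alpha\), the larger term \((1+|\theta|)^\alpha\) grows exponentially in \(\alpha|\theta|\). That growth either compensates the loss of the smaller term (once \((1+|\theta|)^\alpha\geq2\)) or, for \(\gamma\) far from \(1/2\), makes \(s_m/m\) huge.

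\emph{Case split.}

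(i) \(\gamma\geq1/2\): take \(m=\lceil\log_2\alpha+2\log_2\log\alpha\rceil\). We have \(\gamma^m\leq\gamma^{\log_2\alpha}\cdot\ldots\). Using \((2\gamma)^\alpha(1+\gamma^m)^{-\alpha}\geq\exp\bigl(\alpha(\log2\gamma-\gamma^m)\bigr)\), the minimum over \(\gamma\in[1/2,1)\) of
\[
 e^{\alpha\log(2\gamma)-\alpha\gamma^m}+(2-2\gamma)^\alpha e^{-\alpha\gamma^m}
\]
should be \(2-o(1)\). At \(\gamma=1/2\) the exponent \(\alpha2^{-m}=o(1)\). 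Moving \(\gamma\) up increases \(\alpha\log2\gamma\) linearly with slope about \(2\alpha\), while \(\alpha\gamma^m\) grows with slope about \(\alpha m2^{1-m}=o(1)\). This continues until \(\gamma^m\) is comparable to \(\log(2\gamma)\), which happens only at \(\gamma\) near \(1\), where \((2\gamma)^\alpha/2^\alpha\) is then enormous.

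(ii) \(\gamma<1/2\): by symmetry \((2-2\gamma)^\alpha\) dominates and \(\gamma^m\) is even smaller.

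Consequently \(\inf_\gamma s_m(\gamma)\geq2-o(1)\), giving
\[
 v_\alpha\geq\frac{2-o(1)}{m}=\frac{2\log2-o(1)}{\log\alpha}.
\]

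\emph{Main obstacle.} The hard part is the uniform minimization of \(s_m(\gamma)\) over \(\gamma\) for the chosen \(m\). The difficulty is showing that the transition region, where \(\gamma^m\) stops being negligible, costs nothing. I would handle it by splitting at \(\gamma^m=1/\alpha\). Below that split, \(e^{-\alpha\gamma^m}\geq e^{-1}\) and \(\gamma\geq2^{-1}(\ldots)\). Above it, \(\log(2\gamma)\geq\log2-(\log\alpha)/m\). By the choice of \(m\), this exceeds \(\gamma^m+(\log2)/\alpha\) comfortably, so the first term alone exceeds \(2\).
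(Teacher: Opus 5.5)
Your upper bound is fine: taking $\varepsilon=\alpha^{-(1-\eta)}$ and then letting $\eta\to0$ is equivalent to the paper's single choice $\varepsilon=2\log\alpha/\alpha$. Your treatment of $\gamma\le1/2$ and of $\gamma$ slightly above $1/2$ also matches the paper: there convexity gives $(2\gamma)^\alpha+(2-2\gamma)^\alpha\ge2$, and $\alpha2^{-m}\to0$. The gap is at the opposite end of the parameter interval.

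\textbf{The failing claim.} The conclusion $\inf_{0\le\gamma<1}s_m(\gamma)\ge2-o(1)$ is false for every fixed $m$. Writing $r_m(\gamma)=2\gamma/(1+\gamma^m)$, one has
\[
 s_m(\gamma)=r_m(\gamma)^\alpha\Bigl(1+\Bigl(\frac{1-\gamma}{\gamma}\Bigr)^\alpha\Bigr)\longrightarrow1
 \qquad(\gamma\to1^-).
\]
More precisely, $s_m(1-x)\approx e^{\alpha(m-2)x/2}$ when $mx\ll1$. So $s_m<2$ on an interval of length about $2\log2/(\alpha m)$ next to $\gamma=1$. On that interval a single block witness certifies only $1/m\sim\log2/\log\alpha$, which is half the target constant.

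\textbf{Where your argument breaks.} Three steps fail.
\begin{enumerate}
\item The proxy $(1+\gamma^m)^{-\alpha}\ge e^{-\alpha\gamma^m}$ is very lossy once $\gamma^m$ is of order one. Its exponent $\alpha(\log(2\gamma)-\gamma^m)$ tends to $\alpha(\log2-1)\to-\infty$ as $\gamma\to1$.
\item The remark that $(2\gamma)^\alpha/2^\alpha$ is ``enormous'' near $\gamma=1$ is wrong, since it equals $\gamma^\alpha\le1$.
\item In your last paragraph, the inequality $\log(2\gamma)>\gamma^m+(\log2)/\alpha$ holds near the split $\gamma^m=1/\alpha$. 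It fails whenever $\gamma^m>\log(2\gamma)$, that is, for $1-\gamma$ below roughly $0.37/m$.
\end{enumerate}

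\textbf{The missing ingredient.} You need a second witness for $\gamma$ near $1$, and your own $\max_m$ already contains it. The case $m=1$ is antipodal alternation, and
\[
 f_\alpha(\gamma)\ge\Bigl(\frac{2\gamma}{1+\gamma}\Bigr)^\alpha .
\]
This bound is close to $1$ near the endpoint. It is at least $2/m$ for $\gamma\ge\gamma_1=(2/m)^{1/\alpha}/(2-(2/m)^{1/\alpha})$, where $1-\gamma_1\approx2\log(m/2)/\alpha$.

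\textbf{The intermediate range.} Between $\gamma_0=(1+(\log2)/\alpha)/2$ and $\gamma_1$, you must work with the exact ratio rather than the exponential proxy. The identity
\[
 \frac{r_m'}{r_m}=\frac{1-(m-1)\gamma^m}{\gamma(1+\gamma^m)}
\]
shows that $r_m$ first increases and then decreases. It therefore suffices to check the two endpoints: $r_m(\gamma_0)^\alpha=2+o(1)$, and $\alpha\log r_m(\gamma_1)=(m-2)\log(m/2)+o(1)\to\infty$. The second check uses $m\log m/\alpha\to0$.

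This three-region split is exactly how the paper closes the lower bound: block input near $1/2$, the exact block ratio on $[\gamma_0,\gamma_1]$, and alternation on $[\gamma_1,1)$. With it, your choice $m=\lceil\log_2\alpha+2\log_2\log\alpha\rceil$ works as well as the paper's $m=\lceil\log_2(\alpha\log\alpha)\rceil$.
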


\begin{proof}[Proof outline]
The upper bound follows by inserting a shrinking separation threshold into Eq.\eqref{eq:high-power-upper}.  For the matching lower bound, a periodic block of length of order \(\log\alpha\) is combined with the alternating input.  The parameter interval is divided according to which construction supplies the required lower bound.  Appendix~\ref{app:high-power-proof} gives the complete argument.
\end{proof}

\section{Numerical stationary and worst-case comparisons}
\label{sec:numerical-stationary-worst-case}

This section collects the computations associated with the stationary results
of Section~\ref{sec:stationary-optimization} and the worst-case results of
Section~\ref{sec:worst-case}.
The first subsection compares parameter choices under the two input models.
The second illustrates the local star threshold and the large-power limit.

\subsection{Balancing uniform and worst-case performance}

The stationary and worst-case criteria in Sections~\ref{sec:stationary-phase}
and~\ref{sec:adversarial} select different memory parameters.  The following
relative losses put the two criteria on the same scale for
\(1<\alpha\leq3\):
\[
 R_{\rm unif}(\gamma)
 =\frac{\Phi_{d,\alpha}(\gamma)}
 {\min_{\eta}\Phi_{d,\alpha}(\eta)}-1,
 \qquad
 R_{\rm adv}(\gamma)
 =\frac{f_\alpha(\gamma)}
 {\min_{\eta}f_\alpha(\eta)}-1.
\]
The balanced parameter is
\[
 \gamma_{\rm bal}
 \in\operatorname*{argmin}_{0\leq\gamma\leq1}
 \max\{R_{\rm unif}(\gamma),R_{\rm adv}(\gamma)\}.
\]
It quantifies the price of using one fixed parameter when the input model is
uncertain.

Figure~\ref{fig:uniform-adversarial-tradeoff} shows the complete integer range
\(1\leq d\leq100\), while Table~\ref{tab:robustness-cost} gives selected
quantitative cases.

\begin{figure}[H]
 \centering
 \includegraphics[width=\textwidth]{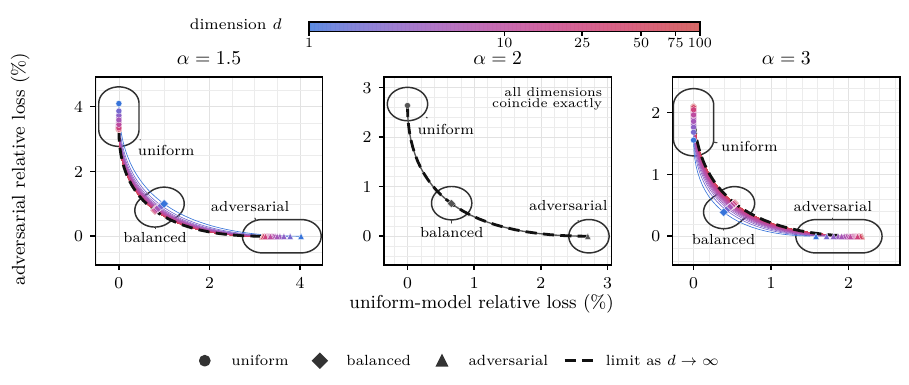}
 \caption{Compromise between performance under uniform random input and
 worst-case performance. Each curve traces the uniform-model loss horizontally
 and the adversarial loss vertically as \(\gamma\) varies; the balanced diamond
 identifies the smallest achievable maximum of the two losses. Circles and
 triangles mark the uniform-optimal and adversarial-optimal endpoints. The
 solid curves represent \(d=1,\ldots,100\), colored by
 \(t_d=\log(d)/\log(100)\).
 The black dashed curve is the limit as \(d\to\infty\) established in
 Section~\ref{sec:high-dimensional}.  For \(\alpha=2\), all dimensions
 coincide by Eq.\eqref{eq:quadratic-stationary}.}
 \label{fig:uniform-adversarial-tradeoff}
\end{figure}

\begin{table*}[t]
\centering
\caption{Price of using one fixed parameter when the input model is uncertain. The last column is the quantity to compare: it gives the smallest achievable maximum of the uniform and adversarial relative losses. The three parameter columns give the uniform-optimal, adversarial-optimal, and balanced choices.}
\label{tab:robustness-cost}
\small
\setlength{\tabcolsep}{4.2pt}
\begin{tabular}{ccccc rrr}
\toprule
$d$ & $\alpha$ & $\gamma_{\mathrm{unif}}$ & $\gamma_{\mathrm{adv}}$ & $\gamma_{\mathrm{bal}}$ & \shortstack{Uniform loss at\\$\gamma_{\mathrm{adv}}$} & \shortstack{Adversarial loss at\\$\gamma_{\mathrm{unif}}$} & \shortstack{Maximum loss at\\$\gamma_{\mathrm{bal}}$} \\
\midrule
2 & 1.25 & 0.74080 & 0.94118 & 0.85147 & 3.527\% & 4.548\% & 1.003\% \\
2 & 1.5 & 0.64247 & 0.80000 & 0.72058 & 3.784\% & 3.861\% & 0.941\% \\
2 & 2 & 0.58114 & 0.66667 & 0.62278 & 2.705\% & 2.633\% & 0.658\% \\
2 & 2.5 & 0.55821 & 0.61351 & 0.58521 & 2.085\% & 2.033\% & 0.509\% \\
2 & 3 & 0.54600 & 0.58579 & 0.56549 & 1.718\% & 1.682\% & 0.421\% \\
\midrule
10 & 1.25 & 0.76339 & 0.94118 & 0.86171 & 2.800\% & 3.628\% & 0.798\% \\
10 & 1.5 & 0.65080 & 0.80000 & 0.72499 & 3.348\% & 3.446\% & 0.838\% \\
10 & 2 & 0.58114 & 0.66667 & 0.62278 & 2.705\% & 2.633\% & 0.658\% \\
10 & 2.5 & 0.55593 & 0.61351 & 0.58398 & 2.284\% & 2.213\% & 0.556\% \\
10 & 3 & 0.54294 & 0.58579 & 0.56386 & 2.024\% & 1.962\% & 0.493\% \\
\bottomrule
\end{tabular}
\end{table*}

The balanced choice keeps the larger loss close to one percent throughout the
reported range and below one percent for all rows except
\((d,\alpha)=(2,1.25)\), where it is \(1.003\%\).  The stationary expectations
were evaluated by deterministic geometric quadrature at three nested
resolutions.  The quadratic rows were also checked against the exact formula.
Across the nonquadratic cases, the largest relative change between the two
finest resolutions was \(9.2\times10^{-5}\).  The corresponding conservative
uncertainty in the reported losses is below \(0.019\) percentage points.

\subsection{The local star threshold and the large-power limit}

The first computation illustrates the local transition established in
Section~\ref{sec:endpoint}.  Figure~\ref{fig:endpoint-bifurcation} shows the
critical branch and barrier as the power crosses the threshold.

Immediately above the threshold, a narrow barrier separates the one-sided
endpoint minimum from the interior descent.  The branch and barrier ratios
converge to the coefficients in the theorem.  The computed branch curvature
also approaches \(-6.125\), as predicted by
Eq.\eqref{eq:endpoint-branch-curvature}, and identifies the emerging critical
point as a strict local maximum.

\begin{figure}[!t]
 \centering
 \includegraphics[width=\textwidth]{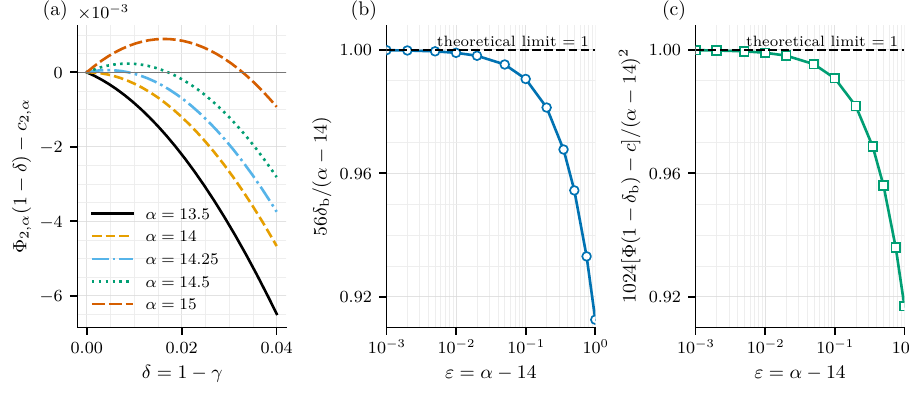}
 \caption{Birth of the interior stationary branch for \(d=2\), where
 \(\alpha^\star=14\).  (a) shows how the objective gap changes near
 \(\delta=1-\gamma=0\) across the threshold.  (b) tests the normalized
 branch coefficient \(56\delta(\alpha)/(\alpha-14)\), and (c) plots the
 normalized barrier coefficient
 \(1024[\Phi_{2,\alpha}(1-\delta(\alpha))-c_{2,\alpha}]/(\alpha-14)^2\).
 Both (b) and (c) use \(\varepsilon=\alpha-14\) on a logarithmic scale. Their
 convergence to the common reference value \(1\) is the visual test of the two
 asymptotic coefficients as \(\varepsilon\to0^+\), in agreement with
 Eq.\eqref{eq:endpoint-bifurcating-maximum} and
 Eq.\eqref{eq:endpoint-barrier-height}.}
 \label{fig:endpoint-bifurcation}
\end{figure}

The second computation concerns the convergence in
Theorem~\ref{thm:high-power-adversarial}.  The finite-power band in
Figure~\ref{fig:high-power-adversarial}(a) remains wide even at very large
powers.  At \(\alpha=10^8\), the computed theorem bounds for the normalized
value are \(0.818105\) and \(1.136188\).  Thus the figure illustrates both the
asymptotic agreement proved by Eq.\eqref{eq:exact-high-power-constant} and its
slow finite-power approach.  The oscillations of the best block size
\(m^*\) within this witness family in
Figure~\ref{fig:high-power-adversarial}(b) come from the integer optimization
over periodic lower-bound witnesses.

\begin{figure}[!t]
 \centering
 \includegraphics[width=\textwidth]{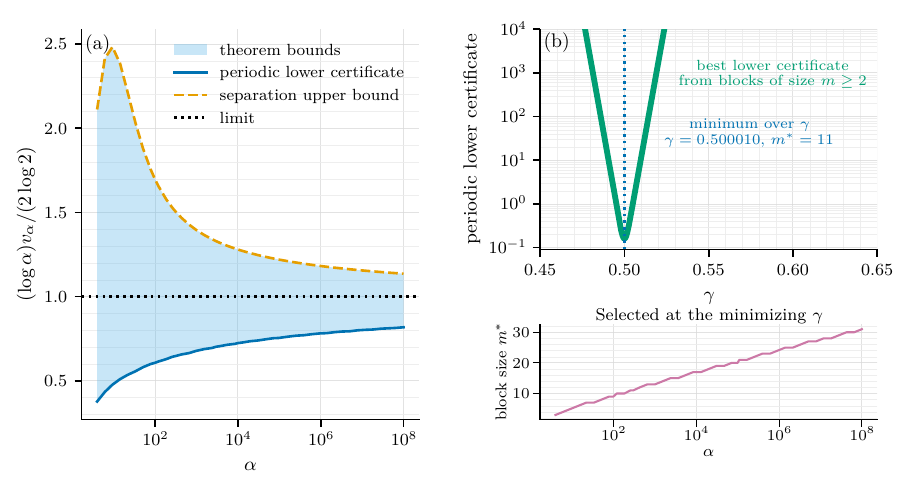}
 \caption{High-power adversarial bounds and their slow convergence.  An
 \(m\)-block input periodically repeats \(m\) copies of \(u\) followed by
 \(m\) copies of \(-u\) and serves as an explicit lower-bound witness.  (a)
 shows the proved lower and upper
 constructions approaching \(2\log2/\log\alpha\), with their gap measuring
 finite-power uncertainty.  At \(\alpha=256\), the upper graph in (b) shows
 the strongest certificate obtained from nonalternating block inputs.  Its
 ordinate is precisely
\(\max_{m\in\mathbb N,\,m\geq2}s_m(\gamma)/m\).  The minimum over \(\gamma\)
occurs near \(\gamma=0.500010\), where the best lower-bound witness within
this block family has size \(m^*=11\).
 The case \(m=1\) is antipodal alternation.  Its bound is below the maximum
 supplied by blocks with \(m\geq2\) throughout the displayed
 \(\gamma\)-interval at \(\alpha=256\), so it never determines the plotted
maximum.  This maximum remains a lower bound on the unknown adversarial
value.  The lower part records the block size \(m^*\) that maximizes this
restricted certificate as \(\alpha\) varies.}
 \label{fig:high-power-adversarial}
\end{figure}

\section{Finite-size optimization}
\label{sec:finite-regimes}

This section determines how the finite-size optimizer approaches its
stationary target.  The analysis covers the boundary regime \(0<\alpha<1\),
fixed interior powers \(\alpha>1\), and the joint window
\(\alpha-1=O(1/\log N)\).  A geometric constant controls the initialization
term in the boundary and joint-window regimes.  The joint-window analysis is
organized into subcritical, critical, and supercritical cases, followed by a
comparison of the first corrections.

\subsection{Fixed powers below one}

For \(\alpha=1\), the edge-power objective equals the insertion cost.  This
critical finite-size case is included in the joint-window analysis below: the
specialization \(\varepsilon_N=0\) of
Theorem~\ref{thm:joint-subcritical} gives its optimizer and minimum-value
asymptotics.

For a unit vector \(e\), recall that
\[
 g_{d,\alpha}(r)=G_{d,\alpha}(re)=\E\norm{p-re}^{\alpha}
\]
and let \(\omega_j\) denote the volume of the unit ball in \(\R^j\), with
\(\omega_0=1\).  For \(R>0\) and \(-R\leq z\leq R\), define the cap volume
\begin{equation}
 \mathcal C_d(R,z)
 =\omega_{d-1}\int_z^R
 (R^2-u^2)^{(d-1)/2}\,du.
 \label{eq:cap-volume}
\end{equation}

The required constant is obtained from the distance between a fixed point and
a uniform point of the ball.  The event \(\norm{p-re}\leq t\) is the lens
\(B(0,1)\cap B(re,t)\).  Its standard decomposition into the two caps cut by
the common boundary hyperplane makes the potential and its radial derivative
explicit.

\begin{proposition}
Fix \(0<r\leq1\), and let \(T_r=\norm{p-re}\).  Its distribution function is
\begin{equation}
 \Pr(T_r\leq t)
 =
 \begin{cases}
  0, & t<0,\\
  t^d, & 0\leq t\leq1-r,\\
  \omega_d^{-1}
  \bigl(\mathcal C_d(1,a(r,t))
  +\mathcal C_d(t,b(r,t))\bigr),
  & 1-r<t<1+r,\\
  1, & t\geq1+r,
 \end{cases}
 \label{eq:conditional-distance-cdf}
\end{equation}
where
\begin{equation}
 a(r,t)=\frac{r^2+1-t^2}{2r},
 \qquad
 b(r,t)=\frac{r^2+t^2-1}{2r}.
 \label{eq:cap-plane-coordinates}
\end{equation}
Consequently,
\begin{equation}
 g_{d,\alpha}(r)
 =\alpha\int_0^{1+r}
 t^{\alpha-1}\bigl(1-\Pr(T_r\leq t)\bigr)\,dt.
 \label{eq:geometric-tail-moment}
\end{equation}

For
\begin{equation}
 \rho(r,t)
 =\frac{\sqrt{((1+t)^2-r^2)(r^2-(1-t)^2)}}{2r},
 \label{eq:intersection-radius}
\end{equation}
the radial derivative is
\begin{equation}
 g_{d,\alpha}'(r)
 =\alpha\frac{\omega_{d-1}}{\omega_d}
 \int_{1-r}^{1+r}
 t^{\alpha-1}\rho(r,t)^{d-1}\,dt.
 \label{eq:geometric-potential-derivative}
\end{equation}
\end{proposition}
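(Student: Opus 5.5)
The plan has three parts: describe the lens \(B(0,1)\cap B(re,t)\) explicitly, convert the tail of \(T_r\) into the moment by the layer-cake formula, and differentiate under the integral in \(r\).

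\emph{Distribution function.} Let \(p\) be uniform in \(\B\), so \(\Pr(T_r\leq t)=\omega_d^{-1}\operatorname{vol}(B(0,1)\cap B(re,t))\).

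\begin{itemize}[nosep]
\item If \(t\leq1-r\), then \(B(re,t)\subset B(0,1)\), and the ratio is \(t^d\).
\item If \(t\geq1+r\), the ball \(B(0,1)\) lies inside \(B(re,t)\), and the ratio is \(1\).
\item For \(1-r<t<1+r\), the two spheres meet in the hyperplane \(\{u_1=a\}\). Subtracting \(\norm u^2=1\) from \(\norm{u-re}^2=t^2\) gives \(a=a(r,t)\).
\end{itemize}

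In the intermediate case the lens is the union of two pieces.
\begin{itemize}[nosep]
\item The cap of \(B(0,1)\) on the side \(u_1\geq a\) has volume \(\mathcal C_d(1,a)\) by slicing orthogonally to \(e\).
\item The cap of \(B(re,t)\) on the side \(u_1\leq a\) is reflected about the centre \(re\). In centred coordinates its cutting plane sits at signed distance \(r-a=b(r,t)\) from the centre, so its volume is \(\mathcal C_d(t,b)\).
\end{itemize}
The negative-\(z\) cases, where a piece exceeds a half ball, are covered because \eqref{eq:cap-volume} is defined for \(-R\leq z\leq R\). This gives \eqref{eq:conditional-distance-cdf}. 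Since \(T_r\leq1+r\), the layer-cake formula \(\E T^\alpha=\int_0^\infty\alpha t^{\alpha-1}\Pr(T>t)\,dt\) then gives \eqref{eq:geometric-tail-moment}.

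\emph{Derivative.} I will differentiate \eqref{eq:geometric-tail-moment} in \(r\). Two observations make this routine:
\begin{itemize}[nosep]
\item The upper limit \(1+r\) contributes nothing, because the integrand vanishes there.
\item On \(t<1-r\) the distribution function \(t^d\) does not depend on \(r\).
\end{itemize}
Hence
\[
 g'(r)=-\alpha\int_{1-r}^{1+r}t^{\alpha-1}\,\partial_r\Pr(T_r\leq t)\,dt,
\]
which is justified by dominated convergence, since the cap volumes are \(C^1\) in \((r,t)\) on the open region.

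\emph{Main step.} The key computation is
\[
 -\partial_r\bigl[\mathcal C_d(1,a)+\mathcal C_d(t,b)\bigr]=\omega_{d-1}\rho^{d-1}.
\]
By the fundamental theorem of calculus,
\[
 \partial_r\mathcal C_d(1,a)=-\omega_{d-1}(1-a^2)^{(d-1)/2}\partial_ra,
 \qquad
 \partial_r\mathcal C_d(t,b)=-\omega_{d-1}(t^2-b^2)^{(d-1)/2}\partial_rb.
\]
Both bases equal \(\rho^2\), the squared radius of the intersection sphere. Indeed \(1-a^2=t^2-b^2\) because \(a+b=r\) and \(a-b=(1-t^2)/r\), and expanding \(4r^2(1-a^2)\) gives the factorisation \(((1+t)^2-r^2)(r^2-(1-t)^2)\).

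The derivative sum is \(\partial_r(a+b)=\partial_r r=1\). Therefore
\[
 -\partial_r\Pr(T_r\leq t)=\frac{\omega_{d-1}}{\omega_d}\rho^{d-1},
\]
which yields \eqref{eq:geometric-potential-derivative}. Geometrically, moving the centre by \(dr\) sweeps volume \(\omega_{d-1}\rho^{d-1}\,dr\) across the intersection disc.

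I expect the main obstacle to be keeping the sign convention of \(b\) consistent when the small ball's cap exceeds a hemisphere. Checking \(a+b=r\) directly from \eqref{eq:cap-plane-coordinates} handles this uniformly.
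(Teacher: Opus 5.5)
Your proposal is correct and follows the paper's proof essentially step for step: the same two-cap decomposition of the lens \(B(0,1)\cap B(re,t)\), the same tail-integral formula, and the same differentiation via \(\partial_z\mathcal C_d\), the identity \(1-a^2=t^2-b^2=\rho^2\), and \(\partial_r a+\partial_r b=1\). The one point to state explicitly is that the piecewise lens-volume formulas agree at \(t=1-r\) and \(t=1+r\), which the paper records; it makes \(r\mapsto\Pr(T_r\leq t)\) Lipschitz, and \(C^1\) regularity on the open region alone does not bound the difference quotients that straddle \(t=1-r\) in your dominated-convergence step.
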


\begin{proof}
The event \(T_r\leq t\) is the lens
\[
 B(0,1)\cap B(re,t).
\]
For \(t\leq1-r\), the second ball is contained in the first.  For
\(t\geq1+r\), it contains the first.  In the partial-overlap range, the two
boundary spheres meet in a hyperplane whose coordinates relative to the two
centers are \(a(r,t)\) and \(b(r,t)\).  The lens is the union of the two caps
in Eq.\eqref{eq:conditional-distance-cdf}.

Eq.\eqref{eq:geometric-tail-moment} follows from the tail integral for the
bounded nonnegative random variable \(T_r^\alpha\).  To differentiate it, let
\[
 I_d(r,t)=\operatorname{vol}\bigl(B(0,1)\cap B(re,t)\bigr).
\]
In the partial-overlap range, the common spherical cross-section is a
\((d-1)\)-ball of radius \(\rho(r,t)\).  More explicitly,
Eq.\eqref{eq:conditional-distance-cdf} gives
\[
 I_d(r,t)=\mathcal C_d(1,a(r,t))+\mathcal C_d(t,b(r,t)).
\]
Since
\[
 \partial_z\mathcal C_d(R,z)
 =-\omega_{d-1}(R^2-z^2)^{(d-1)/2},
\]
and
\[
 1-a(r,t)^2=t^2-b(r,t)^2=\rho(r,t)^2,
 \qquad \partial_ra(r,t)+\partial_rb(r,t)=1,
\]
differentiation of the two cap volumes gives
\[
 \frac{\partial I_d}{\partial r}(r,t)
 =-\omega_{d-1}\rho(r,t)^{d-1}.
\]
At \(t=1-r\) and \(t=1+r\), the radius \(\rho(r,t)\) vanishes and the
adjacent formulas for the lens volume agree.  Thus the moving-boundary terms
in the piecewise tail integral cancel.  Differentiating the remaining
integrand gives
Eq.\eqref{eq:geometric-potential-derivative}.
\end{proof}

Define
\begin{equation}
 H_{d,\alpha}
 =2\int_0^1\frac{1-r^d}{r}
 \bigl(g_{d,\alpha}(r)-c_{d,\alpha}\bigr)\,dr.
 \label{eq:H-constant}
\end{equation}
Since \(g_{d,\alpha}(0)=c_{d,\alpha}\), integration in the opposite order and
Eq.\eqref{eq:geometric-potential-derivative} give
\begin{equation}
 H_{d,\alpha}
 =2\alpha\frac{\omega_{d-1}}{\omega_d}
 \int_0^1\int_{1-r}^{1+r}
 \left(-\log r-\frac{1-r^d}{d}\right)
 t^{\alpha-1}\rho(r,t)^{d-1}\,dt\,dr.
 \label{eq:H-geometric-integral}
\end{equation}
This representation is positive and applies to every real \(\alpha>0\).
For every even power \(\alpha=2m\), expansion of
\(\norm{p-re}^{2m}\) reduces the same constant to a finite sum of radial
moments.
Thus \(H_{d,\alpha}\) is the leading accumulated initialization coefficient
in the boundary scaling limits below.

\begin{theorem}
\label{thm:sublinear-boundary}
Let \(d\geq1\) and \(0<\alpha<1\).  For every global minimizer \(\gamma_N\)
of \(F_{d,\alpha,N}\),
\begin{equation}
 N^{1/(\alpha+1)}(1-\gamma_N)
 \longrightarrow
 \left(
  \frac{H_{d,\alpha}}{2\alpha c_{d,\alpha}}
 \right)^{1/(\alpha+1)}.
 \label{eq:sublinear-boundary-location}
\end{equation}
Moreover,
\begin{equation}
 \min_{0\leq\gamma\leq1}F_{d,\alpha,N}(\gamma)
 =Nc_{d,\alpha}
 +(\alpha+1)c_{d,\alpha}^{1/(\alpha+1)}
 \left(\frac{H_{d,\alpha}}{2\alpha}\right)^{\alpha/(\alpha+1)}
 N^{1/(\alpha+1)}
 +o\bigl(N^{1/(\alpha+1)}\bigr),
 \label{eq:sublinear-boundary-value}
\end{equation}
\end{theorem}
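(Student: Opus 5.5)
Write \(\delta=1-\gamma\). The plan is to reduce \(F_{d,\alpha,N}\) to a three-term asymptotic model in \(\delta\): a penalty from the two-edge factor, a stationary noise term, and an initialization term. I would then solve the model in the rescaled variable \(\delta=uN^{-1/(\alpha+1)}\) and show that every global minimizer lies on this scale.

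\emph{Exact decomposition.} Unrolling Eq.\eqref{eq:recursion} gives
\(x_{i-1}=\gamma^{i-1}p_0+\xi_{i-1}\), with \(\xi_{i-1}=(1-\gamma)\sum_{k=1}^{i-1}\gamma^{i-1-k}p_k\) independent of \(p_0\). By Eq.\eqref{eq:reduction} and Eq.\eqref{eq:moment-through-potential} applied conditionally,
\[
 F_{d,\alpha,N}(\gamma)
 =h_\alpha(\gamma)\Bigl[Nc_{d,\alpha}
 +\sum_{i=1}^{N}\E\bigl(g_{d,\alpha}(\norm{\gamma^{i-1}p_0+\xi_{i-1}})-c_{d,\alpha}\bigr)\Bigr].
\]
For \(0<\alpha<1\), concavity gives \(h_\alpha(1-\delta)-1=\delta^\alpha-\alpha\delta+O(\delta^2)\), and \(\delta^\alpha\) is the dominant part. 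The leading cost of being near the star is therefore \(Nc_{d,\alpha}\delta^\alpha\).

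\emph{Splitting the bracket.} Write \(g(\norm{\gamma^{i-1}p_0+\xi})-c\) as \([g(\gamma^{i-1}\norm{p_0})-c]\) plus a remainder.

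The first part sums to \(\delta^{-1}\)-scale. Using \(\gamma^{i-1}\approx e^{-\delta(i-1)}\), a Riemann-sum comparison, and the change of variables \(r=e^{-s}R\) gives
\[
 \sum_{i\geq1}\bigl(g(\gamma^{i-1}R)-c\bigr)=\frac1\delta\int_0^R\frac{g(r)-c}{r}\,dr+O(1).
\]
This is finite because \(g(r)-c=O(r^2)\) by Lemma~\ref{lem:radial-upper-envelope} and the expansion at the origin. Averaging over \(R=\norm{p_0}\), whose density is \(dR^{d-1}\), and applying Fubini produces exactly \(H_{d,\alpha}/(2\delta)\) by Eq.\eqref{eq:H-constant}.

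The remainder has two pieces.
\begin{itemize}[nosep]
 \item The pure noise contribution is at most of order \(N\E\norm{\xi}^2\), i.e.\ \(O(N\delta)\) by Eq.\eqref{eq:state-second-moment} and the quadratic behaviour of \(g\) at \(0\). This is \(o(N\delta^\alpha)\).
 \item The cross interaction I would bound with the Lipschitz constant of \(g\) and \(\E\norm{\xi_{i-1}}\leq C\sqrt\delta\). This only matters for the \(O(\delta^{-1}\log(1/\delta))\) indices with \(\gamma^{i-1}\) not small. For the remaining indices, the whole summand is controlled by the quadratic bound on \(g(\norm{x})-c\). Altogether the cross term is \(O(\delta^{-1/2}\log(1/\delta))=o(\delta^{-1})\).
\end{itemize}
The truncation of the \(p_0\)-sum at \(i=N\) costs \(O(\gamma^N/\delta)\), which is negligible once \(N\delta\to\infty\).

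\emph{Optimization and localization.} On \(\delta=uN^{-1/(\alpha+1)}\) with \(u\) in compacts of \((0,\infty)\), the model is
\[
 F=Nc_{d,\alpha}+N^{1/(\alpha+1)}\Bigl(c_{d,\alpha}u^\alpha+\frac{H_{d,\alpha}}{2u}\Bigr)+o(N^{1/(\alpha+1)}),
\]
uniformly in \(u\). Its unique minimizer is \(u^{\alpha+1}=H_{d,\alpha}/(2\alpha c_{d,\alpha})\). Substituting gives the constant \((\alpha+1)c_{d,\alpha}^{1/(\alpha+1)}(H_{d,\alpha}/2\alpha)^{\alpha/(\alpha+1)}\) in Eq.\eqref{eq:sublinear-boundary-value}; the upper bound follows by evaluating at this \(u\).

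To locate every global minimizer, I would rule out the remaining regions with cruder lower bounds, using that \(H_{d,\alpha}>0\) and that the bracket is always at least \(Nc_{d,\alpha}\) because \(g\geq c\).
\begin{itemize}[nosep]
 \item \(\delta\) bounded away from \(0\): here \(h_\alpha\geq1+\eta\), so \(F\geq(1+\eta)Nc_{d,\alpha}\).
 \item \(N^{-1/(\alpha+1)}\ll\delta\leq\delta_0\): the penalty \(Nc\delta^\alpha\) alone is \(\gg N^{1/(\alpha+1)}\).
 \item \(\delta\ll N^{-1/(\alpha+1)}\), including \(\gamma=1\) and the range \(\delta\lesssim1/N\): the initialization sum is at least a constant times \(\min(N,\delta^{-1})\). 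This uses monotonicity of \((g(r)-c)/r^2\) and \(\gamma^{i-1}\geq1/2\) for \(i\lesssim\min(N,\delta^{-1})\), and it is \(\gg N^{1/(\alpha+1)}\).
\end{itemize}
Combined with the strict convexity of \(u\mapsto cu^\alpha+H/(2u)\) near its minimizer, this yields Eq.\eqref{eq:sublinear-boundary-location}.

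\emph{Main obstacle.} I expect the hardest step to be making the cross-term and Riemann-sum errors uniform over \(\delta\) in the relevant window, since both \(\gamma^{i-1}\) and \(\xi_{i-1}\) depend on \(\delta\) and the noise is not yet stationary for small \(i\). If the Lipschitz estimate turns out too crude, I would first try a second-order Taylor expansion of \(G_{d,\alpha}\) around \(\gamma^{i-1}p_0\). Its first-order term vanishes by the centering of \(\xi\), leaving an \(O(\E\norm\xi^2)\) term; that still requires controlling the Hessian of \(g\) near \(r=1\), which is only Hölder for \(\alpha<1\).
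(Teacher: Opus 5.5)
Your plan follows the paper's proof. It uses the same exact split \(F_{d,\alpha,N}(1-\delta)-Nc_{d,\alpha}=Nc_{d,\alpha}\bigl(h_\alpha(1-\delta)-1\bigr)+h_\alpha(1-\delta)\,\mathcal E_N(\delta)\), where \(\mathcal E_N(\delta)=\sum_{t<N}\E[g_{d,\alpha}(\norm{x_t})-c_{d,\alpha}]\). It then reduces \(\mathcal E_N\) to the initial-ray sum, evaluates the ray integral as \(H_{d,\alpha}/(2\delta)\), optimizes the profile \(c_{d,\alpha}u^\alpha+H_{d,\alpha}/(2u)\), and excludes the other scales.

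The one different technique is the initial-ray reduction. The paper expands \(G_{d,\alpha}\) to second order about \((1-\delta)^tp_0\) and removes the linear term by centering. It bounds the error by \(CN\delta\) through a bounded Hessian when \(d\geq2\), and by \(CN\delta^{(1+\alpha)/2}\) through the \(\alpha\)-H\"older derivative when \(d=1\). You instead use only the Lipschitz bound on \(G_{d,\alpha}\) for the \(O(\delta^{-1}\log(1/\delta))\) indices with \(\gamma^t>\sqrt\delta\), and the quadratic envelope for the rest. This gives \(O(\delta^{-1/2}\log(1/\delta)+N\delta+1)\), which is exactly the split the paper uses to prove Lemma~\ref{lem:uniform-transient}. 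The bound is \(o(N^{1/(\alpha+1)})\) on the relevant scale in every dimension, so it avoids the \(d=1\) case distinction and makes your Hessian fallback unnecessary.

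One citation needs adjusting: Lemma~\ref{lem:radial-upper-envelope} is stated only for \(\alpha>2\). For \(0<\alpha<1\), the bounds \(mr^2\leq g_{d,\alpha}(r)-c_{d,\alpha}\leq Mr^2\) with \(m,M>0\) come instead from the strictly positive continuous extension of \((g_{d,\alpha}(r)-c_{d,\alpha})/r^2\) to \([0,1]\). The direction of monotonicity of that quotient depends on \(d\), but only its positivity is used.

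One step fails as written: the first localization bullet. Since \(h_\alpha(0)=0^\alpha+1^\alpha=1\), the bound \(h_\alpha(1-\delta)\geq1+\eta\) is false as \(\gamma\to0\). With only \(\mathcal E_N\geq0\), you then have no excess in that region. For instance, \(F_{d,\alpha,N}(0)=N\E\norm{p-q}^\alpha\) for independent uniform \(p,q\). Its order-\(N\) excess over \(Nc_{d,\alpha}\) comes from the state noise, not from the two-edge factor.

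The repair is the \(N\delta\) half of the paper's coercive envelope. For every \(t\), \(\E\norm{x_t}^2=s_d\bigl[\gamma^{2t}+\tfrac{\delta}{2-\delta}(1-\gamma^{2t})\bigr]\geq s_d\delta/(2-\delta)\), so \(\mathcal E_N(\delta)\geq ms_dN\delta/(2-\delta)\). Since \(h_\alpha\geq1\) for \(\alpha<1\), this gives \(F_{d,\alpha,N}(1-\delta)-Nc_{d,\alpha}\geq ms_d\delta_0N/2\) on all of \(\delta\geq\delta_0\). With this bound in place, your remaining bullets and the argmin argument complete the proof. That argument works on a fixed compact \(u\)-interval, using the unique minimizer of the coercive profile together with locally uniform convergence.
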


\begin{proof}[Proof outline]
Letting \(\delta=1-\gamma\), the objective is expanded uniformly on the scale \(\delta=N^{-1/(\alpha+1)}y\).  The resulting coercive profile has a unique minimizer, and localization excludes smaller and larger scales.  Appendix~\ref{app:finite-fixed-proofs} supplies the uniform estimates and the compactness argument.
\end{proof}

\subsection{Fixed powers above one}

For \(\alpha>0\), let
\[
 M_{d,\alpha,t}(\gamma)=\E\norm{p_{t+1}-x_t}^{\alpha}.
\]
The difference
\(M_{d,\alpha,t}(\gamma)-M_{d,\alpha}(\gamma)\) is the initialization bias
remaining at time \(t\).
For \(\alpha>1\), define the transient correction by
\begin{equation}
 \Psi_{d,\alpha}(\gamma)
 =h_\alpha(\gamma)\sum_{t=0}^{\infty}
 \bigl(M_{d,\alpha,t}(\gamma)-M_{d,\alpha}(\gamma)\bigr).
 \label{eq:transient-correction}
\end{equation}
The series and its first two derivatives converge uniformly on every compact
subset of \((0,1)\).  This regularity is sufficient for the following result.

For \(\alpha>1\), let
\[
 \mathcal M=\operatorname*{argmin}_{0\leq\gamma\leq1}
 \Phi_{d,\alpha}(\gamma),
 \qquad
 \mathcal S=\operatorname*{argmin}_{\gamma\in\mathcal M}
 \Psi_{d,\alpha}(\gamma).
\]
Hence \(\mathcal S\) is the subset of stationary minimizers selected by the
order-one transient correction.  Thus the finite minimization has a
two-scale structure: the order-\(N\) term restricts the candidates to
\(\mathcal M\), and the order-one term selects \(\mathcal S\).  At a unique
nondegenerate stationary minimizer, the same expansion determines the
order-\(N^{-1}\) displacement.

\begin{theorem}
\label{thm:finite-perturbation}
Let \(d\geq1\) and \(\alpha>1\).
Then, on every compact interval \(K\) contained in \((0,1)\),
\begin{equation}
 F_{d,\alpha,N}
 =N\Phi_{d,\alpha}+\Psi_{d,\alpha}+O_K(\rho_K^N)
 \quad\text{in }C^2(K)
 \label{eq:finite-perturbation-identity}
\end{equation}
for some \(\rho_K<1\).  Every global minimizer \(\gamma_N\) of
\(F_{d,\alpha,N}\) satisfies
\[
 \operatorname{dist}(\gamma_N,\mathcal S)\longrightarrow0,
\]
and
\begin{equation}
 \min_\gamma F_{d,\alpha,N}(\gamma)
 =N\min_\gamma\Phi_{d,\alpha}(\gamma)
 +\min_{\gamma\in\mathcal M}\Psi_{d,\alpha}(\gamma)+o(1).
 \label{eq:finite-selection-value}
\end{equation}
\end{theorem}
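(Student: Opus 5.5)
The plan has three parts: derive the expansion \eqref{eq:finite-perturbation-identity} from exponential decay of the initialization bias, localize every global minimizer away from \(0\) and \(1\), and then run a two-scale selection argument on a fixed compact interval.

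\emph{Step 1: the expansion.} By Eq.\eqref{eq:reduction},
\[
 F_{d,\alpha,N}(\gamma)=h_\alpha(\gamma)\sum_{t=0}^{N-1}M_{d,\alpha,t}(\gamma).
\]
Adding and subtracting \(M_{d,\alpha}\) and comparing with Eq.\eqref{eq:transient-correction} gives
\[
 F_{d,\alpha,N}=N\Phi_{d,\alpha}+\Psi_{d,\alpha}
 -h_\alpha\sum_{t\geq N}\bigl(M_{d,\alpha,t}-M_{d,\alpha}\bigr).
\]
So it suffices to show that the bias \(M_{d,\alpha,t}-M_{d,\alpha}\) and its first two \(\gamma\)-derivatives are \(O(t^2\rho^t)\) uniformly on \(K\).

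To get this, couple the finite state and the stationary state as in Lemma~\ref{lem:initial-state-independence}:
\[
 x_t=\gamma^tp_0+(1-\gamma)\sum_{j<t}\gamma^jp_{t-j},
\]
while \(x_\gamma\) differs from it by \(\gamma^t\) times an independent bounded vector \(w\). Write the bias as
\[
 \E\bigl[G_{d,\alpha}(x_t)-G_{d,\alpha}(x_t+\gamma^t w')\bigr],
\]
where \(w'\) is the bounded difference. Because \(\alpha>1\), \(G_{d,\alpha}\) is \(C^1\) with bounded gradient on \(\B\). It is in fact smoother after averaging over the uniform point \(p\), since \(g_{d,\alpha}\) is smooth by Eq.\eqref{eq:geometric-potential-derivative}. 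This yields an \(O(\gamma^t)\) bound.

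For the \(\gamma\)-derivatives, differentiate the coefficients \(\gamma^j\) and \((1-\gamma)\gamma^j\). Each differentiation produces at most a factor \(j\) or \(t\), so on \(K\subset(0,1)\) one gets a bound \(O(t^2\gamma^t)\le C\rho_K^t\). Summing the tail over \(t\ge N\) gives the \(C^2(K)\) estimate.

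\emph{Step 2: localization.} I need to show that global minimizers stay in some fixed compact \(K=[\eta,1-\eta]\) for large \(N\). Near \(\gamma\le1/2\) the argument is easy. Every \(M_{d,\alpha,t}\) is bounded below uniformly, and \(h_\alpha\ge h_\alpha(1/2)\) there, while the proof of Theorem~\ref{thm:stationary-transition} shows \(\Phi\) is strictly decreasing on \([0,1/2]\). This gives
\[
 F_{d,\alpha,N}\ge N(\min\Phi+c)
\]
on \([0,1/2]\) for some \(c>0\). One way to see it is that \(M_{d,\alpha,t}\ge c_{d,\alpha}\) by a finite version of Lemma~\ref{lem:stationary-moment-order}. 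Another is to compare with the stationary value using the \(\gamma^t\) coupling, whose error is uniformly summable for \(\gamma\le1/2\).

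Near \(\gamma=1\) I would use \(M_{d,\alpha,t}(\gamma)\ge c_{d,\alpha}\). Conditionally on \(x_t\), this is the inequality \(G_{d,\alpha}(x)\ge G_{d,\alpha}(0)\), which follows from convexity and symmetry of \(G\). Then \(F_{d,\alpha,N}\ge Nh_\alpha(\gamma)c_{d,\alpha}\). Because \(\Phi(1)=c_{d,\alpha}>\min\Phi\) by Theorem~\ref{thm:stationary-transition}, and \(h_\alpha(\gamma)\to1\) as \(\gamma\to1\), this exceeds \(N\min\Phi+C\) on \([1-\eta,1]\) for small \(\eta\). On the other side, the upper bound \(F_{d,\alpha,N}(\gamma^*)\le N\min\Phi+O(1)\) at a fixed \(\gamma^*\in\mathcal M\) comes from Step 1.

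\emph{Step 3: selection.} On \(K\) the expansion gives
\[
 F_{d,\alpha,N}(\gamma_N)-N\min\Phi\le\min_{\mathcal M}\Psi+o(1).
\]
Since \(\Psi\) is bounded on \(K\), this forces \(\Phi(\gamma_N)-\min\Phi=O(1/N)\), so \(\operatorname{dist}(\gamma_N,\mathcal M)\to0\). Pass to a subsequential limit \(\bar\gamma\in\mathcal M\). Using \(\Phi(\gamma_N)\ge\min\Phi\) and continuity of \(\Psi\),
\[
 \Psi(\bar\gamma)\le\min_{\mathcal M}\Psi,
\]
so \(\bar\gamma\in\mathcal S\). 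The value formula \eqref{eq:finite-selection-value} follows from the same two-sided bounds.

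I expect the main obstacle to be the uniform \(C^2\) control of the bias derivatives in Step 1 when \(1<\alpha<2\), where \(\norm{\cdot}^\alpha\) is not \(C^2\). I would handle this by putting all derivatives on the smooth averaged potential \(g_{d,\alpha}\), using Eq.\eqref{eq:geometric-potential-derivative} and a second differentiation of it, rather than on the pointwise norm.
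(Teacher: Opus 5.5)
Your proposal is correct and is essentially the paper's proof: the synchronous coupling \(x_t-\bar x_t=\gamma^t(p_0-\bar x_0)\) with the tail identity \(F_{d,\alpha,N}-N\Phi_{d,\alpha}-\Psi_{d,\alpha}=-h_\alpha\sum_{t\geq N}b_t\), the bound \(M_{d,\alpha,t}\geq c_{d,\alpha}\) from convexity and evenness of \(G_{d,\alpha}\) to exclude a neighborhood of \(\gamma=1\), and the same two-scale selection on a fixed compact interval. Two points need tightening. First, the decay of \(b_t''\) does not come from differentiating coefficients alone; it needs a modulus of continuity of \(D^2G_{d,\alpha}\) on all of \(\B\), which is Lipschitz in general and H\"older of order \(\alpha-1\) when \(d=1\) and \(1<\alpha<2\), giving base \(\gamma^{\alpha-1}\) rather than \(t^2\gamma^t\) but still exponential. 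Second, on \([0,1/2]\) the combination \(M_{d,\alpha,t}\geq c_{d,\alpha}\), \(h_\alpha\geq2^{1-\alpha}\) does not beat \(\min\Phi_{d,\alpha}\) (at \(\alpha=2\) it gives \(s_d/2<4s_d(\sqrt{10}-3)\)), so rely on your coupling alternative there, or, as the paper does, exclude only a neighborhood of \(\gamma=0\) using \(F_{d,\alpha,N}/N\geq c_{d,\alpha}h_\alpha\) and \(h_\alpha(0)=1\).
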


\begin{corollary}
\label{cor:finite-nondegenerate}
If \(\mathcal M=\{\gamma_*\}\) and
\(\Phi_{d,\alpha}''(\gamma_*)>0\), the finite global minimizer is
unique for all sufficiently large \(N\), and
\begin{equation}
 \gamma_N
 =\gamma_*
 -\frac{\Psi_{d,\alpha}'(\gamma_*)}
 {N\Phi_{d,\alpha}''(\gamma_*)}
 +o(N^{-1}),
 \label{eq:finite-nondegenerate-location}
\end{equation}
\begin{equation}
 \min_\gamma F_{d,\alpha,N}(\gamma)
 =N\Phi_{d,\alpha}(\gamma_*)
 +\Psi_{d,\alpha}(\gamma_*)
 -\frac{\Psi_{d,\alpha}'(\gamma_*)^2}
 {2N\Phi_{d,\alpha}''(\gamma_*)}
 +o(N^{-1}).
 \label{eq:finite-nondegenerate-value}
\end{equation}
\end{corollary}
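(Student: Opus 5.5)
The plan is to deduce the corollary from Theorem~\ref{thm:finite-perturbation} by a standard perturbed-minimization argument on a small compact neighborhood of \(\gamma_*\). Since \(\mathcal M=\{\gamma_*\}\), we have \(\mathcal S=\{\gamma_*\}\), and Theorem~\ref{thm:stationary-transition} gives \(\gamma_*\in(1/2,1)\). Theorem~\ref{thm:finite-perturbation} then shows that every global minimizer \(\gamma_N\) satisfies \(\gamma_N\to\gamma_*\). Fix \(\eta>0\) with \(K=[\gamma_*-\eta,\gamma_*+\eta]\subset(0,1)\), and choose \(\eta\) so small that \(\Phi_{d,\alpha}''\geq c>0\) on \(K\). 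This is possible because \(\Phi_{d,\alpha}''\) is continuous near \(\gamma_*\); continuity holds because the expansion in \(C^2(K)\) from Eq.\eqref{eq:finite-perturbation-identity} requires \(\Phi_{d,\alpha}\in C^2\) there. For large \(N\), all global minimizers lie in the interior of \(K\), so they are critical points of \(F_{d,\alpha,N}\).

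\emph{Uniqueness.} By Eq.\eqref{eq:finite-perturbation-identity} in \(C^2(K)\),
\[
 F_{d,\alpha,N}''=N\Phi_{d,\alpha}''+\Psi_{d,\alpha}''+O(\rho_K^N)\geq Nc-C>0
 \quad\text{on }K
\]
for large \(N\). Hence \(F_{d,\alpha,N}\) is strictly convex on \(K\) and has at most one critical point there. Since every global minimizer is an interior critical point of \(K\), the global minimizer is unique.

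\emph{Location.} Write \(\gamma_N=\gamma_*+\Delta_N\) with \(\Delta_N\to0\), and use the first-order condition \(N\Phi_{d,\alpha}'(\gamma_N)+\Psi_{d,\alpha}'(\gamma_N)+O(\rho_K^N)=0\). Taylor's theorem with \(\Phi_{d,\alpha}'(\gamma_*)=0\) gives \(\Phi_{d,\alpha}'(\gamma_N)=\Phi_{d,\alpha}''(\xi_N)\Delta_N\) for some \(\xi_N\) between \(\gamma_*\) and \(\gamma_N\). Then
\[
 \Delta_N=-\frac{\Psi_{d,\alpha}'(\gamma_N)+O(\rho_K^N)}{N\Phi_{d,\alpha}''(\xi_N)},
\]
so \(\Delta_N=O(1/N)\). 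By continuity of \(\Psi_{d,\alpha}'\) and \(\Phi_{d,\alpha}''\) at \(\gamma_*\), we get \(\Psi_{d,\alpha}'(\gamma_N)\to\Psi_{d,\alpha}'(\gamma_*)\) and \(\Phi_{d,\alpha}''(\xi_N)\to\Phi_{d,\alpha}''(\gamma_*)\). This yields Eq.\eqref{eq:finite-nondegenerate-location}.

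\emph{Value.} Expand \(F_{d,\alpha,N}(\gamma_N)=N\Phi_{d,\alpha}(\gamma_N)+\Psi_{d,\alpha}(\gamma_N)+O(\rho_K^N)\) as follows:
\[
 N\Phi_{d,\alpha}(\gamma_N)=N\Phi_{d,\alpha}(\gamma_*)+\tfrac N2\Phi_{d,\alpha}''(\gamma_*)\Delta_N^2+o(N\Delta_N^2),
\]
\[
 \Psi_{d,\alpha}(\gamma_N)=\Psi_{d,\alpha}(\gamma_*)+\Psi_{d,\alpha}'(\gamma_*)\Delta_N+o(\Delta_N).
\]
Insert \(\Delta_N=-\Psi'/(N\Phi'')+o(1/N)\). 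The quadratic term contributes \(\Psi'^2/(2N\Phi'')\) and the linear term contributes \(-\Psi'^2/(N\Phi'')\), so the sum is \(-\Psi'^2/(2N\Phi'')+o(1/N)\), which is Eq.\eqref{eq:finite-nondegenerate-value}. The \(o(1/N)\) errors are controlled by \(\Delta_N=O(1/N)\) together with continuity of the second derivatives.

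The main obstacle is the localization step: uniform strict convexity holds only near \(\gamma_*\), so the argument depends on Theorem~\ref{thm:finite-perturbation}'s conclusion that all global minimizers, including any near the endpoints \(0\) or \(1\) where the \(C^2(K)\) expansion is unavailable, eventually enter \(K\). Once that is granted, the remaining steps are routine Taylor estimates.
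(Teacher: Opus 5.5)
Your proposal is correct and follows essentially the same route as the paper: it localizes via the selection conclusion of Theorem~\ref{thm:finite-perturbation}, obtains uniform strict convexity of \(F_{d,\alpha,N}\) near \(\gamma_*\) from the \(C^2\) expansion (hence uniqueness), uses a mean-value argument with the first-order condition to get the \(O(N^{-1})\) displacement and its leading coefficient, and ends with a Taylor expansion of the value. The only difference is cosmetic: you apply the mean value theorem to \(\Phi_{d,\alpha}'\) inside the critical-point equation \(F_{d,\alpha,N}'(\gamma_N)=0\), while the paper applies it to \(F_{d,\alpha,N}'\) between \(\gamma_*\) and \(\gamma_N\).
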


\begin{proof}[Proof outline for Theorem~\ref{thm:finite-perturbation} and
Corollary~\ref{cor:finite-nondegenerate}]
Geometric contraction of the state recursion gives
Eq.\eqref{eq:finite-perturbation-identity}.  The leading term confines finite
minimizers to \(\mathcal M\), and the order-one term selects \(\mathcal S\).
A nondegenerate stationary minimum yields the displacement and value
corrections.  Appendix~\ref{app:finite-fixed-proofs} contains the uniform
estimates and the full selection argument.
\end{proof}

\begin{remark}
The preceding theorem uses regularity on an interior compact set containing
all finite minimizers for large \(N\).  At \(\gamma=0\), the factor
\(h_\alpha\) may limit the regularity of the complete objective when
\(\alpha\) is not an integer.  The finite critical case \(\alpha=1\) is
covered by the insertion-cost analysis and is outside this perturbation
theorem.
\end{remark}

\subsection{Transition window near \texorpdfstring{$\alpha=1$}{alpha=1}}
\label{sec:joint-finite-window}

The fixed-power regimes meet on the scale
\((\alpha-1)\log N=O(1)\).  Three optimizer scales occur as the limit of
\((\alpha_N-1)\log N\) crosses the threshold identified below: a square-root
scale, a Lambert-corrected critical scale, and the stationary scale.  Within
the last regime, a second transition determines whether the transient or the
stationary correction controls the refined location.

We first establish the stationary endpoint expansion, the transient bound, and
its differentiated form used in these regimes.  Let \(\delta=1-\gamma\).
The proof architecture separates four tasks.  Lemma~\ref{lem:uniform-stationary-endpoint}
gives the stationary endpoint expansion, Lemma~\ref{lem:uniform-transient}
gives the transient value, and Lemma~\ref{lem:uniform-differentiated-transient}
together with Corollary~\ref{cor:differentiated-transient-windows} controls its
first two derivatives on moving intervals.  Lemma~\ref{lem:joint-global-localization}
then transfers the local profiles to global minimizers.  The three regimes
differ only in the scale on which these inputs are balanced.

For \(\alpha\) near one, set
\begin{align}
 q_{d,\alpha}
 &=1-\frac{s_d}{4c_{d,\alpha}}
 =\frac{3d+8-\alpha}{4(d+2)},
 \label{eq:joint-D-q}\\
 \mu_{d,\alpha}
 &=\frac{\alpha s_d}{8}
 +\frac{\alpha(\alpha-2)(\alpha+d-2)s_d}{32(d+2)},
 \label{eq:joint-m2-A}\\
 \nu_{d,\alpha}
 &=\mu_{d,\alpha}-\frac{\alpha^2s_d}{4}
 +\frac{\alpha(\alpha-1)c_{d,\alpha}}2.
 \label{eq:joint-K}
\end{align}
Here \(q_{d,\alpha}\) is the magnitude of the linear endpoint coefficient,
\(\mu_{d,\alpha}\) is the quadratic coefficient of \(M_{d,\alpha}\), and
\(\nu_{d,\alpha}\) is the quadratic coefficient of \(\Phi_{d,\alpha}\).

\begin{lemma}
\label{lem:uniform-stationary-endpoint}
Fix \(d\geq1\) and \(0<\eta<1/2\).  There are
\(\delta_0=\delta_0(d,\eta)>0\) and \(C=C(d,\eta)<\infty\) such that,
uniformly for \(\alpha\in[1-\eta,1+\eta]\) and
\(0<\delta\leq\delta_0\),
\begin{equation}
 M_{d,\alpha}(1-\delta)
 =c_{d,\alpha}+\frac{\alpha s_d}{4}\delta
 +\mu_{d,\alpha}\delta^2+R_{d,\alpha}(\delta),
 \label{eq:joint-M-expansion}
\end{equation}
where
\begin{equation}
 \left|\partial_\delta^kR_{d,\alpha}(\delta)\right|
 \leq C\delta^{3-k},
 \qquad k=0,1,2.
 \label{eq:joint-M-remainder}
\end{equation}
Consequently,
\begin{align}
 \Phi_{d,\alpha}(1-\delta)-c_{d,\alpha}
 ={}&c_{d,\alpha}\delta^\alpha
 -\alpha c_{d,\alpha} q_{d,\alpha}\delta
 +\frac{\alpha s_d}{4}\delta^{\alpha+1}
 +\mu_{d,\alpha}\delta^{\alpha+2}\notag\\
 &+\nu_{d,\alpha}\delta^2+E_{d,\alpha}(\delta),
 \label{eq:joint-Phi-expansion}
\end{align}
with
\begin{equation}
 \left|\partial_\delta^kE_{d,\alpha}(\delta)\right|
 \leq C\delta^{3-k},
 \qquad k=0,1,2.
 \label{eq:joint-Phi-remainder}
\end{equation}
\end{lemma}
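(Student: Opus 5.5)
The plan is to obtain \eqref{eq:joint-M-expansion} by averaging a uniform Taylor expansion of the radial potential \(g_{d,\alpha}\) at the origin against the moments of \(x_{1-\delta}\), using Eq.\eqref{eq:moment-through-potential}. Then \eqref{eq:joint-Phi-expansion} follows by multiplying with \(h_\alpha(1-\delta)\).

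\emph{Step 1 (uniform smoothness of the potential).} For \(\norm x\leq1/4\), I would fix a smooth cutoff \(\chi\) on \([0,\infty)\) equal to one on \([0,1/4]\) and supported in \([0,1/2]\). Then \(B(x,1/2)\subset\B\) and
\[
 G_{d,\alpha}(x)=\frac{1}{\omega_d}\int_{\R^d}\norm y^\alpha\chi(\norm y)\,dy
 +\frac1{\omega_d}\int_{\B}\norm{p-x}^\alpha\bigl(1-\chi(\norm{p-x})\bigr)\,dp .
\]
The first term is independent of \(x\). The second has an integrand that is \(C^\infty\) in \(x\), with all \(x\)-derivatives bounded uniformly for \(\alpha\in[1-\eta,1+\eta]\). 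Hence \(g_{d,\alpha}\) is even and smooth on \([0,1/4]\), with uniformly bounded derivatives. Its Taylor polynomial is
\[
 g_{d,\alpha}(r)=c_{d,\alpha}+\frac\alpha2r^2+\beta_{d,\alpha}r^4+O(r^6),
 \qquad
 \beta_{d,\alpha}=\frac{\alpha(\alpha-2)(d+\alpha-2)}{8(d+2)},
\]
as already used in Section~\ref{sec:endpoint}. Call the remainder \(\tilde R(x)\). On \(\norm x\in[1/4,1]\) everything is bounded, so I would get the global bounds \(|\tilde R(x)|\leq C\norm x^6\), \(|\nabla\tilde R(x)|\leq C\norm x^5\) and \(|D^2\tilde R(x)|\leq C\norm x^4\) on \(\B\). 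For \(\alpha<1\) these global derivative bounds need care near the sphere, where \(G\) is only \(C^{1,\alpha}\). I would avoid that issue by noting that \(x_{1-\delta}\) only reaches the sphere with probability zero, and then using the radial formula \eqref{eq:geometric-potential-derivative} to bound \(g'\) and \(g''\) integrably. This is the point I expect to be the most delicate.

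\emph{Step 2 (moments).} Write \(x=x_{1-\delta}=\sum_j a_jp_j\) with \(a_j=\delta(1-\delta)^j\). Eq.\eqref{eq:state-second-moment} gives
\[
 \E\norm x^2=\frac{s_d\delta}{2-\delta}=\frac{s_d}{2}\delta+\frac{s_d}{4}\delta^2+O(\delta^3).
\]
The fourth-moment computation used for Eq.\eqref{eq:fourth-moment} gives \(\E\norm x^4=s_d\sigma_2^2-\frac{2d}{(d+2)(d+4)}\sigma_4\), where \(\sigma_2=\sum a_j^2\sim\delta/2\) and \(\sigma_4=O(\delta^3)\). This yields \(\E\norm x^4=\frac{s_d}{4}\delta^2+O(\delta^3)\). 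Rosenthal's inequality for sums of independent bounded symmetric vectors gives \(\E\norm{\sum b_jp_j}^k\leq C_k(\sum b_j^2)^{k/2}\), so \(\E\norm x^6=O(\delta^3)\). Collecting terms,
\[
 \frac\alpha2\cdot\frac{s_d}{4}+\beta_{d,\alpha}\frac{s_d}{4}=\mu_{d,\alpha},
\]
which is exactly \eqref{eq:joint-M-expansion}. The polynomial parts are explicit rational functions of \(\delta\), so their derivatives are controlled directly.

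\emph{Step 3 (differentiated remainder).} For \(R_{d,\alpha}(\delta)=\E\tilde R(x_{1-\delta})\), I would differentiate under the expectation:
\[
 \partial_\delta R=\E\bigl[\nabla\tilde R(x)\cdot\partial_\delta x\bigr],
 \qquad
 \partial_\delta^2R=\E\bigl[D^2\tilde R(x)[\partial_\delta x,\partial_\delta x]+\nabla\tilde R(x)\cdot\partial_\delta^2x\bigr].
\]
Here \(\partial_\delta x=\sum_j a_j'p_j\) with \(\sum(a_j')^2=O(\delta^{-1})\), and \(\partial_\delta^2x=\sum_j a_j''p_j\) with \(\sum(a_j'')^2=O(\delta^{-3})\). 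Cauchy--Schwarz and Rosenthal then give
\[
 |\partial_\delta R|\leq C\,(\E\norm x^{10})^{1/2}(\E\norm{\partial_\delta x}^2)^{1/2}=O(\delta^{5/2}\delta^{-1/2})=O(\delta^2).
\]
In the same way the two terms of \(\partial_\delta^2R\) are \(O(\delta^2\cdot\delta^{-1})\) and \(O(\delta^{5/2}\delta^{-3/2})\), both \(O(\delta)\). This proves \eqref{eq:joint-M-remainder}.

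\emph{Step 4 (the objective).} Multiply by \(h_\alpha(1-\delta)=\delta^\alpha+(1-\delta)^\alpha\), using \((1-\delta)^\alpha=1-\alpha\delta+\frac{\alpha(\alpha-1)}2\delta^2+O(\delta^3)\) uniformly in \(C^2\).
\begin{itemize}
\item The \(\delta^\alpha\) part contributes \(c_{d,\alpha}\delta^\alpha+\frac{\alpha s_d}4\delta^{\alpha+1}+\mu_{d,\alpha}\delta^{\alpha+2}+\delta^\alpha R_{d,\alpha}\). The last term and its first two derivatives are \(O(\delta^{3-k})\) since \(\alpha>0\).
\item The \((1-\delta)^\alpha\) part gives \(c_{d,\alpha}\) plus the linear coefficient \(\alpha(s_d/4-c_{d,\alpha})=-\alpha c_{d,\alpha}q_{d,\alpha}\).
\item Its quadratic coefficient is \(\mu_{d,\alpha}-\frac{\alpha^2s_d}4+\frac{\alpha(\alpha-1)}2c_{d,\alpha}=\nu_{d,\alpha}\).
\end{itemize}
The remaining products are \(O(\delta^{3-k})\) after \(k\le2\) differentiations, which gives \eqref{eq:joint-Phi-expansion} and \eqref{eq:joint-Phi-remainder}.
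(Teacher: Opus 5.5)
Your architecture is the same as the paper's proof. You take a sixth-order Taylor remainder of \(G_{d,\alpha}\) at the origin with derivative bounds \(\norm x^{6-k}\), average it against \(x_{1-\delta}\) using \(L^q\) bounds on \(x\), \(\partial_\delta x\), \(\partial_\delta^2x\), and then multiply by \(h_\alpha\). You use Rosenthal where the paper uses Hoeffding, and a cutoff decomposition where the paper uses the polar formula for local smoothness. Steps~2 and~4 are correct, including the identification of \(\mu_{d,\alpha}\), \(q_{d,\alpha}\) and \(\nu_{d,\alpha}\). One small point: you import the quartic coefficient from Section~\ref{sec:endpoint}, where it is only stated; the paper's proof derives it from \(\Delta G_{d,\alpha}=\alpha(\alpha+d-2)G_{d,\alpha-2}\) and the explicit one-dimensional potential.

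The gap is the claimed global bound \(\norm{D^2\tilde R(x)}\leq C\norm x^4\) on all of \(\B\), on which your Cauchy--Schwarz estimate for \(\partial_\delta^2R\) rests. For \(d\geq2\) it holds, and for a simpler reason than you suggest. The Hessian of \(\norm y^\alpha\) is \(O(\norm y^{\alpha-2})\), and \(\int_0^2 r^{d+\alpha-3}\,dr<\infty\) because \(d+\alpha-3\geq-\eta>-1\). So \(D^2G_{d,\alpha}\) is bounded on \(\B\) uniformly in \(\alpha\), and there is no boundary issue. For \(d=1\) and \(\alpha<1\), however, the bound is false: \(G_{1,\alpha}''(x)=\frac\alpha2\bigl((1+x)^{\alpha-1}+(1-x)^{\alpha-1}\bigr)\) is unbounded at \(x=\pm1\). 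In that case both the Cauchy--Schwarz step and the unqualified differentiation under \(\E\) fail.

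Your proposed remedy does not close this. That \(x_{1-\delta}\) reaches \(\pm1\) with probability zero gives no rate. Integrability of \((1-r)^{\alpha-1}\) in \(r\) only helps once you control the law of \(x_{1-\delta}\) near the boundary and can decouple the singular factor from \((\partial_\delta x)^2\), which is correlated with \(x_{1-\delta}\).

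The missing ingredient is quantitative anti-concentration. Writing \(x_{1-\delta}=\delta p_0+z\) with \(z\) independent of \(p_0\) shows that \(x_{1-\delta}\) has density at most \((2\delta)^{-1}\). With \(w=1-|x_{1-\delta}|\) and \(0<q<1\), this gives
\(\E[w^{-q};|x_{1-\delta}|>1/2]\leq C_q\delta^{-q}\Pr(|x_{1-\delta}|>1/2)^{1-q}\).
Now split at \(|x|=1/2\). On the inner region your bounds apply. On the outer region, use:
\begin{itemize}
\item the bound \(|\tilde R''(x)|\leq C(1+w^{-\eta})\);
\item H\"older with \(q=2\eta<1\), which is where \(\eta<1/2\) enters;
\item \(\E|\partial_\delta x|^4=O(\delta^{-2})\);
\item the fact that \(\Pr(|x_{1-\delta}|>1/2)\) is exponentially small in \(1/\delta\). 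Your Rosenthal bounds, giving \(O(\delta^{k/2})\) for every \(k\), would also suffice.
\end{itemize}
The gradient terms there need only the tail probability. The outer contribution is then negligible. The same estimate, with exponent \(2\eta(1+\varepsilon_0)<1\), supplies the uniform integrability needed to pass the chain rule through the expectation, for example via truncated series. This is how the paper's proof handles this case.
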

We retain \(\mu_{d,\alpha}\delta^{\alpha+2}\) to state the expansion
uniformly across \(\alpha=1\).  In the one-sided window \(\alpha\geq1\)
used below, it may be included in the remainder.

\begin{proof}[Proof outline]
The radial potential is expanded at the center and combined with exact second and fourth moments of the stationary state.  Concentration controls the Taylor remainder and its first two parameter derivatives uniformly near \(\alpha=1\).  The complete estimates appear in Appendix~\ref{app:finite-endpoint-proofs}.
\end{proof}

\begin{theorem}
\label{thm:near-one-stationary-uniqueness}
For every fixed \(d\geq1\), there is \(\varepsilon_d>0\) such that the
stationary objective has a unique global minimizer
\(\gamma_{d,\alpha}\) whenever \(1<\alpha<1+\varepsilon_d\).  If
\(\varepsilon=\alpha-1\) and
\begin{equation}
 \delta_{d,\alpha}
 =q_{d,\alpha}^{1/\varepsilon},
 \label{eq:near-one-stationary-scale}
\end{equation}
then
\begin{equation}
 1-\gamma_{d,\alpha}
 =\delta_{d,\alpha}
 \left(1+O_d\left(\frac{\delta_{d,\alpha}}{\varepsilon}\right)\right)
 \label{eq:near-one-stationary-location}
\end{equation}
as \(\alpha\to1^+\).  Moreover,
\begin{equation}
 \Phi_{d,\alpha}''(\gamma_{d,\alpha})
 =\frac{\alpha c_{d,\alpha} q_{d,\alpha}\varepsilon}
 {\delta_{d,\alpha}}(1+o(1)).
 \label{eq:near-one-stationary-curvature}
\end{equation}
\end{theorem}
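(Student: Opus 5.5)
The plan is to work on the endpoint scale with the expansion of Lemma~\ref{lem:uniform-stationary-endpoint}, and to split the interval into an endpoint piece \(\delta\le\delta_0\) and an interior piece \(\delta\ge\delta_0\).

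\emph{Interior piece.} On \(\gamma\in[0,1-\delta_0]\), compare with the limit \(\alpha=1\). There \(\Phi_{d,1}=M_{d,1}\) is strictly decreasing by Lemma~\ref{lem:stationary-moment-order}. Hence on \([0,1-\delta_0]\) one has \(\Phi_{d,1}\ge c_{d,1}+\kappa\) for some \(\kappa>0\). Continuity of \(\Phi_{d,\alpha}\) in \(\alpha\), uniformly on this compact set, keeps this gap for \(\alpha\) close to \(1\). Since the endpoint analysis will exhibit values below \(c_{d,\alpha}\), every global minimizer lies in \(\delta<\delta_0\).

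\emph{Endpoint piece.} Write \(\varepsilon=\alpha-1\) and \(D(\delta)=\Phi_{d,\alpha}(1-\delta)-c_{d,\alpha}\). By Eq.~\eqref{eq:joint-Phi-expansion},
\[
 D'(\delta)=\alpha c_{d,\alpha}\bigl(\delta^{\varepsilon}-q_{d,\alpha}\bigr)+\frac{\alpha(\alpha+1)s_d}{4}\delta^{\alpha}+2\nu_{d,\alpha}\delta+O(\delta^2).
\]
Near \(\alpha=1\) the leading bracket vanishes exactly at \(\delta=q^{1/\varepsilon}=\delta_{d,\alpha}\). This quantity is exponentially small in \(1/\varepsilon\), because \(q_{d,1}=(3d+7)/(4d+8)<1\). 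Substitute \(\delta=\delta_{d,\alpha}e^{u}\), so that \(\delta^\varepsilon=q e^{\varepsilon u}\). The equation \(D'=0\) then becomes
\[
 q\bigl(e^{\varepsilon u}-1\bigr)=O(\delta_{d,\alpha}e^{u}).
\]
Its solution satisfies \(\varepsilon u=O(\delta_{d,\alpha})\), that is \(u=O(\delta_{d,\alpha}/\varepsilon)\), which gives Eq.~\eqref{eq:near-one-stationary-location}.

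\emph{Uniqueness of the critical point.} I would show \(D'\) changes sign only once on \((0,\delta_0]\), through two monotonicity facts. For \(\delta\le\delta_{d,\alpha}/2\), the first term is at most \(\alpha c q(2^{-\varepsilon}-1)\approx-\alpha c q\varepsilon\log 2\). The error terms are \(O(\delta)\), and \(\delta_{d,\alpha}=o(\varepsilon)\), so \(D'<0\) there. For \(\delta\ge2\delta_{d,\alpha}\) up to \(\delta_0\), I need \(D'>0\). The bracket \(\delta^\varepsilon-q\) is positive, but the \(O(\delta)\) terms might compete. Here I would use that \(2\nu_{d,\alpha}\delta+\frac{\alpha(\alpha+1)s_d}{4}\delta^\alpha+O(\delta^2)\) has a sign I must check. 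At \(\alpha=1\), \(\nu_{d,1}=\mu_{d,1}-s_d/4\) with \(\mu_{d,1}=s_d/8-s_d(d-1)/(32(d+2))\); adding \(s_d\delta/2\) gives a net coefficient of roughly \(s_d(3/4-\dots)\delta>0\), which I expect to be positive. If it is not, I fall back on shrinking \(\delta_0\) so that the bracket dominates: \(\delta^\varepsilon-q\ge q(e^{\varepsilon\log2}-1)\gtrsim\varepsilon\) on this range, while \(|O(\delta)|\) is not small relative to \(\varepsilon\) for \(\delta\) of order one. This makes a positive linear coefficient genuinely necessary, and verifying it (together with the Theorem~\ref{thm:stationary-transition} fact \(D<0\) somewhere) is the step I expect to be the main obstacle.

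\emph{Transition region and curvature.} In the region \(\delta\in[\delta_{d,\alpha}/2,2\delta_{d,\alpha}]\), compute
\[
 D''(\delta)=\alpha c_{d,\alpha}\varepsilon\delta^{\varepsilon-1}+O(\delta^{\alpha-1})+O(1).
\]
The first term is of order \(\varepsilon/\delta_{d,\alpha}\gg1\), so \(D\) is strictly convex there and has exactly one critical point. Evaluating at the root with \(\delta^\varepsilon=q(1+o(1))\) gives Eq.~\eqref{eq:near-one-stationary-curvature}, since \(\Phi''(\gamma)=D''(\delta)\). Finally, \(D<0\) at this point because \(D(\delta)\approx\alpha c\,\delta(\delta^\varepsilon/\alpha-q)\approx-c q\varepsilon\delta\) is negative. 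This confirms that the unique critical point is the global minimum, and it also feeds back into the interior-piece comparison.
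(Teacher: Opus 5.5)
Your outline follows the paper's proof closely. Both localize near \(\gamma=1\) by uniform convergence of \(\Phi_{d,\alpha}\) to \(\Phi_{d,1}\). Both analyze the sign of \(D'(\delta)=\alpha c_{d,\alpha}(\delta^{\varepsilon}-q_{d,\alpha})+O(\delta)\), bracket the root between \(\delta_{d,\alpha}/2\) and \(2\delta_{d,\alpha}\), and differentiate once more for the curvature. The steps you actually carry out are sound.

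The gap is the step you flag, namely \(D'>0\) on \([2\delta_{d,\alpha},\delta_0]\), and your conclusion that a positive linear coefficient is genuinely necessary there is wrong. You lower-bound the bracket on the whole range by its value at the left end, \(q(2^{\varepsilon}-1)\sim q\varepsilon\log 2\). But \(\delta^{\varepsilon}\) stays near \(q\) only while \(\delta\) is exponentially small in \(1/\varepsilon\). The fix is to split at \(\delta=\varepsilon^{2}\), as the paper does:
\begin{itemize}
\item On \([2\delta_{d,\alpha},\varepsilon^{2}]\), the bracket is at least of order \(\varepsilon\), while the remainder is at most \(C\varepsilon^{2}=o(\varepsilon)\).
\item On \([\varepsilon^{2},\delta_0]\), one has \(\delta^{\varepsilon}\geq\varepsilon^{2\varepsilon}\to1\) and \(q_{d,\alpha}\to q_{d,1}=(3d+7)/(4d+8)\). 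So the bracket exceeds \((1-q_{d,1})/2=(d+1)/(8(d+2))\), an order-one constant. Fixing \(\delta_0\) small, before choosing \(\varepsilon_d\), makes the remainder \(C\delta_0\) smaller than this constant.
\end{itemize}
No sign information about \(\nu_{d,\alpha}\) or \(\mu_{d,\alpha}\) is needed. The paper organizes the first piece slightly differently. On all of \((0,\varepsilon^{2}]\) it has \(P''\geq\alpha c_{d,\alpha}\varepsilon^{2\varepsilon-1}-C>0\), so \(P'(0^{+})<0<P'(\varepsilon^{2})\) yields exactly one zero without a separate transition window.

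Your fallback could also be completed, but not by the computation you sketch. Replacing \(\frac{\alpha(\alpha+1)s_d}{4}\delta^{\alpha}\) by \(\frac{s_d}{2}\delta\) is invalid near the root, where \(\delta^{\alpha}=\delta\cdot\delta^{\varepsilon}\approx q\delta\). The relevant coefficient is \(2\nu_{d,\alpha}+\frac{\alpha(\alpha+1)s_d}{4}\delta^{\varepsilon}\). It interpolates between about \(2\chi_{d,1}=d(d+7)/(16(d+2)^2)\) (where \(\delta^{\varepsilon}\approx q\)) and \(2\mu_{d,1}=s_d(3d+9)/(16(d+2))\) (where \(\delta^{\varepsilon}\approx 1\)). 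Both are positive, so that route works, but it is unnecessary.

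Finally, your substitution argument for the location tacitly assumes \(e^{u}=O(1)\). That bound is available only after the sign analysis places the root in \([\delta_{d,\alpha}/2,2\delta_{d,\alpha}]\), so the location step must come after it.
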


\begin{proof}[Proof outline]
Lemma~\ref{lem:uniform-stationary-endpoint} localizes every minimizer near the
endpoint and identifies the relevant scale.  The rescaled derivative has
exactly one zero, at which the objective has positive curvature, while uniform
separation excludes the remainder of the interval.
Appendix~\ref{app:finite-endpoint-proofs} contains the full localization
argument.
\end{proof}

For
\[
 B_{d,\alpha,N}(\gamma)
 =\sum_{t=0}^{N-1}
 \bigl(M_{d,\alpha,t}(\gamma)-M_{d,\alpha}(\gamma)\bigr),
\]
let \(B_{d,\alpha}\) denote the corresponding infinite sum.  For
\(\alpha>1\),
\[
 \Psi_{d,\alpha}(\gamma)
 =h_\alpha(\gamma)B_{d,\alpha}(\gamma).
\]

\begin{lemma}
\label{lem:uniform-transient}
For fixed \(d\) and \(0<\eta<1/2\), there is
\(c_{\mathrm{tr}}=c_{\mathrm{tr}}(d,\eta)\) such that,
uniformly for \(\alpha\in[1-\eta,1+\eta]\) and
\(0<\delta\leq1/2\),
\begin{equation}
 \left|\delta B_{d,\alpha,N}(1-\delta)
 -\frac{H_{d,\alpha}}2\right|
 \leq c_{\mathrm{tr}}\left[
 \sqrt\delta\left(1+\log\frac1\delta\right)
 {}+e^{-N\delta/2}\right].
 \label{eq:joint-transient-bound}
\end{equation}
\end{lemma}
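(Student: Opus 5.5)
The plan is to compare the finite-time state with the stationary state through a common coupling. The recursion started at \(x_0=p_0\) gives
\[
 x_t=S_t+\gamma^tp_0,
 \qquad
 S_t=(1-\gamma)\sum_{j=1}^{t}\gamma^{t-j}p_j .
\]
The stationary state has the same law as \(S_t+\gamma^tx'\), where \(x'\) is an independent copy of \(x_\gamma\). Writing \(p\) for the fresh uniform point \(p_{t+1}\) and using \(G_{d,\alpha}\) as in Eq.\eqref{eq:moment-through-potential}, each summand becomes
\[
 M_{d,\alpha,t}(\gamma)-M_{d,\alpha}(\gamma)
 =\E\bigl[D_t(S_t)\bigr],
 \qquad
 D_t(s)=\E G_{d,\alpha}(s+\gamma^tp_0)-\E G_{d,\alpha}(s+\gamma^tx').
\]
The heuristic is that \(S_t\) and \(x'\) are both of size \(O(\sqrt\delta)\) by Eq.\eqref{eq:state-second-moment}. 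Hence the main term is \(\E g_{d,\alpha}(\gamma^t\norm{p_0})-c_{d,\alpha}\), and the rest are errors.

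\emph{Step 1 (uniform Lipschitz control).} From Eq.\eqref{eq:geometric-potential-derivative}, I will show that \(g_{d,\alpha}\) is Lipschitz on \([0,1]\) with a constant \(L\) uniform for \(\alpha\in[1-\eta,1+\eta]\). Here \(\rho\leq1\), and \(t^{\alpha-1}\) is integrable near \(t=0\) even for \(\alpha<1\), so the integral stays bounded. The same bound makes \(G_{d,\alpha}\) Lipschitz on \(\B\). Two consequences follow:
\[
 |D_t(s)|\leq 2L\gamma^t,
 \qquad
 |D_t(s)-D_t(0)|\leq 2L\norm s .
\]
Since \(\E\norm{S_t}\leq C\sqrt\delta\), this gives \(|\E D_t(S_t)-D_t(0)|\leq C\min(\gamma^t,\sqrt\delta)\).

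\emph{Step 2 (summing this error).} The sum \(\sum_{t\geq0}\min(\gamma^t,\sqrt\delta)\) has about \(\log(1/\delta)/(2\delta)\) terms bounded by \(\sqrt\delta\), plus a geometric tail of size \(O(\sqrt\delta/\delta)\). After multiplying by \(\delta\), this is \(O(\sqrt\delta(1+\log(1/\delta)))\). This is the logarithmic term in Eq.\eqref{eq:joint-transient-bound}.

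\emph{Step 3 (stationary part of \(D_t(0)\)).} The term \(\E g_{d,\alpha}(\gamma^t\norm{x'})-c_{d,\alpha}\) must be bounded by \(C\gamma^{2t}\E\norm{x'}^2\leq C\gamma^{2t}\delta\). A quadratic bound near the origin works here: Lemma~\ref{lem:radial-upper-envelope} covers \(\alpha>2\), so for \(\alpha\) near one I would instead use that \(g_{d,\alpha}\) is \(C^2\) at \(0\) with \(g_{d,\alpha}'(0)=0\) by symmetry. Summing gives \(O(1)\), which becomes \(O(\delta)\) after multiplication by \(\delta\).

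\emph{Step 4 (main Riemann sum).} The remaining quantity is \(\delta\sum_{t\geq0}\E[g_{d,\alpha}(\gamma^tR)-c_{d,\alpha}]\), where \(R=\norm{p_0}\) has density \(dr^{d-1}\). In the variable \(u=\gamma^tR\), this is a Riemann sum with logarithmic step \(\log(1/\gamma)=\delta+O(\delta^2)\) for
\[
 \int_0^R\frac{g_{d,\alpha}(u)-c_{d,\alpha}}{u}\,du .
\]
The integrand is bounded by \(L\) and is monotone, since \(g_{d,\alpha}\) is increasing by Eq.\eqref{eq:geometric-potential-derivative}. A monotone Riemann-sum comparison therefore gives error \(O(\delta)\). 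Averaging over \(R\) and exchanging the order of integration yields
\[
 \int_0^1\frac{1-u^d}{u}\bigl(g_{d,\alpha}(u)-c_{d,\alpha}\bigr)\,du=\frac{H_{d,\alpha}}2 .
\]

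\emph{Step 5 (truncation at \(N\)).} By Step 1, each term is \(O(\gamma^t)\). The terms with \(t\geq N\) therefore contribute \(O(\gamma^N/\delta)\) before multiplication and \(O(\gamma^N)\leq O(e^{-N\delta})\leq O(e^{-N\delta/2})\) after multiplication by \(\delta\).

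I expect Step 1 to be the main obstacle. The uniformity of \(L\) in \(\alpha\) across \(\alpha=1\), especially for \(\alpha<1\), must be verified directly from the lens integral. Making the Step 3 quadratic bound uniform in \(\alpha\) needs the same kind of check.
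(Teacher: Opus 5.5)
Your proposal is correct and follows essentially the paper's route: the same shared-input coupling of the finite state with a stationary state driven by the same later inputs (your \(S_t+\gamma^tx'\)), a uniform Lipschitz bound for \(G_{d,\alpha}\), the early/late split at \(\gamma^t\approx\sqrt\delta\) that produces the logarithmic term, the ray-integral Riemann sum giving \(H_{d,\alpha}/2\), and a geometric tail for the truncation at \(N\). The obstacles you anticipate are milder than you fear. Uniformity of \(L\) follows at once from Eq.\eqref{eq:geometric-potential-derivative}, since \(\rho\leq1\) and \(\alpha\int_{1-r}^{1+r}t^{\alpha-1}\,dt\leq2^\alpha\). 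Step~3 needs no quadratic bound, because the Lipschitz estimate \(L\gamma^t\sqrt\delta\) already sums to \(O(\sqrt\delta)\) after multiplication by \(\delta\), which is how the paper handles the stationary term.
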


\begin{proof}[Proof outline]
The finite and stationary states are built from the same input sequence.  Splitting the time sum before and after the mixing scale separates the leading initialization term from an exponentially small tail.  Appendix~\ref{app:finite-endpoint-proofs} records the uniform bounds.
\end{proof}

For \(\delta=1-\gamma\), define the finite transient contribution
\begin{equation}
 \mathcal T_{d,\alpha,N}(\delta)
 =h_\alpha(1-\delta)B_{d,\alpha,N}(1-\delta),
 \qquad
 \ell_\delta=1+\log\frac1\delta.
 \label{eq:differentiated-transient-definition}
\end{equation}

\begin{lemma}
\label{lem:uniform-differentiated-transient}
Fix \(d\geq1\) and \(0<\eta<1/2\).  There is
\(C=C(d,\eta)<\infty\) such that, uniformly for
\[
 \alpha\in[1,1+\eta],
 \qquad N\geq1,
 \qquad 0<\delta\leq\frac12,
\]
\begin{align}
 \left|\mathcal T_{d,\alpha,N}(\delta)
 -\frac{H_{d,\alpha}}{2\delta}\right|
 &\leq C\ell_\delta
 +C\delta^{-1}e^{-N\delta/4},
 \label{eq:differentiated-transient-zero}\\
 \left|\mathcal T_{d,\alpha,N}'(\delta)
 +\frac{H_{d,\alpha}}{2\delta^2}\right|
 &\leq C\delta^{-3/2}
 +C\delta^{-2}(1+N\delta)e^{-N\delta/4},
 \label{eq:differentiated-transient-one}\\
 \left|\mathcal T_{d,\alpha,N}''(\delta)
 -\frac{H_{d,\alpha}}{\delta^3}\right|
 &\leq C\delta^{-5/2}\ell_\delta
 +C\delta^{-3}(1+N\delta)^3e^{-N\delta/4}.
 \label{eq:differentiated-transient-two}
\end{align}
\end{lemma}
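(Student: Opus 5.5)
We work directly with the representation of the transient term as a time sum and differentiate it term by term. Put \(\gamma=1-\delta\) and drive the finite state \(x_t\) (started at \(x_0=p_0\)) and a stationary copy \(x^\infty_t\) by the same inputs. Then
\[
 x_t=\gamma^tp_0+(1-\gamma)\sum_{j=1}^{t}\gamma^{t-j}p_j,
 \qquad
 x^\infty_t=x_t+\gamma^t(z-p_0),
\]
where \(z\) is an independent stationary state. This gives
\[
 M_{d,\alpha,t}-M_{d,\alpha}=\E\bigl[G_{d,\alpha}(x_t)-G_{d,\alpha}(x^\infty_t)\bigr].
\]
By rotational invariance each summand is a function \(\beta_t(\delta)\) of \(\delta\) alone, and
\[
 \mathcal T_{d,\alpha,N}=h_\alpha(1-\delta)\sum_{t<N}\beta_t(\delta).
\]
The factor \(h_\alpha(1-\delta)=1-\alpha\delta+\delta^\alpha+O(\delta^2)\) is smooth in \(\delta\) for \(\alpha\ge1\), apart from the term \(\delta^\alpha\). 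Its first two derivatives are bounded by \(C\) and \(C\delta^{\alpha-2}\le C\delta^{-1}\), uniformly for \(\alpha\in[1,1+\eta]\). The Leibniz rule therefore reduces the three estimates to matching estimates for \(B_{d,\alpha,N}\) and its first two \(\delta\)-derivatives. The cross terms are then absorbed by the zeroth-order bounds. For instance, \(\delta^{-1}\cdot\delta^{-1}=\delta^{-2}\) in the second derivative is dominated by \(\delta^{-5/2}\ell_\delta\). The value bound \eqref{eq:differentiated-transient-zero} is essentially Lemma~\ref{lem:uniform-transient} divided by \(\delta\); the exponent \(1/4\) simply leaves slack.

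For the derivatives I will split \(B_{d,\alpha,N}=B_{d,\alpha}-\sum_{t\ge N}\beta_t\) and treat the infinite sum and the tail separately. For the tail, \(|\beta_t|\le C\gamma^t\) by the Lipschitz property of \(G_{d,\alpha}\) (\(\alpha\ge1\)). Differentiating \(\gamma^t=(1-\delta)^t\) once or twice produces factors \(t\) and \(t^2\). I must also differentiate \(x_t\) and \(z\) with respect to \(\gamma\), which gives weights \(\sum_j (t-j)\gamma^{t-j-1}\), again polynomial in \(t\) times geometric. Summing \(t^k(1-\delta)^t\) over \(t\ge N\) gives \(\delta^{-k-1}(1+N\delta)^ke^{-N\delta}\). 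After multiplying by the appropriate powers this produces exactly the exponential terms in \eqref{eq:differentiated-transient-one}–\eqref{eq:differentiated-transient-two}, using \(e^{-N\delta}\le e^{-N\delta/4}\).

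For the infinite sum \(B_{d,\alpha}(1-\delta)\) I will argue by scaling rather than by term-by-term differentiation. On the mixing scale \(t=s/\delta\), the state \(x_t\) is close to \(\gamma^tp_0\) plus a Gaussian-like fluctuation of variance \(O(\delta)\). A second-order expansion of \(G_{d,\alpha}\) near the origin therefore gives
\[
 \beta_t\approx \E\bigl[g_{d,\alpha}(e^{-s}\norm{p_0})-g_{d,\alpha}(e^{-s}\norm z)\bigr]+O(\sqrt\delta)\text{-corrections}.
\]
Summing via \(\sum_t\approx\delta^{-1}\int ds\) reproduces \(H_{d,\alpha}/(2\delta)\). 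To see this, integrate \(g(e^{-s}r)-c\) over \(s\) to get \(\int_0^r (g(u)-c)\,du/u\), then average over \(r=\norm{p_0}\) with density \(dr^{d-1}\), which yields the weight \((1-r^d)\). I will write
\[
 \delta B_{d,\alpha}(1-\delta)=\tfrac12H_{d,\alpha}+\mathcal E(\delta),
\]
where \(\mathcal E\) is a Riemann-sum error plus a fluctuation error. The required derivative bounds are \(|\mathcal E'|\le C\delta^{-1/2}\) and \(|\mathcal E''|\le C\delta^{-3/2}\ell_\delta\). These are the differentiated versions of the \(\sqrt\delta\,\ell_\delta\) bound of Lemma~\ref{lem:uniform-transient}, and they would follow if \(\mathcal E(\delta)\) has the form \(\sqrt\delta\,\phi(\delta)\) with \(\phi\) regular in \(\log\delta\).

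\textbf{Main obstacle.} The hard step is differentiating the fluctuation error twice without losing more than the stated powers. \(G_{d,\alpha}\) is only \(C^{1,\alpha-1}\) at the boundary of the ball, and for \(\alpha\) near \(1\) it is not \(C^2\) near points where \(p=x\) has positive density. My first approach will be to move every \(\delta\)-derivative onto the explicit coefficients \((1-\delta)^{t-j}\delta\) of the linear representation, and then integrate by parts in the uniform input variables (Stein-type identities for the uniform law). This transfers the derivatives onto smooth densities instead of onto \(G_{d,\alpha}\). If this fails, the fallback is to use the radial formula \eqref{eq:geometric-potential-derivative}, in which \(g'_{d,\alpha}\) is an explicit integral that is smooth in \(r\) on \((0,1)\), and to control the boundary layer \(\norm{x_t}\approx1\) by its exponentially small probability for \(t\gg1/\delta\). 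The contribution from \(t\lesssim1/\delta\) near the start is handled by the crude bound \(\delta^{-3}\cdot\delta^{1/2}\).
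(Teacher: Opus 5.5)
Your coupling \(x^\infty_t=x_t+\gamma^t(z-p_0)\) and the Leibniz reduction through \(h_\alpha\) match the paper. The zeroth-order input, however, is too weak. Dividing \eqref{eq:joint-transient-bound} by \(\delta\) gives
\[
|B_{d,\alpha,N}-H_{d,\alpha}/(2\delta)|\leq C\delta^{-1/2}\ell_\delta+C\delta^{-1}e^{-N\delta/2}.
\]
The first term is larger than the \(C\ell_\delta\) claimed in \eqref{eq:differentiated-transient-zero} by a factor \(\delta^{-1/2}\), so this is not slack. The same loss breaks \eqref{eq:differentiated-transient-one}. Writing \(\delta B=\tfrac12H+\mathcal E\), you have \(B'+H/(2\delta^2)=\mathcal E'/\delta-\mathcal E/\delta^2\), so \(\mathcal E=O(\sqrt\delta\,\ell_\delta)\) only gives \(\delta^{-3/2}\ell_\delta\).

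Your ansatz \(\mathcal E=\sqrt\delta\,\phi(\delta)\) therefore has the wrong size. What is needed, and true, is \(\mathcal E=O(\delta\ell_\delta)\). The missing idea is exactly why the lemma is restricted to \(\alpha\geq1\).
\begin{itemize}
\item For \(\alpha\geq1\), \(D^2G_{d,\alpha}\) is bounded on \(\B\) uniformly in \(\alpha\).
\item Since \(y_t\) is centered and independent of \(a_tp_0\), the identity \(\E[G(a+v)-G(a)]=\int_0^1(1-s)\E[v^TD^2G(a+sv)v]\,ds\) removes the first-order term.
\item Each early-time error \(b_t-\phi(t\ell)\), for \(t<j_0\) with \(a_{j_0}\leq\sqrt\delta\), is then \(O(\E\norm{y_t}^2)=O(\delta)\) rather than the Lipschitz \(O(\sqrt\delta)\) used in Lemma~\ref{lem:uniform-transient}.
\item Summing over \(j_0=O(\delta^{-1}\ell_\delta)\) terms gives \(O(\ell_\delta)\). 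The late times contribute \(O(\sum_t(a_t\delta+a_t^2))=O(1)\). The ray quadrature error is \(O(1)\) because \(\phi(u)=O(e^{-2u})\).
\end{itemize}

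The derivative part is also not yet a proof. The scaling heuristic and the hope that \(\mathcal E\) is regular in \(\log\delta\) are never established. The paper instead proceeds termwise:
\begin{itemize}
\item It differentiates \(b_t=\E[G(x_t)-G(z_t)]\) term by term, using \(L^q\) bounds \(O(\delta^{-1/2})\) and \(O(\delta^{-3/2})\) on the state derivatives.
\item It splits at the same \(j_0\).
\item It compares \(\sum_t\phi(t\ell)\) with \(H/(2\delta)\) by cellwise quadrature applied to \(\phi\), \(u\phi'\) and \(u^2\phi''\).
\end{itemize}
The indispensable regularity input is a modulus of continuity for the Hessian. For \(d\geq2\) this is \(\norm{D^2G(x)-D^2G(y)}\leq C\norm{x-y}\log(e/\norm{x-y})\); for \(d=1\) it is a BV translation bound on \(G''\).

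Your tail argument overlooks this. In the second derivative, the term \((y_t')^T[D^2G(x_t)-D^2G(z_t)]y_t'\) has size \(\norm{y_t'}^2\sim\delta^{-1}\) and no factor of \(a_t\) unless that modulus is used. So neither Lipschitz nor merely bounded-Hessian bounds make \(\sum_{t\geq N}\) converge.

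Finally, Stein-type integration by parts against the uniform law produces boundary terms on the sphere, because the density is an indicator rather than a smooth function. Your regularity assessment of \(G_{d,\alpha}\) is also too pessimistic for \(\alpha\geq1\): its bounded, log-Lipschitz Hessian is precisely what makes the direct termwise approach work.
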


\begin{proof}[Proof outline]
Differentiating the coupled finite-state representation isolates the leading term \(H_{d,\alpha}/(2\delta)\).  Moment estimates and the geometric tail control the first two derivatives uniformly.  The detailed calculation is given in Appendix~\ref{app:finite-endpoint-proofs}.
\end{proof}

\begin{corollary}
\label{cor:differentiated-transient-windows}
Let \(\alpha_N\in[1,1+\eta]\), and let \(I_N\subset(0,1/2]\) satisfy
\[
 \sup_{\delta\in I_N}\delta\longrightarrow0,
 \qquad
 N\inf_{\delta\in I_N}\delta\longrightarrow\infty.
\]
For \(k=0,1,2\),
\begin{equation}
 \sup_{\delta\in I_N}\delta^{k+1}
 \left|\partial_\delta^k\mathcal T_{d,\alpha_N,N}(\delta)
 -(-1)^k k!\frac{H_{d,\alpha_N}}{2\delta^{k+1}}\right|
 \longrightarrow0.
 \label{eq:differentiated-transient-windows}
\end{equation}
\end{corollary}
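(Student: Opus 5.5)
The corollary follows from Lemma~\ref{lem:uniform-differentiated-transient} alone. For each \(k\in\{0,1,2\}\) I multiply the corresponding estimate by \(\delta^{k+1}\), take the supremum over \(I_N\), and show that every resulting error term tends to zero under the two hypotheses on \(I_N\). The sign convention needs a check first. For \(k=1\) the target is \(-H/(2\delta^2)\), matching \eqref{eq:differentiated-transient-one}. For \(k=2\) the target is \(2!\,H/(2\delta^3)=H/\delta^3\), matching \eqref{eq:differentiated-transient-two}. So the three statements coincide with the three lemma estimates. Since \(\alpha_N\in[1,1+\eta]\), the lemma applies with a constant \(C=C(d,\eta)\) that is uniform in \(N\) and \(\delta\in I_N\subset(0,1/2]\).

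Write \(\bar\delta_N=\sup_{I_N}\delta\to0\) and \(\underline\delta_N=\inf_{I_N}\delta\), with \(N\underline\delta_N\to\infty\). After multiplication, the errors split into two kinds.

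The first kind consists of the power terms: \(C\delta\ell_\delta\) for \(k=0\), \(C\delta^{1/2}\) for \(k=1\), and \(C\delta^{1/2}\ell_\delta\) for \(k=2\). Each has the form \(\delta^{a}(1+\log(1/\delta))\) with \(a>0\). Such a function is increasing on \((0,\delta_1]\) for some fixed \(\delta_1>0\). Hence for large \(N\) its supremum over \(I_N\) is at most its value at \(\bar\delta_N\), which tends to zero.

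The second kind consists of the exponential terms: \(Ce^{-N\delta/4}\), \(C(1+N\delta)e^{-N\delta/4}\), and \(C(1+N\delta)^3e^{-N\delta/4}\). Each is a function \(\phi(x)=(1+x)^je^{-x/4}\) of \(x=N\delta\). This \(\phi\) is bounded and tends to zero as \(x\to\infty\), and it is eventually decreasing. On \(I_N\) we have \(x\geq N\underline\delta_N\to\infty\), so \(\sup_{x\geq N\underline\delta_N}\phi(x)\to0\).

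Adding the two contributions gives \eqref{eq:differentiated-transient-windows} for each \(k\).

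There is no substantive obstacle. The only point needing care is that the supremum over the moving interval \(I_N\) is controlled by monotonicity near zero for the power terms and by monotonicity at infinity for the exponential terms. That is exactly why both hypotheses on \(I_N\) are needed: \(\bar\delta_N\to0\) kills the power terms, and \(N\underline\delta_N\to\infty\) kills the exponential terms.
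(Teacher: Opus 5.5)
Your proposal is correct and is the derivation the paper intends: the paper gives no separate proof and treats the corollary as an immediate consequence of Lemma~\ref{lem:uniform-differentiated-transient}. As you do, one multiplies each estimate by \(\delta^{k+1}\), uses the uniformity of \(C\) over \(\alpha\in[1,1+\eta]\), and sends the algebraic--logarithmic terms to zero via \(\sup_{I_N}\delta\to0\) and the exponential terms via \(N\inf_{I_N}\delta\to0\)'s counterpart \(N\inf_{I_N}\delta\to\infty\). Monotonicity is not actually needed: each error term tends to zero in the relevant limit, which already forces its supremum over \(I_N\) to vanish.
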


For the remainder of this subsection, let
\(\alpha_N=1+\varepsilon_N\), with \(\varepsilon_N>0\), and abbreviate
\[
 F_N=F_{d,\alpha_N,N},\qquad
 c_N=c_{d,\alpha_N},\qquad
 q_N=q_{d,\alpha_N},\qquad
 H_N=H_{d,\alpha_N}.
\]
These four abbreviations are used only in the moving-window analysis below.

Define
\begin{equation}
 a_d(\lambda)
 =c_{d,1}e^{-\lambda/2}+\frac{s_d}{4}-c_{d,1},
 \qquad
 \lambda^\star(d)=2\log\frac{4(d+2)}{3d+7}.
 \label{eq:joint-threshold}
\end{equation}
The stationary endpoint equation suggests the scale
\begin{equation}
 \delta_N=q_N^{1/\varepsilon_N}.
 \label{eq:joint-critical-scales}
\end{equation}

\begin{theorem}
\label{thm:joint-subcritical}
Suppose
\[
 \varepsilon_N\log N\longrightarrow\lambda,
\]
and assume \(\lambda<\lambda^\star(d)\).  Every global minimizer \(\gamma_N\)
of \(F_N\) satisfies
\begin{equation}
 \sqrt N(1-\gamma_N)
 \longrightarrow
 \sqrt{\frac{H_{d,1}}{2a_d(\lambda)}}.
 \label{eq:joint-location}
\end{equation}
Moreover,
\begin{equation}
 \min_\gamma F_N(\gamma)
 =Nc_N
 +\sqrt N\sqrt{2a_d(\lambda)H_{d,1}}
 +o(\sqrt N).
 \label{eq:joint-value}
\end{equation}
\end{theorem}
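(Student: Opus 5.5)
Write \(F_N(1-\delta)=N\Phi_{d,\alpha_N}(1-\delta)+\mathcal T_{d,\alpha_N,N}(\delta)\). This holds because \(F_N=h_{\alpha_N}\sum_{t=0}^{N-1}M_{d,\alpha_N,t}\), which splits into a stationary part and the finite transient of Eq.\eqref{eq:differentiated-transient-definition}. The plan is to find the limiting profile of both pieces on the scale \(\delta=y/\sqrt N\), minimize that profile explicitly, and then show that no global minimizer can lie off this scale.

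\emph{Local profile.} Fix \(0<y_1<y_2<\infty\) and take \(\delta=y/\sqrt N\) with \(y\in[y_1,y_2]\). Lemma~\ref{lem:uniform-stationary-endpoint} gives
\[
 N\bigl(\Phi_{d,\alpha_N}(1-\delta)-c_N\bigr)
 =N\delta\bigl(c_N\delta^{\varepsilon_N}-\alpha_Nc_Nq_N\bigr)+O(N\delta^{2}).
\]
Here \(\delta^{\varepsilon_N}=\exp(\varepsilon_N\log y-\tfrac12\varepsilon_N\log N)\to e^{-\lambda/2}\) uniformly in \(y\). Moreover \(c_Nq_N=c_N-s_d/4\to c_{d,1}-s_d/4\) and \(\alpha_N\to1\). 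Hence the stationary part is \(\sqrt N\,a_d(\lambda)\,y+O(1)+o(\sqrt N)\), uniformly on \([y_1,y_2]\).

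Corollary~\ref{cor:differentiated-transient-windows}, applied with \(k=0\) on \(I_N=[y_1,y_2]/\sqrt N\), gives \(\mathcal T=\sqrt N\,H_{d,1}/(2y)+o(\sqrt N)\). This uses continuity of \(H_{d,\alpha}\) in \(\alpha\) from Eq.\eqref{eq:H-geometric-integral}. Therefore
\[
 \frac{F_N(1-y/\sqrt N)-Nc_N}{\sqrt N}\longrightarrow a_d(\lambda)\,y+\frac{H_{d,1}}{2y}
\]
uniformly on compacts. The condition \(\lambda<\lambda^\star(d)\) is exactly \(e^{-\lambda/2}>q_{d,1}=(3d+7)/(4(d+2))\), that is, \(a_d(\lambda)>0\). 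So the profile is coercive on \((0,\infty)\), its unique minimizer is \(y_*=\sqrt{H_{d,1}/(2a_d(\lambda))}\), and its minimum value is \(\sqrt{2a_d(\lambda)H_{d,1}}\). This yields Eq.\eqref{eq:joint-value} as an upper bound. It also yields the location statement Eq.\eqref{eq:joint-location} once every global minimizer is shown to satisfy \(\sqrt N\delta_N\in[y_1,y_2]\) for suitable \(y_1,y_2\). The location then follows from uniform convergence and uniqueness of \(y_*\) by the standard argmin argument.

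\emph{Localization.} This is the main obstacle. I will split the range of \(\delta\) into four regions and in each show that \(F_N-Nc_N\) exceeds \(\sqrt N(\sqrt{2a_dH_{d,1}}+1)\).

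(i) \(\delta\geq\delta_0\). By Theorem~\ref{thm:stationary-transition} at \(\alpha=1\), together with continuity in \(\alpha\), \(\Phi-c\) is bounded below by a positive constant. The transient is \(O(1)\) on this compact range, by Eq.\eqref{eq:differentiated-transient-zero} and a crude bound near \(\gamma=0\). So the excess is of order \(N\).

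(ii) \(y_2/\sqrt N\leq\delta\leq\delta_0\). Here \(\log\delta\geq-\tfrac12\log N\), so \(\delta^{\varepsilon_N}\geq e^{-\lambda/2}(1+o(1))\). The stationary excess is therefore at least \(N\delta(a_d(\lambda)-o(1)-C\delta)\). This is at least \(\tfrac12 a_d\,y_2\sqrt N\) once \(\delta_0\) is small. The transient is bounded below, by Eq.\eqref{eq:differentiated-transient-zero}, by \(-C\ell_\delta-o(1)\), which is \(O(\log N)\).

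(iii) \(K/N\leq\delta\leq y_1/\sqrt N\). The stationary term is at least \(-N\delta\,c_Nq_N(1+o(1))\geq-C y_1\sqrt N\), since the \(\delta^{\alpha}\) term is nonnegative. The transient is at least \(H/(2\delta)-C\ell_\delta-C\delta^{-1}e^{-K/4}\geq\sqrt N\,H/(4y_1)\) for \(K\) large. Small \(y_1\) therefore wins.

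(iv) \(\delta\leq K/N\). Here the lemmas degenerate, and I would argue directly. For \(t\leq N\), \(x_t=\gamma^tp_0+(\text{weight at most }1-\gamma^N\leq1-e^{-2K}<1)\). Thus \(x_t\) stays near a nontrivial multiple of \(p_0\), and \(M_{d,\alpha,t}\geq\E g_{d,\alpha}(\|x_t\|)\geq c_{d,\alpha}+\kappa_K\), with \(\kappa_K>0\). The radial envelope argument of Lemma~\ref{lem:stationary-moment-order} makes this uniform in \(\alpha\) near \(1\). Since \(h_{\alpha_N}(1-\delta)\geq1-\alpha_N\delta\), the excess is of order \(N\).

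Combining the regions, every global minimizer lies in the window of the local profile, and the minimum value is bounded below accordingly. This completes both statements.
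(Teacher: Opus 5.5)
Your proposal is correct and follows essentially the paper's route: you decompose $F_N=N\Phi_{d,\alpha_N}+\mathcal T_{d,\alpha_N,N}$, derive the profile $a_d(\lambda)y+H_{d,1}/(2y)$ on the $N^{-1/2}$ scale from the endpoint and transient lemmas, and localize by excluding $\delta$ bounded away from $0$, $N\delta=O(1)$, $y\to0$ and $y\to\infty$. The remaining differences are minor: you use the $k=0$ case of the differentiated transient estimate instead of Lemma~\ref{lem:uniform-transient}, and explicit region-by-region inequalities instead of the paper's sequential exclusion; in region (iv), the closeness of $x_t$ to $\gamma^tp_0$ comes from $\E\norm{y_t}^2\le\delta\le K/N$, or alternatively from convexity of $G_{d,\alpha}$ for $\alpha\ge1$ and Jensen, not from the total-weight bound you cite.
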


\begin{proof}
\noindent\emph{Limiting profile.}
For \(\gamma<1\),
\begin{equation}
 F_{d,\alpha,N}(\gamma)
 =N\Phi_{d,\alpha}(\gamma)
 +h_\alpha(\gamma)B_{d,\alpha,N}(\gamma).
 \label{eq:joint-decomposition}
\end{equation}
Let \(\delta=y/\sqrt N\).  Lemmas
\ref{lem:uniform-stationary-endpoint} and \ref{lem:uniform-transient} give,
locally uniformly for \(y>0\),
\begin{equation}
 \frac{F_N(1-y/\sqrt N)-Nc_N}
 {\sqrt N}
 \longrightarrow
 a_d(\lambda)y+\frac{H_{d,1}}{2y}.
 \label{eq:joint-profile}
\end{equation}
The identity \(c_{d,1}=d/(d+1)\) shows that
\(a_d(\lambda)>0\) exactly when \(\lambda<\lambda^\star(d)\).  In that range,
the limiting function has the unique minimizer
\[
 y_*=\sqrt{\frac{H_{d,1}}{2a_d(\lambda)}}.
\]

\smallskip
\noindent\emph{Global localization.}
A fixed trial value on this scale has excess \(O(\sqrt N)\).  For a
minimizing sequence, put \(r_N=1-\gamma_N\) and \(y_N=\sqrt N r_N\).
Parameters
bounded away from \(1\), as well as the finite endpoint \(\gamma=1\), have a
positive gap of order \(N\).  Thus \(r_N\to0\).  If
\(Nr_N\) stays bounded, the transient contribution has positive order
\(N\), so \(Nr_N\to\infty\).  If \(y_N\to0\), the normalized transient contribution diverges
as \(H_{d,1}/(2y_N)\), while the negative part allowed by
Eq.\eqref{eq:joint-Phi-expansion} is \(O(y_N)\).

If \(y_N\to\infty\), divide Eq.\eqref{eq:joint-Phi-expansion} by
\(r_N\).  For \(y_N\geq1\), its right-hand side has a positive lower
limit because \(\lambda<\lambda^\star(d)\).  The transient term is eventually
positive by Lemma~\ref{lem:uniform-transient}, and the normalized excess is
at least \(\kappa y_N\).  Hence every minimizing sequence stays in a compact
subset of \((0,\infty)\).  The locally uniform convergence in
Eq.\eqref{eq:joint-profile} proves Eqs.\eqref{eq:joint-location} and
\eqref{eq:joint-value}.
\end{proof}

\begin{remark}
For \(\lambda=0\), Eq.\eqref{eq:joint-threshold} gives
\(a_d(0)=s_d/4\), and Theorem~\ref{thm:joint-subcritical} recovers the
finite \(\alpha=1\) formulas.
\end{remark}

At \(\lambda=\lambda^\star(d)\), the subcritical linear coefficient
vanishes.  Moreover, Eq.\eqref{eq:joint-D-q} gives
\(q_{d,1}=e^{-\lambda^\star(d)/2}\), so the stationary root suggests the
scale \(\delta_N\) in Eq.\eqref{eq:joint-critical-scales}.

The next lemma ensures that the local critical and supercritical profiles
capture every global minimizer.

\begin{lemma}
\label{lem:joint-global-localization}
Suppose \(\alpha_N\to1\) and there is a trial parameter whose excess over
\(Nc_N\) is \(o(N)\).  Every global minimizer satisfies
\[
 1-\gamma_N\longrightarrow0,
 \qquad N(1-\gamma_N)\longrightarrow\infty.
\]
\end{lemma}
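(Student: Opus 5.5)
We argue by comparison with the trial value, using the decomposition of Eq.~\eqref{eq:joint-decomposition}, namely
\(F_N(\gamma)=N\Phi_{d,\alpha_N}(\gamma)+\mathcal T_{d,\alpha_N,N}(1-\gamma)\).
Let \(\gamma_N\) be a global minimizer and \(r_N=1-\gamma_N\).  By hypothesis,
\[
 F_N(\gamma_N)-Nc_N\leq o(N).
\]
The proof has two parts: show \(r_N\to0\), then show \(Nr_N\to\infty\).  In each part we will prove that the competing regime forces an excess of exact order \(N\), which contradicts this bound.

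\emph{Step 1: \(r_N\to0\).}  Suppose instead that \(r_N\geq\delta_1>0\) along a subsequence.  We first need the stationary gap
\[
 \inf_{\delta_1\leq\delta\leq1}\bigl(\Phi_{d,\alpha}(1-\delta)-c_{d,\alpha}\bigr)>0,
\]
uniformly for \(\alpha\) near \(1\).  At \(\alpha=1\), Theorem~\ref{thm:stationary-transition} and Lemma~\ref{lem:stationary-moment-order} give \(h_1\equiv1\) and \(M_{d,1}(\gamma)>c_{d,1}\) for \(\gamma<1\).  Monotonicity of \(M_{d,1}\) together with continuity on the compact interval \([0,1-\delta_1]\) then gives a positive minimum.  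We transfer this to \(\alpha\) near \(1\) by joint continuity of \((\alpha,\gamma)\mapsto h_\alpha(\gamma)M_{d,\alpha}(\gamma)\) on the compact set, which holds by dominated convergence since \(\|p-x\|\leq2\).

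Next we need the transient contribution to be bounded below.  Coupling the finite state with the stationary state driven by the same inputs, as in Lemma~\ref{lem:initial-state-independence}, gives
\[
 |M_{d,\alpha,t}-M_{d,\alpha}|\leq C\gamma^t.
\]
Hence \(|\mathcal T|\leq C/\delta_1\) uniformly, so the excess is at least \(N\cdot\mathrm{gap}-O(1)\), a contradiction.

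The finite endpoint \(\gamma=1\) needs its own treatment, because there \(F_N(1)=N\,\E\|p-p_0\|^{\alpha_N}\).  This is \(N\) times a constant, and by Lemma~\ref{lem:stationary-moment-order} that constant exceeds \(c_{d,\alpha}\) by a uniform margin: \(p_0\) is uniform and not centered, and \(g_{d,\alpha}(r)>c_{d,\alpha}\) for \(r>0\).  So the endpoint is excluded as well.

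\emph{Step 2: \(Nr_N\to\infty\).}  Suppose \(Nr_N\leq K\) along a subsequence, with \(r_N\to0\) by Step 1.  On the stationary side, Lemma~\ref{lem:uniform-stationary-endpoint} bounds the possibly negative part of \(N(\Phi-c_N)\) by \(N\alpha c_Nq_Nr_N+O(Nr_N^2)=O(K)\).  On the transient side, Lemma~\ref{lem:uniform-transient} gives
\[
 \delta B_{d,\alpha,N}(1-\delta)=\tfrac{H}{2}+O(\sqrt\delta\,\ell_\delta)+O(e^{-N\delta/2}).
\]
The exponential error term is not small when \(N\delta\) is bounded, so this lemma alone does not finish the argument.

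We therefore bound \(B_{d,\alpha,N}\) directly in this short-horizon regime.  For \(t\leq N\) and \(\gamma\) near \(1\), the state \(x_t\) stays close to \(p_0\), since \(\|x_t-p_0\|\leq 2(1-\gamma^t)\leq 2t\delta\).  Meanwhile the stationary state is near \(0\).  Thus
\[
 M_{d,\alpha,t}-M_{d,\alpha}\geq \E g(\|p_0\|)-c_{d,\alpha}-O(t\delta)-O(\sqrt\delta).
\]
The leading difference \(\E g(\|p_0\|)-c_{d,\alpha}\) is a positive constant \(\kappa\), uniform in \(\alpha\) near \(1\).  Choose \(t\leq\theta/\delta\) with \(\theta\) small; if \(\theta/\delta>N\), use \(t\leq N\) instead.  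Summing over these \(t\) gives
\[
 B\gtrsim\min(N,\theta/\delta)\,\kappa/2\gtrsim N\kappa\theta/K .
\]
The remaining times are handled as follows.  For \(\theta/\delta<t\leq N\), the terms are bounded below by \(-C\gamma^t\) only to the extent that they could be negative.  In fact each is nonnegative up to an \(O(\sqrt\delta)\) error, by the monotone comparison argument of Lemma~\ref{lem:stationary-moment-order} or simply by convexity of \(g\) in \(r\) for \(\alpha\geq1\).  This contributes at worst \(O(N\sqrt\delta)=o(N)\).  Since \(h_{\alpha_N}\to1\), the excess is at least \(cN\), again a contradiction.

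This short-horizon lower bound on \(B\) is the main obstacle.  The first approach is the coupling \(x_t\approx p_0\) for \(t\ll1/\delta\) just described.  If controlling the sign of the tail terms proves delicate, an alternative is the exact representation \(x_t=\gamma^tp_0+(1-\gamma)\sum\gamma^{t-j}p_j\), compared with the stationary state via the Olkin--Tong majorization already cited in the appendix.
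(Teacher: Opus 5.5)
Your proof is correct and follows the same plan as the paper. Away from the endpoint, the shared-input coupling bounds the transient by $O(1/\delta_1)$ against a uniform stationary gap inherited from $\alpha=1$. When $N(1-\gamma_N)=O(1)$, the retained memory of $p_0$ forces an excess of order $N$.

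The only difference is in that second step, and there the paper is more economical. It skips the detour through $\Phi_{d,\alpha}$ and $B_{d,\alpha,N}$. Instead it bounds every difference $M_{d,\alpha,t}-c_N$, for $0\le t<N$, from below at once, using $(1-\delta)^t\geq e^{-2K}$ and the radial monotonicity of $g_{d,\alpha}$. That single bound also covers $\gamma=1$. It also does not need your Step 1, since $N\delta\leq K$ already forces $\delta\to0$. Your split into early and late times, together with the endpoint expansion, reaches the same bound by a longer route.
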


\begin{proof}[Proof outline]
Stationary separation away from the endpoint, together with the uniform transient bound, confines every global minimizer to the endpoint scale selected by a trial parameter.  Appendix~\ref{app:finite-window-proofs} gives the quantitative comparison.
\end{proof}

Let \(W_0\) denote the principal real branch of the Lambert function and
define
\begin{equation}
 w_N=W_0\left(
 \frac{H_N}{c_Nq_NN\varepsilon_N\delta_N^2}
 \right),
 \qquad
 \Delta_N
 =\delta_N e^{w_N/2}
 =\left(
 \frac{H_N}
 {c_Nq_NN\varepsilon_Nw_N}
 \right)^{1/2}.
 \label{eq:joint-critical-W-scale}
\end{equation}
The quantity \(w_N\) is the Lambert correction, and \(\Delta_N\) is the
corresponding critical scale.

\begin{theorem}
\label{thm:joint-critical-crossover}
If
\begin{equation}
 \varepsilon_N\log N\longrightarrow\lambda^\star(d),
 \label{eq:joint-critical-assumption}
\end{equation}
then every global minimizer satisfies
\begin{equation}
 \frac{1-\gamma_N}{\Delta_N}\longrightarrow1.
 \label{eq:joint-critical-location}
\end{equation}
Furthermore,
\begin{align}
 \min_\gamma F_N(\gamma)-Nc_N
 ={}&\frac{H_N}{\Delta_N}
 \frac{w_N-1}{w_N}\notag\\
 &+o\left[
 \frac{H_N}{\Delta_N}
 \left(1+\frac1{w_N}\right)
 \right].
 \label{eq:joint-critical-value}
\end{align}
\end{theorem}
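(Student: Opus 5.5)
The plan is to combine the decomposition \eqref{eq:joint-decomposition}, \(F_N(1-\delta)=N\Phi_{d,\alpha_N}(1-\delta)+\mathcal T_{d,\alpha_N,N}(\delta)\), with the endpoint expansion of Lemma~\ref{lem:uniform-stationary-endpoint} and the transient estimates of Lemma~\ref{lem:uniform-differentiated-transient} and Corollary~\ref{cor:differentiated-transient-windows}. Near the critical scale, substitute \(\delta=\delta_Ne^{u}\) and use \(\delta_N^{\varepsilon_N}=q_N\). The two leading stationary terms then combine exactly:
\[
 c_N\delta^{\alpha_N}-\alpha_Nc_Nq_N\delta
 =c_Nq_N\delta\bigl(e^{\varepsilon_Nu}-1-\varepsilon_N\bigr).
\]
When \(\varepsilon_Nu\to0\), this is \(c_Nq_N\varepsilon_N\delta\,(u-1)(1+o(1))\). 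The reduced profile is therefore
\[
 \Lambda_N(u)=Nc_Nq_N\varepsilon_N\delta_Ne^{u}(u-1)+\frac{H_N}{2\delta_N}e^{-u}.
\]
Its derivative vanishes when \(2ue^{2u}=H_N/(c_Nq_NN\varepsilon_N\delta_N^2)\), that is, at \(2u=w_N\), where \(\delta=\Delta_N\). Using \(Nc_Nq_N\varepsilon_N\Delta_N=H_N/(w_N\Delta_N)\), the value there is
\[
 \frac{H_N}{\Delta_N}\Bigl(\frac{w_N/2-1}{w_N}+\frac12\Bigr)=\frac{H_N}{\Delta_N}\frac{w_N-1}{w_N}.
\]
This is the main term in \eqref{eq:joint-critical-value}.

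\emph{Step 1: control of the sizes.} Write \(\log q_N=\log q_{d,1}-\varepsilon_N/(3d+7)+O(\varepsilon_N^2)\) and \(\log q_{d,1}=-\lambda^\star(d)/2\). Under \eqref{eq:joint-critical-assumption} this gives \(\log(N\delta_N^2)=o(\log N)\), and \(\log(1/\varepsilon_N)=O(\log\log N)\). Hence the Lambert argument is \(N^{o(1)}\), so \(w_N=o(\log N)\) and \(\varepsilon_Nw_N\to0\). It also follows that \(\Delta_N\to0\), \(N\Delta_N\to\infty\), and \(N\Delta_N^3\cdot\Delta_N/H_N\to0\). The last bound makes the term \(N\nu\delta^2\) and the remainder \(NE(\delta)\) negligible relative to \(H_N/\Delta_N\). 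These bounds must hold uniformly whether \(w_N\) stays bounded or diverges. For that reason I would not assume either behaviour, and would work with the scale \(\Delta_N\) directly.

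\emph{Step 2: rescaled convergence with derivatives.} Put \(\delta=\Delta_Ne^{v}\) with \(v\) in a compact set. Normalize the first derivative \(\partial_\delta(F_N-Nc_N)\) by \(H_N/\Delta_N^2\), together with the appropriate factor \((1+w_N)\) that absorbs the stationary slope \(Nc_Nq_N\varepsilon_N(u+\varepsilon_N\text{-terms})\). Then show that it converges uniformly to a strictly increasing function of \(v\) with a unique zero at \(v=0\). The tools are the differentiated remainder bounds \eqref{eq:joint-Phi-remainder} for the stationary part and Corollary~\ref{cor:differentiated-transient-windows} on \(I_N=\{\Delta_Ne^{v}\}\) for the transient part. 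The second derivative similarly converges to a positive multiple, which yields uniqueness of the local critical point and \((1-\gamma_N)/\Delta_N\to1\).

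\emph{Step 3: global localization and the value.} Lemma~\ref{lem:joint-global-localization} applies, with trial parameter \(\delta=\Delta_N\), whose excess is \(O(H_N/\Delta_N)=o(N)\). It gives \(1-\gamma_N\to0\) and \(N(1-\gamma_N)\to\infty\). On this range I must exclude \(|v|\to\infty\). For \(v\to-\infty\), the transient \(H_N/(2\delta)\) dominates, while the stationary part is bounded below by \(-Nc_Nq_N\varepsilon_N\delta(1+o(1))\). For \(v\to+\infty\) with \(\varepsilon_Nu\) bounded, the term \(e^{\varepsilon_Nu}-1-\varepsilon_N\) grows. When \(\varepsilon_Nu\) is not small, \(\delta\geq\delta_Nq_N^{-C}\), and the linear coefficient \(c_N\delta^{\varepsilon_N}-\alpha_Nc_Nq_N\) is positive and bounded below. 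This gives excess of order \(N\delta\gg H_N/\Delta_N\). Finally, evaluating \(F_N\) at the minimizer and using Step 2 gives \eqref{eq:joint-critical-value}. The error term \(o[(H_N/\Delta_N)(1+1/w_N)]\) accounts for the case of bounded \(w_N\), where the main term \((w_N-1)/w_N\) may vanish or change sign.

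The main obstacle is Step 2: obtaining derivative convergence uniformly in the unknown behaviour of \(w_N\). The natural normalization of the stationary slope contains the factor \(w_N\), while the transient slope does not. I would handle this first by normalizing the first derivative by \(H_N(1+w_N)/\Delta_N^2\). I would then check that every error term in \eqref{eq:differentiated-transient-one} and \eqref{eq:joint-Phi-remainder} is \(o\) of this quantity. This check uses \(N\Delta_N\to\infty\), which makes the exponential terms negligible, and the crude bound \(\Delta_N^{-3/2}\ll\Delta_N^{-2}\).
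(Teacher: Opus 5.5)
Your plan proves the theorem, but by a different route from the paper. The paper uses value estimates only. With \(\delta=\delta_Nz\) it reduces the excess to \(P_N(z)=c_Nq_NN\varepsilon_N\delta_N^2(z\log z-z)+H_N/(2z)\). The Lambert identity then turns the centered gap at \(z=e^{w_N/2}u\) into the exact expression \(\frac{H_N}{2e^{w_N/2}}\bigl[u+u^{-1}-2+\frac{2}{w_N}(u\log u-u+1)\bigr]\). Both bracketed functions are nonnegative and vanish only at \(u=1\). Hence dividing the centered remainder by \(H_N(1+w_N^{-1})e^{-w_N/2}\) gives a coercive comparison that is uniform in the behaviour of \(w_N\). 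A cruder lower bound and Lemma~\ref{lem:joint-global-localization} then exclude \(u\to0\) and \(u\to\infty\), and no derivative of the transient term is needed.

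Your Step~2 is in substance the paper's later Theorem~\ref{thm:moving-differentiated-profile} with Corollary~\ref{cor:moving-global-uniqueness}. It costs the heavier Lemma~\ref{lem:uniform-differentiated-transient}, but it gives eventual uniqueness of the minimizer as a by-product, which the value argument does not. Your Step~1 conclusion \(w_N=o(\log N)\), hence \(\varepsilon_Nw_N\to0\), is the correct statement under the bare hypothesis. The paper's \(w_N=O(\log\log N)\) needs the stronger centering of Corollary~\ref{cor:joint-critical-canonical}, and \(\varepsilon_Nw_N\to0\) is all that either argument actually uses.

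Two corrections are needed for Steps~2 and~3 to work as written.

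\emph{First, your slope normalization is inverted.} With \(\delta=\Delta_Ne^{v}\) and \(Nc_Nq_N\varepsilon_N=H_N/(w_N\Delta_N^2)\), the leading slope is \(\frac{H_N}{\Delta_N^2}\bigl[\tfrac12(1-e^{-2v})+v/w_N\bigr]\). The divisor must therefore be \(H_N(1+w_N^{-1})/\Delta_N^2=Nc_Nq_N\varepsilon_N(1+w_N)\), not \(H_N(1+w_N)/\Delta_N^2\). With your divisor the normalized slope tends to zero when \(w_N\to\infty\), which is the canonical critical case, and diverges when \(w_N\to0\). With the correct divisor the normalized slope is \(\theta_N\tfrac12(1-e^{-2v})+(1-\theta_N)v+o(1)\), where \(\theta_N=w_N/(1+w_N)\). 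Since \(\theta_N\) need not converge, there is no single limit function. Argue along subsequences with \(\theta_N\to\theta\in[0,1]\), or directly with this family, which is uniformly strictly increasing on compacts and vanishes only at \(v=0\). Likewise \(\partial_\delta^2F_N\), divided by \(H_N(1+w_N^{-1})/\Delta_N^3\), equals \((w_Ne^{-3v}+e^{-v})/(1+w_N)+o(1)\), which is bounded below on compacts.

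\emph{Second, your Step~3 lower bound fails below \(\delta_N\).} The bound \(-Nc_Nq_N\varepsilon_N\delta(1+o(1))\) for the stationary part is false when \(\delta<\delta_N\), that is \(v<-w_N/2\), because there \(e^{\varepsilon_Nu}-1-\varepsilon_N\) lies well below \(-\varepsilon_N\). Use instead that \(\delta\mapsto Nc_Nq_N\delta\bigl((\delta/\delta_N)^{\varepsilon_N}-1-\varepsilon_N\bigr)\) decreases on \((0,\delta_N)\) and increases afterwards, with minimum \(-Nc_Nq_N\varepsilon_N\delta_N\). Then compare with the trial value at \(v=0\) rather than with zero. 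This matters because when \(w_N\to0\) the minimal excess is itself about \(-H_N/(w_N\Delta_N)\).
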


\begin{proof}[Proof outline]
We first use \(z=\delta/\delta_N\) to identify the critical scale and then
\(u=\delta/\Delta_N\) to center the profile at its minimizer.  The stationary
and transient expansions reduce the objective to a one-variable profile.  Its
balance equation gives the Lambert expression, and
Lemma~\ref{lem:joint-global-localization} transfers the local minimum to every
global minimizer.  Appendix~\ref{app:finite-window-proofs} contains the uniform
error analysis.
\end{proof}

\begin{corollary}
\label{cor:joint-critical-canonical}
If
\[
 \varepsilon_N\log N-\lambda^\star(d)
 =o\left(\varepsilon_N\log\frac1{\varepsilon_N}\right),
\]
then
\begin{equation}
 1-\gamma_N
 \sim
 \left(\frac{H_{d,1}}{c_{d,1}q_{d,1}\lambda^\star(d)}\right)^{1/2}
 \left(\frac{\log N}{N\log\log N}\right)^{1/2}.
 \label{eq:joint-canonical-location}
\end{equation}
This is the elementary leading scale.  Under the stronger centering condition
below, retaining \(W_0\) resolves the slowly varying Lambert factor.
If the stronger condition
\(\varepsilon_N\log N-\lambda^\star(d)=o(\varepsilon_N)\) holds, then
\begin{equation}
 1-\gamma_N
 \sim\left[
 \frac{H_{d,1}\log N}
 {c_{d,1}q_{d,1}\lambda^\star(d)N
 W_0\left(
 \dfrac{H_{d,1}e^{2/(3d+7)}\log N}
 {c_{d,1}q_{d,1}\lambda^\star(d)}
 \right)}
 \right]^{1/2}.
 \label{eq:joint-canonical-W-location}
\end{equation}
\end{corollary}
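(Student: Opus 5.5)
The plan is to take Theorem~\ref{thm:joint-critical-crossover} as given, so that \((1-\gamma_N)/\Delta_N\to1\). The corollary then reduces to a purely deterministic computation of the asymptotics of \(\Delta_N\) in Eq.\eqref{eq:joint-critical-W-scale}, under each of the two centering conditions. Write \(\varepsilon_N\log N=\lambda^\star+\theta_N\) with \(\lambda^\star=\lambda^\star(d)\). Both hypotheses imply \(\theta_N\to0\), because \(\varepsilon_N\log(1/\varepsilon_N)\to0\). Hence \(\varepsilon_N\sim\lambda^\star/\log N\). Continuity in \(\alpha\) gives \(H_N\to H_{d,1}\), \(c_N\to c_{d,1}\) and \(q_N\to q_{d,1}\); for \(H\) this comes from the geometric integral Eq.\eqref{eq:H-geometric-integral}. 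So the prefactor in \(\Delta_N^2=H_N/(c_Nq_NN\varepsilon_Nw_N)\) is asymptotic to \(H_{d,1}\log N/(c_{d,1}q_{d,1}\lambda^\star N w_N)\), and everything hinges on \(w_N\).

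To evaluate the argument \(X_N\) of \(W_0\), expand \(\delta_N^2=q_N^{2/\varepsilon_N}\) exactly to order one in the exponent. By Eq.\eqref{eq:joint-D-q}, \(q_N=q_{d,1}\bigl(1-\varepsilon_N/(3d+7)\bigr)\), and \(\log q_{d,1}=-\lambda^\star/2\). Therefore
\[
 \frac{2}{\varepsilon_N}\log q_N=-\frac{\lambda^\star}{\varepsilon_N}-\frac{2}{3d+7}+O(\varepsilon_N),
 \qquad
 \frac{\lambda^\star}{\varepsilon_N}=\log N-\frac{\theta_N}{\varepsilon_N}.
\]
This yields \(\delta_N^2=N^{-1}e^{\theta_N/\varepsilon_N}e^{-2/(3d+7)}(1+o(1))\) and hence
\[
 X_N=\frac{H_{d,1}e^{2/(3d+7)}\log N}{c_{d,1}q_{d,1}\lambda^\star}\,e^{-\theta_N/\varepsilon_N}(1+o(1)).
\]
The point to be careful about is that \(N\) cancels exactly, leaving a \(\log N\) factor together with the possibly large factor \(e^{-\theta_N/\varepsilon_N}\), whose sign is unrestricted.

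For the comparison of Lambert values I will use \(dW_0/d\log x=W_0/(1+W_0)\in[0,1)\). This gives \(|W_0(X)-W_0(Y)|\leq|\log(X/Y)|\), and also \(W_0(x)\sim\log x\) as \(x\to\infty\).

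Under the stronger condition \(\theta_N=o(\varepsilon_N\)), the factor \(e^{-\theta_N/\varepsilon_N}\) tends to one. Then \(X_N/Y_N\to1\), where \(Y_N\) is the argument displayed in Eq.\eqref{eq:joint-canonical-W-location}. Since \(Y_N\to\infty\), \(W_0(Y_N)\to\infty\), and the Lipschitz bound gives \(w_N/W_0(Y_N)\to1\). This proves Eq.\eqref{eq:joint-canonical-W-location}.

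Under the weaker condition, \(\theta_N/\varepsilon_N=o(\log(1/\varepsilon_N))=o(\log\log N)\). Hence \(\log X_N=\log\log N+o(\log\log N)\to\infty\), and so \(w_N\sim\log X_N\sim\log\log N\). Substituting into \(\Delta_N^2\) gives Eq.\eqref{eq:joint-canonical-location}.

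The main obstacle is this exponent bookkeeping: the order-\(\varepsilon_N\) shift in \(q_N\) produces the \(e^{2/(3d+7)}\) constant, and the order-\(\theta_N/\varepsilon_N\) term must be shown negligible in the appropriate sense in each regime. The rest is routine continuity.
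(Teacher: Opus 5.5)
Your proposal is correct and is essentially the argument the paper leaves implicit: Theorem~\ref{thm:joint-critical-crossover} reduces the corollary to the asymptotics of $\Delta_N$. The expansion $\log q_{d,1+\varepsilon}=-\lambda^\star(d)/2-\varepsilon/(3d+7)+O(\varepsilon^2)$, which the paper records in the appendix proof of Theorem~\ref{thm:joint-supercritical}, then produces the factor $e^{2/(3d+7)}$ and the residual $e^{-\theta_N/\varepsilon_N}$ in the argument of $W_0$. The paper gives no separate proof of the corollary, and your treatment of that residual in the two regimes, via $|W_0(X)-W_0(Y)|\leq|\log(X/Y)|$ and $W_0(x)\sim\log x$, correctly supplies the missing details.
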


Above \(\lambda^\star(d)\), the stationary root determines the leading
optimizer scale, while the transient term contributes to the refinements.
Assume
\begin{equation}
 \varepsilon_N\log N\longrightarrow\lambda>\lambda^\star(d).
 \label{eq:joint-supercritical-assumption}
\end{equation}

The two quadratic coefficients in the refined formulas enter only through
\begin{equation}
 \chi_{d,\alpha}
 =\frac{\alpha s_d}{4}q_{d,\alpha}+\nu_{d,\alpha},
 \qquad
 \chi_{d,1}=\frac{d(d+7)}{32(d+2)^2}.
 \label{eq:joint-chi-star}
\end{equation}

\begin{theorem}
\phantomsection
\label{thm:joint-supercritical}
\noindent\textup{\textbf{Leading scale.}}
Every global minimizer satisfies
\begin{equation}
 \frac{1-\gamma_N}{\delta_N}\longrightarrow1
 \label{eq:joint-supercritical-location}
\end{equation}
and
\begin{equation}
 \min_\gamma F_N(\gamma)
 =Nc_N
 -N\varepsilon_Nc_Nq_N\delta_N(1+o(1)).
 \label{eq:joint-supercritical-leading-value}
\end{equation}
\noindent\textup{\textbf{Second-order value.}}
The minimum has the refined expansion
\begin{align}
 \min_\gamma F_N(\gamma)
 ={}&Nc_N
 -N\varepsilon_Nc_Nq_N\delta_N
 +N\chi_{d,\alpha_N}\delta_N^2\notag\\
 &+\frac{H_N}{2\delta_N}
 +o\left(N\delta_N^2+\delta_N^{-1}\right).
 \label{eq:joint-supercritical-second-order}
\end{align}
\smallskip

\noindent\textup{\textbf{Refined location.}}
The location has the refinement
\begin{align}
 \frac{1-\gamma_N}{\delta_N}
 ={}&1
 -\frac{2\chi_{d,\alpha_N}
 +\varepsilon_N\frac{\alpha_N s_d}{4}q_N}
 {\alpha_Nc_Nq_N}
 \frac{\delta_N}{\varepsilon_N}\notag\\
 &+\frac{H_N}
 {2\alpha_Nc_Nq_N}
 \frac1{N\varepsilon_N\delta_N^2}
 +o\left(
 \frac{\delta_N}{\varepsilon_N}+\frac1{N\varepsilon_N\delta_N^2}
 \right).
 \label{eq:joint-supercritical-location-refined}
\end{align}
\end{theorem}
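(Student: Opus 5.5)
Work in the variable \(\delta=1-\gamma\), rescaled as \(\delta=z\delta_N\), and use the decomposition Eq.\eqref{eq:joint-decomposition},
\[
 F_N(1-\delta)-Nc_N
 =N\bigl(\Phi_{d,\alpha_N}(1-\delta)-c_N\bigr)
 +\mathcal T_{d,\alpha_N,N}(\delta).
\]
Since \(\lambda>\lambda^\star(d)\), we have \(q_N^{1/\varepsilon_N}=\delta_N\) with \(\log(1/\delta_N)=\varepsilon_N^{-1}\log(1/q_N)\sim\log N\cdot\lambda^{-1}2\log(1/q_{d,1})\). Because \(q_{d,1}=e^{-\lambda^\star/2}\), this gives \(\delta_N=N^{-\lambda^\star/(2\lambda)+o(1)}\). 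Hence \(N\delta_N^2\gg\) the transient scale \(\delta_N^{-1}\) only in a weak sense. More precisely,
\[
 N\varepsilon_N\delta_N\cdot\delta_N=N\varepsilon_N\delta_N^2=N^{1-\lambda^\star/\lambda+o(1)}\to\infty.
\]
So the stationary linear-plus-power profile dominates the transient term \(H_N/(2\delta)\) on the scale \(\delta_N\).

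\textbf{Localization.} First apply Lemma~\ref{lem:joint-global-localization} with the trial parameter \(\delta=\delta_N\), whose excess is \(-N\varepsilon_Nc_Nq_N\delta_N(1+o(1))+O(\delta_N^{-1})=o(N)\). This gives \(\delta\to0\) and \(N\delta\to\infty\) for every minimizer, so Lemma~\ref{lem:uniform-stationary-endpoint} and Corollary~\ref{cor:differentiated-transient-windows} both apply.

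\textbf{Leading-order profile.} Write the leading stationary part from Eq.\eqref{eq:joint-Phi-expansion} as
\[
 c_N\delta^{\alpha_N}-\alpha_Nc_Nq_N\delta=c_N\delta_N q_N\bigl(z^{1+\varepsilon_N}-\alpha_Nz\bigr),
\]
using \(\delta_N^{\varepsilon_N}=q_N\). Expanding \(z^{1+\varepsilon}=z+\varepsilon z\log z+O(\varepsilon^2)\) uniformly on compact \(z\)-sets gives
\[
 N c_Nq_N\delta_N\varepsilon_N\bigl(z\log z-z+o(1)\bigr).
\]
This profile is uniquely minimized at \(z=1\) with value \(-1\). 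Outside compact sets the exact function \(z^{1+\varepsilon}-\alpha z\) is convex in \(z\), and its minimizer \(z_*=(\alpha/(1+\varepsilon))^{1/\varepsilon}\to1\). The quadratic and transient terms are relatively \(o(1)\) on \(\{\delta\le\delta_N K\}\), and for \(z\) small the transient term is positive anyway. For \(z\to\infty\) with \(\delta\to0\), the convex profile grows superlinearly in \(\log z\) relative to the negative benchmark value. This proves Eqs.\eqref{eq:joint-supercritical-location} and \eqref{eq:joint-supercritical-leading-value}.

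\textbf{Second-order value.} Evaluate the objective at \(\delta_N\) exactly and bound the deviation via the second-order Taylor argument below. The terms \(\frac{\alpha s_d}{4}\delta^{\alpha+1}+\nu\delta^2\) at \(\delta_N\) give \(N\bigl(\frac{\alpha s_d}{4}q_N+\nu\bigr)\delta_N^2=N\chi\delta_N^2\), and the transient contributes \(H_N/(2\delta_N)\). The minimizer shift changes the value only quadratically, which is \(o\) of these terms.

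\textbf{Refined location.} Differentiate in \(\delta\), which is permitted by the \(k=1,2\) bounds in Eqs.\eqref{eq:joint-Phi-remainder} and \eqref{eq:joint-transient-bound}/Corollary~\ref{cor:differentiated-transient-windows}. The critical-point equation reads
\[
 Nc_N\bigl(\alpha\delta^{\varepsilon}-\alpha q_N\bigr)
 +N\Bigl(\tfrac{\alpha(\alpha+1)s_d}{4}\delta^{\alpha}+2\nu\delta\Bigr)
 -\frac{H_N}{2\delta^2}(1+o(1))+o(N\delta)=0 .
\]
Substituting \(\delta=\delta_N(1+\theta)\) gives \(\delta^{\varepsilon}-q_N=q_N\bigl((1+\theta)^{\varepsilon}-1\bigr)\approx q_N\varepsilon\theta\). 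Solving linearly in \(\theta\) yields
\[
 \theta=-\frac{2\chi+\varepsilon\alpha s_dq_N/4}{\alpha c_Nq_N}\frac{\delta_N}{\varepsilon_N}
 +\frac{H_N}{2\alpha c_Nq_N N\varepsilon_N\delta_N^2}+o(\cdot).
\]
The term \(\varepsilon\alpha s_dq/4\) arises from \(\frac{\alpha(\alpha+1)}{4}s_d\delta^{\alpha}\) versus \(\frac{\alpha s_d}{4}q\cdot 2\). The second derivative, of size \(N c q\varepsilon/\delta_N\), is positive and dominates its perturbations on the window \(z\in[1/2,2]\). This gives both uniqueness of the critical point and legitimacy of the linearization.

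\textbf{Main obstacle.} The main obstacle is the uniform control in this last step. Both corrections must be shown \(o(1)\) compared with \(\theta\)'s normalization. That requires the remainders of order \(\delta^{3-k}\) and \(\delta^{-3/2}\) to be negligible against \(N\varepsilon\delta_N^2\) and \(\delta_N^{-1}\), and the \(\varepsilon^2\log^2z\) terms of \(z^{1+\varepsilon}\) to be tracked to second order in \(\varepsilon\theta\).
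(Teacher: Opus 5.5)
Your proposal is correct and follows essentially the same route as the paper. You rescale by \(\delta=z\delta_N\), localize via Lemma~\ref{lem:joint-global-localization} and the \(z\log z-z\) profile, evaluate exactly at \(\delta_N\) (using \(\delta_N^{\alpha_N}=q_N\delta_N\)) with a strong-convexity bound on the shift to the minimizer, and linearize the first-order condition with stiffness \(\alpha_Nc_Nq_N\varepsilon_N/\delta_N\). Two asides are off but unused: your intermediate formula for \(\log(1/\delta_N)\) has a spurious factor \(2\) (your conclusion \(\delta_N=N^{-\lambda^\star(d)/(2\lambda)+o(1)}\) is right), and \(N\delta_N^2\gg\delta_N^{-1}\) holds only for \(\lambda>3\lambda^\star(d)/2\), which is exactly why both corrections must be kept (Corollary~\ref{cor:joint-supercritical-crossover}).
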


\begin{remark}
The leading term
\(-N\varepsilon_Nc_Nq_N\delta_N\) in
Eq.\eqref{eq:joint-supercritical-second-order} is kept exact.  A finite
expansion in powers of \(1/\log N\) can have an omitted error larger than
both displayed corrections.
\end{remark}

The following corollary identifies where the two explicit corrections in
Theorem~\ref{thm:joint-supercritical} exchange order.

\begin{corollary}
\label{cor:joint-supercritical-crossover}
If
\((\varepsilon_N\log N-\lambda)\log N\to0\), then
\begin{align}
 \delta_N
 &\sim e^{-1/(3d+7)}N^{-\lambda^\star(d)/(2\lambda)},
 \label{eq:joint-supercritical-delta-centered}\\
 \frac{\delta_N}{\varepsilon_N}
 &\sim\frac{e^{-1/(3d+7)}}{\lambda}(\log N)
 N^{-\lambda^\star(d)/(2\lambda)},
 \label{eq:joint-supercritical-stationary-scale}\\
 \frac1{N\varepsilon_N\delta_N^2}
 &\sim\frac{e^{2/(3d+7)}}{\lambda}(\log N)
 N^{-(1-\lambda^\star(d)/\lambda)}.
 \label{eq:joint-supercritical-transient-scale}
\end{align}
The transient correction dominates when
\(\lambda^\star(d)<\lambda<3\lambda^\star(d)/2\), and the
stationary correction dominates when \(\lambda>3\lambda^\star(d)/2\).
The same comparison applies to the two explicit corrections in
Eq.\eqref{eq:joint-supercritical-second-order}.
At the crossover value \(\lambda=3\lambda^\star(d)/2\),
\begin{equation}
 \begin{aligned}
 \frac{1-\gamma_N}{\delta_N}
 ={}&1+\frac{2}{3\lambda^\star(d)c_{d,1}q_{d,1}}\\
 &\quad\times\left(
 -2\chi_{d,1}e^{-1/(3d+7)}
 +\frac{\chi_{d,1}}2e^{2/(3d+7)}
 \right)(\log N)N^{-1/3}\\
 &\quad+o\bigl((\log N)N^{-1/3}\bigr),
 \end{aligned}
 \label{eq:joint-supercritical-crossover-location}
\end{equation}
\end{corollary}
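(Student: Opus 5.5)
The plan is to derive all statements from the closed form \(\delta_N=q_N^{1/\varepsilon_N}\) together with the refined location formula \eqref{eq:joint-supercritical-location-refined} of Theorem~\ref{thm:joint-supercritical}. First expand \(q_N\) using Eq.~\eqref{eq:joint-D-q}:
\[
 q_{d,1+\varepsilon}=\frac{3d+7-\varepsilon}{4(d+2)}=q_{d,1}\Bigl(1-\frac{\varepsilon}{3d+7}\Bigr).
\]
Hence
\[
 \log q_N=\log q_{d,1}-\frac{\varepsilon_N}{3d+7}+O(\varepsilon_N^2).
\]
The definition \eqref{eq:joint-threshold} gives \(\log q_{d,1}=-\lambda^\star(d)/2\). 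Therefore
\[
 \log\delta_N=-\frac{\lambda^\star(d)}{2\varepsilon_N}-\frac1{3d+7}+O(\varepsilon_N).
\]

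Next write \(\varepsilon_N=(\lambda+r_N)/\log N\) with \(r_N\log N\to0\). Then
\[
 \frac1{\varepsilon_N}=\frac{\log N}{\lambda}-\frac{r_N\log N}{\lambda^2}+O\bigl(r_N^2\log N\bigr)=\frac{\log N}{\lambda}+o(1).
\]
This is the only place where the strong centering hypothesis is used: it makes the \(o(1)\) error in the exponent harmless, so \(\delta_N\) is determined up to a factor \(1+o(1)\). Exponentiating gives Eq.~\eqref{eq:joint-supercritical-delta-centered}. Dividing by \(\varepsilon_N\sim\lambda/\log N\) gives Eq.~\eqref{eq:joint-supercritical-stationary-scale}. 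Squaring \(\delta_N\) and multiplying by \(N\varepsilon_N\) gives Eq.~\eqref{eq:joint-supercritical-transient-scale}.

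For the dominance statement, both correction scales carry the same factor \(\log N/\lambda\) times constants. It therefore suffices to compare the exponents \(\lambda^\star/(2\lambda)\) and \(1-\lambda^\star/\lambda\). These are equal exactly when \(\lambda=3\lambda^\star/2\). The transient scale is larger below this value and the stationary scale is larger above it. For the value expansion \eqref{eq:joint-supercritical-second-order}, the ratio of the two corrections is
\[
 \frac{N\delta_N^2}{\delta_N^{-1}}=N\delta_N^3\asymp N^{1-3\lambda^\star/(2\lambda)},
\]
so the same threshold governs it.

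At the crossover, \(\lambda^\star/(2\lambda)=1/3=1-\lambda^\star/\lambda\), so both corrections are of order \((\log N)N^{-1/3}\), and I substitute them into Eq.~\eqref{eq:joint-supercritical-location-refined}.

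\emph{Stationary term.} The term \(\varepsilon_N\alpha_Ns_dq_N/4\) is \(o(1)\) relative to \(2\chi_{d,\alpha_N}\) and can be dropped. Continuity of \(\alpha\mapsto c_{d,\alpha},q_{d,\alpha},\chi_{d,\alpha}\), which are explicit rational functions, lets me replace each by its value at \(\alpha=1\). With \(1/\lambda=2/(3\lambda^\star)\) this produces the term \(-2\chi_{d,1}e^{-1/(3d+7)}\) inside the bracket.

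\emph{Transient term.} Here I need \(H_N\to H_{d,1}\). I would obtain this by dominated convergence in the geometric representation \eqref{eq:H-geometric-integral}, whose integrand is continuous in \(\alpha\) and uniformly integrable for \(\alpha\in[1,1+\eta]\). The transient term then becomes
\[
 \frac{H_{d,1}}{2c_{d,1}q_{d,1}}\cdot\frac{2e^{2/(3d+7)}}{3\lambda^\star}\,(\log N)N^{-1/3}.
\]
All remaining errors are absorbed into the \(o\bigl((\log N)N^{-1/3}\bigr)\) term inherited from the theorem.

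The step I expect to be the main obstacle is matching the transient constant in the crossover formula. The computation above yields \(H_{d,1}/2\) in the position where Eq.~\eqref{eq:joint-supercritical-crossover-location} displays \(\chi_{d,1}/2\). Unless there is an identity relating \(H_{d,1}\) to \(\chi_{d,1}\), which I would check from Eq.~\eqref{eq:H-constant} at \(\alpha=1\) and Eq.~\eqref{eq:joint-chi-star}, the displayed coefficient should read \(H_{d,1}/2\). I would verify this numerically for \(d=1,2\) before finalizing.

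A secondary point is making the error bookkeeping uniform. The \(O(\varepsilon_N)\) error in \(\log\delta_N\) and the \(o(1)\) error in \(1/\varepsilon_N\) affect \(\delta_N\) only through a factor \(1+o(1)\). They therefore do not interfere with the relative corrections of order \(N^{-1/3}\log N\), because those corrections multiply \(\delta_N\) in a ratio that is already normalized.
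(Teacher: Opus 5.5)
Your proposal is correct and follows the paper's own route. You expand $\log q_{d,1+\varepsilon}=-\lambda^\star(d)/2-\varepsilon/(3d+7)+O(\varepsilon^2)$, use the centering hypothesis only to get $1/\varepsilon_N=(\log N)/\lambda+o(1)$, compare the two correction scales through $N\delta_N^3$, and substitute into Eq.~\eqref{eq:joint-supercritical-location-refined} at $\lambda=3\lambda^\star(d)/2$.

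Your diagnosis of the constant is also right: that substitution gives $\tfrac{H_{d,1}}{2}e^{2/(3d+7)}$. No identity equates $H_{d,1}$ with $\chi_{d,1}$; for $d=1$, Eq.~\eqref{eq:H-constant} with $g_{1,1}(r)=(1+r^2)/2$ gives $H_{1,1}=1/6$, while $\chi_{1,1}=1/36$. So the displayed $\chi_{d,1}/2$ in Eq.~\eqref{eq:joint-supercritical-crossover-location} is a typo for $H_{d,1}/2$.
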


\begin{remark}
\label{rem:finite-correction-balance}
For the canonical path
\(\varepsilon_N=r\lambda^\star(d)/\log N\), let \(r_N^{\rm bal}\) denote
the solution near \(3/2\) of \(N\delta_N^3=1\), which is the exact equality
of the two displayed correction scales.  Then
\begin{equation}
 r_N^{\rm bal}
 =\frac32+\frac{9}{2(3d+7)\log N}
 +O\bigl((\log N)^{-2}\bigr).
 \label{eq:finite-correction-balance-location}
\end{equation}
Indeed, Eq.\eqref{eq:joint-D-q} and the definition of \(\delta_N\) give
\[
 0=r_N^{\rm bal}\lambda^\star(d)
 +3\log q_{d,1+r_N^{\rm bal}\lambda^\star(d)/\log N},
\]
and expansion at \(r=3/2\) gives
Eq.\eqref{eq:finite-correction-balance-location}.
\end{remark}

\begin{proof}[Proof outline for Theorem~\ref{thm:joint-supercritical} and
Corollary~\ref{cor:joint-supercritical-crossover}]
The stationary minimum supplies the leading scale.  Balancing its local
curvature against the derivative of the transient correction yields the two
correction scales and the value expansion.  The complete comparison, including
uniform localization, is in Appendix~\ref{app:finite-window-proofs}.
\end{proof}

Theorems~\ref{thm:joint-subcritical}, \ref{thm:joint-critical-crossover}, and
\ref{thm:joint-supercritical} determine the leading optimizer scale in the
three regimes.  Corollary~\ref{cor:joint-supercritical-crossover} then compares
the two first corrections in the supercritical regime.  Figure~\ref{fig:joint-window-guide}
collects these results in their logical order.

\begin{figure}[H]
\centering
\includegraphics[width=\textwidth]{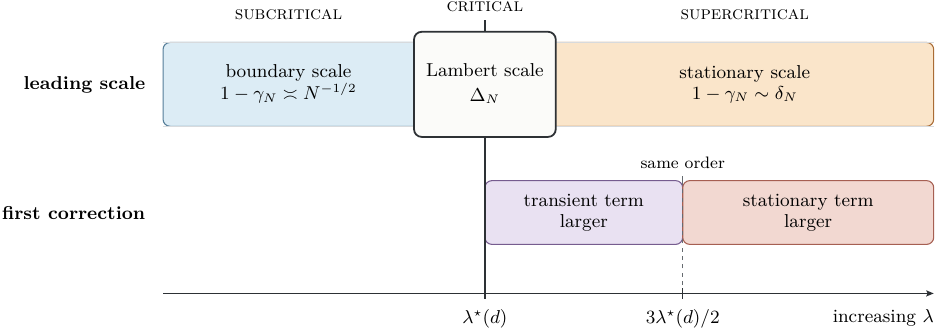}
\caption{Two nested transitions in the joint finite-size window.  The upper
band identifies the leading scale of \(1-\gamma_N\): the boundary scale below
\(\lambda^\star(d)\), the Lambert-corrected scale \(\Delta_N\) at equality, and
the stationary scale \(\delta_N\) above it.  The lower band resolves the
supercritical regime: the transient correction is larger below
\(3\lambda^\star(d)/2\), and the stationary correction is larger above it.}
\label{fig:joint-window-guide}
\end{figure}

The differentiated transient estimate also determines the local shape of the
complete finite objective on both moving scales.  Recall that
\(\Delta_N=\delta_N e^{w_N/2}\).  Define the normalized objective profile and
its stationary and transient comparison profiles by
\begin{align}
 \mathcal P_N(z)
 &=\frac{\Delta_N[F_N(1-\Delta_Nz)-F_N(1-\Delta_N)]}
 {H_N(1+w_N^{-1})},
 \label{eq:moving-normalized-profile}\\
 A_N(z)&=\frac{w_N}{2(w_N+1)}(z+z^{-1}-2),
 \qquad
 B_N(z)=\frac{z\log z-z+1}{w_N+1}.
 \label{eq:moving-comparison-profiles}
\end{align}

\begin{theorem}
\label{thm:moving-differentiated-profile}
Assume either Eq.\eqref{eq:joint-critical-assumption} or
Eq.\eqref{eq:joint-supercritical-assumption}.  Then \(\Delta_N\to0\),
\(N\Delta_N\to\infty\), and, for every \(0<\ell<u<\infty\),
\begin{equation}
 \left\|\mathcal P_N-A_N-B_N\right\|_{C^2([\ell,u])}
 \longrightarrow0.
 \label{eq:moving-C2-convergence}
\end{equation}
\end{theorem}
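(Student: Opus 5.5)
The plan is to substitute \(\delta=\Delta_Nz\) into the decomposition Eq.\eqref{eq:joint-decomposition}:
\[
F_N(1-\delta)=Nc_N+N\bigl[\Phi_{d,\alpha_N}(1-\delta)-c_N\bigr]+\mathcal T_{d,\alpha_N,N}(\delta).
\]
I would expand the stationary bracket with Lemma~\ref{lem:uniform-stationary-endpoint} and the transient term with Corollary~\ref{cor:differentiated-transient-windows}, applied on the window \(I_N=[\ell\Delta_N,u\Delta_N]\). The normalization in Eq.\eqref{eq:moving-normalized-profile} should then turn the two leading pieces into \(A_N+B_N\), and every remainder should vanish in \(C^2([\ell,u])\).

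\emph{Step 1: scale facts.} First I would verify \(\Delta_N\to0\), \(N\Delta_N\to\infty\), and \(\varepsilon_Nw_N\to0\).
\begin{itemize}
\item From \(\delta_N=q_N^{1/\varepsilon_N}\) and \(q_{d,1}=e^{-\lambda^\star(d)/2}\), one gets \(\delta_N=N^{-\lambda^\star(d)/(2\lambda)+o(1)}\). The Lambert argument is therefore \(N^{\lambda^\star/\lambda-1+o(1)}\log N\).
\item In the supercritical case this argument tends to \(0\), so \(w_N\to0\) and \(\Delta_N\sim\delta_N\).
\item In the critical case it grows like a power of \(\log N\), so \(w_N\sim\log\log N\). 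Hence \(\varepsilon_Nw_N\to0\) and \(\Delta_N=N^{-1/2+o(1)}\).
\end{itemize}
In both cases \(\Delta_N\to0\) and \(N\Delta_N\to\infty\), so Corollary~\ref{cor:differentiated-transient-windows} applies on \(I_N\). I will also use the defining identity
\[
Nc_Nq_N\varepsilon_N\Delta_N^2=H_N/w_N,
\]
which follows from Eq.\eqref{eq:joint-critical-W-scale}.

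\emph{Step 2: leading profile.} By Eq.\eqref{eq:joint-Phi-expansion}, the stationary part is
\[
Nc_N\delta\bigl(\delta^{\varepsilon_N}-\alpha_Nq_N\bigr)+N\nu\delta^2+\cdots .
\]
I would write \(\delta^{\varepsilon_N}=\Delta_N^{\varepsilon_N}z^{\varepsilon_N}=q_Ne^{\varepsilon_Nw_N/2}z^{\varepsilon_N}\) and expand in the small quantities \(\varepsilon_Nw_N\) and \(\varepsilon_N\log z\), where \(z\) is confined to \([\ell,u]\). This gives
\[
Nc_Nq_N\varepsilon_N\Delta_N\,z\bigl(w_N/2+\log z-1\bigr)+\text{errors}.
\]
After multiplication by \(\Delta_N/(H_N(1+w_N^{-1}))\), the identity from Step 1 turns the prefactor into \(1/(w_N+1)\). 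The stationary part thus becomes \(z(w_N/2+\log z-1)/(w_N+1)\).

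The transient leading term is \(H_N/(2\Delta_Nz)\), which normalizes to \(w_N/(2(w_N+1)z)\). Adding the two and subtracting the value at \(z=1\) reproduces exactly \(A_N+B_N\) of Eq.\eqref{eq:moving-comparison-profiles}.

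\emph{Step 3: remainders in \(C^2\).} This is the main obstacle, because the normalization contains the factor \(1+w_N^{-1}\) and all errors must be controlled uniformly with two derivatives. There are three remainders.
\begin{itemize}
\item \emph{Exponential expansion.} Its error is \(O(\varepsilon_N^2(w_N+|\log z|)^2)\) times the stationary prefactor. After normalization this is \(O(\varepsilon_N(w_N+1))\to0\). Because the expansion is analytic in \(\varepsilon_N\log z\), the same bound holds after two \(z\)-derivatives.
\item \emph{Quadratic and higher terms.} The terms \(N\nu\delta^2\), \(N\tfrac{\alpha s_d}{4}\delta^{\alpha+1}\), and \(NE\) normalize to \(O(N\Delta_N^3w_N/(H_N(w_N+1)))\). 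By the Step 1 identity this equals \(O(\Delta_N/(\varepsilon_N(w_N+1)))\), which tends to \(0\) because \(\Delta_N\) is a negative power of \(N\) while \(\varepsilon_N\asymp1/\log N\). The derivative bounds Eq.\eqref{eq:joint-Phi-remainder} give the same estimate in \(C^2\), since each \(z\)-derivative costs a factor \(\Delta_N\).
\item \emph{Transient error.} Eq.\eqref{eq:differentiated-transient-windows} shows that \(\Delta_N^{k+1}\partial_\delta^k\) of the transient error is \(o(H_N)\). Dividing by \(H_N(1+w_N^{-1})\geq H_N\) keeps it \(o(1)\).
\end{itemize}
The delicate point is that all expansions are in \(\delta\) while the convergence is measured in \(z\). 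I would therefore check each derivative through the chain rule \(\partial_z=\Delta_N\partial_\delta\) rather than expanding first and differentiating afterwards.
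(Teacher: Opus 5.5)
Your proposal is correct and takes essentially the paper's route. You substitute \(\delta=\Delta_Nz\) into \(F_N(1-\delta)=N\Phi_{d,\alpha_N}(1-\delta)+\mathcal T_{d,\alpha_N,N}(\delta)\), apply Lemma~\ref{lem:uniform-stationary-endpoint} and Corollary~\ref{cor:differentiated-transient-windows} on \([\ell\Delta_N,u\Delta_N]\), expand \(e^{\varepsilon_N(w_N/2+\log z)}\), and use \(N\varepsilon_Nc_Nq_N\Delta_N^2w_N=H_N\) to recover \(A_N+B_N\); your stationary remainder \(O(\Delta_N/(\varepsilon_N(1+w_N)))\) is even slightly sharper than the paper's \(\Delta_N^{1-\eta}\) version.

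One minor caveat, which the paper's own scale remark shares: under the bare assumption \(\varepsilon_N\log N\to\lambda^\star(d)\), the Lambert argument is only \(N^{o(1)}\), so \(w_N\sim\log\log N\) need not hold. The facts you actually use still follow. From \(w_N=o(\log N)\) you get \(\varepsilon_N(1+w_N)\to0\), and together with \(\Delta_N=\delta_Ne^{w_N/2}\geq\delta_N\) you get \(\Delta_N=N^{-1/2+o(1)}\).
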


\begin{proof}[Proof outline]
Eq.\eqref{eq:moving-C2-convergence} gives uniform convergence of the normalized
objective, its slope, and its curvature on every fixed multiplicative
neighborhood of \(\Delta_N\).  The explicit comparison profile \(A_N+B_N\)
is strictly convex and has its unique minimizer at \(z=1\).
Appendix~\ref{app:finite-window-proofs} gives the differentiated convergence
proof.
\end{proof}

\begin{corollary}
\label{cor:moving-global-uniqueness}
Under either set of assumptions in
Theorem~\ref{thm:moving-differentiated-profile}, the finite objective has a
unique global minimizer for all sufficiently large \(N\), and
\[
 1-\gamma_N\sim\Delta_N.
\]
In every fixed supercritical regime, \(\Delta_N\sim\delta_N\).
\end{corollary}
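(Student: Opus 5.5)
The proof combines three ingredients: global localization of minimizers, the $C^2$ profile convergence, and a convexity argument on the rescaled window.

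\emph{Localization.} Set $r_N=1-\gamma_N$ for any global minimizer $\gamma_N$. Theorems~\ref{thm:joint-critical-crossover} and~\ref{thm:joint-supercritical} already give $r_N/\Delta_N\to1$ in the critical case and $r_N/\delta_N\to1$ in the supercritical case. In the fixed supercritical regime I will check that $w_N\to0$. Indeed, $N\varepsilon_N\delta_N^2\to\infty$, because $\delta_N\approx N^{-\lambda^\star/(2\lambda)}$ with $\lambda^\star/\lambda<1$, while $\varepsilon_N$ only contributes a logarithmic factor. Hence the argument of $W_0$ tends to zero, so $w_N\to0$ and $\Delta_N=\delta_Ne^{w_N/2}\sim\delta_N$. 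This proves the last assertion. In both regimes, therefore, every global minimizer eventually satisfies $r_N/\Delta_N\in[1/2,2]$. It thus suffices to prove that $F_N(1-\Delta_Nz)$ has exactly one critical point for $z\in[1/2,2]$ once $N$ is large.

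\emph{Convexity on the window.} Apply Theorem~\ref{thm:moving-differentiated-profile} with $\ell=1/2$ and $u=2$. The normalized profile $\mathcal P_N$ is a positive multiple of $F_N(1-\Delta_Nz)$ minus a constant, so its critical points correspond to those of the objective. Compute
\[
A_N''(z)+B_N''(z)=\frac{w_N}{w_N+1}z^{-3}+\frac{1}{(w_N+1)z}.
\]
This expression is bounded below by $\min(z^{-3},z^{-1})\geq1/2$ on $[1/2,2]$, uniformly in $w_N\geq0$, because it is a convex combination of $z^{-3}$ and $z^{-1}$. The $C^2$ convergence in Eq.\eqref{eq:moving-C2-convergence} then gives $\mathcal P_N''\geq1/4$ on $[1/2,2]$ for large $N$. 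So $\mathcal P_N$ is strictly convex there and has at most one critical point. Interior global minimizers are critical points, and localization places every global minimizer in the open window. Hence all global minimizers coincide, which gives uniqueness. Existence follows from continuity on $[0,1]$. The relation $1-\gamma_N\sim\Delta_N$ then follows from the localization step, or directly from the fact that $A_N'+B_N'$ vanishes at $z=1$ with uniformly positive slope.

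\emph{Main obstacle.} The delicate point is uniformity in $w_N$. In the critical regime $w_N\to\infty$, so $B_N$ degenerates, while in the supercritical regime $w_N\to0$, so $A_N$ degenerates. The convex-combination lower bound above handles both limits at once, since the weights $w_N/(w_N+1)$ and $1/(w_N+1)$ always sum to one. The other point to verify is that localization is strict, meaning the minimizer stays in a compact subset of $(1/2,2)$. This follows from the stated limits $r_N/\Delta_N\to1$ together with Lemma~\ref{lem:joint-global-localization}, which supplies $r_N\to0$ and $Nr_N\to\infty$ as the hypotheses of those limits require.
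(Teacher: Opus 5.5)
Your proposal is correct and is essentially the paper's argument. The convex-combination lower bound on $A_N''+B_N''$, together with the $C^2$ convergence of Theorem~\ref{thm:moving-differentiated-profile}, gives strict convexity on a fixed window $[\ell\Delta_N,u\Delta_N]$, and Theorems~\ref{thm:joint-critical-crossover} and~\ref{thm:joint-supercritical} (with $w_N\to0$ in the supercritical case) place every global minimizer there. The one difference is that the paper gets the critical point from the sign change of $A_N'+B_N'$ at $\ell<1<u$, while you use existence of a global minimizer of the continuous $F_N$ on $[0,1]$.

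There is one arithmetic slip: on $[1/2,2]$ the correct bound is $\min(z^{-3},z^{-1})\geq1/8$, attained at $z=2$, not $1/2$. The conclusion therefore becomes $\mathcal P_N''\geq1/16$ for large $N$, which changes nothing in the argument.
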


\begin{proof}
The first derivative of the explicit comparison profile is
\[
 A_N'(z)+B_N'(z)
 =\frac{w_N}{2(w_N+1)}(1-z^{-2})
 +\frac{\log z}{w_N+1}
\]
and its second derivative is
\begin{equation}
 A_N''(z)+B_N''(z)
 =\frac{w_Nz^{-3}+z^{-1}}{w_N+1}>0.
 \label{eq:moving-profile-curvature}
\end{equation}
For fixed \(0<\ell<1<u\), the normalized curvature in
Eq.\eqref{eq:moving-profile-curvature} is bounded below uniformly, and the
normalized derivative has opposite signs at \(\ell\) and \(u\).  The
\(C^2\) convergence proves strict convexity and the existence of exactly one
critical point in \([\ell\Delta_N,u\Delta_N]\).  Its ratio to \(\Delta_N\) tends to one.
Theorems~\ref{thm:joint-critical-crossover} and
\ref{thm:joint-supercritical} place every global minimizer in such a fixed
multiplicative interval.  All global minimizers therefore coincide with that
critical point.
\end{proof}

\section{High-dimensional optimization}
\label{sec:high-dimensional}

This section studies how the stationary geometric averages simplify as the
dimension grows.  Uniform and differentiated expansions yield the limiting
objective, its optimizer, and the first correction of order \(d^{-1}\).
The limiting tradeoff curve is included in
Figure~\ref{fig:uniform-adversarial-tradeoff}, and
Figure~\ref{fig:high-dimensional-first-order} later checks the first correction
and its differentiated remainder.

\subsection{Limiting objective and optimizer}

Let
\begin{equation}
 \sigma^2(\gamma)=\frac{2}{1+\gamma}.
 \label{eq:high-dimensional-parameters}
\end{equation}

\begin{theorem}
For every fixed \(\alpha>0\), uniformly in \(0\leq\gamma\leq1\),
\begin{equation}
 M_{d,\alpha}(\gamma)
 =\sigma(\gamma)^\alpha+O_\alpha(d^{-1}).
 \label{eq:high-dimensional-uniform-limit}
\end{equation}
Consequently, \(\Phi_{d,\alpha}\) converges uniformly to
\begin{equation}
 \Phi_{\infty,\alpha}(\gamma)
 =h_\alpha(\gamma)
 \left(\frac{2}{1+\gamma}\right)^{\alpha/2}.
 \label{eq:high-dimensional-objective}
\end{equation}
The limiting objective has a unique minimizer.  It equals \(1\) when
\(0<\alpha\leq1\).  When \(\alpha>1\), denote it by
\(\gamma_{\infty,\alpha}\).  It belongs to \((1/2,1)\), and the ratio
\(t=\gamma_{\infty,\alpha}/(1-\gamma_{\infty,\alpha})\) is characterized by
\begin{equation}
 t^{\alpha-1}(2+3t)=3+4t.
 \label{eq:high-dimensional-optimizer}
\end{equation}
Every selection
\[
 \gamma_{d,\alpha}
 \in\operatorname*{argmin}_{0\leq\gamma\leq1}
 \Phi_{d,\alpha}(\gamma)
\]
satisfies \(\gamma_{d,\alpha}\to\gamma_{\infty,\alpha}\).
\end{theorem}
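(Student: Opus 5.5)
The plan is to work with the insertion vector \(y_\gamma=p-x_\gamma\), which has the same law as \(\sum_{j\geq0}a_jp_j\) with \(a_0=1\), \(a_{j+1}=(1-\gamma)\gamma^j\), as in the proof of Eq.\eqref{eq:fourth-moment}. Set \(Z=\norm{y_\gamma}^2\), so that \(M_{d,\alpha}(\gamma)=\E Z^{\alpha/2}\). By Eq.\eqref{eq:second-moment}, \(m:=\E Z=s_d\sigma(\gamma)^2\). Since \(\sum_ja_j=2\), we have \(0\leq Z\leq4\). Moreover \(m\in[s_d,2]\subset[1/3,2]\) uniformly in \(\gamma\in[0,1)\).

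\textbf{Step 1 (concentration).} The formula \(\E Z^2=M_{d,4}=s_d\sigma_2^2-\frac{2d}{(d+2)(d+4)}\sigma_4\) from the proof of Eq.\eqref{eq:fourth-moment} gives
\[
 \operatorname{Var}Z=s_d(1-s_d)\sigma_2^2-\frac{2d}{(d+2)(d+4)}\sigma_4 .
\]
Because \(\sigma_2\leq2\) and \(\sigma_4\leq2\), this is \(O(d^{-1})\) uniformly in \(\gamma\).

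\textbf{Step 2 (Taylor control).} Put \(\beta=\alpha/2\). The quotient
\[
 \frac{z^\beta-m^\beta-\beta m^{\beta-1}(z-m)}{(z-m)^2}
\]
extends continuously to the compact set \([0,4]\times[1/3,2]\); at \(z=m\) its value is \(\beta(\beta-1)m^{\beta-2}/2\), and at \(z=0\) it is finite because \(m\geq1/3\). It is therefore bounded by some \(C_\alpha\). Taking expectations kills the linear term, so
\[
 |M_{d,\alpha}(\gamma)-m^\beta|\leq C_\alpha\operatorname{Var}Z=O_\alpha(d^{-1}).
\]
Finally, \(m^\beta=\sigma^\alpha s_d^{\alpha/2}\), and \(s_d^{\alpha/2}=1+O_\alpha(d^{-1})\) with \(\sigma^\alpha\leq2^{\alpha/2}\). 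At \(\gamma=1\) the statement reads \(c_{d,\alpha}=d/(d+\alpha)=1+O(d^{-1})=\sigma(1)^\alpha+O(d^{-1})\). This proves Eq.\eqref{eq:high-dimensional-uniform-limit} uniformly on \([0,1]\). Multiplying by the bounded factor \(h_\alpha\) gives the uniform convergence of \(\Phi_{d,\alpha}\) to \(\Phi_{\infty,\alpha}\).

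\textbf{Step 3 (the limiting minimizer).} For \(0<\alpha\leq1\), I would argue exactly as in Theorem~\ref{thm:stationary-transition}. For \(\gamma<1\) one has \(h_\alpha\geq1\) and \(\sigma(\gamma)^\alpha>1\), while \(\Phi_{\infty,\alpha}(1)=1\), so the minimizer is \(\gamma=1\) and it is unique.

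For \(\alpha>1\), substitute \(t=\gamma/(1-\gamma)\). This gives
\[
 \Phi_{\infty,\alpha}=2^{\alpha/2}(1+t^\alpha)(1+t)^{-\alpha/2}(1+2t)^{-\alpha/2}.
\]
After clearing positive denominators, its logarithmic derivative has the sign of \(t^{\alpha-1}(2+3t)-(3+4t)\). Equivalently, it has the sign of \(r(t)-1\), where
\[
 r(t)=t^{\alpha-1}\frac{2+3t}{3+4t}.
\]
Both factors of \(r\) are strictly increasing, since \((2+3t)/(3+4t)\) has derivative numerator \(9-8>0\). Moreover \(r(0^+)=0\) and \(r(\infty)=\infty\). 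So there is a unique root, \(\Phi_{\infty,\alpha}\) decreases before it and increases after it, and the root is the unique minimizer, characterized by Eq.\eqref{eq:high-dimensional-optimizer}. Since \(r(1)=5/7<1\), the root satisfies \(t>1\), that is, \(\gamma_{\infty,\alpha}>1/2\). The endpoint \(\gamma=1\) corresponds to \(t=\infty\), where the function is still increasing toward its limit, so the endpoint is excluded.

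\textbf{Step 4 (convergence of minimizers).} Each \(\Phi_{d,\alpha}\) is continuous on \([0,1]\) by Lemma~\ref{lem:stationary-moment-order}, so its argmin is nonempty. Combining uniform convergence with a unique minimizer of a continuous limit on a compact interval gives the standard conclusion. Indeed, if a subsequence of selections converged to some \(\gamma'\neq\gamma_{\infty,\alpha}\), then
\[
 \Phi_{\infty,\alpha}(\gamma')\leq\liminf\Phi_{d,\alpha}(\gamma_{d,\alpha})\leq\Phi_{\infty,\alpha}(\gamma_{\infty,\alpha}),
\]
which is a contradiction. I expect the main obstacle to be Step 2 for \(\alpha<2\), where \(z^{\alpha/2}\) is not smooth at \(0\). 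The fix is the boundedness of the Taylor quotient, which holds because \(m\) stays bounded away from zero uniformly in \(\gamma\); this has to be checked carefully so that the \(O(d^{-1})\) bound is genuinely uniform.
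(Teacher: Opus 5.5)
Your proof is correct and follows essentially the same route as the paper. It controls the concentration of \(\norm{p-x_\gamma}^2\) through the exact second and fourth moments, applies a uniform Taylor bound for \(u\mapsto u^{\alpha/2}\) on \([0,4]\), uses the substitution \(t=\gamma/(1-\gamma)\) with the sign of \(t^{\alpha-1}(2+3t)-(3+4t)\), and obtains argmin convergence from uniform convergence plus uniqueness of the limiting minimizer. The only difference is cosmetic: you center the Taylor expansion at the mean \(s_d\sigma(\gamma)^2\) and then compare \(s_d^{\alpha/2}\sigma^\alpha\) with \(\sigma^\alpha\), whereas the paper centers directly at \(\sigma(\gamma)^2\) and uses \(\E(r-\sigma^2)=O(d^{-1})\) and \(\E(r-\sigma^2)^2=O(d^{-1})\).
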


\begin{proof}
Let \(r=\norm{p-x_\gamma}^2\).
The second and fourth moments of a uniform ball point, together with the
coefficient power sums of the stationary state, give
\[
 \E r=\frac d{d+2}\sigma^2(\gamma)
\]
and
\[
 \E\bigl(r-\sigma^2(\gamma)\bigr)^2
 =\frac{2\sigma^4(\gamma)}{d+2}
 -\frac{2d}{(d+2)(d+4)}
 \left(1+\frac{(1-\gamma)^3}{1+\gamma+\gamma^2+\gamma^3}\right).
\]
In particular, the second centered moment is at most \(6/d\), uniformly in
\(\gamma\).  Since \(\sigma^2(\gamma)\in[1,2]\) and
\(r\in[0,4]\), a Taylor bound for
\(u\mapsto u^{\alpha/2}\) around \(\sigma^2(\gamma)\) gives
Eq.\eqref{eq:high-dimensional-uniform-limit}.

It remains to identify the limiting minimizer.  Let
\(t=\gamma/(1-\gamma)\).  Apart from its positive constant factor,
Eq.\eqref{eq:high-dimensional-objective} becomes
\[
 \frac{1+t^\alpha}{[(1+t)(1+2t)]^{\alpha/2}}.
\]
Its logarithmic derivative has the sign of
\[
 t^{\alpha-1}(2+3t)-(3+4t).
\]
For \(0<\alpha\leq1\), both factors in
Eq.\eqref{eq:high-dimensional-objective} are minimized at \(\gamma=1\).
For \(\alpha>1\), the last expression is negative for \(t\leq1\).
For \(t>1\), the function \(t^{\alpha-1}\) increases and
\((3+4t)/(2+3t)\) decreases, so there is exactly one zero.  This proves
Eq.\eqref{eq:high-dimensional-optimizer} and uniqueness.  Uniform objective
convergence and separation of the unique minimum imply convergence of the
complete argmin set.
\end{proof}

\subsection{First-order correction}

For the first-order refinement, define
\begin{equation}
\mathcal C_\alpha(\gamma)
 =-\alpha\sigma(\gamma)^\alpha
 +\frac{\alpha(\alpha-2)}4\sigma(\gamma)^{\alpha-4}
 \left(\sigma(\gamma)^4-1-
 \frac{(1-\gamma)^3}{1+\gamma+\gamma^2+\gamma^3}\right).
\label{eq:high-dimensional-coefficient}
\end{equation}

The coefficient \(\mathcal C_\alpha(\gamma)\) is the first finite-dimensional
correction to the limiting stationary insertion cost.

\begin{theorem}
\label{thm:high-dimensional-first-order}
For fixed \(\alpha>0\), uniformly in \(\gamma\in[0,1]\),
\begin{equation}
 M_{d,\alpha}(\gamma)
 =\sigma(\gamma)^\alpha
 +\frac{\mathcal C_\alpha(\gamma)}d
 +O_\alpha(d^{-2}).
 \label{eq:high-dimensional-first-order}
\end{equation}
For fixed \(\alpha>1\) and every compact interval \(I\) contained in
\((0,1)\),
\begin{equation}
 \left\|
 M_{d,\alpha}-\sigma^\alpha-\frac{\mathcal C_\alpha}{d}
 \right\|_{C^2(I)}
 =O_{\alpha,I}(d^{-2}),
 \label{eq:high-dimensional-differentiated}
\end{equation}
where
\[
 \|f\|_{C^2(I)}
 =\max_{0\leq k\leq2}\sup_{\gamma\in I}|f^{(k)}(\gamma)|.
\]
\end{theorem}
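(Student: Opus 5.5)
We would prove Theorem~\ref{thm:high-dimensional-first-order} by a second-order delta method applied to the squared distance \(r=\norm{p-x_\gamma}^2\), refining the concentration argument used for Eq.\eqref{eq:high-dimensional-uniform-limit}.

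\emph{Expansion.} Write \(m=\E r=s_d\sigma^2\) and \(f(u)=u^{\alpha/2}\). Taylor expansion around \(\sigma^2\) gives
\[
 M_{d,\alpha}=f(\sigma^2)+f'(\sigma^2)(m-\sigma^2)+\tfrac12 f''(\sigma^2)\E(r-\sigma^2)^2+\E R_3 ,
\]
with \(|R_3|\leq C|r-\sigma^2|^3\). Since \(r\in[0,4]\) and \(\sigma^2\in[1,2]\), we have \(f\in C^3\) on the relevant range away from \(u=0\). A crude bound \(|R_3|\leq C(r-\sigma^2)^4/\sigma^2\cdot\ldots\) near \(u=0\) is controlled because large deviations are rare.

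\emph{Identifying the terms.} Use \(m-\sigma^2=-\tfrac{2}{d+2}\sigma^2=-\tfrac{2\sigma^2}{d}+O(d^{-2})\). The variance formula from the previous proof, multiplied by \(d\), tends to \(2\sigma^4-2\bigl(1+(1-\gamma)^3/(1+\gamma+\gamma^2+\gamma^3)\bigr)\). Hence \(\tfrac12f''\E(r-\sigma^2)^2\) equals \(\tfrac{\alpha(\alpha-2)}{8}\sigma^{\alpha-4}\) times this limit, divided by \(d\). Together with \(f'\cdot(-2\sigma^2/d)=-\alpha\sigma^\alpha/d\), this reproduces \(\mathcal C_\alpha\) exactly, with \(O(d^{-2})\) errors.

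\emph{Remainder.} The key step is \(\E|r-\sigma^2|^3=O(d^{-2})\). We would bound it by Cauchy--Schwarz via second and fourth centered moments, which requires \(\E(r-\sigma^2)^4=O(d^{-2})\). We obtain that from the cumulant recursion Eq.\eqref{eq:cumulant-recursion}, or directly: \(r\) is a sum over \(d\) coordinates of weakly dependent terms, and the cumulants of \(\norm{y_\gamma}^2\) scale like \(d^{1-k}\) for the \(k\)-th cumulant. Everything is uniform in \(\gamma\) because the coefficient power sums are bounded on \([0,1]\). The one subtle point is smoothness of \(f\) at \(0\) when \(\alpha<6\). We handle it by splitting on the event \(\{r<1/2\}\): Chebyshev with eighth moments (again from cumulants) gives probability \(O(d^{-4})\) there, while \(f\) is bounded.

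\emph{Differentiated statement.} This is the main obstacle. We would write \(M_{d,\alpha}(\gamma)=\E\, f(\norm{\sum_j a_j(\gamma)p_j}^2)\) with \(a_0=1\), \(a_{j+1}=(1-\gamma)\gamma^j\), and differentiate under the expectation twice in \(\gamma\); this is legitimate on compact \(I\subset(0,1)\) since \(\sum_j j^2\gamma^j\) converges and \(f\) is \(C^2\) away from \(0\). The derivatives produce expectations of \(f'(r)\partial_\gamma r\), \(f''(r)(\partial_\gamma r)^2\) and \(f'(r)\partial^2_\gamma r\), where \(\partial_\gamma r\) and \(\partial^2_\gamma r\) are quadratic forms in the \(p_j\) that also concentrate at rate \(d^{-1/2}\). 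We then repeat the delta-method expansion for these joint quantities, around their means and around \(\sigma^2\), to order \(d^{-1}\) with an \(O(d^{-2})\) remainder using fourth-moment bounds.

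An alternative, cleaner route is to show that the expansion \(M_{d,\alpha}=\sigma^\alpha+\mathcal C_\alpha/d+\rho_d\) holds for \(\gamma\) in a complex neighbourhood of \(I\), with \(\rho_d=O(d^{-2})\) uniformly there. Cauchy estimates then give the \(C^2\) bound. We would try this first, since the coefficients \(a_j(\gamma)\) extend analytically and the moment computations are algebraic. The difficulty is that \(f\) is not analytic at \(0\), which must again be isolated by the small-\(r\) event, now with complex weights, which is the delicate part.
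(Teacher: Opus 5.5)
Your identification of \(\mathcal C_\alpha\) from \(\E z\) and \(\E z^2\), with \(z=r-\sigma^2\), is correct. The remainder step fails, however, because \(\E|z|^3\) is not \(O(d^{-2})\). Cauchy--Schwarz gives only \(\E|z|^3\leq(\E z^2)^{1/2}(\E z^4)^{1/2}=O(d^{-1/2})\cdot O(d^{-1})=O(d^{-3/2})\). This order is sharp: Lyapunov's inequality gives \(\E|z|^3\geq(\E z^2)^{3/2}\), while in your notation \(d\,\E z^2\to2\bigl(\sigma^4-\sum_j a_j^4\bigr)=2\sum_{i\neq j}a_i^2a_j^2>0\) for \(\gamma<1\). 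A second-order delta method with an absolute cubic remainder therefore proves only \(M_{d,\alpha}=\sigma^\alpha+\mathcal C_\alpha/d+O(d^{-3/2})\).

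The missing idea is the cancellation in the \emph{signed} third moment. Expand \(u^{\alpha/2}\) to third order at \(\sigma^2\); the remainder is then bounded by \(Cz^4\) and costs \(O(\E z^4)=O(d^{-2})\). Then prove separately that \(\E z^3=O(d^{-2})\). With \(e=\E r-\sigma^2=O(d^{-1})\), one has \(\E z^3=\kappa_3(r)+3e\kappa_2(r)+e^3\), and \(\kappa_3(r)=O(d^{-2})\). The cubic term does not alter \(\mathcal C_\alpha\); it is needed only so that the remainder becomes quartic.

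The paper does exactly this. It computes \(\E r^3\) exactly from the coordinate cumulants \(\kappa_2,\kappa_4,\kappa_6\), which gives an explicit \(\E z^3\) of order \(d^{-2}\). It proves \(\E z^4=O(d^{-2})\) by applying the Poincar\'e inequality for the ball twice, viewing the ball as a projection of the sphere \(S^{d+1}\). This replaces your cumulant-scaling heuristic for the dependent coordinates, which would still need proof.

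The same defect affects your differentiated statement. Any expansion of \(f'(r)\,\partial_\gamma r\) or \(f''(r)(\partial_\gamma r)^2\) that stops at the quadratic term in \(z\) and bounds the rest absolutely leaves terms such as \(\E\bigl[|z|^3\,|r'|\bigr]\). Since \(r'\) is of order one, these are again of order \(d^{-3/2}\). What works is to differentiate the expected remainder \(\E[f(r)-T_3]\), where \(T_3\) is the cubic Taylor polynomial of \(f\) at \(\sigma^2(\gamma)\). Its first two derivatives are sums of terms of the form \(O(z^4)\), \(O(|z|^3)\,z'\), \(O(z^2)\,(z')^2\) and \(O(|z|^3)\,z''\). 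Here \(z'=r'-(\sigma^2)'\) and \(z''=r''-(\sigma^2)''\) are \(O(d^{-1/2})\) in \(L^4\), unlike \(r'\) itself, so H\"older gives \(O(d^{-2})\).

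Two smaller points. For \(1<\alpha<2\), the term \(f''(r)(r')^2\) behaves like \(r^{\alpha/2-1}\) near \(r=0\). Your Chebyshev bound on \(\Pr(r<1/2)\) must therefore be combined with a negative-moment estimate; the paper obtains one from \(\Pr(r\leq t\mid x_\gamma)\leq t^{d/2}\). Your complex-analytic alternative does not bypass the main gap, because a uniform \(O(d^{-2})\) bound on a complex neighbourhood still requires the same third-order cancellation.
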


\begin{proof}[Proof outline]
The proof has three steps: exact moment identities identify the order
\(d^{-1}\) term, a uniform estimate bounds the remainder by \(O(d^{-2})\), and
interior derivative estimates give the \(C^2\) conclusion.
Appendix~\ref{app:high-dimensional-proof} contains the complete calculation.
\end{proof}

\begin{corollary}
For every fixed \(\alpha>1\), the stationary objective has a unique global
minimizer \(\gamma_{d,\alpha}\) for all sufficiently large \(d\).  Moreover,
\begin{equation}
 \gamma_{d,\alpha}
 =\gamma_{\infty,\alpha}
 -\frac{(h_\alpha\mathcal C_\alpha)'(\gamma_{\infty,\alpha})}
 {d\Phi_{\infty,\alpha}''(\gamma_{\infty,\alpha})}
 +O_\alpha(d^{-2}),
 \label{eq:high-dimensional-location-rate}
\end{equation}
and
\begin{equation}
 \min_\gamma\Phi_{d,\alpha}(\gamma)
 =\Phi_{\infty,\alpha}(\gamma_{\infty,\alpha})
 +\frac{h_\alpha(\gamma_{\infty,\alpha})
 \mathcal C_\alpha(\gamma_{\infty,\alpha})}{d}
 +O_\alpha(d^{-2}).
 \label{eq:high-dimensional-minimum}
\end{equation}
\end{corollary}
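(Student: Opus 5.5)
The argument is a perturbation of a nondegenerate minimum in \(C^2\). Write \(\Phi_{d,\alpha}=h_\alpha M_{d,\alpha}\) and \(\Phi_{\infty,\alpha}=h_\alpha\sigma^\alpha\), and set
\[
 \Phi_{d,\alpha}=\Phi_{\infty,\alpha}+\frac{h_\alpha\mathcal C_\alpha}{d}+\frac{h_\alpha\rho_d}{d^2}.
\]
Theorem~\ref{thm:high-dimensional-first-order} gives \(\sup_{[0,1]}|\rho_d|=O_\alpha(1)\), and the \(C^2(I)\) version on any compact \(I\subset(0,1)\). Since \(h_\alpha\) is smooth on \((0,1)\), the \(C^2(I)\) estimate also holds for \(h_\alpha\rho_d\).

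\emph{Step 1: localization.} By the preceding theorem, every minimizer \(\gamma_{d,\alpha}\) converges to \(\gamma_{\infty,\alpha}\in(1/2,1)\). Fix \(I=[\gamma_{\infty,\alpha}-\eta,\gamma_{\infty,\alpha}+\eta]\subset(0,1)\). For large \(d\), all global minimizers lie in \(I\), by uniform convergence and strict separation of the unique limiting minimum.

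\emph{Step 2: nondegeneracy.} I need \(\Phi_{\infty,\alpha}''(\gamma_{\infty,\alpha})>0\). In the variable \(t=\gamma/(1-\gamma)\), the logarithmic derivative of the limiting objective is a positive factor times
\[
 k(t)=t^{\alpha-1}(2+3t)-(3+4t),
\]
up to a positive function \(\beta(t)\). Write \(k(t)=(2+3t)\bigl(t^{\alpha-1}-r(t)\bigr)\) with \(r(t)=(3+4t)/(2+3t)\), where \(t^{\alpha-1}\) is strictly increasing and \(r\) is strictly decreasing. This gives \(k'(t_\infty)=(2+3t_\infty)\bigl((\alpha-1)t_\infty^{\alpha-2}-r'(t_\infty)\bigr)>0\), with strict inequality because \(r'<0\). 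At the critical point the first derivative vanishes, so
\[
 \Phi_{\infty,\alpha}''=\Phi_{\infty,\alpha}\,\beta\,k'\,(dt/d\gamma)^2>0.
\]
This is the main point to verify carefully: writing \((\log\Phi)'=\beta k\), the derivative of \(\beta\) drops out only because \(k(t_\infty)=0\).

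\emph{Step 3: uniqueness and location.} Choose \(\eta\) so that \(\Phi_{\infty,\alpha}''\geq\kappa>0\) on \(I\). The \(C^2\) perturbation is \(O(d^{-1})\), so \(\Phi_{d,\alpha}''\geq\kappa/2\) on \(I\) for large \(d\). Hence \(\Phi_{d,\alpha}\) is strictly convex on \(I\) and has at most one critical point there, and every global minimizer is such a critical point. This proves uniqueness.

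For the location, expand
\[
 0=\Phi_{d,\alpha}'(\gamma_{d,\alpha})=\Phi_{\infty,\alpha}''(\gamma_{\infty,\alpha})(\gamma_{d,\alpha}-\gamma_{\infty,\alpha})+O\bigl((\gamma_{d,\alpha}-\gamma_{\infty,\alpha})^2\bigr)+\frac{(h_\alpha\mathcal C_\alpha)'(\gamma_{d,\alpha})}{d}+O(d^{-2}).
\]
This first gives \(\gamma_{d,\alpha}-\gamma_{\infty,\alpha}=O(d^{-1})\). Substituting back, and replacing \((h_\alpha\mathcal C_\alpha)'(\gamma_{d,\alpha})\) by its value at \(\gamma_{\infty,\alpha}\), yields Eq.~\eqref{eq:high-dimensional-location-rate}. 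That replacement uses a Lipschitz bound on \((h_\alpha\mathcal C_\alpha)'\) on \(I\), which holds because \(\mathcal C_\alpha\) is an explicit smooth function on \((0,1)\).

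\emph{Step 4: minimum value.} Evaluate \(\Phi_{d,\alpha}(\gamma_{d,\alpha})\). Taylor expansion of \(\Phi_{\infty,\alpha}\) at its critical point gives \(\Phi_{\infty,\alpha}(\gamma_{\infty,\alpha})+O(d^{-2})\). The correction term is \(h_\alpha\mathcal C_\alpha(\gamma_{\infty,\alpha})/d+O(d^{-2})\), and the remainder is \(O(d^{-2})\). Together these give Eq.~\eqref{eq:high-dimensional-minimum}.
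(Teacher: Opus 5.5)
Your proposal is correct and follows essentially the same route as the paper. Both localize minimizers by uniform convergence, get nondegeneracy from the positive derivative of \(t^{\alpha-1}(2+3t)-(3+4t)\) at the root (your \(k'(t_\infty)\) equals the paper's \((\alpha-1)(3+4t)/t+1/(2+3t)\)), obtain uniqueness from the curvature bound of the \(C^2(I)\) estimate, and then expand the first-order condition and substitute into the value. Your Step~2 also spells out why \(k'(t_\infty)>0\) forces \(\Phi_{\infty,\alpha}''(\gamma_{\infty,\alpha})>0\), a step the paper leaves implicit.
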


\begin{proof}
Uniform convergence places every global minimizer in a compact neighborhood
of \(\gamma_{\infty,\alpha}\).  At the root \(t>1\) of
Eq.\eqref{eq:high-dimensional-optimizer}, the derivative of the difference
between its two sides is
\[
 (\alpha-1)\frac{3+4t}{t}
 +\frac1{2+3t}>0.
\]
Thus \(\Phi_{\infty,\alpha}''(\gamma_{\infty,\alpha})>0\).
Eq.\eqref{eq:high-dimensional-differentiated} gives positive curvature on
that neighborhood for all sufficiently large \(d\), which proves
uniqueness.  Expanding the first-order condition yields
Eq.\eqref{eq:high-dimensional-location-rate}.  Substitution into the value
expansion gives Eq.\eqref{eq:high-dimensional-minimum}.
\end{proof}

\begin{proposition}
The limiting optimizer satisfies
\[
 (1-\gamma_{\infty,\alpha})^{\alpha-1}
 \longrightarrow\frac34
 \qquad\text{as }\alpha\to1^+,
\]
and
\[
 \gamma_{\infty,\alpha}
 =\frac12+\frac{\log(7/5)}{4(\alpha-1)}
 +O((\alpha-1)^{-2})
 \qquad\text{as }\alpha\to\infty.
\]
\end{proposition}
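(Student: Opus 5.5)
The plan is to work entirely with the characterization in Eq.\eqref{eq:high-dimensional-optimizer}. For \(\alpha>1\), the ratio \(t=\gamma_{\infty,\alpha}/(1-\gamma_{\infty,\alpha})\) is the unique root with \(t>1\) of
\[
 t^{\alpha-1}=R(t),
 \qquad
 R(t)=\frac{3+4t}{2+3t}.
\]
I first record the elementary properties of \(R\). Its derivative has the sign of \(4(2+3t)-3(3+4t)=-1\), so \(R\) is strictly decreasing on \([1,\infty)\). It satisfies \(R(1)=7/5\) and \(R(t)\to4/3\) as \(t\to\infty\). Hence
\[
 \frac43<R(t)\leq\frac75\qquad\text{for } t\geq1.
\]
Both limits then follow from taking logarithms, \((\alpha-1)\log t=\log R(t)\), with \(\log R(t)\) trapped in a fixed compact subinterval of \((0,\infty)\).

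\emph{Limit \(\alpha\to1^+\).} Put \(\varepsilon=\alpha-1\).
\begin{itemize}
\item From \(t^\varepsilon=R(t)>4/3\) I get \(t>(4/3)^{1/\varepsilon}\), so \(t\to\infty\).
\item Consequently \(R(t)\to4/3\), that is, \(t^\varepsilon\to4/3\).
\item Since \(1-\gamma_{\infty,\alpha}=1/(1+t)\), it remains to compare \((1+t)^\varepsilon\) with \(t^\varepsilon\). We have \((1+t)^\varepsilon=t^\varepsilon(1+1/t)^\varepsilon\), and \(1\leq(1+1/t)^\varepsilon\leq2^\varepsilon\to1\).
\item Therefore \((1-\gamma_{\infty,\alpha})^{\alpha-1}=(1+t)^{-\varepsilon}\to3/4\).
\end{itemize}

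\emph{Limit \(\alpha\to\infty\).} Write \(t=1+s\) with \(s>0\).
\begin{itemize}
\item The bounds on \(R\) give \(\log t\leq\log(7/5)/(\alpha-1)\), so \(s=O((\alpha-1)^{-1})\).
\item Since \(R\) is smooth at \(1\), \(\log R(1+s)=\log(7/5)+O(s)=\log(7/5)+O((\alpha-1)^{-1})\).
\item Dividing by \(\alpha-1\) gives \(\log t=\log(7/5)/(\alpha-1)+O((\alpha-1)^{-2})\).
\item Exponentiating, \(s=\log(7/5)/(\alpha-1)+O((\alpha-1)^{-2})\).
\item Finally, \(\gamma=t/(1+t)=\phi(1+s)\) with \(\phi(t)=t/(1+t)\), \(\phi(1)=1/2\) and \(\phi'(1)=1/4\). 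Hence \(\gamma_{\infty,\alpha}=1/2+s/4+O(s^2)\), which is the stated expansion.
\end{itemize}

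The main point needing care is the bootstrap in the large-\(\alpha\) limit. One must first obtain the crude bound \(s=O(1/\alpha)\) from the uniform bounds on \(R\), and only then linearize \(\log R\) at \(t=1\). Apart from this step the argument is routine.
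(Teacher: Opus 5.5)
Your proof is correct and follows the same route as the paper. You rewrite the optimality condition as \(t^{\alpha-1}=(3+4t)/(2+3t)\), let \(t\to\infty\) as \(\alpha\to1^+\), and linearize at \(t=1\) as \(\alpha\to\infty\), while also spelling out two steps the paper leaves implicit: the comparison of \((1+t)^{\alpha-1}\) with \(t^{\alpha-1}\), and the a priori bound \(s=O((\alpha-1)^{-1})\) from \(R(t)\leq 7/5\) that justifies expanding at \(t=1\).
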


\begin{proof}
Let \(\varepsilon=\alpha-1\).  Eq.\eqref{eq:high-dimensional-optimizer}
is equivalent to
\[
 t^\varepsilon=\frac{3+4t}{2+3t}.
\]
As \(\varepsilon\to0^+\), this equation forces \(t\to\infty\)
and hence \(t^\varepsilon\to4/3\).  Since
\(1-\gamma_{\infty,\alpha}=1/(1+t)\), the first limit follows.
As \(\varepsilon\to\infty\), put \(t=1+u\).  Taking logarithms and
expanding at \(u=0\) gives
\[
 \varepsilon u=\log(7/5)+O(\varepsilon^{-1}).
\]
Substitution into \(\gamma=t/(1+t)\) gives the stated expansion.
\end{proof}

\begin{remark}
All high-dimensional estimates keep \(\alpha\) fixed.  The differentiated
estimate is uniform on compact intervals contained in \((0,1)\).
\end{remark}

\section{Numerical evaluation and parameter selection}
\label{sec:numerical-scaling}

This section checks the asymptotic results of Sections~\ref{sec:finite-regimes}
and~\ref{sec:high-dimensional}.  The first subsection follows the transition
through power one and verifies the local shape of the finite objective.  The
second tests the first high-dimensional correction and its differentiated
remainder.

\subsection{Finite-size transitions near power one}

Only three diagnostic quantities are needed: the scaled parameter \(r\), the
size-pair exponent \(\widehat\beta_N\), and the regime-specific location scale
\(S_N\).

For \(d=2\), put \(r=\lambda/\lambda^\star(2)\).  We minimized the finite
objective along
\[
 \alpha_N=1+\frac{r\lambda^\star(2)}{\log N}
\]
at 28 values of \(r\) in \([0,2]\), including a grid of spacing \(0.025\) on
\([0.9,1.1]\), and at \(N=2^{10},2^{12},\ldots,2^{32}\).  The size-pair
exponent is
\[
 \widehat\beta_N(r)
 =-\frac{\log(1-\gamma_{4N})-\log(1-\gamma_N)}{\log4}.
\]
Here both minimizers lie on the joint path above, at their respective values of
\(N\).  The location scale is
\begin{equation}
 S_N(r)=
 \begin{cases}
  \displaystyle
  \sqrt{\frac{H_{2,1}}{2a_2(r\lambda^\star(2))}}\,N^{-1/2},&0\leq r<1,\\
  \Delta_N,&r=1,\\
  \delta_N,&1<r\leq2.
 \end{cases}
 \label{eq:finite-diagnostic-reference-scale}
\end{equation}

Figure~\ref{fig:finite-size-diagnostics} places the leading-scale transition
and the correction crossover in the same coordinates.  The regime-specific
normalization exposes the nonuniform convergence near \(r=1\), while the
signed comparison in Figure~\ref{fig:finite-size-diagnostics}(c) shows which
explicit correction controls the refined supercritical location on either
side of \(r=3/2\).
The value \(r=3/2\) is an asymptotic crossover.  Higher-order terms displace
the finite-size balance points and the intersections of the effective-exponent
curves slightly to its right.

The objective was evaluated at three nested radial, angular, and state
resolutions.  All \(336\) parameter pairs passed the promotion tests.  The
largest relative optimizer change between the two finest resolutions was
\(9.20\times10^{-4}\), and the corresponding maximum relative objective
change was \(1.74\times10^{-8}\).  Recomputing the displayed effective
exponents with the two finest resolutions changed them by at most
\(4.85\times10^{-4}\).  For \(N=2^{10}\) the complete finite sum was used.  At
larger \(N\), a stationary decomposition and its Poisson equation evaluate the
transient sum, with an exponentially small terminal term on the displayed
scales.  At \(N=4096\), direct summation and the accelerated evaluation were
compared at \(r=0\) and \(1\).  Both gave the same minimizer at the stored
precision, and the largest relative difference between their objective values
was \(3.78\times10^{-14}\).

\begin{figure}[!t]
 \centering
 \includegraphics[width=\textwidth]{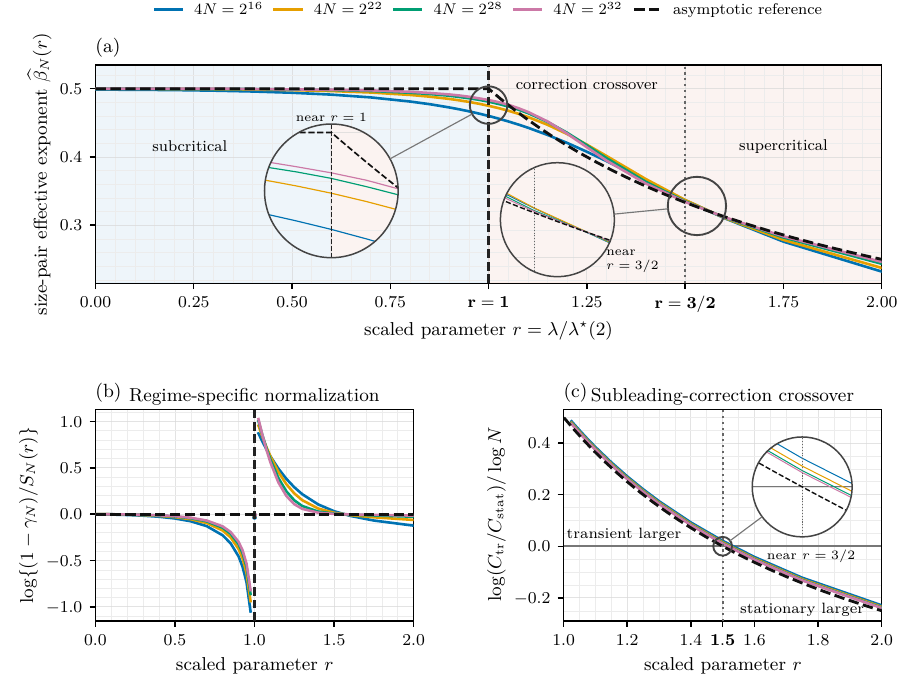}
 \caption{Finite-size approach to the scaling transitions in \(d=2\).
 (a) compares the size-pair exponent with its limit, \(1/2\) for
 \(r\leq1\) and \(1/(2r)\) for \(r>1\).  The first circular lens enlarges the
 marked region near \(r=1\).  The dotted line at \(r=3/2\) marks the
 asymptotic correction crossover, and the second lens shows the finite-size
 curve crossings slightly to its right.  Because they compare pairs of finite
 sizes, their crossings need not occur exactly at \(r=3/2\); see
 Remark~\ref{rem:finite-correction-balance}.  (b) uses the reference scale
 \(S_N(r)\) in
 Eq.\eqref{eq:finite-diagnostic-reference-scale}; convergence to the zero line
 means \((1-\gamma_N)/S_N(r)\to1\).  Its three disconnected pieces correspond
 to the three normalizations, with isolated markers at \(r=1\).  (c)
 compares the transient and stationary corrections through
 \(\log(C_{\rm tr}/C_{\rm stat})/\log N\), where
 \(C_{\rm tr}=1/(N\varepsilon_N\delta_N^2)\) and
 \(C_{\rm stat}=\delta_N/\varepsilon_N\).  The dashed limiting curve vanishes
 at \(r=3/2\); the zeros of the colored finite-size curves approach this value
 and are enlarged in the circular lens.}
 \label{fig:finite-size-diagnostics}
\end{figure}

The location asymptotics are complemented by the local shape of the finite
objective.  Theorem~\ref{thm:moving-differentiated-profile} states that
\(\mathcal P_N-A_N-B_N\) converges to zero with its first two derivatives on
every fixed interval containing \(z=1\).

Figure~\ref{fig:local-objective-geometry} compares the finite objective and its
first two derivatives with the stationary and transient profiles in this
result.

For Figure~\ref{fig:local-objective-geometry}, each profile was evaluated on
129 Chebyshev--Lobatto points and audited independently on 65 points at a
second spatial resolution.  A degree-18 Chebyshev projection was used only to
differentiate the computed profile.  Across the nine displayed parameter-size
pairs, the maximum differences between the two resolutions were
\(4.01\times10^{-4}\), \(1.01\times10^{-3}\), and \(4.50\times10^{-3}\) for
the profile, slope, and curvature, respectively.

\begin{figure}[H]
 \centering
 \includegraphics[width=\textwidth]{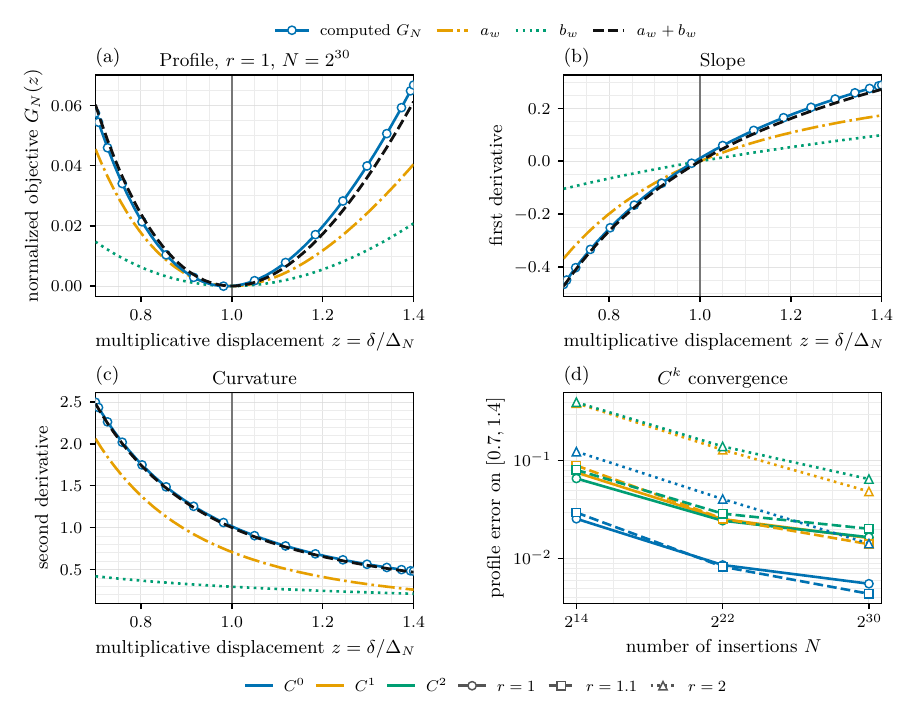}
 \caption{Numerical comparison of the differentiated moving-window profile in
 \(d=2\).  (a)--(c) compare \(\mathcal P_N\), its slope, and its
 curvature with the stationary and transient profiles \(A_N\), \(B_N\), and
 their sum at \(r=1\) and \(N=2^{30}\).  (d) gives the corresponding
 \(C^0\), \(C^1\), and \(C^2\) errors on \(0.7\leq z\leq1.4\) for
 \(r=1,1.1,2\).  Their decrease with \(N\) tests the objective, slope, and
 curvature conclusions of
 Theorem~\ref{thm:moving-differentiated-profile}.}
 \label{fig:local-objective-geometry}
\end{figure}

The constants in the finite-size results can be evaluated directly from the
geometric integral in Eq.\eqref{eq:H-geometric-integral}.  Their numerical
magnitudes are collected in Table~\ref{tab:finite-regime-constants}.  Part~(a)
also records the two thresholds that separate the joint finite regimes and the
coefficient in the critical location.  Part~(b) gives the location and cost
coefficients in Theorem~\ref{thm:sublinear-boundary}.

\begin{table}[H]
\centering
\small
\caption{Numerical constants needed to use the finite-size laws. In part~(a), $\lambda^\star(d)$ and $3\lambda^\star(d)/2$ locate the leading transition and the correction crossover, while the last column gives the critical optimizer scale. In part~(b), the last two columns quantify the optimizer location and the cost correction for $0<\alpha<1$. The displayed digits come from the highest quadrature order; the reproducibility files retain the resolution differences.}
\label{tab:finite-regime-constants}
\textbf{(a) Dimension-dependent constants at $\alpha=1$.}\par\smallskip
\begin{tabular}{rcccc}
\toprule
$d$ & $H_{d,1}$ & $\lambda^\star(d)$ & $3\lambda^\star(d)/2$ & critical location coefficient \\
\midrule
1 & 0.166667 & 0.364643 & 0.546965 & 1.047360 \\
2 & 0.244533 & 0.415279 & 0.622918 & 1.042637 \\
3 & 0.289286 & 0.446287 & 0.669431 & 1.039395 \\
5 & 0.338504 & 0.482324 & 0.723486 & 1.035311 \\
10 & 0.387483 & 0.520566 & 0.780849 & 1.030634 \\
20 & 0.417395 & 0.545288 & 0.817933 & 1.027446 \\
50 & 0.437521 & 0.562585 & 0.843877 & 1.025150 \\
100 & 0.444638 & 0.568839 & 0.853258 & 1.024307 \\
\bottomrule
\end{tabular}
\end{table}

\begin{table}[H]
\centering
\small
\ContinuedFloat
\textbf{(b) Constants for $0<\alpha<1$.}\par\smallskip
\begin{tabular}{rrccc}
\toprule
$d$ & $\alpha$ & $H_{d,\alpha}$ & location coefficient & cost-correction coefficient \\
\midrule
1 & 0.25 & 0.043372 & 0.169090 & 0.641253 \\
1 & 0.50 & 0.085208 & 0.253735 & 0.503721 \\
1 & 0.75 & 0.126132 & 0.334535 & 0.439877 \\
\cmidrule(lr){1-5}
2 & 0.25 & 0.061863 & 0.206485 & 0.748996 \\
2 & 0.50 & 0.122867 & 0.286790 & 0.642633 \\
2 & 0.75 & 0.183585 & 0.361195 & 0.592982 \\
\cmidrule(lr){1-5}
5 & 0.25 & 0.082579 & 0.246189 & 0.838568 \\
5 & 0.50 & 0.166256 & 0.322192 & 0.774026 \\
5 & 0.75 & 0.251431 & 0.390340 & 0.751489 \\
\cmidrule(lr){1-5}
10 & 0.25 & 0.092666 & 0.264812 & 0.874824 \\
10 & 0.50 & 0.187916 & 0.338924 & 0.831674 \\
10 & 0.75 & 0.286075 & 0.404335 & 0.825440 \\
\bottomrule
\end{tabular}
\end{table}

\subsection{The high-dimensional expansion}

The final computation checks the two-term expansion in
Theorem~\ref{thm:high-dimensional-first-order}.  In
Figure~\ref{fig:high-dimensional-first-order}(a) and (b), the scaled
differences for \(\alpha=2\) and \(4\) approach the explicit coefficient
\(\mathcal C_\alpha(\gamma)\) as the dimension increases.  Part~(c) subtracts
this first correction and measures the \(C^2([0.5,0.9])\) norm of the
remainder.  Its alignment with the \(d^{-2}\) reference tests the
differentiated remainder estimate.  All quantities are evaluated from exact
moment formulas and therefore contain no sampling error.

\begin{figure}[H]
 \centering
 \includegraphics[width=\textwidth]{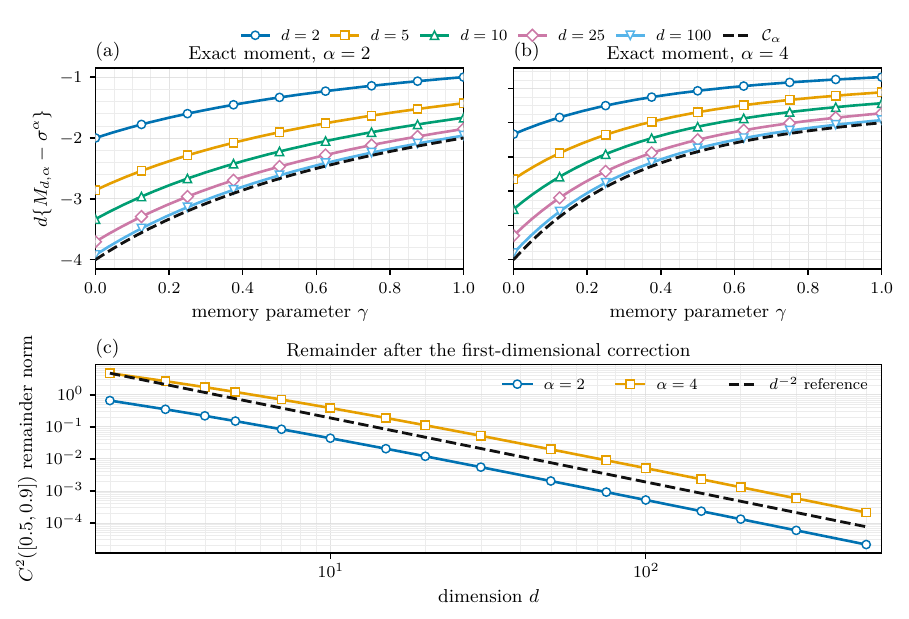}
 \caption{First-order high-dimensional expansion.  (a) and (b) show
 \(d[M_{d,\alpha}(\gamma)-\sigma(\gamma)^\alpha]\) for \(\alpha=2\) and
 \(4\), together with the explicit coefficient
 \(\mathcal C_\alpha(\gamma)\).  (c) shows the differentiated remainder after
 the two terms in Eq.\eqref{eq:high-dimensional-first-order}; the dashed
 reference has order \(d^{-2}\).}
 \label{fig:high-dimensional-first-order}
\end{figure}

\subsection{Implications for parameter selection}

The preceding results provide a parameter-selection map for the
fixed-parameter \(\gamma\)-strategy.  The finite phase diagram identifies
when a long memory compensates for its initialization cost.  The adversarial
theorem gives an exact robust choice through power three, and the high-power
theorem determines the scale required when the longest realized edges
dominate.  The numerical evaluations make the transition and correction
scales directly available for the dimensions reported in
Table~\ref{tab:finite-regime-constants}.

These conclusions apply directly to the edge-power cost of the labelled tree.
Models of a complete system can use this solution as a geometric component
and add activation, capacity, hop, delay, or state-movement charges.  The
recent online and dynamic range-assignment literature
\cite{deBergMarkovicUmboh2023,deBergSadhukhanSpieksma2024} and the
bounded-recourse Steiner literature \cite{GuGuptaKumar2016} supply natural
benchmarks for such extensions.

The numerical data supporting the figures and tables are available at
\url{https://github.com/tashimir/optimal-exponential-memory-edge-power-data}.

\section{Conclusions}

The analysis gives a complete phase diagram for the fixed-parameter
\(\gamma\)-strategy across the uniform, worst-case, finite-size, and
high-dimensional regimes considered here.

Eq.\eqref{eq:reduction} separates attachment quality, measured by the
insertion cost, from the two-edge factor generated by subdivision at each
labelled auxiliary vertex.  This factor produces a stationary transition at
power one and creates
interior minimizers for every power greater than one.  Exact low powers show
that the best stationary and adversarial parameters need not agree.  At high
powers, periodic blocks provide stronger lower-bound certificates than
alternation in part of the parameter interval.  These block lower bounds,
combined with the separation argument at \(\gamma=1/2\), determine the optimized
high-power scale and its leading constant.  The finite-size constant for
\(0<\alpha<1\) is expressed by the positive geometric integral in
Eq.\eqref{eq:H-geometric-integral}.  The joint window near power one has an
explicit subcritical profile and threshold.  At the threshold, the optimizer
is governed by a Lambert profile and has scale
\(\sqrt{\log N/(N\log\log N)}\).  Above the threshold, it approaches the
exact stationary root \(q_{d,\alpha_N}^{1/(\alpha_N-1)}\), and its minimum is
known through the stationary quadratic and transient corrections.  The
differentiated moving profile also proves eventual uniqueness of the finite
optimizer in the critical and supercritical windows.  In high dimension,
differentiated convergence gives eventual uniqueness and the first displacement
of the stationary optimizer.

Several directions remain natural for further study.  They include broader
classes of geometric inputs, state-dependent or time-varying rules, and
models that assign explicit costs to retained state, movement, capacity, or
delay.  On the analytic side, the adversarial problem beyond the cubic range,
the global structure of the stationary minimizer, and limits in which the
dimension, the power, and the number of processed points vary together deserve
further investigation.  These extensions may clarify how widely the
constant-gain rule preserves its balance between distribution-sensitive
performance and controlled behavior under arbitrary input sequences.

\appendix
\section{Foundational insertion-cost inputs}
\label{app:foundational-insertion-inputs}

This section proves the three insertion-cost facts on which the optimization
arguments rely: ordering of the stationary moment, the radial upper envelope,
and the exact adversarial value through power three.

\begin{proof}[Proof of the order and endpoint assertions in
Lemma~\ref{lem:stationary-moment-order}]
A symmetric random vector \(y\) is said to be more peaked than a symmetric
random vector \(z\) when
\[
 \Pr(y\in K)\geq\Pr(z\in K)
\]
for every compact convex set \(K\) symmetric about the origin.  A theorem of
Olkin and Tong~\cite{OlkinTong1988} states that if independent random vectors
have a common symmetric log-concave density and one nonnegative coefficient
vector majorizes another with the same sum, then the sum with the more balanced
coefficient vector is more peaked.

For \(n\geq0\), consider the normalized coefficients
\[
 w_j^{(n)}(\gamma)
 =\frac{(1-\gamma)\gamma^j}{1-\gamma^{n+1}},
 \qquad 0\leq j\leq n.
\]
Their first \(m+1\) entries have sum
\[
 \frac{1-\gamma^{m+1}}{1-\gamma^{n+1}}
 =\frac{\sum_{j=0}^m\gamma^j}{\sum_{j=0}^n\gamma^j},
 \qquad 0\leq m<n.
\]
For \(0<\gamma<1\), its derivative has numerator
\[
 \sum_{i=0}^m\sum_{j=m+1}^n(i-j)\gamma^{i+j-1}<0.
\]
Consequently, if \(0\leq\gamma_1<\gamma_2<1\), then
\(w^{(n)}(\gamma_1)\) majorizes \(w^{(n)}(\gamma_2)\).  The uniform density on
\(\B\) is symmetric and log-concave in the extended-value sense, so the
normalized sum for \(\gamma_2\) is more peaked than the one for
\(\gamma_1\).

Let
\[
 y_{k,n}=\sum_{j=0}^n w_j^{(n)}(\gamma_k)p_j,
 \qquad
 \lambda_{k,n}=1-\gamma_k^{n+1},
 \qquad k\in\{1,2\}.
\]
Since \(\lambda_{2,n}<\lambda_{1,n}\), one has
\(K/\lambda_{1,n}\subset K/\lambda_{2,n}\) for every such set \(K\).
Peakedness and this inclusion give
\[
 \Pr(\lambda_{2,n}y_{2,n}\in K)
 \geq\Pr(\lambda_{1,n}y_{1,n}\in K).
\]
The truncated sums converge almost surely to the stationary states.  Their
laws, including the limiting laws, are absolutely continuous because each sum
contains a nonzero multiple of a uniform point.  Boundaries of the relevant
convex sets are therefore continuity sets, and passage to the limit shows that
\(x_{\gamma_2}\) is more peaked than \(x_{\gamma_1}\).

For \(t\geq0\), let
\[
 H_t(z)=\Pr(\norm{p-z}\leq t)
 =\frac{\operatorname{vol}(\B\cap B(z,t))}
 {\operatorname{vol}(\B)}.
\]
Slicing both balls by lines parallel to \(z\) shows that \(H_t(z)\) depends
only on \(\norm z\) and is nonincreasing in that radius.  Its superlevel sets
are centered balls.  Layer-cake integration and the peakedness comparison give
\[
 \Pr(\norm{p-x_{\gamma_2}}\leq t)
 \geq
 \Pr(\norm{p-x_{\gamma_1}}\leq t).
\]
Integration of the tail probabilities proves that
\(M_{d,\alpha}\) is nonincreasing for every \(\alpha>0\).

The same slicing argument is strict on a nonempty interval of \(t\) whenever
\(z\ne0\).  Integrating the strict tail comparison yields
\[
 \E\norm{p-z}^\alpha>\E\norm p^\alpha=c_{d,\alpha}.
\]
The stationary state is nonzero almost surely, so conditioning on it gives
\(M_{d,\alpha}(\gamma)>c_{d,\alpha}\) for every \(\gamma<1\).  Finally,
Eq.\eqref{eq:state-second-moment} implies \(x_\gamma\to0\) in \(L^2\) as
\(\gamma\to1^-\).  Bounded convergence gives the asserted endpoint limit.
\end{proof}

\begin{proof}[Proof of Lemma~\ref{lem:radial-upper-envelope}]
Fix a unit vector \(e\), and let \(u\) be uniform on the unit sphere in
\(\R^{d+2}\).  The divergence theorem on the unit ball, followed by integration
by parts in the first-coordinate density on the sphere, gives, for
\(0<r<1\),
\begin{equation}
 g_{d,\alpha}'(r)
 =\alpha r\,\E\norm{u-re}^{\alpha-2}.
 \label{eq:appendix-radial-derivative}
\end{equation}
For \(d=1\), the same identity follows directly from
\[
 g_{1,\alpha}'(r)
 =\frac{(1+r)^\alpha-(1-r)^\alpha}{2},
\]
because the first coordinate of a uniform point on the sphere in \(\R^3\) is
uniform on \([-1,1]\).

Let \(m(r)=\E\norm{u-re}^{\alpha-2}\).  The corresponding radial potential in
\(\R^{d+2}\) satisfies
\[
 \bigl(r^{d+1}m'(r)\bigr)'
 =r^{d+1}(\alpha-2)(d+\alpha-2)
 \E\norm{u-re}^{\alpha-4}.
\]
Since \(m'(0)=0\), the function \(m\) is increasing when \(\alpha>2\).
Integrating Eq.\eqref{eq:appendix-radial-derivative} and substituting \(s=rt\)
give
\[
 \frac{g_{d,\alpha}(r)-c_{d,\alpha}}{r^2}
 =\alpha\int_0^1t\,m(rt)\,dt.
\]
This quotient is increasing, and hence
\[
 g_{d,\alpha}(r)-c_{d,\alpha}
 \leq\bigl(g_{d,\alpha}(1)-c_{d,\alpha}\bigr)r^2.
\]

It remains to bound the endpoint.  The change of variables
\(q=(p-e)/2\) maps \(\B\) onto the ball
\(D=B(-e/2,1/2)\).  In polar coordinates about the origin, the radial extent
of \(D\) in direction \(\omega\) is
\(( -\omega\mathbin{\cdot}e)_+\).  Therefore
\[
 \E_D\norm q^\alpha
 =\frac{d}{d+\alpha}
 \frac{\int ( -\omega\mathbin{\cdot}e)_+^{d+\alpha}\,d\omega}
 {\int ( -\omega\mathbin{\cdot}e)_+^d\,d\omega}
 \leq\frac{d}{d+\alpha}=c_{d,\alpha}.
\]
Since \(g_{d,\alpha}(1)=2^\alpha\E_D\norm q^\alpha\), the endpoint estimate
follows.
\end{proof}

\begin{proof}[Proof of Theorem~\ref{thm:exact-adversarial-insertion}]
Put
\[
 a_\gamma=\frac2{1+\gamma}.
\]
For a unit vector \(u\), take the alternating input
\(p_i=(-1)^{i-1}u\).  On its attracting two-periodic orbit, the state before
the insertion of \(u\) is
\(-((1-\gamma)/(1+\gamma))u\), and every insertion length equals
\(a_\gamma\).  Lemma~\ref{lem:initial-state-independence} transfers the same
asymptotic value to every initial state.  Hence
\[
 \mathcal A_\alpha(\gamma)\geq a_\gamma^\alpha.
\]

For the matching upper bound through power two, let
\(y=\gamma x+(1-\gamma)p\).  Direct expansion and maximization over
\(x,p\in\B\) give
\begin{equation}
 \norm{p-x}^2
 +\frac2{1-\gamma^2}\bigl(\norm y^2-\norm x^2\bigr)
 \leq a_\gamma^2.
 \label{eq:appendix-quadratic-potential}
\end{equation}
Indeed, with \(r=\norm x\), \(s=\norm p\), and
\(c=x\mathbin{\cdot}p\), the left side is
\[
 -r^2+\frac{3-\gamma}{1+\gamma}s^2
 -\frac{2(1-\gamma)}{1+\gamma}c.
\]
It is maximized by \(c=-rs\), then by \(s=1\), and finally by
\(r=(1-\gamma)/(1+\gamma)\), where its value is \(a_\gamma^2\).
For \(0<\alpha\leq2\), concavity of \(z^{\alpha/2}\) gives
\[
 z^{\alpha/2}\leq a_\gamma^\alpha
 +\frac\alpha2a_\gamma^{\alpha-2}(z-a_\gamma^2).
\]
Applying this inequality to \(z=\norm{p-x}^2\), using
Eq.\eqref{eq:appendix-quadratic-potential}, and summing over the input sequence
leaves a bounded telescoping term.  Division by \(n\) proves
\(\mathcal A_\alpha(\gamma)\leq a_\gamma^\alpha\) for
\(0<\alpha\leq2\).

For power three, define
\[
 V_\gamma(x)
 =\frac{11+2\gamma-\gamma^2}
 {2(1-\gamma)(1+\gamma)^2}\norm x^2
 +\frac1{4(1-\gamma)}\norm x^4.
\]
We claim that
\begin{equation}
 \norm{p-x}^3+V_\gamma(y)-V_\gamma(x)
 \leq a_\gamma^3.
 \label{eq:appendix-cubic-potential}
\end{equation}
For fixed \(x\), the left side before subtracting \(V_\gamma(x)\) is convex
in \(p\), so its maximum over \(\B\) is attained on the unit sphere.  With
\(r=\norm x\) and \(t=(x/r)\mathbin{\cdot}p\), both the distance term and the
quartic potential term are convex functions of \(t\).  The maximum is therefore
attained at \(t=-1\) or \(t=1\).  Identifying the corresponding diameter with
\([-1,1]\), both cases reduce to \(p=1\) and \(x\in[-1,1]\).  Expansion gives
\begin{align}
 &a_\gamma^3+V_\gamma(x)
 -V_\gamma(\gamma x+1-\gamma)-(1-x)^3\notag\\
 &\quad=
 \frac{\bigl((1+\gamma)x+1-\gamma\bigr)^2}
 {4(1+\gamma)^3}P_\gamma(x),
 \label{eq:appendix-cubic-factorization}
\end{align}
where
\begin{align*}
 P_\gamma(x)={}&
 (\gamma^4+2\gamma^3+2\gamma^2+2\gamma+1)x^2\\
 &+(-2\gamma^4-4\gamma^3+4\gamma+2)x\\
 &+\gamma^4+2\gamma^3-2\gamma^2-6\gamma+5.
\end{align*}
The leading coefficient of \(P_\gamma\) is positive and its discriminant is
\(-16(1-\gamma)^2(1+\gamma)^2<0\).  Thus
Eq.\eqref{eq:appendix-cubic-factorization} is nonnegative, proving
Eq.\eqref{eq:appendix-cubic-potential}.  Summation telescopes the bounded
potential and proves the upper bound at \(\alpha=3\).

Finally, let \(2<\alpha<3\), and choose \(\theta\in(0,1)\) with
\[
 \frac1\alpha=\frac\theta2+\frac{1-\theta}{3}.
\]
H\"older's inequality for the insertion lengths \(\ell_i\) gives
\[
 \left(\frac1n\sum_{i=1}^n\ell_i^\alpha\right)^{1/\alpha}
 \leq
 \left(\frac1n\sum_{i=1}^n\ell_i^2\right)^{\theta/2}
 \left(\frac1n\sum_{i=1}^n\ell_i^3\right)^{(1-\theta)/3}.
\]
The quadratic and cubic potential inequalities bound the two factors by
\(a_\gamma+o(1)\), uniformly over the input sequence.  Hence the left side has
limit superior at most \(a_\gamma\).  The alternating input supplies equality,
which completes the proof.
\end{proof}

\section{Bernstein certificate for the fourth-power objective}
\label{app:fourth-convexity-certificate}

In the Bernstein representation used in the proof of
Proposition~\ref{prop:fourth-power}, the coefficient vector is
\begin{align*}
(&17(d+3),\,3(53d+157)/11,\,2(359d+1071)/55,\\
&4(509d+1546)/165,\,31(43d+133)/110,
 (5701d+17705)/462,\\
&2(1512d+4559)/231,\,(809d+2227)/55,
 4(747d+1690)/165,\\
&4(353d+548)/55,\,48(10d+7)/11,\,8(11d-4)).
\end{align*}
All twelve coefficients are strictly positive for \(d\geq1\), which provides
the convexity certificate invoked in the main text.

\section{Proof for the high-power adversarial limit}
\label{app:high-power-proof}

The proof has two parts.  The separation of almost maximal insertion lengths
gives the upper bound at \(\gamma=1/2\).  For the matching lower bound, an
\(m\)-block input controls parameters near \(1/2\), while antipodal
alternation controls the remaining parameters.  The points \(\gamma_0\) and
\(\gamma_1\) below mark where these two lower certificates are exchanged.

\begin{proof}[Proof of Theorem~\ref{thm:high-power-adversarial}]
\smallskip
\noindent\emph{Upper bound.}
Take \(\varepsilon=2\log\alpha/\alpha\) in
Eq.\eqref{eq:high-power-upper}.  Then
\[
 (1-\varepsilon)^\alpha\leq\alpha^{-2}
\]
and
\[
 g_\varepsilon
 =\frac{\log\alpha}{\log2}+O(\log\log\alpha).
\]
It follows that
\begin{equation}
 \limsup_{\alpha\to\infty}
 (\log\alpha)v_\alpha
 \leq2\log2.
 \label{eq:high-power-limsup}
\end{equation}

\smallskip
\noindent\emph{Choice of the block length.}
For the uniform lower bound, choose
\begin{equation}
 m=\left\lceil\log_2(\alpha\log\alpha)\right\rceil.
 \label{eq:high-power-block-length}
\end{equation}
Thus \(\alpha2^{-m}\leq1/\log\alpha\) and
\begin{equation}
 m=\frac{\log\alpha+\log\log\alpha}{\log2}+O(1).
 \label{eq:high-power-block-asymptotic}
\end{equation}
We prove
\begin{equation}
 \inf_{0\leq\gamma<1}
 \max\left\{\frac{s_m(\gamma)}m,
 f_\alpha(\gamma)\right\}
 \geq\frac{2-o(1)}m.
 \label{eq:combined-periodic-lower}
\end{equation}

\smallskip
\noindent\emph{The region \(0\leq\gamma\leq\gamma_0\).}
\[
 \gamma_0=\frac{1+(\log2)/\alpha}{2}.
\]
For \(0\leq\gamma\leq1/2\), convexity gives
\[
 (2\gamma)^\alpha+[2(1-\gamma)]^\alpha\geq2,
\]
while
\[
 (1+\gamma^m)^\alpha
 \leq(1+2^{-m})^\alpha=1+o(1).
\]
The numerator bound also holds on \([1/2,\gamma_0]\).  On this interval,
\[
 \alpha\gamma^m
 \leq\alpha2^{-m}(1+(\log2)/\alpha)^m=o(1),
\]
and hence \(s_m(\gamma)\geq2-o(1)\).

\smallskip
\noindent\emph{The intermediate region \(\gamma_0\leq\gamma\leq\gamma_1\).}
It remains to consider \(\gamma\geq\gamma_0\).  Define
\[
 \gamma_1=
 \frac{(2/m)^{1/\alpha}}{2-(2/m)^{1/\alpha}}.
\]
For all sufficiently large \(\alpha\), one has
\(\gamma_0<\gamma_1<1\).  On \([\gamma_0,\gamma_1]\), let
\[
 r_m(\gamma)=\frac{2\gamma}{1+\gamma^m}.
\]
The first term in the numerator of
Eq.\eqref{eq:block-energy-lower} gives
\[
 s_m(\gamma)\geq r_m(\gamma)^\alpha.
\]
Moreover,
\[
 \frac{r_m'(\gamma)}{r_m(\gamma)}
 =\frac{1-(m-1)\gamma^m}{\gamma(1+\gamma^m)}.
\]
Thus \(r_m\) first increases and then decreases, so its minimum on
\([\gamma_0,\gamma_1]\) is attained at an endpoint.  At the first endpoint,
\[
 r_m(\gamma_0)^\alpha
 =\exp\left(
 \alpha\log(1+(\log2)/\alpha)
 -\alpha\log(1+\gamma_0^m)
 \right)
 =2+o(1).
\]
At the second endpoint,
\[
 \gamma_1
 =1-\frac{2\log(m/2)}{\alpha}
 +O\left(\frac{\log^2(m/2)}{\alpha^2}\right),
\]
and \(m\log(m/2)/\alpha\to0\).  Therefore
\[
 \alpha\log r_m(\gamma_1)
 =(m-2)\log(m/2)+o(1)\longrightarrow\infty.
\]
This proves \(s_m(\gamma)\geq2-o(1)\) throughout
\([\gamma_0,\gamma_1]\).

\smallskip
\noindent\emph{The alternating-input region \(\gamma\geq\gamma_1\).}
Finally, for \(\gamma\geq\gamma_1\), the increasing function
\(2\gamma/(1+\gamma)\) and Eq.\eqref{eq:alternation-energy-lower} give
\[
 f_\alpha(\gamma)
 \geq\left(\frac{2\gamma}{1+\gamma}\right)^\alpha
 \geq\left(\frac{2\gamma_1}{1+\gamma_1}\right)^\alpha
 =\frac2m.
\]
\smallskip
\noindent\emph{Matching the asymptotic constant.}
This proves Eq.\eqref{eq:combined-periodic-lower}.
Eqs.\eqref{eq:block-energy-lower} and
\eqref{eq:high-power-block-asymptotic} imply
\[
 \liminf_{\alpha\to\infty}
 (\log\alpha)v_\alpha
 \geq2\log2.
\]
Combining this inequality with Eq.\eqref{eq:high-power-limsup} proves
Eq.\eqref{eq:exact-high-power-constant}.
\end{proof}

\section{Proofs for fixed-power finite-size regimes}
\label{app:finite-fixed-proofs}

The first proof treats \(0<\alpha<1\) by separating the stationary boundary
scale, the transient initialization cost, and global localization.  The second
uses the decomposition \(F_N=N\Phi+\Psi+o(1)\): the leading term selects
\(\mathcal M\), and the order-one term selects \(\mathcal S\).

\begin{proof}[Proof of Theorem~\ref{thm:sublinear-boundary}]
\smallskip
\noindent\emph{A coercive envelope.}
Let \(\delta=1-\gamma\) and \(\beta=1/(\alpha+1)\).
We first derive the objective uniformly on the scale
\(\delta=yN^{-\beta}\).  The function
\((g_{d,\alpha}(r)-c_{d,\alpha})/r^2\) extends to a strictly positive
continuous function on \([0,1]\).  Hence there are constants \(m,M>0\)
such that
\begin{equation}
 mr^2\leq g_{d,\alpha}(r)-c_{d,\alpha}\leq Mr^2,
 \qquad 0\leq r\leq1.
 \label{eq:sublinear-quadratic-envelope}
\end{equation}
Writing
\[
 x_t=(1-\delta)^tp_0+\eta_t
\]
and summing the exact second moments gives
\begin{align}
 \sum_{t=0}^{N-1}\E\norm{x_t}^2
 &=s_d\left[
 N\frac{\delta}{2-\delta}
 +\frac{2(1-\delta)}{2-\delta}
 \sum_{t=0}^{N-1}(1-\delta)^{2t}
 \right].
 \label{eq:sublinear-state-sum}
\end{align}
If
\[
 \mathcal E_N^{\rm sub}(\delta)
 =\sum_{t=0}^{N-1}
 \E\bigl[g_{d,\alpha}(\norm{x_t})-c_{d,\alpha}\bigr],
\]
then Eqs.\eqref{eq:sublinear-quadratic-envelope} and
\eqref{eq:sublinear-state-sum} imply
\begin{equation}
 c\left(N\delta+\min\{N,\delta^{-1}\}\right)
 \leq\mathcal E_N^{\rm sub}(\delta)
 \leq C\left(N\delta+\min\{N,\delta^{-1}\}\right)
 \label{eq:sublinear-coercive-envelope}
\end{equation}
for \(0<\delta\leq1/2\).

\smallskip
\noindent\emph{Reduction to the initial ray.}
It remains to identify the leading term of \(\mathcal E_N^{\rm sub}\).  Define the
initial-ray sum
\[
 \mathcal I_N(\delta)=\sum_{t=0}^{N-1}
 \E\bigl[g_{d,\alpha}((1-\delta)^t\norm{p_0})-c_{d,\alpha}\bigr].
\]
For \(d\geq2\), convolution with the indicator of \(\B\) gives a bounded
Hessian for the radial potential.  Since \(\eta_t\) is centered,
\[
 |\mathcal E_N^{\rm sub}(\delta)-\mathcal I_N(\delta)|\leq CN\delta.
\]
For \(d=1\), the derivative of the potential is \(\alpha\)-H\"older, and
the corresponding Taylor remainder gives
\[
 |\mathcal E_N^{\rm sub}(\delta)-\mathcal I_N(\delta)|
 \leq C_\alpha N\delta^{(1+\alpha)/2}.
\]
Both bounds are \(o(N^\beta)\) uniformly for \(y\) in a compact subset of
\((0,\infty)\).

\smallskip
\noindent\emph{The ray-integral limit.}
Let
\[
 \psi(u)=d\int_0^1r^{d-1}
 \bigl(g_{d,\alpha}(ur)-c_{d,\alpha}\bigr)\,dr,
 \qquad \ell=-\log(1-\delta).
\]
The quadratic envelope gives \(0\leq\psi(u)\leq Cu^2\), so the corresponding
Riemann sums converge.  Since the integrand defining \(\psi\) is nonnegative,
Tonelli's theorem \cite[Theorem~2.37(a), p.~67]{Folland1999} permits the order
of integration to be exchanged.  The substitution \(u=e^{-s}r\) then gives
\begin{align*}
 \ell\sum_{t\geq0}\psi(e^{-t\ell})
 &\longrightarrow\int_0^\infty\psi(e^{-s})\,ds\\
 &=d\int_0^1r^{d-1}\int_0^r
 \frac{g_{d,\alpha}(u)-c_{d,\alpha}}{u}\,du\,dr\\
 &=\int_0^1\frac{1-u^d}{u}
 \bigl(g_{d,\alpha}(u)-c_{d,\alpha}\bigr)\,du
 =\frac{H_{d,\alpha}}2.
\end{align*}
The tail after time \(N\) is \(O(\delta^{-1}e^{-2N\delta})\).  Therefore,
uniformly when \(\delta\to0\) and \(N\delta\to\infty\),
\begin{equation}
 \mathcal E_N^{\rm sub}(\delta)=\frac{H_{d,\alpha}}{2\delta}
 +o(\delta^{-1})+o(N^\beta).
 \label{eq:sublinear-ray-asymptotic}
\end{equation}
\smallskip
\noindent\emph{The rescaled limiting profile.}
The definitions of \(F_{d,\alpha,N}\) and \(\mathcal E_N^{\rm sub}\) give the exact
decomposition
\[
 F_{d,\alpha,N}(1-\delta)-Nc_{d,\alpha}
 =Nc_{d,\alpha}\bigl(h_\alpha(1-\delta)-1\bigr)
 +h_\alpha(1-\delta)\mathcal E_N^{\rm sub}(\delta).
\]
Let \(\delta=yN^{-\beta}\), where \(\beta=1/(\alpha+1)\).  Since
\(h_\alpha(1-\delta)-1=\delta^\alpha+O(\delta)\), the two terms on the
right satisfy, uniformly for \(y\) in every compact subset of \((0,\infty)\),
\begin{align*}
 \frac{Nc_{d,\alpha}
 \bigl(h_\alpha(1-\delta)-1\bigr)}{N^\beta}
 &\longrightarrow c_{d,\alpha}y^\alpha,\\
 \frac{h_\alpha(1-\delta)\mathcal E_N^{\rm sub}(\delta)}{N^\beta}
 &\longrightarrow \frac{H_{d,\alpha}}{2y}.
\end{align*}
Adding these limits gives
\[
 \frac{F_{d,\alpha,N}(1-yN^{-\beta})-Nc_{d,\alpha}}{N^\beta}
 \longrightarrow c_{d,\alpha} y^\alpha+\frac{H_{d,\alpha}}{2y}.
\]

\smallskip
\noindent\emph{Localization and optimization.}
The trial value \(\delta=N^{-\beta}\) has excess \(O(N^\beta)\).
Eq.\eqref{eq:sublinear-coercive-envelope}, together with
\(h_\alpha(1-\delta)-1\geq\delta^\alpha/2\) for small \(\delta\), excludes
parameters bounded away from the endpoint, bounded \(N\delta\), and the two
scales \(N^\beta\delta\to0\) and \(N^\beta\delta\to\infty\).  Thus every
minimizer satisfies
\[
 0<a\leq N^\beta(1-\gamma_N)\leq b<\infty.
\]
The limiting function has a unique minimizer.  Differentiating it proves
Eq.\eqref{eq:sublinear-boundary-location}, and evaluating it there proves
Eq.\eqref{eq:sublinear-boundary-value}.
\end{proof}

\begin{proof}[Proof of Theorem~\ref{thm:finite-perturbation} and
Corollary~\ref{cor:finite-nondegenerate}]
\smallskip
\noindent\emph{Coupling with the stationary recursion.}
Let
\[
 b_t(\gamma)=M_{d,\alpha,t}(\gamma)-M_{d,\alpha}(\gamma).
\]
We first establish the expansion in Eq.\eqref{eq:finite-perturbation-identity}.
Extend the input sequence to independent points \((p_t)_{t\in\mathbb Z}\)
uniformly distributed in \(\B\), and let
\[
 \bar x_0=(1-\gamma)\sum_{j=0}^{\infty}\gamma^j p_{-j}.
\]
Thus \(\bar x_0\) has the stationary distribution in
Eq.\eqref{eq:stationary-state}.  Starting from \(x_0=p_0\), define
\(x_t\) and \(\bar x_t\) for \(t\geq1\) by
\[
 x_t=\gamma x_{t-1}+(1-\gamma)p_t,
 \qquad
 \bar x_t=\gamma\bar x_{t-1}+(1-\gamma)p_t.
\]
Subtracting the two recursions gives the exact identity
\begin{equation}
 x_t-\bar x_t=\gamma^t(x_0-\bar x_0).
 \label{eq:finite-stationary-state-difference}
\end{equation}

\smallskip
\noindent\emph{Summability of the transient correction.}
Let \(K=[a,b]\) with \(0<a\leq b<1\), and choose \(r_K\) such that
\(b<r_K<1\).  The geometric series defining \(\bar x_0\), together with
its first two derivatives with respect to \(\gamma\), converges uniformly on
\(K\).  Differentiating Eq.\eqref{eq:finite-stationary-state-difference}
therefore yields
\begin{equation}
 \sup_{\gamma\in K}
 \E\left\|\frac{\partial^j}{\partial\gamma^j}
 (x_t-\bar x_t)\right\|
 \leq C_{K,j}(1+t)^j r_K^t,
 \qquad j=0,1,2.
 \label{eq:finite-stationary-state-derivatives}
\end{equation}
Indeed, each differentiation of \(\gamma^t\) contributes at most one factor
of \(t\), and \(b^t\), \(tb^{t-1}\), and \(t(t-1)b^{t-2}\) are all bounded
by a constant times \((1+t)^2r_K^t\).

Let
\[
 G_{d,\alpha}(x)=\E\norm{p-x}^{\alpha}.
\]
Conditioning on the current state gives
\[
 b_t(\gamma)
 =\E\bigl[G_{d,\alpha}(x_t)-G_{d,\alpha}(\bar x_t)\bigr].
\]
The function \(G_{d,\alpha}\) has two continuous derivatives on \(\B\).
Its Hessian is locally Lipschitz in all the cases considered here except when
\(d=1\) and \(1<\alpha<2\).  In that remaining case,
\[
 G_{1,\alpha}''(x)
 =\frac\alpha2\bigl((1+x)^{\alpha-1}+(1-x)^{\alpha-1}\bigr),
\]
so the Hessian is H\"older continuous of order \(\alpha-1\).  Applying the
chain rule to the last expression for \(b_t\) and using
Eq.\eqref{eq:finite-stationary-state-derivatives} gives, for some \(q_K<1\),
\begin{equation}
 \sup_{\gamma\in K}|b_t^{(j)}(\gamma)|
 \leq C_{K,j}(1+t)^j q_K^t,
 \qquad j=0,1,2.
 \label{eq:finite-perturbation-coupling-bound}
\end{equation}
For the H\"older case, one may take a number \(q_K\) larger than
\(r_K^{\alpha-1}\).  Thus the slower H\"older rate is still exponential.

The bound in Eq.\eqref{eq:finite-perturbation-coupling-bound} makes the series
of derivatives uniformly summable on \(K\), which justifies term-by-term
differentiation.  From the definitions of \(F_{d,\alpha,N}\),
\(\Phi_{d,\alpha}\), and \(\Psi_{d,\alpha}\),
\begin{align*}
 F_{d,\alpha,N}
 &=h_\alpha\sum_{t=0}^{N-1}
   \bigl(M_{d,\alpha}+b_t\bigr),\\
 N\Phi_{d,\alpha}&=Nh_\alpha M_{d,\alpha},\\
 \Psi_{d,\alpha}&=h_\alpha\sum_{t=0}^{\infty}b_t.
\end{align*}
Consequently,
\begin{equation}
 F_{d,\alpha,N}-N\Phi_{d,\alpha}-\Psi_{d,\alpha}
 =-h_\alpha\sum_{t=N}^{\infty}b_t.
 \label{eq:finite-perturbation-tail}
\end{equation}
Choose \(\rho_K\) with \(q_K<\rho_K<1\).  For \(j\leq2\), the
corresponding derivative of the right-hand side is bounded by a constant
times
\[
 \sum_{t=N}^{\infty}(1+t)^2q_K^t=O_K(\rho_K^N).
\]
This proves Eq.\eqref{eq:finite-perturbation-identity} in \(C^2(K)\).

\smallskip
\noindent\emph{Localization on an interior compact interval.}
We next locate the finite minimizers.  Theorem
\ref{thm:stationary-transition} shows that
\(\min_\gamma\Phi_{d,\alpha}(\gamma)<c_{d,\alpha}\) for \(\alpha>1\),
and that every stationary minimizer is in \((1/2,1)\).  Fix
\(\widehat\gamma\in\mathcal M\), and choose \(\eta>0\) such that
\[
 \Phi_{d,\alpha}(\widehat\gamma)=c_{d,\alpha}-4\eta.
\]
The function \(G_{d,\alpha}\) is convex and even for \(\alpha>1\).  Hence
\[
 G_{d,\alpha}(0)
 \leq\frac{G_{d,\alpha}(x)+G_{d,\alpha}(-x)}2
 =G_{d,\alpha}(x).
\]
Since \(G_{d,\alpha}(0)=c_{d,\alpha}\), each term in the finite sum satisfies
\(M_{d,\alpha,t}(\gamma)\geq c_{d,\alpha}\).  Hence
\begin{equation}
 \frac{F_{d,\alpha,N}(\gamma)}{N}
 \geq c_{d,\alpha}h_\alpha(\gamma)
 \qquad (0\leq\gamma\leq1).
 \label{eq:finite-minimizer-endpoint-bound}
\end{equation}
Because \(h_\alpha(0)=h_\alpha(1)=1\), continuity provides
\(\varepsilon>0\) such that the right-hand side of
Eq.\eqref{eq:finite-minimizer-endpoint-bound} is at least
\(c_{d,\alpha}-\eta\) whenever
\(\gamma\in[0,\varepsilon]\cup[1-\varepsilon,1]\).  Decrease
\(\varepsilon\), if necessary, so that \(\mathcal M\) is contained in
\([\varepsilon,1-\varepsilon]\).  On the other hand,
Eq.\eqref{eq:finite-perturbation-identity}, applied at
\(\widehat\gamma\), gives
\[
 \frac{F_{d,\alpha,N}(\widehat\gamma)}{N}
 \leq c_{d,\alpha}-3\eta
\]
for all sufficiently large \(N\).  Therefore every global minimizer belongs
to the fixed compact interval
\[
 K_0=[\varepsilon,1-\varepsilon]
\]
when \(N\) is sufficiently large.

\smallskip
\noindent\emph{Selection among stationary minimizers.}
Let
\[
 \Phi_{\min}=\min_\gamma\Phi_{d,\alpha}(\gamma),
 \qquad
 \mathcal S=\operatorname*{argmin}_{\gamma\in\mathcal M}
 \Psi_{d,\alpha}(\gamma).
\]
Both \(\mathcal M\) and \(\mathcal S\) are nonempty compact subsets of
\(K_0\).  On \(K_0\), let
\[
 R_N=F_{d,\alpha,N}-N\Phi_{d,\alpha}-\Psi_{d,\alpha}.
\]
Eq.\eqref{eq:finite-perturbation-identity} gives
\(\sup_{K_0}|R_N|=o(1)\).  If \(\gamma_N\) minimizes
\(F_{d,\alpha,N}\) and \(\gamma^\dagger\in\mathcal S\), comparison with
\(\gamma^\dagger\)
gives
\begin{equation}
 N\bigl(\Phi_{d,\alpha}(\gamma_N)-\Phi_{\min}\bigr)
 +\Psi_{d,\alpha}(\gamma_N)+R_N(\gamma_N)
 \leq
 \Psi_{d,\alpha}(\gamma^\dagger)+R_N(\gamma^\dagger).
 \label{eq:finite-minimizer-comparison}
\end{equation}
The function \(\Psi_{d,\alpha}\) is bounded on \(K_0\).  Dividing
Eq.\eqref{eq:finite-minimizer-comparison} by \(N\) therefore shows that
\[
 \Phi_{d,\alpha}(\gamma_N)\longrightarrow\Phi_{\min}.
\]
Compactness of \(K_0\) then implies
\(\operatorname{dist}(\gamma_N,\mathcal M)\to0\).

The first term on the left-hand side of
Eq.\eqref{eq:finite-minimizer-comparison} is nonnegative.  Removing it gives
\[
 \Psi_{d,\alpha}(\gamma_N)
\leq\Psi_{d,\alpha}(\gamma^\dagger)+o(1).
\]
Every convergent subsequence of \((\gamma_N)\) has its limit in
\(\mathcal M\), and continuity of \(\Psi_{d,\alpha}\) shows that this limit
belongs to \(\mathcal S\).  It follows that
\(\operatorname{dist}(\gamma_N,\mathcal S)\to0\), which is the asserted
selection rule.  In particular,
\[
 \Psi_{d,\alpha}(\gamma_N)
 \longrightarrow\min_{\gamma\in\mathcal M}\Psi_{d,\alpha}(\gamma).
\]
Evaluating the finite objective at \(\gamma^\dagger\) gives the corresponding upper
bound in Eq.\eqref{eq:finite-selection-value}.  The lower bound follows from
\(\Phi_{d,\alpha}(\gamma_N)\geq\Phi_{\min}\), the preceding convergence of
\(\Psi_{d,\alpha}(\gamma_N)\), and \(R_N(\gamma_N)=o(1)\).  This proves
Eq.\eqref{eq:finite-selection-value}.

\smallskip
\noindent\emph{The nondegenerate case.}
It remains to prove the two refined formulas.  Suppose that
\(\mathcal M=\{\gamma_*\}\) and
\(\Phi_{d,\alpha}''(\gamma_*)>0\).  The selection result gives
\(\gamma_N\to\gamma_*\).  Choose a closed interval \(U\) around
\(\gamma_*\) and a constant \(m>0\) such that
\(\Phi_{d,\alpha}''\geq m\) on \(U\).  The \(C^2\) expansion already proved
gives, uniformly on \(U\),
\[
 F_{d,\alpha,N}''
 =N\Phi_{d,\alpha}''+\Psi_{d,\alpha}''+O(\rho_U^N).
\]
Thus \(F_{d,\alpha,N}''\geq Nm/2\) on \(U\) for all sufficiently large
\(N\).  Every global minimizer then lies in \(U\), and strict convexity on
\(U\) proves its uniqueness.

\smallskip
\noindent\emph{Location and value corrections.}
Let \(e_N=\gamma_N-\gamma_*\).  Since \(\gamma_*\) is an interior
stationary minimizer, \(\Phi_{d,\alpha}'(\gamma_*)=0\), and
\[
 F_{d,\alpha,N}'(\gamma_*)
 =\Psi_{d,\alpha}'(\gamma_*)+O(\rho_U^N)=O(1).
\]
The mean value theorem, applied between \(\gamma_*\) and \(\gamma_N\), now
gives a point \(\xi_N\) between them such that
\[
 e_N
 =-\frac{F_{d,\alpha,N}'(\gamma_*)}
 {F_{d,\alpha,N}''(\xi_N)}=O(N^{-1}).
\]
Expanding the identity
\(F_{d,\alpha,N}'(\gamma_N)=0\) yields
\[
 0
 =N\Phi_{d,\alpha}''(\gamma_*)e_N
  +\Psi_{d,\alpha}'(\gamma_*)+o(1).
\]
Solving for \(e_N\) proves
Eq.\eqref{eq:finite-nondegenerate-location}.

Finally, Taylor expansion at \(\gamma_*\), using
\(e_N=O(N^{-1})\), gives
\begin{align*}
 F_{d,\alpha,N}(\gamma_N)
 &=N\Phi_{d,\alpha}(\gamma_*)+\Psi_{d,\alpha}(\gamma_*)\\
 &\quad+\frac{N}{2}\Phi_{d,\alpha}''(\gamma_*)e_N^2
 +\Psi_{d,\alpha}'(\gamma_*)e_N+o(N^{-1}).
\end{align*}
Substituting
\[
 e_N
 =-\frac{\Psi_{d,\alpha}'(\gamma_*)}
 {N\Phi_{d,\alpha}''(\gamma_*)}+o(N^{-1})
\]
combines the two correction terms into
\(-\Psi_{d,\alpha}'(\gamma_*)^2/
(2N\Phi_{d,\alpha}''(\gamma_*))\).  This proves
Eq.\eqref{eq:finite-nondegenerate-value}.
\end{proof}

\section{Uniform endpoint and transient estimates}
\label{app:finite-endpoint-proofs}

These estimates support the analysis near \(\alpha=1\).  The stationary
endpoint expansion is proved first.  The transient estimate then couples a
finite state to a stationary state driven by the same later inputs, and the
final lemma repeats that coupling with two derivatives.

\begin{proof}[Proof of Lemma~\ref{lem:uniform-stationary-endpoint}]
\smallskip
\noindent\emph{Expansion of the radial potential.}
For \(\norm x<1\), polar integration from \(x\) gives
\begin{equation}
 \frac1{\omega_d}\int_{\B}\norm{p-x}^\alpha\,dp
 =\frac1{\omega_d(d+\alpha)}
 \int_{S^{d-1}}\rho_x(u)^{d+\alpha}\,d\sigma(u),
 \label{eq:joint-radial-potential}
\end{equation}
where
\[
 \rho_x(u)=-\langle x,u\rangle
 +\sqrt{1-\norm x^2+\langle x,u\rangle^2}.
\]
This is the usual radial-volume normalization
\cite{ZhuZhouXu2014}.  On \(\norm x\leq1/2\), the right-hand side has six
uniformly bounded derivatives with respect to \(x\).  Symmetry and direct
differentiation give
\[
 D^2G_{d,\alpha}(0)=\alpha I_d.
\]
For \(d\geq2\), the identity
\[
 \Delta G_{d,\alpha}(x)
 =\alpha(\alpha+d-2)G_{d,\alpha-2}(x)
\]
determines the fourth-order coefficient.  For \(d=1\), direct integration gives
\begin{equation}
 G_{1,\alpha}(x)
 =\frac{(1+x)^{\alpha+1}+(1-x)^{\alpha+1}}
 {2(\alpha+1)},
 \qquad |x|<1,
 \label{eq:joint-d1-potential}
\end{equation}
and its Taylor expansion gives the same coefficient.  In every dimension it
equals
\[
 \frac{\alpha(\alpha-2)(\alpha+d-2)}{8(d+2)}.
\]
Therefore, for
\[
 Q(x)=G_{d,\alpha}(x)-c_{d,\alpha}
 -\frac\alpha2\norm x^2
 -\frac{\alpha(\alpha-2)(\alpha+d-2)}{8(d+2)}\norm x^4,
\]
\begin{equation}
 |Q(x)|\leq C\norm x^6,
 \quad \norm{\nabla Q(x)}\leq C\norm x^5,
 \quad \norm{D^2Q(x)}\leq C\norm x^4
 \label{eq:joint-local-cancellation}
\end{equation}
on this smaller ball.

\smallskip
\noindent\emph{Coefficient and state moment estimates.}
For \(a_j(\delta)=\delta(1-\delta)^j\), let
\[
 s_\delta=\sum_{j\geq0}a_j(\delta)p_j,
\]
where the \(p_j\) are independent uniform points in \(\B\).  A
geometric-series calculation gives
\begin{equation}
 \sum_{j\geq0}a_j^2=\frac\delta{2-\delta},
 \quad
 \sum_{j\geq0}(a_j')^2=\frac2{\delta(2-\delta)^3},
 \quad
 \sum_{j\geq0}(a_j'')^2
 =\frac{8(1-\delta+\delta^2)}
 {\delta^3(2-\delta)^5}.
 \label{eq:joint-coefficient-sums}
\end{equation}
Coordinatewise Hoeffding concentration \cite{Hoeffding1963} then implies,
for every fixed \(q>0\),
\begin{equation}
 \norm{s_\delta}_{L^q}=O(\delta^{1/2}),
 \quad
 \norm{s_\delta'}_{L^q}=O(\delta^{-1/2}),
 \quad
 \norm{s_\delta''}_{L^q}=O(\delta^{-3/2}).
 \label{eq:joint-state-moments}
\end{equation}

\smallskip
\noindent\emph{Control near the boundary.}
For \(d\geq2\) and \(\alpha\in[1-\eta,1+\eta]\), the singular part of the
Hessian kernel is controlled in polar coordinates by
\[
 \int_0^1 r^{d+\alpha-3}\,dr
 \leq \int_0^1 r^{-\eta}\,dr<\infty.
\]
Thus the kernel is uniformly integrable over this range of \(\alpha\), and
\(D^2Q\) is globally bounded on \(\B\).  For \(d=1\), differentiating
Eq.\eqref{eq:joint-d1-potential} gives
\[
 |Q''(x)|
 \leq C_\eta\bigl(1+(1-|x|)^{-\eta}\bigr).
\]
The decomposition \(s_\delta=\delta p_0+z_\delta\) shows that the density of
\(s_\delta\) is at most \((2\delta)^{-1}\).  If
\(w_\delta=1-|s_\delta|\), then, for \(0<q<1\),
\begin{equation}
 \E[w_\delta^{-q};|s_\delta|>1/2]
 \leq C_q\delta^{-q}\Pr(|s_\delta|>1/2)^{1-q}.
 \label{eq:joint-negative-moment}
\end{equation}
The probability on the right is exponentially small in \(1/\delta\).
Taking \(q=2\eta<1\) controls the singular Hessian term.

\smallskip
\noindent\emph{Differentiation under the expectation.}
To justify differentiation, truncate the series for \(s_\delta\) at \(n\).
On compact subsets of \((0,1)\), the truncated states and their first two
derivatives converge pathwise and in every finite \(L^q\).  Choose
\(\varepsilon_0>0\) with \(2\eta(1+\varepsilon_0)<1\).  The density bound
and Eq.\eqref{eq:joint-state-moments} imply uniform integrability of
\[
 Q''(s_{\delta,n})(s_{\delta,n}')^2.
\]
The finite chain rule therefore passes to the limit.  If
\(F(\delta)=\E Q(s_\delta)\), then
\[
 F'=\E[\nabla Q(s_\delta)\cdot s_\delta'],
\]
\[
 F''=\E[(s_\delta')^TD^2Q(s_\delta)s_\delta'
 +\nabla Q(s_\delta)\cdot s_\delta''].
\]
Eqs.\eqref{eq:joint-local-cancellation} and
\eqref{eq:joint-state-moments}, followed by the exponentially small tail
estimate, give
\[
 |F|\leq C\delta^3,
 \qquad |F'|\leq C\delta^2,
 \qquad |F''|\leq C\delta.
\]
\smallskip
\noindent\emph{Completion of the endpoint expansion.}
Exact fourth-moment expansion also gives
\[
 \E\norm{s_\delta}^4=\frac{s_d}{4}\delta^2+O_{C^2}(\delta^3).
\]
Together with
\(\E\norm{s_\delta}^2=s_d\delta/(2-\delta)\), this proves
Eqs.\eqref{eq:joint-M-expansion}--\eqref{eq:joint-M-remainder}.
Multiplication by \((1-\delta)^\alpha+\delta^\alpha\) proves
Eqs.\eqref{eq:joint-Phi-expansion}--\eqref{eq:joint-Phi-remainder}.
\end{proof}

\begin{proof}[Proof of Theorem~\ref{thm:near-one-stationary-uniqueness}]
\smallskip
\noindent\emph{The derivative profile.}
Define the endpoint profile
\[
 P(\delta)
 =\Phi_{d,\alpha}(1-\delta)-c_{d,\alpha}.
\]
Eq.\eqref{eq:joint-Phi-expansion} and
Eq.\eqref{eq:joint-Phi-remainder} imply, for \(\alpha=1+\varepsilon\) in a
fixed sufficiently small one-sided neighborhood of one,
\begin{align}
 P'(\delta)
 &=\alpha c_{d,\alpha}\delta^\varepsilon
 -\alpha c_{d,\alpha} q_{d,\alpha}+u'(\delta),
 \label{eq:near-one-first-derivative}\\
 |u'(\delta)|&\leq C_d\delta,
 &|u''(\delta)|&\leq C_d.
 \label{eq:near-one-remainder-derivatives}
\end{align}
Indeed, the second derivative of
\(\delta^{\alpha+1}\) is bounded for \(\alpha\geq1\), and every remaining
term has the stated bounds.

\smallskip
\noindent\emph{Global localization.}
We first localize global minimizers.  As \(\alpha\to1^+\),
\(h_\alpha\to1\) uniformly on \([0,1]\), while
\[
 \sup_{0\leq r\leq2}|r^\alpha-r|\longrightarrow0.
\]
It follows directly from the expectation defining \(M_{d,\alpha}\), including
its continuous endpoint value, that
\(\Phi_{d,\alpha}\to\Phi_{d,1}\) uniformly on \([0,1]\).
The function \(\Phi_{d,1}\) has the unique minimizer \(\gamma=1\) by
Theorem~\ref{thm:stationary-transition}.  Hence every global minimizer for
\(\alpha\to1^+\) satisfies \(\delta=1-\gamma\to0^+\).

\smallskip
\noindent\emph{Existence and uniqueness of the critical point.}
Choose the one-sided neighborhood so that
\(q_{d,\alpha}\) stays in a compact subinterval of \((0,1)\).  On
\(0<\delta\leq\varepsilon^2\), Eqs.\eqref{eq:near-one-first-derivative} and
\eqref{eq:near-one-remainder-derivatives} give
\[
 P''(\delta)
 =\alpha c_{d,\alpha}\varepsilon\delta^{\varepsilon-1}+O_d(1)>0
\]
for all sufficiently small \(\varepsilon\).  Furthermore,
\[
 \lim_{\delta\to0^+}P'(\delta)
 =-\alpha c_{d,\alpha} q_{d,\alpha}<0,
\]
whereas \((\varepsilon^2)^\varepsilon\to1\) and
\(q_{d,\alpha}\to q_{d,1}<1\), so
\(P'(\varepsilon^2)>0\).  Thus there is exactly one zero of
\(P'\) in this interval.  On
\([\varepsilon^2,\delta_1]\), the leading difference
\(\alpha c_{d,\alpha}(\delta^\varepsilon-q_{d,\alpha})\) is bounded below
by a positive constant when \(\varepsilon\) is small.  Choosing
\(\delta_1>0\) sufficiently small makes the remainder in
Eq.\eqref{eq:near-one-first-derivative} smaller than half that constant.
Consequently, \(P'>0\) on this interval.  Global localization
then shows that the unique zero just found is the unique global minimizer.

\smallskip
\noindent\emph{Location and curvature.}
At \(\delta=\delta_{d,\alpha}\), the two leading terms in
Eq.\eqref{eq:near-one-first-derivative} cancel.  Evaluation at
\(\delta_{d,\alpha}/2\) and
\(2\delta_{d,\alpha}\) places the zero between these two values,
because \(\delta_{d,\alpha}/\varepsilon\to0\).  The mean-value theorem
applied to \(z^\varepsilon\) on \([1/2,2]\) then gives
Eq.\eqref{eq:near-one-stationary-location}.  A final differentiation of
Eq.\eqref{eq:near-one-first-derivative}, together with that location estimate,
gives Eq.\eqref{eq:near-one-stationary-curvature}.
\end{proof}

\begin{proof}[Proof of Lemma~\ref{lem:uniform-transient}]
\smallskip
\noindent\emph{The initial-ray integral.}
Let \(G=G_{d,\alpha}\).  For \(x\in\B\),
\[
 \nabla G(x)
 =\frac{\alpha}{\omega_d}\int_{\B}
 (x-p)\norm{x-p}^{\alpha-2}\,dp.
\]
The absolute value of the integrand near \(p=x\) is
\(\norm{x-p}^{\alpha-1}\), and its local radial integral is proportional to
\[
 \int_0^1r^{d+\alpha-2}\,dr<\infty
\]
because \(\alpha\geq1-\eta>0\).  Since the interval of powers is compact, this formula
gives a constant \(c_{\mathrm{Lip}}=c_{\mathrm{Lip}}(d,\eta)\) such that
\begin{equation}
 |G(x)-G(y)|\leq c_{\mathrm{Lip}}\norm{x-y}
 \qquad (x,y\in\B)
 \label{eq:uniform-potential-lipschitz}
\end{equation}
for every \(\alpha\in[1-\eta,1+\eta]\).

Let \(\gamma=1-\delta\), \(\ell=-\log\gamma\), and
\(a_t=\gamma^t=e^{-t\ell}\).  Define
\[
 \phi_{d,\alpha}(u)
 =\E[G(e^{-u}p_0)-c_{d,\alpha}].
\]
The radial density of \(\norm{p_0}\) is \(d r^{d-1}\,dr\) on \([0,1]\).
Moreover,
Eq.\eqref{eq:uniform-potential-lipschitz} gives
\(\lvert g_{d,\alpha}(s)-c_{d,\alpha}\rvert\leq c_{\mathrm{Lip}}s\), so the following double
integral is absolutely convergent.  Fubini's theorem
\cite[Theorem~2.37(b), p.~67]{Folland1999} and the substitution
\(s=e^{-u}r\) give
\begin{align}
 \int_0^\infty\phi_{d,\alpha}(u)\,du
 &=d\int_0^1r^{d-1}\int_0^\infty
 \bigl(g_{d,\alpha}(e^{-u}r)-c_{d,\alpha}\bigr)\,du\,dr\notag\\
 &=d\int_0^1r^{d-1}\int_0^r
 \frac{g_{d,\alpha}(s)-c_{d,\alpha}}s\,ds\,dr\notag\\
 &=\int_0^1\frac{1-s^d}{s}
 \bigl(g_{d,\alpha}(s)-c_{d,\alpha}\bigr)\,ds
 =\frac{H_{d,\alpha}}2.
 \label{eq:uniform-transient-ray-integral}
\end{align}

\smallskip
\noindent\emph{Discrete approximation of the ray integral.}
The derivative of \(\phi_{d,\alpha}\) satisfies
\[
 |\phi_{d,\alpha}'(u)|
 \leq c_{\mathrm{Lip}}e^{-u}.
\]
Consequently, direct comparison on each interval
\([t\ell,(t+1)\ell]\) gives
\begin{align*}
 \left|\ell\sum_{t=0}^{\infty}\phi_{d,\alpha}(t\ell)
 -\int_0^\infty\phi_{d,\alpha}(u)\,du\right|
 &\leq
 \sum_{t=0}^{\infty}\int_{t\ell}^{(t+1)\ell}
 |\phi_{d,\alpha}(t\ell)-\phi_{d,\alpha}(u)|\,du\\
 &\leq \ell\int_0^\infty|\phi_{d,\alpha}'(u)|\,du
 \leq c_{\mathrm{Lip}}\ell.
\end{align*}
For \(0<\delta\leq1/2\),
\(\ell=-\log(1-\delta)=\delta+O(\delta^2)\).  Combining this fact with
Eq.\eqref{eq:uniform-transient-ray-integral} yields
\begin{equation}
 \left|\delta\sum_{t=0}^{\infty}\phi_{d,\alpha}(t\ell)
 -\frac{H_{d,\alpha}}2\right|\leq C\delta.
 \label{eq:uniform-transient-ray-sum}
\end{equation}

\smallskip
\noindent\emph{Coupling with the stationary state.}
We now compare the transient terms with this sum.  The finite state has the
decomposition
\[
 x_t=a_tp_0+y_t,
 \qquad
 y_t=\delta\sum_{j=1}^{t}\gamma^{t-j}p_j.
\]
Let
\[
 s_\delta=\delta\sum_{j=0}^{\infty}\gamma^j q_j
\]
be formed from an independent sequence of uniform points, and set
\[
 z_t=y_t+a_ts_\delta.
\]
The coefficients in \(z_t\) are
\(\delta,\delta\gamma,\delta\gamma^2,\ldots\).  Thus \(z_t\) has the
stationary distribution in Eq.\eqref{eq:stationary-state}, and
\begin{equation}
 M_{d,\alpha,t}(1-\delta)-M_{d,\alpha}(1-\delta)
 =\E[G(x_t)-G(z_t)].
 \label{eq:uniform-transient-stationary-comparison}
\end{equation}
Independence and centering give
\[
 \E\norm{y_t}^2
 \leq s_d\delta^2\sum_{j=0}^{\infty}\gamma^{2j}
 =s_d\frac{\delta}{2-\delta}\leq\delta,
 \qquad
 \E\norm{z_t}^2=s_d\frac{\delta}{2-\delta}\leq\delta.
\]

\smallskip
\noindent\emph{Early and late times.}
Let \(j_0\) be the least integer for which \(a_{j_0}\leq\sqrt\delta\).  Since
\(\gamma\geq1/2\),
\[
 \frac{\sqrt\delta}{2}<a_{j_0}\leq\sqrt\delta,
 \qquad
 j_0\leq1+\frac{\log(1/\delta)}{2\delta}.
\]
For \(t<j_0\), Eqs.\eqref{eq:uniform-potential-lipschitz} and
\eqref{eq:uniform-transient-stationary-comparison}, followed by
Cauchy--Schwarz, give
\begin{align*}
 &\left|M_{d,\alpha,t}(1-\delta)-M_{d,\alpha}(1-\delta)
 -\phi_{d,\alpha}(t\ell)\right|\\
 &\qquad\leq
 \E|G(a_tp_0+y_t)-G(a_tp_0)|
 +\E|G(z_t)-G(0)|
 \leq2c_{\mathrm{Lip}}\sqrt\delta.
\end{align*}
For \(t\geq j_0\), the identity
\[
 x_t-z_t=a_t(p_0-s_\delta)
\]
and Eq.\eqref{eq:uniform-potential-lipschitz} give
\[
 |M_{d,\alpha,t}(1-\delta)-M_{d,\alpha}(1-\delta)|\leq2c_{\mathrm{Lip}}a_t,
 \qquad
 |\phi_{d,\alpha}(t\ell)|\leq c_{\mathrm{Lip}}a_t.
\]

The shared-input coupling therefore leaves only the explicit factor
\(a_t(p_0-s_\delta)\).  The threshold \(j_0\) separates the initial-ray
estimate, valid while \(a_t>\sqrt\delta\), from the summable geometric tail.

It follows that
\begin{align}
 &\delta\left|
 B_{d,\alpha}(1-\delta)
 -\sum_{t=0}^{\infty}\phi_{d,\alpha}(t\ell)\right|\notag\\
 &\qquad\leq
 2c_{\mathrm{Lip}}\delta j_0\sqrt\delta
 {}+3c_{\mathrm{Lip}}\delta\sum_{t=j_0}^{\infty}a_t
 \leq C\sqrt\delta\left(1+\log\frac1\delta\right).
 \label{eq:uniform-transient-infinite-comparison}
\end{align}

\smallskip
\noindent\emph{Finite truncation.}
Finally, the same stationary comparison, now used for every \(t\geq N\),
gives
\begin{align*}
 \delta|B_{d,\alpha,N}(1-\delta)-B_{d,\alpha}(1-\delta)|
 &\leq2c_{\mathrm{Lip}}\delta\sum_{t=N}^{\infty}a_t\\
 &=2c_{\mathrm{Lip}}a_N
 \leq2c_{\mathrm{Lip}}e^{-N\delta}
 \leq2c_{\mathrm{Lip}}e^{-N\delta/2}.
\end{align*}
Combining this bound with Eqs.\eqref{eq:uniform-transient-ray-sum} and
\eqref{eq:uniform-transient-infinite-comparison}, and increasing
\(c_{\mathrm{tr}}\) if
necessary, proves Eq.\eqref{eq:joint-transient-bound}.
\end{proof}

\begin{proof}[Proof of Lemma~\ref{lem:uniform-differentiated-transient}]
\smallskip
\noindent\emph{State and derivative moment bounds.}
Throughout this proof, let \(G=G_{d,\alpha}\).  Let
\(\gamma=1-\delta\), \(a_t=\gamma^t\), and decompose the finite state as
\[
 x_t=a_tp_0+y_t,
 \qquad
 y_t=\delta\sum_{n=0}^{t-1}\gamma^np_{t-n}.
\]
The calculation follows the three-term decomposition
\begin{equation}
 B_{d,\alpha,N}
 =\mathcal R_{d,\alpha}
 +(B_{d,\alpha}-\mathcal R_{d,\alpha})
 +(B_{d,\alpha,N}-B_{d,\alpha}),
 \label{eq:differentiated-transient-proof-map}
\end{equation}
where, with \(\ell=-\log(1-\delta)\),
\[
 \phi_{d,\alpha}(u)=\E[G(e^{-u}p_0)-c_{d,\alpha}],
 \qquad
 \mathcal R_{d,\alpha}(\delta)
 =\sum_{t\geq0}\phi_{d,\alpha}(t\ell).
\]
The first term produces the explicit singular profile.  The second measures
the coupling error, and the third is the finite-time tail.  Each term is
controlled together with its first two derivatives before multiplication by
\(h_\alpha(1-\delta)\).
For \(b_n(\delta)=\delta\gamma^n\), direct summation gives
\begin{align}
 \sum_{n\geq0}b_n^2
 &=\frac{\delta}{2-\delta},
 \label{eq:transient-coefficient-zero}\\
 \sum_{n\geq0}(b_n')^2
 &=\frac{2}{\delta(2-\delta)^3},
 \label{eq:transient-coefficient-one}\\
 \sum_{n\geq0}(b_n'')^2
 &=\frac{8(1-\delta+\delta^2)}
 {\delta^3(2-\delta)^5}.
 \label{eq:transient-coefficient-two}
\end{align}
The points are independent, centered, and bounded.  The standard moment
estimate for sums of independent bounded vectors therefore implies, for every
fixed finite \(q\geq2\),
\begin{equation}
 \norm{y_t}_{L^q},\norm{s_\delta}_{L^q}=O(\delta^{1/2}),
 \quad
 \norm{y_t'}_{L^q},\norm{s_\delta'}_{L^q}=O(\delta^{-1/2}),
 \quad
 \norm{y_t''}_{L^q},\norm{s_\delta''}_{L^q}=O(\delta^{-3/2}),
 \label{eq:transient-state-moments}
\end{equation}
where \(s_\delta=\delta\sum_{n\geq0}\gamma^np_{-n}\) has the stationary
law.  We also use \(\lvert a_t'\rvert\leq2ta_t\) and
\(\lvert a_t''\rvert\leq4t^2a_t\).

\smallskip
\noindent\emph{Regularity of the potential.}
For \(d\geq2\), differentiation of the geometric integral defining \(G\)
gives, uniformly in the indicated interval of powers,
\begin{equation}
 \norm{D^2G(x)}\leq C,
 \qquad
 \norm{D^2G(x)-D^2G(y)}
 \leq C\omega(\norm{x-y}),
 \label{eq:transient-hessian-modulus}
\end{equation}
where \(\omega(r)=r\log(e/r)\) for \(0<r\leq1\), extended by a constant
for larger \(r\).  The logarithm is needed at \((d,\alpha)=(2,1)\).
Since \(\nabla G(0)=0\), its gradient is bounded by
\(C\norm x\).  In dimension one,
\[
 G(x)
 =\frac{(1+x)^{\alpha+1}+(1-x)^{\alpha+1}}
 {2(\alpha+1)}.
\]
Its second derivative is uniformly bounded and has uniformly bounded total
variation.  Extension by constants outside \([-1,1]\) gives
\begin{equation}
 \int_\R\lvert G''(y+r)-G''(y)\rvert\,dy\leq C\lvert r\rvert.
 \label{eq:transient-BV-translation}
\end{equation}
Conditioning on one uniform summand converts
Eq.\eqref{eq:transient-BV-translation} into the averaged translation bound
needed below.

\smallskip
\noindent\emph{Differentiated ray quadrature.}
We first evaluate the initial-radius term in
Eq.\eqref{eq:differentiated-transient-proof-map}.
For the remainder of the proof, let \(\phi=\phi_{d,\alpha}\).
Eq.\eqref{eq:H-constant} gives
\begin{equation}
 \int_0^\infty\phi_{d,\alpha}(u)\,du=\frac{H_{d,\alpha}}2.
 \label{eq:transient-ray-integral}
\end{equation}
The potential estimates imply
\[
 \lvert\phi(u)\rvert+\lvert\phi'(u)\rvert+\lvert\phi''(u)\rvert
 \leq Ce^{-2u}
\]
and
\[
 \lvert\phi''(u+v)-\phi''(u)\rvert
 \leq Cv\log(e/v)(1+u)e^{-2u},
 \qquad 0<v\leq1.
\]
Applying the cellwise quadrature estimate
\[
 \left|\ell\sum_{t\geq0}\psi(t\ell)-\int_0^\infty\psi(u)\,du\right|
 \leq\sum_{t\geq0}\int_{t\ell}^{(t+1)\ell}
 \lvert\psi(t\ell)-\psi(u)\rvert\,du
\]
to \(\phi\), \(u\phi'\), and \(u^2\phi''\), and using
\[
 \int_0^\infty u\phi'(u)\,du=-\frac{H_{d,\alpha}}2,
 \qquad
 \int_0^\infty u^2\phi''(u)\,du=H_{d,\alpha},
\]
yields
\begin{align}
 \left|\mathcal R_{d,\alpha}-\frac{H_{d,\alpha}}{2\delta}\right|
 &\leq C,\notag\\
 \left|\mathcal R_{d,\alpha}'+\frac{H_{d,\alpha}}{2\delta^2}\right|
 &\leq C\delta^{-1},\notag\\
 \left|\mathcal R_{d,\alpha}''-\frac{H_{d,\alpha}}{\delta^3}\right|
 &\leq C\delta^{-2}\ell_\delta.
 \label{eq:transient-ray-quadrature}
\end{align}

\smallskip
\noindent\emph{Coupling and termwise differentiation.}
We next compare the finite and stationary states before truncating the time
sum.  Let \(s_\delta\) be an independent stationary state and put
\(z_t=y_t+a_ts_\delta\).  Then \(z_t\) is stationary and
\[
 b_t=M_{d,\alpha,t}(1-\delta)-M_{d,\alpha}(1-\delta)
 =\E[G(x_t)-G(z_t)].
\]
The difference \(v_t=x_t-z_t=a_t(p_0-s_\delta)\) has derivatives
\[
 v_t'=a_t'(p_0-s_\delta)-a_ts_\delta',
 \qquad
 v_t''=a_t''(p_0-s_\delta)-2a_t's_\delta'-a_ts_\delta''.
\]
Eqs.\eqref{eq:transient-state-moments} and
\eqref{eq:transient-hessian-modulus} give summable geometric majorants on
compact subintervals of \((0,1)\).  Hence the infinite series for
\(B_{d,\alpha}\) can be differentiated twice term by term.

Only now choose the least integer \(j_0\) for which
\(a_{j_0}\leq\sqrt\delta\).  Then
\[
 \frac{\sqrt\delta}{2}<a_{j_0}\leq\sqrt\delta,
 \qquad
 j_0\leq C\delta^{-1}\ell_\delta.
\]
\smallskip
\noindent\emph{Early-time estimates.}
For \(t<j_0\), Taylor expansion around \(a_tp_0\), centering of
\(y_t,y_t',y_t''\), and Eqs.\eqref{eq:transient-state-moments} and
\eqref{eq:transient-hessian-modulus} give the following termwise estimates.
Let
\[
 e_t=b_t-\phi(t\ell).
\]
For a centered vector \(v\) independent of \(a\), the identity
\[
 \E[G(a+v)-G(a)]
 =\int_0^1(1-s)\E\bigl[v^TD^2G(a+sv)v\bigr]\,ds
\]
removes the first-order term.  Applying this identity to the finite and
stationary perturbations, and then differentiating it, yields
\begin{align*}
 |e_t|&\leq C\delta,\\
 |e_t'|&\leq C\bigl(1+ta_t\delta^{1/2}\bigr),\\
 |e_t''|&\leq C\ell_\delta\bigl(
 \delta^{-1}+ta_t\delta^{-1/2}+t^2a_t\delta^{1/2}\bigr).
\end{align*}
The Hessian modulus controls every differentiated Hessian difference.  In
dimension one, Eq.\eqref{eq:transient-BV-translation} gives the same averaged
bound.  Finally,
\[
 \sum_{t\geq0}t^ma_t\leq C_m\delta^{-m-1},
 \qquad
 j_0\leq C\delta^{-1}\ell_\delta,
 \qquad
 \delta^{1/2}\ell_\delta\leq C.
\]
Summing the three termwise estimates therefore gives
\begin{align}
 \sum_{t<j_0}\lvert b_t-\phi(t\ell)\rvert
 &\leq C\ell_\delta,\notag\\
 \sum_{t<j_0}\lvert b_t'-\partial_\delta\phi(t\ell)\rvert
 &\leq C\delta^{-3/2},\notag\\
 \sum_{t<j_0}\lvert b_t''-\partial_\delta^2\phi(t\ell)\rvert
 &\leq C\delta^{-5/2}\ell_\delta.
 \label{eq:transient-early-sum}
\end{align}

\smallskip
\noindent\emph{Late-time estimates.}
For \(t\geq j_0\), the synchronous coupling and the same moment bounds give
\begin{align*}
 \lvert b_t\rvert
 &\leq C(a_t\delta+a_t^2),\\
 \lvert b_t'\rvert
 &\leq C(ta_t^2+a_t\delta^{-1/2}+ta_t\delta+a_t),
\end{align*}
and
\[
 \lvert b_t''\rvert\leq C\bigl[
 \omega(a_t)(t^2a_t^2+\delta^{-1})+t^2a_t^2
 +ta_t\delta^{-1/2}+a_t\delta^{-1}+a_t\delta^{-3/2}
 +t^2a_t\delta+ta_t\bigr].
\]
In dimension one, the first term is bounded by
\(C\min\{1,a_t/\delta\}(t^2a_t^2+\delta^{-1})\).
For \(r>0\) and \(m=0,1,2\), direct summation gives
\[
 \sum_{t\geq j_0}t^ma_t^r
 \leq C_{m,r}a_{j_0}^r\delta^{-m-1}
 \bigl(1+(j_0\delta)^m\bigr).
\]
Together with \(a_{j_0}\leq\sqrt\delta\),
\(j_0\delta\leq C\ell_\delta\), and
\(\omega(a_t)\leq Ca_t^{1/2}(1+t\delta)\), this bounds each displayed tail
term.  Summation gives
\begin{align}
 \left|B_{d,\alpha}(1-\delta)-\mathcal R_{d,\alpha}(\delta)\right|
 &\leq C\ell_\delta,\notag\\
 \left|\partial_\delta B_{d,\alpha}(1-\delta)-\mathcal R_{d,\alpha}'(\delta)\right|
 &\leq C\delta^{-3/2},\notag\\
 \left|\partial_\delta^2B_{d,\alpha}(1-\delta)-\mathcal R_{d,\alpha}''(\delta)\right|
 &\leq C\delta^{-5/2}\ell_\delta.
 \label{eq:transient-infinite-comparison}
\end{align}

\smallskip
\noindent\emph{Finite truncation and the two-edge factor.}
The finite tail begins at \(N\).  For every fixed \(\theta>0\) and integer
\(m\geq0\),
\[
 \sum_{t=N}^\infty t^m\gamma^{\theta t}
 \leq C_{m,\theta}\delta^{-m-1}
 (1+(N\delta)^m)e^{-\theta N\delta/2}.
\]
Applying this estimate to the coupled bounds above produces the three
exponential terms in
Eqs.\eqref{eq:differentiated-transient-zero}--
\eqref{eq:differentiated-transient-two} for \(B_{d,\alpha,N}\).
Finally,
\[
 h_\alpha(1-\delta)=1+O(\delta),
 \qquad h_\alpha'=O(1),
 \qquad h_\alpha''=O(\delta^{-1})
\]
uniformly for \(\alpha\in[1,1+\eta]\).  The product rule shows that the
additional terms are absorbed by the displayed bounds for
\(\mathcal T_{d,\alpha,N}\).
\end{proof}

\section{Proofs for the joint finite-size windows}
\label{app:finite-window-proofs}

The arguments are arranged in dependency order.  Global localization first
restricts every minimizer to endpoint scales.  The critical and supercritical
proofs then compare their explicit one-variable profiles with uniform
remainders.  The final proof supplies the \(C^2\) convergence used for local
uniqueness.  Figure~\ref{fig:joint-window-guide} identifies the scales before
the detailed estimates begin.

\begin{proof}[Proof of Lemma~\ref{lem:joint-global-localization}]
\smallskip
\noindent\emph{Excluding \(N\delta=O(1)\).}
Choose \(0<\eta<1/2\) so that
\(\alpha_N\in[1-\eta,1+\eta]\) for all sufficiently large \(N\).  We first
exclude \(N(1-\gamma_N)=O(1)\).  Let \(\delta=1-\gamma\) and suppose that
\(N\delta\leq K\) for some fixed \(K>0\).  The recursion gives
\[
 x_t=(1-\delta)^tp_0+y_{t,\delta},
 \qquad
 y_{t,\delta}
 =\delta\sum_{j=1}^{t}(1-\delta)^{t-j}p_j.
\]
Independence and centering imply
\[
 \E\norm{y_{t,\delta}}^2
 \leq s_d\frac{\delta}{2-\delta}
 \leq\frac{C_K}{N}.
\]
For all sufficiently large \(N\), \(\delta\leq1/2\), and hence
\[
 (1-\delta)^t
 \geq e^{-2t\delta}
 \geq e^{-2K}
 \qquad (0\leq t<N).
\]
Strict radial increase of \(g_{d,\alpha}\), continuity in \(\alpha\), and
compactness give
\[
 m_K
 =\min_{\alpha\in[1-\eta,1+\eta]}
 \E\left[
 g_{d,\alpha}(e^{-2K}\norm{p_0})-c_{d,\alpha}
 \right]>0.
\]
By Eq.\eqref{eq:uniform-potential-lipschitz} and Cauchy--Schwarz,
\[
 M_{d,\alpha,t}(1-\delta)
 =\E G_{d,\alpha_N}(x_t)
 \geq c_N+m_K-C_KN^{-1/2}
 \geq c_N+\frac{m_K}{2}
\]
for all sufficiently large \(N\), uniformly for \(0\leq t<N\).  If
\(\alpha_N\leq1\), then \(h_{\alpha_N}(1-\delta)\geq1\).  If
\(\alpha_N\geq1\), then
\[
 h_{\alpha_N}(1-\delta)
 \geq(1-\delta)^{\alpha_N}
 \geq1-(1+\eta)\delta.
\]
It follows that
\[
 F_N(1-\delta)-Nc_N
 \geq \frac{m_K}{2}N-O_K(1).
\]
This order-\(N\) gap is incompatible with the assumed trial value of excess
\(o(N)\).  Therefore every global minimizer satisfies
\[
 N(1-\gamma_N)\longrightarrow\infty.
\]

\smallskip
\noindent\emph{Excluding parameters away from the endpoint.}
We now exclude parameters separated from one.  Fix \(\delta_0>0\) and let
\(0\leq\gamma\leq1-\delta_0\).  Extend the input sequence to negative
indices, set
\[
 \bar x_0=(1-\gamma)\sum_{j=0}^{\infty}\gamma^jp_{-j},
\]
and drive \(x_t\) and \(\bar x_t\) with the same points
\(p_1,p_2,\ldots\).  The state \(\bar x_t\) is stationary, and subtraction of
the recursions gives
\[
 x_t-\bar x_t=\gamma^t(p_0-\bar x_0).
\]
In particular,
\(\E G_{d,\alpha_N}(\bar x_t)=M_{d,\alpha_N}(\gamma)\) for every \(t\).
The definitions of the finite and stationary objectives therefore give
\begin{align*}
 F_N(\gamma)
 &=h_{\alpha_N}(\gamma)
   \sum_{t=0}^{N-1}\E G_{d,\alpha_N}(x_t),\\
 N\Phi_{d,\alpha_N}(\gamma)
 &=h_{\alpha_N}(\gamma)
   \sum_{t=0}^{N-1}\E G_{d,\alpha_N}(\bar x_t).
\end{align*}
Subtracting these identities and applying
Eq.\eqref{eq:uniform-potential-lipschitz} term by term yields
\begin{align*}
 \left|F_N(\gamma)-N\Phi_{d,\alpha_N}(\gamma)\right|
 &\leq h_{\alpha_N}(\gamma)c_{\mathrm{Lip}}
 \sum_{t=0}^{N-1}\E\norm{x_t-\bar x_t}\\
 &\leq
 2h_{\alpha_N}(\gamma)c_{\mathrm{Lip}}
 \sum_{t=0}^{\infty}\gamma^t
 \leq C_{\delta_0}.
\end{align*}
The last bound is uniform in \(N\), \(\alpha_N\in[1-\eta,1+\eta]\), and
\(0\leq\gamma\leq1-\delta_0\), because
\(h_{\alpha_N}(\gamma)\) is bounded on this set and
\(\sum_{t\geq0}\gamma^t\leq\delta_0^{-1}\).  This proves directly that the
finite objective differs from its stationary approximation by at most a
constant depending only on \(d\), \(\eta\), and \(\delta_0\).

Theorem~\ref{thm:stationary-transition} gives
\[
 m_{\delta_0}
 =\min_{0\leq\gamma\leq1-\delta_0}
 \bigl(\Phi_{d,1}(\gamma)-c_{d,1}\bigr)>0.
\]
Moreover, \(h_\alpha\to1\) uniformly on \([0,1]\) and
\(\sup_{0\leq r\leq2}|r^\alpha-r|\to0\) as \(\alpha\to1\).  Therefore
\(\Phi_{d,\alpha_N}\to\Phi_{d,1}\) uniformly on \([0,1]\), and
\(c_N\to c_{d,1}\).  For all sufficiently large \(N\),
\[
 \Phi_{d,\alpha_N}(\gamma)-c_N
 \geq\frac{m_{\delta_0}}2
 \qquad (0\leq\gamma\leq1-\delta_0).
\]
Combining the last two estimates gives
\[
 F_N(\gamma)-Nc_N
 \geq\frac{m_{\delta_0}}2N-C_{\delta_0}.
\]
The trial value again excludes this order-\(N\) gap.  Since
\(\delta_0>0\) was arbitrary, \(1-\gamma_N\to0\).
\end{proof}

\begin{proof}[Proof of Theorem~\ref{thm:joint-critical-crossover}]
\smallskip
\noindent\emph{The rescaled profile.}
Let \(\delta=\delta_Nz\).  Lemmas
\ref{lem:uniform-stationary-endpoint} and
\ref{lem:uniform-transient} reduce the excess, after multiplication by
\(\delta_N\), to
\begin{equation}
 P_N(z)=c_Nq_NN\varepsilon_N\delta_N^2(z\log z-z)
 +\frac{H_N}{2z}
 \label{eq:joint-critical-profile}
\end{equation}
This model profile is strictly convex.  Its critical equation is
\[
 c_Nq_NN\varepsilon_N\delta_N^2\log z
 =\frac{H_N}{2z^2},
\]
and its unique minimizer is
\begin{equation}
 z=e^{w_N/2}
 =\left(
 \frac{H_N}
 {c_Nq_NN\varepsilon_N\delta_N^2w_N}
 \right)^{1/2}.
 \label{eq:joint-critical-model-minimizer}
\end{equation}
Thus the two coordinates used below are related by
\[
 \delta=\delta_Nz=\Delta_Nu,
 \qquad z=e^{w_N/2}u,
 \qquad \Delta_N=\delta_Ne^{w_N/2}.
\]
The remainder estimate now shows that the exact finite objective inherits
this scale and minimizer.

\smallskip
\noindent\emph{Remainder control.}
The remainder is
\[
 R_N(z)
 =\delta_N\bigl[
 F_N(1-\delta_Nz)-Nc_N\bigr]-P_N(z).
\]
The Taylor formula
\[
 z^{\varepsilon_N}
 =1+\varepsilon_N\log z
 +O\bigl(\varepsilon_N^2\log^2z\bigr)
\]
and Eqs.\eqref{eq:joint-Phi-expansion}--
\eqref{eq:joint-Phi-remainder} show that, for fixed
\(0<c<C<\infty\), the stationary part of
\(R_N(e^{w_N/2}u)-R_N(e^{w_N/2})\) is bounded by
\[
 C\left[
 c_Nq_NN\varepsilon_N\delta_N^2\varepsilon_Ne^{w_N/2}(1+w_N^2)
 +N\delta_N^3e^{w_N}
 +N\delta_N^4e^{3w_N/2}
 \right]
\]
uniformly for \(c\leq u\leq C\).  The differentiated transient estimate adds
\(C\delta_N\ell_{\Delta_N}+o(H_N/e^{w_N/2})\).
Under Eq.\eqref{eq:joint-critical-assumption},
\[
 \delta_N=N^{-1/2+o(1)},\qquad
 w_N=O(\log\log N),\qquad
 \Delta_N=N^{-1/2+o(1)}.
\]
Since \(w_Ne^{w_N}\) equals the argument of \(W_0\) in
Eq.\eqref{eq:joint-critical-W-scale},
\begin{equation}
 \frac12c_Nq_NN\varepsilon_N\delta_N^2w_N
 =\frac{H_N}{2e^{w_N}}.
 \label{eq:joint-critical-Lambert-identity}
\end{equation}
Substitution of Eq.\eqref{eq:joint-critical-Lambert-identity} in the first
remainder term, together with the preceding bounds, gives
\begin{equation}
 \sup_{c\leq u\leq C}
 \frac{e^{w_N/2}
 \lvert R_N(e^{w_N/2}u)-R_N(e^{w_N/2})\rvert}
 {H_N(1+w_N^{-1})}
 \longrightarrow0.
 \label{eq:joint-critical-normalized-remainder}
\end{equation}

\smallskip
\noindent\emph{Normalized profile gap.}
Consequently, for \(u>0\),
\begin{align}
 P_N(e^{w_N/2}u)-P_N(e^{w_N/2})
 =\frac{H_N}{2e^{w_N/2}}
 \left[
 u+u^{-1}-2
 +\frac2{w_N}(u\log u-u+1)
 \right].
 \label{eq:joint-critical-profile-gap}
\end{align}
Both functions in brackets are nonnegative and vanish only at \(u=1\).
\smallskip
\noindent\emph{Multiplicative localization.}
The exact stationary leading terms and the positive transient term, before
the Taylor expansion in \(z^{\varepsilon_N}\), give
\[
 \delta_N\bigl[
 F_N(1-\Delta_Nu)
 -F_N(1-\Delta_N)\bigr]
 \geq\frac{H_N}{4e^{w_N/2}}
 (u+u^{-1})-\frac{C H_N}{e^{w_N/2}}
\]
whenever \(u\leq M^{-1}\) or \(u\geq M\), for \(M\) fixed and then
sufficiently large.  This excludes \(u\to0\) and \(u\to\infty\).
On \(M^{-1}\leq u\leq M\), if \(\lvert\log u\rvert\geq\tau>0\), then
\[
 u+u^{-1}-2
 =2\bigl(\cosh(\log u)-1\bigr)
 \geq2(\cosh\tau-1).
\]
By Eq.\eqref{eq:joint-critical-normalized-remainder}, the remainder is
negligible on this compact interval.  The gap in
Eq.\eqref{eq:joint-critical-profile-gap} excludes every fixed multiplicative
separation from one.  Lemma~\ref{lem:joint-global-localization} handles the
remaining endpoint scales.
Hence
\((1-\gamma_N)/(\delta_N e^{w_N/2})\to1\), which proves
Eq.\eqref{eq:joint-critical-location}.

\smallskip
\noindent\emph{The minimum value.}
At \(e^{w_N/2}\),
\[
 P_N(e^{w_N/2})
 =\frac{H_N}{e^{w_N/2}}\frac{w_N-1}{w_N}.
\]
Division by \(\delta_N\) proves
Eq.\eqref{eq:joint-critical-value}.
\end{proof}

\begin{proof}[Proof of Theorem~\ref{thm:joint-supercritical} and
Corollary~\ref{cor:joint-supercritical-crossover}]
\noindent\emph{Localization and leading scale.}
For \(\delta=\delta_Nz\), divide the leading stationary excess by
\(N\varepsilon_Nc_Nq_N\delta_N\).  It converges locally
uniformly to \(z\log z-z\).  Let
\[
 \vartheta(z)=z\log z-z+1.
\]
Then \(\vartheta\geq0\), with equality only at one, and
\(\vartheta''(z)=1/z\).  Hence, on every fixed interval
\([a,b]\subset(0,\infty)\) containing one,
\begin{equation}
 \vartheta(z)\geq\frac{(z-1)^2}{2b}.
 \label{eq:joint-supercritical-profile-coercivity}
\end{equation}
Moreover, \(\vartheta\) has a positive minimum on
\((0,a]\cup[b,\infty)\) when \(a\) is sufficiently small and \(b\) is
sufficiently large.

The two correction scales are smaller than the leading scale because
\[
 \frac{N\delta_N^2+\delta_N^{-1}}
 {N\varepsilon_Nc_Nq_N\delta_N}
 =O\left(\frac{\delta_N}{\varepsilon_N}
 +\frac1{N\varepsilon_N\delta_N^2}\right)=o(1).
\]
Lemma~\ref{lem:joint-global-localization},
Eq.\eqref{eq:joint-supercritical-profile-coercivity}, and the uniform
endpoint expansion therefore place every
minimizer in one fixed interval \(a\leq z\leq b\).  On this interval,
Eq.\eqref{eq:joint-supercritical-profile-coercivity} forces \(z\to1\).
This proves
Eqs.\eqref{eq:joint-supercritical-location} and
\eqref{eq:joint-supercritical-leading-value}.

\smallskip
\noindent\emph{Second-order value.}
On that neighborhood, Eq.\eqref{eq:joint-Phi-expansion} has a uniform
\(o(N\delta_N^2)\) remainder and
 Lemma~\ref{lem:uniform-transient} has a uniform
 \(o(\delta_N^{-1})\) remainder.  The leading profile \(\vartheta\) is minimized
at \(z=1\).  It remains to verify that evaluation at one preserves the
 second-order value.  Choose \(c>0\) sufficiently small.  The differentiated
endpoint expansion and
Eqs.\eqref{eq:differentiated-transient-one}--
\eqref{eq:differentiated-transient-two} give
\[
 \partial_\delta^2F_N(1-\delta)
 \geq\frac{cN\varepsilon_N}{\delta_N}
\]
throughout the fixed multiplicative neighborhood, as well as
\[
 \left|
 \left.\partial_\delta F_N(1-\delta)
 \right|_{\delta=\delta_N}
 \right|
 \leq C\bigl(N\delta_N+\delta_N^{-2}\bigr)
 +o\bigl(N\delta_N+\delta_N^{-2}\bigr).
\]
Strong convexity and the first-order condition imply
\[
 0\leq
 F_N(1-\delta_N)
 -F_N(\gamma_N)
 \leq
 \frac{\delta_N}{2cN\varepsilon_N}
 \left|
 \left.\partial_\delta F_N(1-\delta)
 \right|_{\delta=\delta_N}
 \right|^2.
\]
The right-hand side is bounded by
\[
 C\left(
 \frac{N\delta_N^3}{\varepsilon_N}
 +\frac1{N\varepsilon_N\delta_N^3}
 +\frac1{\varepsilon_N}
 \right).
\]
The three terms are \(o(N\delta_N^2+\delta_N^{-1})\), respectively,
because \(\delta_N/\varepsilon_N\to0\), \(N\varepsilon_N\delta_N^2\to\infty\), and
\(\varepsilon_N^{-1}=o(\delta_N^{-1})\).  Evaluation at one proves
Eq.\eqref{eq:joint-supercritical-second-order}.  The centered formulas follow
from
\[
 \log q_{d,1+\varepsilon}
 =-\frac{\lambda^\star(d)}2-1/(3d+7)\varepsilon+O(\varepsilon^2).
\]

\smallskip
\noindent\emph{Refined location.}
For the refined location, normalize the exact finite objective by \(N\) and
let
\[
 J_N(\delta)=\frac1N F_N(1-\delta).
\]
On every fixed multiplicative neighborhood
\(a\delta_N\leq\delta\leq b\delta_N\),
Eqs.\eqref{eq:joint-Phi-expansion},
\eqref{eq:joint-Phi-remainder},
\eqref{eq:differentiated-transient-one}, and
\eqref{eq:differentiated-transient-two} give, uniformly for
\(a\leq z\leq b\),
\begin{align}
 J_N'(\delta_N)
 ={}&\left(2\chi_{d,\alpha_N}
 +\varepsilon_N\frac{\alpha_N s_d}{4}q_N\right)\delta_N
 -\frac{H_N}{2N\delta_N^2}\notag\\
 &+o\left(\delta_N+\frac1{N\delta_N^2}\right),
 \label{eq:joint-supercritical-slope}\\
 J_N''(\delta_Nz)
 ={}&\frac{\alpha_Nc_Nq_N\varepsilon_N}{\delta_N}
 z^{\varepsilon_N-1}
 +o\left(\frac{\varepsilon_N}{\delta_N}\right).
 \label{eq:joint-supercritical-stiffness}
\end{align}
The algebraic errors from the first and second transient derivatives are
smaller by factors \(\delta_N^{1/2}\) and
\(\delta_N^{1/2}\ell_{\delta_N}\), respectively.  The truncation errors
are exponentially small because \(N\delta_N\to\infty\).

Let \(\rho=(1-\gamma_N)/\delta_N-1\).  The qualitative location
already proved and Eq.\eqref{eq:joint-supercritical-stiffness}, followed by
the mean-value theorem, first give
\[
 \rho=O\left(
 \frac{\delta_N}{\varepsilon_N}+\frac1{N\varepsilon_N\delta_N^2}
 \right).
\]
Expanding the first-order condition on this scale gives
\begin{align*}
 0={}&\alpha_Nc_Nq_N\varepsilon_N\rho
 +\left(2\chi_{d,\alpha_N}
 +\varepsilon_N\frac{\alpha_N s_d}{4}q_N\right)\delta_N\\
 &-\frac{H_N}{2N\delta_N^2}
 +o\left(\delta_N+\frac1{N\delta_N^2}\right).
\end{align*}
Division by \(\alpha_Nc_Nq_N\varepsilon_N\) proves
Eq.\eqref{eq:joint-supercritical-location-refined}.

\smallskip
\noindent\emph{Centered corrections and crossover.}
Finally,
\[
 \frac{\delta_N/\varepsilon_N}
 {1/(N\varepsilon_N\delta_N^2)}
 =N\delta_N^3.
\]
Eqs.\eqref{eq:joint-supercritical-delta-centered}--
\eqref{eq:joint-supercritical-transient-scale} show that the two corrections
have the same order when \(\lambda=3\lambda^\star(d)/2\).  Substitution at
this value gives Eq.\eqref{eq:joint-supercritical-crossover-location}.
\end{proof}

\begin{proof}[Proof of Theorem~\ref{thm:moving-differentiated-profile}]
\smallskip
\noindent\emph{Exact balance identity.}
The definitions of \(w_N\) and \(\Delta_N\) give the exact identity
\begin{equation}
 N\varepsilon_Nc_Nq_N\Delta_N^2w_N=H_N.
 \label{eq:moving-Lambert-identity}
\end{equation}
\smallskip
\noindent\emph{Moving-window expansion and uniform remainder.}
The endpoint expansion in
Lemma~\ref{lem:uniform-stationary-endpoint} and
Corollary~\ref{cor:differentiated-transient-windows} give, uniformly with two
derivatives when \(\delta=\Delta_Nz\) and \(z\in[\ell,u]\),
\begin{align}
 F_N(1-\delta)-Nc_N
 ={}&Nc_Nq_N\delta
 \left[\left(\frac{\delta}{\delta_N}\right)^{\varepsilon_N}
 -(1+\varepsilon_N)\right]
 +\frac{H_N}{2\delta}
 +\mathcal R_N^{\rm mov}(\delta),
 \label{eq:moving-objective-expansion}
\end{align}
where the centered, rescaled remainder is
\begin{equation}
 \frac1{1+w_N^{-1}}
 \left\|\Delta_N[\mathcal R_N^{\rm mov}(\Delta_N\,\cdot)
 -\mathcal R_N^{\rm mov}(\Delta_N)]
 \right\|_{C^2([\ell,u])}=o(1).
 \label{eq:moving-objective-remainder}
\end{equation}
Indeed, the stationary remainder contributes at most
\[
 O\left(
 \frac{\Delta_N^{1-\eta}}
 {\varepsilon_N(1+w_N)}
 \right).
\]
\smallskip
\noindent\emph{Critical and supercritical scales.}
In the critical regime,
\[
 \delta_N=N^{-1/2+o(1)},\qquad
 w_N=O(\log\log N),\qquad
 \Delta_N=N^{-1/2+o(1)},\qquad
 \varepsilon_N^{-1}=O(\log N).
\]
Consequently,
\[
 \frac{\Delta_N^{1-\eta}}{\varepsilon_N(1+w_N)}
 \leq\frac{\Delta_N^{1-\eta}}{\varepsilon_N}
 =N^{-(1-\eta)/2+o(1)}\log N=o(1).
\]
In the supercritical regime, let
\(p=\lambda^\star(d)/(2\lambda)\in(0,1/2)\).  Then
\[
 \delta_N=N^{-p+o(1)},\qquad
 w_N\longrightarrow0,\qquad
 \Delta_N=N^{-p+o(1)},\qquad
 \varepsilon_N^{-1}=O(\log N),
\]
and hence
\[
 \frac{\Delta_N^{1-\eta}}{\varepsilon_N(1+w_N)}
 =N^{-p(1-\eta)+o(1)}\log N=o(1).
\]
Eqs.\eqref{eq:differentiated-transient-zero}--
\eqref{eq:differentiated-transient-two} contribute
\(O(\Delta_N^{1/2}\ell_{\Delta_N})+o(1)=o(1)\) in both regimes.  These estimates also give
\(\varepsilon_N(1+w_N)\to0\), as required below.

\smallskip
\noindent\emph{Identification of the profile.}
Uniformly in \(C^2([\ell,u])\),
\[
 e^{\varepsilon_N(w_N/2+\log z)}
 =1+\varepsilon_N(w_N/2+\log z)
 +O\bigl(\varepsilon_N^2(1+w_N)^2\bigr).
\]
Substitution into Eq.\eqref{eq:moving-objective-expansion}, followed by
Eq.\eqref{eq:moving-Lambert-identity}, gives the centered leading expression
\[
 \frac{H_N}2(z+z^{-1}-2)
 +\frac{H_N}{w_N}(z\log z-z+1),
\]
which proves Eq.\eqref{eq:moving-C2-convergence}.
\end{proof}

\section{Proof of the first-order high-dimensional expansion}
\label{app:high-dimensional-proof}

The leading high-dimensional limit remains proved in the main text.  The
refinement below expands the radial power around its deterministic limit,
computes the first centered moments, and then controls the Taylor remainder
uniformly together with two derivatives.

\begin{proof}[Proof of Theorem~\ref{thm:high-dimensional-first-order}]
\noindent\emph{Moment identities.}
Let \(\beta=\alpha/2\), and represent the stationary difference as
\[
 p-x_\gamma=\sum_{j=-1}^{\infty}b_jp_j,
 \qquad
 b_{-1}=1,
 \qquad
 b_j=-(1-\gamma)\gamma^j\quad(j\geq0),
\]
where \(p_{-1}=p\).  For coefficient sequences \(a,c\), let
\[
 y_a=\sum_i a_ip_i,
 \qquad
 g_{ac}=y_a\mathbin{\cdot}y_c.
\]
The main scalar quantities are
\[
 r=\norm{p-x_\gamma}^2,
 \qquad \sigma^2=\sigma^2(\gamma)=\frac2{1+\gamma},
 \qquad z=r-\sigma^2.
\]
Accordingly, the target expansion has the form
\[
 \E r^\beta
 =\sigma^\alpha
 +\beta\sigma^{\alpha-2}\E z
 +\frac{\beta(\beta-1)}2\sigma^{\alpha-4}\E z^2
 +O_\alpha(d^{-2}).
\]
The Gram identities determine the two displayed centered moments; the rest of
the proof certifies the remainder and its derivatives.
The second and fourth coordinate moments of a uniform point in \(\B\) give
\begin{align}
 \operatorname{Cov}(g_{ac},g_{uv})
 ={}&\frac d{(d+2)^2}
 \bigl(
 \langle a,u\rangle\langle c,v\rangle
 +\langle a,v\rangle\langle c,u\rangle
 \bigr) \notag\\
 &-\frac{2d}{(d+2)(d+4)}
 \sum_i a_ic_iu_iv_i.
 \label{eq:gram-covariance}
\end{align}
This identity follows first for finite sums by expanding the four scalar
products.  The infinite-series version follows by bounded convergence.

Let
\[
 \sigma_4=\sum_{j=-1}^{\infty}b_j^4
 =1+\frac{(1-\gamma)^4}{1-\gamma^4},
 \qquad
 \sigma_6=\sum_{j=-1}^{\infty}b_j^6
 =1+\frac{(1-\gamma)^6}{1-\gamma^6}.
\]
Eq.\eqref{eq:gram-covariance} yields
\begin{equation}
 \mathbb Ez=-\frac{2\sigma^2}{d+2},
 \qquad
 \mathbb Ez^2
 =\frac{2\sigma^4}{d+2}
 -\frac{2d\sigma_4}{(d+2)(d+4)}.
 \label{eq:high-dimensional-first-moments}
\end{equation}
The first three even cumulants of one coordinate of a uniform point in
\(\B\) are
\[
 \kappa_2=\frac1{d+2},
 \qquad
 \kappa_4=-\frac6{(d+2)^2(d+4)},
 \qquad
 \kappa_6=\frac{240}{(d+2)^3(d+4)(d+6)}.
\]
Cumulants add under the independent weighted sum \(y_b\).  Rotational
invariance then reconstructs its third radial moment as
\begin{equation}
 \mathbb Er^3
 =\frac d{(d+2)^2}
 \left[
  (d+4)\sigma^6-6\sigma^2\sigma_4+\frac{16\sigma_6}{d+6}
 \right].
 \label{eq:high-dimensional-third-radial-moment}
\end{equation}
Combining this identity with Eq.\eqref{eq:high-dimensional-first-moments} and
\(\mathbb Ez^3=\mathbb Er^3-3\sigma^2\mathbb Er^2
+3\sigma^4\mathbb Er-\sigma^6\) gives
\begin{align}
 \mathbb Ez^3
 ={}&-\frac4{(d+2)^2(d+4)(d+6)}
 \bigl[
 d^2(\sigma^6+3\sigma^2\sigma_4-4\sigma_6) \notag\\
 &\hspace{35mm}
 +d(10\sigma^6+18\sigma^2\sigma_4-16\sigma_6)+24\sigma^6
 \bigr].
 \label{eq:high-dimensional-third-moment}
\end{align}
Consequently, uniformly on every compact interval \(I\) contained in
\((0,1)\),
\begin{align}
 \mathbb Ez
 &=-\frac{2\sigma^2}{d}+O_{C^2(I)}(d^{-2}),\notag\\
 \mathbb Ez^2
 &=\frac{2}{d}\left(\sigma^4-1-
 \frac{(1-\gamma)^3}{1+\gamma+\gamma^2+\gamma^3}\right)
 +O_{C^2(I)}(d^{-2}),\notag\\
 \mathbb Ez^3
 &=O_{C^2(I)}(d^{-2}).
 \label{eq:high-dimensional-centered-moments}
\end{align}

\smallskip
\noindent\emph{Uniform remainder estimate.}
We record the bounds needed for the Taylor remainder.  The projection of
the uniform measure on the sphere \(S^{d+1}\) onto its first \(d\)
coordinates is uniform on \(\B\).  The spherical spectral gap therefore
implies the Poincar\'e inequality
\begin{equation}
 \operatorname{Var}F
 \leq\frac1{d+1}\mathbb E\sum_i\norm{\nabla_iF}^2
 \label{eq:ball-product-poincare}
\end{equation}
for functions of independent uniform ball points
\cite{BonnefontJoulinMa2016}.  For the Gram form,
\[
 \nabla_ig_{ac}=a_iy_c+c_iy_a,
 \qquad
 \sum_i\norm{\nabla_ig_{ac}}^2\leq k(a,c),
\]
where
\[
 k(a,c)
 =2\|a\|_2^2\|c\|_1^2+2\|c\|_2^2\|a\|_1^2.
\]
Let \(\widetilde g_{ac}=g_{ac}-\mathbb Eg_{ac}\).  The first application of
Eq.\eqref{eq:ball-product-poincare} gives
\[
 \mathbb E\widetilde g_{ac}^2\leq\frac{k(a,c)}{d+1}.
\]
Applying it to \(\widetilde g_{ac}^2\), and using
\(\nabla_i(\widetilde g_{ac}^2)=2\widetilde g_{ac}\nabla_ig_{ac}\), gives
\begin{align*}
 \operatorname{Var}(\widetilde g_{ac}^2)
 &\leq\frac4{d+1}
 \mathbb E\left[
 \widetilde g_{ac}^2\sum_i\norm{\nabla_ig_{ac}}^2
 \right]\\
 &\leq\frac{4k(a,c)}{d+1}\mathbb E\widetilde g_{ac}^2
 \leq\frac{4k(a,c)^2}{(d+1)^2}.
\end{align*}
Since
\(\mathbb E\widetilde g_{ac}^4=\operatorname{Var}(\widetilde g_{ac}^2)
+(\mathbb E\widetilde g_{ac}^2)^2\), the two applications give
\[
 \mathbb E\widetilde g_{ac}^4
 \leq\frac{5k(a,c)^2}{(d+1)^2}.
\]
For the infinite coefficient sequences, apply these estimates to finite
truncations and pass to the limit using the uniformly bounded
\(\ell^1\)-tails.
The coefficient sequences \(b,b',b''\) have uniformly bounded
\(\ell^1\)- and \(\ell^2\)-norms on \(I\).  Thus, with
\(z_1=z'\) and \(z_2=z''\),
\begin{equation}
 \mathbb Ez^4+\mathbb Ez_1^4+\mathbb Ez_2^4=O_I(d^{-2}).
 \label{eq:high-dimensional-fourth-moments}
\end{equation}
H\"older's inequality also bounds
\(\mathbb E|z|^3|z_1|\), \(\mathbb E|z|^3|z_2|\), and
\(\mathbb Ez^2z_1^2\) by \(O_I(d^{-2})\).

\smallskip
\noindent\emph{Negative moments near \(r=0\).}
The possible singularity at \(r=0\) is controlled geometrically.
Conditionally on \(x_\gamma=x\),
\[
 \Pr(r\leq t\mid x)
 =\frac{\operatorname{vol}(\B\cap B(x,\sqrt t))}{\omega_d}
 \leq t^{d/2},
 \qquad 0\leq t\leq1.
\]
Hence, for \(0<\kappa<d/2\) and \(0<a<1\),
\begin{equation}
 \mathbb E[r^{-\kappa};r\leq a]
 \leq\frac d{d-2\kappa}a^{d/2-\kappa}.
 \label{eq:high-dimensional-negative-moment}
\end{equation}
This estimate includes \(d=1\) and \(1<\alpha<2\), where the largest
required exponent is \(1-\alpha/2<1/2\).

\smallskip
\noindent\emph{Differentiated estimate.}
On \(I\), the stationary series and its first two derivatives converge
uniformly.  Pathwise differentiation gives
\[
 r'=-2(p-x_\gamma)\mathbin{\cdot}x_\gamma',
 \qquad
 r''=2\norm{x_\gamma'}^2
 -2(p-x_\gamma)\mathbin{\cdot}x_\gamma'',
\]
and therefore
\begin{equation}
 |r'|\leq C_I\sqrt r,
 \qquad
 |r''|\leq C_I(1+\sqrt r).
 \label{eq:high-dimensional-pathwise-derivatives}
\end{equation}
The kernel \(z\mapsto\norm z^\alpha\) has locally integrable second
derivatives when \(d+\alpha>2\).  Convolution with the indicator of \(\B\),
Eq.\eqref{eq:high-dimensional-negative-moment}, and
Eq.\eqref{eq:high-dimensional-pathwise-derivatives} justify the two
derivatives under the expectation.

Expand \(f(r)=r^\beta\) to order three at \(\sigma^2\).  Establish the value and
the first two derivatives of the expected remainder before splitting the
expectation into
\[
 |z|\leq \sigma^2/2,
 \qquad r<\sigma^2/2,
 \qquad r>3\sigma^2/2.
\]
On the first event, Taylor's theorem and
Eq.\eqref{eq:high-dimensional-fourth-moments} bound the remainder and its
first two derivatives by \(O_{\alpha,I}(d^{-2})\).  The same estimate holds
on the upper event by fourth-moment Markov inequality.  On the lower event,
Eqs.\eqref{eq:high-dimensional-negative-moment} and
\eqref{eq:high-dimensional-pathwise-derivatives} give an exponentially
small bound.  No event indicator is differentiated in this argument.
It follows that
\[
 \left\|
 \mathbb Er^\beta-f(\sigma^2)-f'(\sigma^2)\mathbb Ez
 -\frac12f''(\sigma^2)\mathbb Ez^2
 -\frac16f'''(\sigma^2)\mathbb Ez^3
 \right\|_{C^2(I)}
 =O_{\alpha,I}(d^{-2}).
\]
Combining this estimate with
Eq.\eqref{eq:high-dimensional-centered-moments} gives
\[
 \begin{aligned}
 M_{d,\alpha}
 ={}&f(\sigma^2)+\frac1d\left[
  -2\sigma^2f'(\sigma^2)
  +\left(\sigma^4-1-
  \frac{(1-\gamma)^3}{1+\gamma+\gamma^2+\gamma^3}\right)f''(\sigma^2)
 \right]\\
 &+O_{\alpha,I}(d^{-2})\quad\text{in }C^2(I).
 \end{aligned}
\]
whose coefficient is precisely \(\mathcal C_\alpha\).

For the uniform value estimate on \([0,1]\), the same cubic expansion is
used without differentiation.  The coefficient power sums and all centered
moments above have uniform continuous extensions to the endpoints.  Since
\(1\leq\sigma^2\leq2\) and \(0\leq r\leq4\),
Eq.\eqref{eq:high-dimensional-fourth-moments} controls the complement of
\(|z|\leq\sigma^2/2\) uniformly.  This proves
Eq.\eqref{eq:high-dimensional-first-order}.
\end{proof}


\begin{thebibliography}{99}\small\sloppy\hbadness=2000

\bibitem{AlonAzar1993}
N. Alon and Y. Azar,
On-line Steiner trees in the Euclidean plane,
\emph{Discrete \& Computational Geometry} 10 (1993), 113--121.
\href{https://doi.org/10.1007/BF02573969}
{doi:10.1007/BF02573969}.

\bibitem{BonnefontJoulinMa2016}
M. Bonnefont, A. Joulin, and Y. Ma,
Spectral gap for spherically symmetric log-concave probability measures,
and beyond,
\emph{Journal of Functional Analysis} 270 (2016), 2456--2482.
\href{https://doi.org/10.1016/j.jfa.2016.02.007}
{doi:10.1016/j.jfa.2016.02.007}.

\bibitem{CastroDevillers2011}
P. M. M. de Castro and O. Devillers,
On the asymptotic growth rate of some spanning trees embedded in
\(\mathbb R^d\),
\emph{Operations Research Letters} 39 (2011), 44--48.

\bibitem{CastroGammaInsertion}
P. M. M. de Castro,
Sequential Euclidean connections with exponential memory:
distributional performance and adversarial robustness,
arXiv:2608.25298, 2026.
\href{https://doi.org/10.48550/arXiv.2608.25298}
{doi:10.48550/arXiv.2608.25298}.

\bibitem{Cogger1974}
K. O. Cogger,
The optimality of general-order exponential smoothing,
\emph{Operations Research} 22 (1974), 858--867.
\href{https://doi.org/10.1287/opre.22.4.858}
{doi:10.1287/opre.22.4.858}.

\bibitem{deBergMarkovicUmboh2023}
M. de Berg, A. Markovic, and S. W. Umboh,
The online broadcast range-assignment problem,
\emph{Algorithmica} 85 (2023), 3928--3956.
\href{https://doi.org/10.1007/s00453-023-01166-4}
{doi:10.1007/s00453-023-01166-4}.

\bibitem{deBergSadhukhanSpieksma2024}
M. de Berg, A. Sadhukhan, and F. Spieksma,
Stable approximation algorithms for the dynamic broadcast range-assignment
problem,
\emph{SIAM Journal on Discrete Mathematics} 38 (2024), 790--827.
\href{https://doi.org/10.1137/23M1545975}
{doi:10.1137/23M1545975}.

\bibitem{DiaconisFreedman1999}
P. Diaconis and D. Freedman,
Iterated random functions,
\emph{SIAM Review} 41 (1999), 45--76.
\href{https://doi.org/10.1137/S0036144598338446}
{doi:10.1137/S0036144598338446}.

\bibitem{Folland1999}
G. B. Folland,
\emph{Real Analysis: Modern Techniques and Their Applications},
2nd ed., Wiley, New York, 1999.

\bibitem{GuGuptaKumar2016}
A. Gu, A. Gupta, and A. Kumar,
The power of deferral: maintaining a constant-competitive Steiner tree online,
\emph{SIAM Journal on Computing} 45 (2016), 1--28.
\href{https://doi.org/10.1137/140955276}
{doi:10.1137/140955276}.

\bibitem{Hoeffding1963}
W. Hoeffding,
Probability inequalities for sums of bounded random variables,
\emph{Journal of the American Statistical Association} 58 (1963), 13--30.

\bibitem{ImaseWaxman1991}
M. Imase and B. M. Waxman,
Dynamic Steiner tree problem,
\emph{SIAM Journal on Discrete Mathematics} 4 (1991), 369--384.
\href{https://doi.org/10.1137/0404033}
{doi:10.1137/0404033}.

\bibitem{ZhuZhouXu2014}
B. Zhu, J. Zhou, and W. Xu,
Dual Orlicz--Brunn--Minkowski theory,
\emph{Advances in Mathematics} 264 (2014), 700--725.

\bibitem{Karp1977}
R. M. Karp,
A characterization of the minimum cycle mean in a digraph,
Technical Report UCB/ERL M77/47, University of California, Berkeley, 1977.
\href{https://www2.eecs.berkeley.edu/Pubs/TechRpts/1977/29121.html}
{Berkeley report page}.

\bibitem{LopesOliveira2009}
A. O. Lopes and E. R. Oliveira,
Entropy and variational principles for holonomic probabilities of IFS,
\emph{Discrete and Continuous Dynamical Systems} 23 (2009), 937--955.
\href{https://doi.org/10.3934/dcds.2009.23.937}
{doi:10.3934/dcds.2009.23.937}.

\bibitem{MengueOliveira2017}
J. K. Mengue and E. R. Oliveira,
Duality results for iterated function systems with a general family of
branches,
\emph{Stochastics and Dynamics} 17 (2017), 1750021.
\href{https://doi.org/10.1142/S0219493717500216}
{doi:10.1142/S0219493717500216}.

\bibitem{OlkinTong1988}
I. Olkin and Y. L. Tong,
Peakedness in multivariate distributions,
in \emph{Statistical Decision Theory and Related Topics IV}, vol.~2,
S. S. Gupta and J. O. Berger, eds., Springer, New York, 1988, 373--383.

\bibitem{Oliveira2019}
E. R. Oliveira,
On the connection between a skew product IFS and the ergodic optimization for
a finite family of potentials,
\emph{Dynamical Systems} 34 (2019), 685--709.
\href{https://doi.org/10.1080/14689367.2019.1606896}
{doi:10.1080/14689367.2019.1606896}.

\bibitem{PenroseWade2008}
M. D. Penrose and A. R. Wade,
Limit theory for the random on-line nearest-neighbour graph,
\emph{Random Structures \& Algorithms} 32 (2008), 125--156.
\href{https://doi.org/10.1002/rsa.20185}
{doi:10.1002/rsa.20185}.

\bibitem{Roberts1959}
S. W. Roberts,
Control chart tests based on geometric moving averages,
\emph{Technometrics} 1 (1959), 239--250.
\href{https://doi.org/10.1080/00401706.1959.10489860}
{doi:10.1080/00401706.1959.10489860}.

\bibitem{Steele1989}
J. M. Steele,
Cost of sequential connection for points in space,
\emph{Operations Research Letters} 8 (1989), 137--142.
\href{https://doi.org/10.1016/0167-6377(89)90039-4}
{doi:10.1016/0167-6377(89)90039-4}.

\bibitem{Wade2009}
A. R. Wade,
Asymptotic theory for the multidimensional random on-line nearest-neighbour
graph,
\emph{Stochastic Processes and their Applications} 119 (2009), 1889--1911.
\href{https://doi.org/10.1016/j.spa.2008.09.006}
{doi:10.1016/j.spa.2008.09.006}.

\bibitem{Xu2000}
Y. Xu,
Funk--Hecke formula for orthogonal polynomials on spheres and on balls,
\emph{Bulletin of the London Mathematical Society} 32 (2000), 447--457.
\href{https://doi.org/10.1112/S0024609300007001}
{doi:10.1112/S0024609300007001}.

\end{thebibliography}
\end{document}